\documentclass[11pt]{amsart}
\usepackage{graphicx}  
\usepackage{subfig}  
\usepackage{amssymb, amsmath, amsthm}
\usepackage{color}
\usepackage{float}
\usepackage{enumerate}
\usepackage{graphics}
\usepackage{epstopdf}
\usepackage{tikz}
\usetikzlibrary{arrows,decorations.markings}
\usepackage{cite}
\usepackage{epsfig}
\usepackage{amsmath}
\usepackage{amssymb}
\usepackage{amscd}
\usepackage{graphicx}
\usepackage{mathrsfs}
\usepackage{needspace}
\usepackage[T1]{fontenc}
\usepackage[colorlinks=true,linkcolor=blue,citecolor=red]{hyperref}

\makeatletter
\newcommand{\equationlabels}[1]{%
  \begingroup
  \edef\@currentlabel{\theequation}%
  \edef\@currentHref{equation.\theHequation}%
  #1%
  \endgroup}
\makeatother
\numberwithin{equation}{section}
\newtheorem{theorem}{Theorem}[section]

\newtheorem{lemma}[theorem]{Lemma}
\newtheorem{prop}{Proposition}[section]
\newtheorem{assumption}[theorem]{Assumption}

\newtheorem{dbar-RHP}[theorem]{$\bar{\partial}$-RH problem}
\theoremstyle{definition}

\newtheorem{re}{Remark}[section]
\newtheorem{rhp}{RH Problem}
\newtheorem{rhpa}{RH Problem}[section]

\newcommand{\R}{\mathbb{R}}

\subjclass[2010]{81Q05, 35G16, 35G31}
\keywords{Integrable system, Sawada-Kotera equation, Riemann-Hilbert problem, $\bar{\partial}$-nonlinear steepest descent method, Soliton resolution conjecture, Asymptotic stability}

\begin{document}
	
	\title[Long-time asymptotics for the Sawada-Kotera equation]{On the Cauchy Problem for the Sawada-Kotera Equation: Soliton resolution conjecture, Painlev\'e\ transcendents and asymptotic stability}
	
	\author[Z. K. Huang]{Zheng-Kang Huang}
	
 \author[S. F. Tian]{Shou-Fu Tian$^{*,\dagger}$}

	\address{Zheng-Kang Huang and Shou-Fu Tian\newline
School of Mathematics, China University of Mining and Technology, Xuzhou 221116, People's Republic of China}
\email{sftian@cumt.edu.cn, shoufu2006@126.com}
	
	\thanks{$^{*}$Corresponding author(sftian@cumt.edu.cn, shoufu2006@126.com).\\
\hspace*{3ex}$^\dag$Contributed equally as the first author.}
	
\begin{abstract}
We derive long-time asymptotics in different space-time regions for the Sawada-Kotera (SK) equation with admissible data involving continuous spectrum and finite discrete spectrum. Our result proves the soliton resolution conjecture, yields the asymptotics governed by the F-XVIII Painlev\'e transcendent, and establishes the asymptotic stability of \(N\)-soliton solutions. More importantly, this extends the long-time asymptotic theory to the coexistence of solitons and radiation, whereas previous results were restricted to the purely continuous spectrum. One key ingredient in our argument is a new meromorphic Riemann-Hilbert framework that incorporates the discrete spectrum through complete six-point pole orbits while preserving cofactor analyticity and the singular structure at the spectral origin. A second ingredient is the combination of an orbitwise, region-dependent pole reduction mechanism and the \(\bar{\partial}\)-nonlinear steepest descent method, which allows us to select the solitons contributing in each region and separate them from the continuous radiation. Our framework also resolves the coalescing stationary points at the singular spectral origin through the regular modified SK problem and the Miura map.
\end{abstract}

\maketitle
\tableofcontents
\section{Introduction}
The inverse spectral approach to integrable Cauchy problems associates the nonlinear evolution with a scalar differential operator. For rapidly decreasing coefficients, a convenient normalized form is
\[
\mathscr H_m=\partial_x^m+\sum_{j=0}^{m-2}a_j(x,t)\partial_x^j,\qquad m\ge2,
\]
where $a_j(\cdot,t)$ belong to the Schwartz class $\mathcal S(\mathbb R)$. Direct and inverse scattering for such operators on the line were developed in \cite{b2,b3}, and the treatment of arbitrary spectral singularities was given in \cite{d2}. The second-order case $\mathscr H_2=\partial_x^2+a_0$ is the Schr\"odinger spectral problem underlying the Korteweg-de Vries (KdV) equation, whose inverse scattering theory originated with Gardner, Greene, Kruskal, and Miura \cite{g1} and was further developed by Deift and Trubowitz \cite{d1}.

For the third-order operator $\mathscr H_3=\partial_x^3+a_1\partial_x+a_0$, different constraints on the coefficients lead to different integrable systems \cite{k1,w3}. For instance, the choice $a_1=2f$ and $a_0=f_x+g$ gives the ``good'' Boussinesq system \cite{c14},
\[
f_t=g_x,\qquad 3g_t+f_{xxx}+8ff_x=0.
\]
The reduction $a_1=6u$ and $a_0=0$ yields the Sawada-Kotera (SK) equation as a fifth-order Lax flow. This equation was introduced by Sawada and Kotera and independently derived by Caudrey, Dodd, and Gibbon \cite{c1,s1}. In this work, we study its Cauchy problem with decaying initial data,
\begin{equation}\label{1.1}
\begin{cases}
u_t+u_{xxxxx}+30(uu_{xxx}+u_xu_{xx})+180u^2u_x=0,\\
u(x,0)=u_0(x)\in\mathcal S(\mathbb R).
\end{cases}
\end{equation}
The choice $a_1=6U$ and $a_0=3U_x$ gives another fifth-order flow, the Kaup-Kupershmidt (KK) equation \cite{k1},
\[
U_t+U_{xxxxx}+30UU_{xxx}+75U_xU_{xx}+180U^2U_x=0.
\]
Fordy and Gibbons showed that these two fifth-order equations are connected by Miura transformations to a common modified equation \cite{f1}. In the present normalization, the modified Sawada-Kotera (mSK) equation is
\begin{equation}\label{1.2}
w_t+w_{xxxxx}-\bigl(5w_xw_{xx}+5ww_x^2+5w^2w_{xx}-w^5\bigr)_x=0,
\end{equation}
and its transformation to SK is
\begin{equation}\label{1.3}
u=\frac16(w_x-w^2).
\end{equation}
The corresponding KK variable is obtained from the same modified field by $U=-(2w_x+w^2)/6$.

The SK and KK equations also arise as integrable normal forms for long, weakly nonlinear waves in liquids containing gas bubbles. In these media, pressure disturbances are coupled to bubble motion, whose inertia contributes to dispersion. Retaining corrections beyond the leading long-wave approximation introduces higher-order nonlinear and dispersive terms \cite{k2,k3}. When dissipation is negligible, suitable near-identity transformations and changes of variables render these models asymptotically equivalent to the SK or KK equation. This connection makes the SK equation a tractable model for studying the balance between nonlinear wave interactions and fifth-order dispersion. Its soliton solutions provide explicit examples of persistent coherent waves \cite{s1}. For localized initial disturbances, this raises questions about the separation of coherent waves from dispersive radiation, the position shifts caused by interactions, and the response of the soliton component to perturbations. These dynamical questions naturally lead to long-time asymptotics, soliton resolution, and stability. The integrable structure of SK makes them accessible through spectral analysis.

In this spectral framework, the inverse scattering transform (IST) reconstructs the nonlinear wave from its scattering data and provides a basis for asymptotic analysis \cite{a2}. It was extended to the nonlinear Schr\"odinger equation by Zakharov and Shabat \cite{z2} and systematized by Ablowitz, Kaup, Newell, and Segur \cite{a1}. Direct and inverse scattering formulations for shallow-water systems were subsequently developed in \cite{b7,b8,c6,c7,c8,c9,c10}. The continuous scattering data describe radiation, while discrete eigenvalues and norming constants describe solitons. Their simultaneous presence therefore requires a long-time analysis that separates both contributions in the prescribed space-time region.

The inverse problem is commonly formulated as a matrix Riemann-Hilbert (RH) problem. Following the Beals-Coifman formulation \cite{b1}, Deift and Zhou introduced the nonlinear steepest descent method for oscillatory RH problems \cite{d3}, with further developments in \cite{c4,d4,g3,l2}. Its principal ingredients are matrix factorizations, scalar conjugations, contour deformations, local models at stationary points, and small-norm estimates. The $\bar\partial$ extension of this method, introduced by McLaughlin and Miller \cite{m1,m2}, replaces analytic continuations by nonanalytic extensions controlled through a $\bar\partial$-problem. The resulting mixed problem separates into a pure RH model and a $\bar\partial$ remainder. This approach has been used to study long-time asymptotics, soliton resolution, and asymptotic stability for many integrable equations \cite{b4,c5,c13,l3,l4,t1,y1}.

Examples with $2\times2$ Lax pairs include the focusing nonlinear Schr\"odinger, Camassa-Holm, and focusing Kundu-Eckhaus equations \cite{b5,b6,l6,r1,w1,x1}. Higher-rank problems involve more spectral sectors and additional symmetry constraints. Representative $3\times3$ examples include the Degasperis-Procesi and Novikov equations \cite{b7,b8}, the Yajima-Oikawa system \cite{z1}, the Sasa-Satsuma equation \cite{h1,l5,x2}, the ``good'' Boussinesq equation \cite{c2,c3}, and the Tzitz\'eica equation \cite{h2}. The spin-1 Gross-Pitaevskii equation provides a $4\times4$ example \cite{g2,t1}.

For the SK hierarchy, Wang and Zhu developed a $3\times3$ scattering formulation for the modified Sawada-Kotera (mSK) equation \cite{w2} and derived the long-time asymptotics of the SK and mSK equations by the Deift-Zhou method \cite{w3}. Their analysis describes the decay, dispersive-wave, Painlev\'e, and rapid-decay regions under the assumption of no discrete spectrum. In particular, the self-similar asymptotics are governed by the fourth-order Painlev\'e-type F-XVIII equation. Reflectionless SK solitons have also been constructed by inverse scattering and Marchenko-type methods \cite{a3,a4}. These results naturally raise the question of how to obtain long-time asymptotics when continuous and discrete spectra coexist, and how to establish soliton resolution and the asymptotic stability of the resulting soliton component.

In this work, we formulate a meromorphic $3\times3$ RH problem with complete six-point pole orbits and apply the $\bar\partial$-nonlinear steepest descent method. The analysis yields asymptotic formulas in four space-time regions, with the contributing solitons selected by their velocities and the continuous spectrum entering through radiation and transmission factors. Under the scalar closure and matrix realization conditions stated in the body, the reflectionless component decomposes into separated one-solitons with explicit position shifts. Combining this decomposition with perturbative scattering estimates gives regional asymptotic stability.

In the Painlev\'e region, the stationary points coalesce at the singular spectral origin. We use the regular mSK equation~\eqref{1.2} together with the Miura transformation~\eqref{1.3}.
A related Miura correspondence appears in \cite{c2}, and analogous Painlev\'e mechanisms occur in \cite{c4,c5,h1,l6}. With $s=x/(5t)^{1/5}$, the mSK profile satisfies the F-XVIII equation in Cosgrove's classification \cite{c11,c12}, and \eqref{1.3} transfers its asymptotics to SK. The corresponding mSK results are obtained by analogous deformations.

\textbf{Our work is arranged as follows:}
The remainder of the work is organized as follows. Section~\ref{sec2} develops the scattering data and the meromorphic RH problems. Sections~\ref{sec3}-\ref{sec5} analyze the positive-velocity, Painlev\'e, and negative-velocity regions. Section~\ref{sec6} gives the moving-to-frozen comparison, soliton resolution, and asymptotic stability.

\section{Inverse scattering transform}\label{sec2}
\subsection{Jost functions}
The SK Lax pair is $\Phi_x=L\Phi$, $\Phi_t=Z\Phi$, where
\[
L=\begin{pmatrix}0&1&0\\0&0&1\\k^3&-6u&0\end{pmatrix},
\]
and $Z$ is the temporal SK Lax matrix in \cite{w3}.

Introduce the following gauge transformation
\[
\Phi (x,t;k)=G(k)\Psi (x,t;k),\qquad G(k)=\left( \begin{matrix}
	\omega&		\omega ^2&		1\\
	\omega ^2k&		\omega k&		k\\
	k^2&		k^2&		k^2\\
\end{matrix} \right) ,\qquad \omega =e^{2\pi i/3}.
\]
Then
\[
G^{-1}LG=k\Lambda +Q,\qquad \Lambda =\mathrm{diag}\mathrm{(}\omega ,\omega ^2,1),\qquad Q=k^{-1}Q_1,\quad Q_1=-2u\,er^{\mathrm{T}},
\]
where \(e=(1,1,1)^{\mathrm T}\), \(r=(\omega^2,\omega,1)^{\mathrm T}\), and \(r^{\mathrm T}e=0\). The temporal part is similarly $G^{-1}ZG=9k^5\Lambda^2+P(x,t;k)$, with $P\to0$ as $|x|\to\infty$. Thus
\[
\Psi_x=(k\Lambda+Q)\Psi,\qquad \Psi_t=(9k^5\Lambda^2+P)\Psi.
\]
Writing $\Psi=J\mathrm e^{k\Lambda x+9k^5\Lambda^2t}$ gives
\[
\begin{cases}
	J_x-[k\Lambda ,J]=QJ,\\
	J_t-[9k^5\Lambda ^2,J]=PJ.\\
\end{cases}
\]
The Jost solutions are uniquely determined by
\[
J_\pm(x;k)=I_3-\int_x^{\pm\infty}e^{(x-y)\widehat{k\Lambda}}\bigl(Q(y;k)J_\pm(y;k)\bigr)\,dy,\qquad e^{\widehat{k\Lambda}}A=e^{k\Lambda}Ae^{-k\Lambda}.
\]
Set
\[
\Sigma =\mathbb{R} \cup \omega \mathbb{R} \cup \omega ^2\mathbb{R} ,\qquad D_n=\left\{ k:\frac{(n-1)\pi}{3}<\mathrm{arg}k<\frac{n\pi}{3} \right\} ,\quad n=1,...,6.
\]
Each half-ray is oriented from \(0\) to infinity, with its \(+\) side on the left.

Put $S=D_3\cup D_4$. Volterra theory gives $\det J_\pm=1$. The columns of $J_+$ are analytic in $\omega^2S,\omega S,S$, and those of $J_-$ in $-\omega^2S,-\omega S,-S$. Away from \(0\), they extend with the required boundary \(k\)-derivatives, are smooth in \(x\), and tend rapidly to \(I_3\) at the normalizing end. Moreover,
\[
J_{\pm}(x;k)=\mathcal{A} J_{\pm}(x;\omega k)\mathcal{A} ^{-1}=\mathcal{B} \overline{J_{\pm}(x;\bar{k})}\mathcal{B} ,\quad \mathcal{A} =\left( \begin{matrix}
	0&		0&		1\\
	1&		0&		0\\
	0&		1&		0\\
\end{matrix} \right) ,\quad \mathcal{B} =\left( \begin{matrix}
	0&		1&		0\\
	1&		0&		0\\
	0&		0&		1\\
\end{matrix} \right) .
\]
For compactly supported $u_0$, both Jost solutions extend analytically to $\mathbb C\setminus\{0\}$.

Coefficient comparison gives differentiated expansions on closed subsectors. In particular,
\[
J_{\pm}=I+\frac{J_{\pm}^{(1)}}{k}+O(k^{-2}),\qquad J_{+}^{(1)}=2\int_x^{\infty}{u(y)\,dy\,\mathrm{diag}\mathrm{(}\omega ^2,\omega ,1).}
\]

At \(k=0\), the Jost functions for \(u_0\in\mathcal S(\mathbb R)\) admit, columnwise on their analytic sides,
\[
J_{\pm}(x;k)=\frac{J_{\pm ,0}^{(-1)}(x)}{k}+\sum_{j=0}^m{J_{\pm ,0}^{(j)}(x)k^j}+\mathcal{R} _{0,\pm}^{(m)}(x,k),
\]
where, for \(0\leq n\leq m+1\),
\[
\bigl\| \partial _{k}^{n}\mathcal{R} _{0,\pm}^{(m)}(x,k) \bigr\| \le g_{m,n,\pm}(x)|k|^{m+1-n},
\]
with \(\|Q_1\|g_{m,n,\pm}\in L^1(\mathbb R)\), and likewise for \(J_\pm^A\). The identity \(r^{\mathrm T}e=0\) cancels all higher singular terms, and
\[
J_{\pm ,0}^{(-1)}=a_{\pm}(x)er^{\mathrm{T}},\qquad (J_{\pm}^{A})_{0}^{(-1)}=b_{\pm}(x)re^{\mathrm{T}}.
\]
Thus both Jost systems have at most a simple pole at \(0\). Also, $Q_1J_{\pm,0}^{(-1)}=0$, $\partial_xJ_{\pm,0}^{(-1)}=Q_1J_{\pm,0}^{(0)}$, and normalization gives $J_{\pm,0}^{(-1)}\to0$, $J_{\pm,0}^{(0)}\to I_3$ at the corresponding end.

\subsection{The scattering matrix}
The scattering matrix is defined by
\[
s(k) = I - \int_{\mathbb{R}} \mathrm{e}^{-x\widehat{k\Lambda}} \bigl( Q(x;k) J_{+}(x;k) \bigr) \, dx .
\]
When the initial potential \(u_{0}(x)\) has compact support, \(s(k)\) links the two Jost solutions through
\[
J_{+}(x;k) = J_{-}(x;k) \, \mathrm{e}^{x\widehat{k\Lambda}} \, s(k),\qquad k\in\mathbb{C}\setminus\{0\}.
\]
Its entries have the domains
\[
s(k)\in \left( \begin{matrix}
	\omega ^2\overline{S}&		\mathbb{R} _+&		\omega \mathbb{R} _+\\
	\mathbb{R} _+&		\omega \overline{S}&		\omega ^2\mathbb{R} _+\\
	\omega \mathbb{R} _+&		\omega ^2\mathbb{R} _+&		\overline{S}\\
\end{matrix} \right) \setminus \{0\}.
\]
The diagonal entries are analytic in the indicated interiors and continuous to their boundaries away from \(0\), while the off-diagonal entries are ray-smooth. For every \(N\),
\[
s(k)=I-\sum_{j=1}^N\frac{s_j}{k^j}+O(k^{-N-1}), \qquad k\to\infty,
\]
whereas
\[
s(k)=\frac{c^{(-1)}}k er^{\mathrm T}+s^{(0)}+s^{(1)}k+\cdots,\qquad k\to0,
\]
with \(c^{(-1)}\) determined by \(u_0\). Finally,
\[
s(k)=\mathcal A s(\omega k)\mathcal A^{-1}=\mathcal B\overline{s(\bar k)}\mathcal B .
\]

For \(X^A=(X^{-1})^{\mathrm T}\), the cofactor Jost system satisfies
\[
(J^A)_x+[k\Lambda ,J^A]=-Q^{\mathrm{T}}J^A,\qquad s^A(k)=I_3+\int_{\mathbb{R}}{e^{x\widehat{k\Lambda }}(Q^{\mathrm{T}}J_{+}^{A})(x;k)\,dx.}
\]
The reversed exponential sign gives the complementary column domains
\[
J_{+}^{[1]}, J_{-}^{A,[1]}: D_1\cup D_2,\qquad J_{-}^{[3]}, J_{+}^{A,[3]}: D_6\cup D_1,
\]
where \(X^{[j]}\) denotes the \(j\)th column of \(X\).

Moreover, \(s^A_{11},s^A_{22},s^A_{33}\) are analytic, respectively, in \(D_4\cup D_5,D_2\cup D_3,D_6\cup D_1\).

Define the reflection coefficients by
\[
\begin{cases}
	r_1(k)\doteq \frac{s_{12}(k)}{s_{11}(k)},&		k\in \mathbb{R} _+,\\
	r_2(k)\doteq \frac{s_{12}^{A}(k)}{s_{11}^{A}(k)},&		k\in \mathbb{R} _-.\\
\end{cases}
\]
For a fixed \(M>1\), set
\[
	\begin{aligned}
		K_M&=[M,\infty),& I_M&=[M^{-1},M],& J_M&=(-\infty,-M^{-1}].
	\end{aligned}
\]

\begin{assumption}\label{ass:1.1}
	For the fixed \(M>1\) used below, the following conditions are invoked only in the sectors indicated.
	\begin{enumerate}
		\item[(a)] \emph{Generic scattering data.}
		\(\lim_{k\to0}ks_{11}(k)\ne0\) and
		\(\lim_{k\to0}ks_{11}^{A}(k)\ne0\). The nonzero discrete spectrum is finite and normal-simple. Its representative zeros are exclusive, generate pairwise disjoint complete orbits separated from \(\Sigma\cup\{0\}\), and exhaust the nonzero discrete spectrum. The analytic denominators have no common zeros, and there are no spectral singularities on \(\Sigma\setminus\{0\}\).

		\item[(b)] \emph{Positive-sector regularity.}
		The moving and frozen outer models in Sectors~I-II are uniformly bounded, solvable, and compatible with scalar SK reconstruction.

		\item[(c)] \emph{Negative-sector outer model.}
		The balanced residue system selected by \(J_M\) has a
		uniform determinant gap for \(t\ge t_0\) and \(\xi\in J_M\).
	\end{enumerate}
\end{assumption}

\subsection{A basic RH problem}
For each sector $D_n$ ($n=1,\dots,6$) and \(k\ne0\), define \(M_n(x;k)\) by
\[
(M_n)_{ij}(x;k)=\delta_{ij}+\int_{\gamma_{ij}^n}\bigl(e^{(x-y)\widehat{k\Lambda}}(QM_n)(y;k)\bigr)_{ij}\,dy,\qquad i,j=1,2,3.
\]
Here \(\gamma_{ij}^n\) runs from the normalizing infinity to \(x\), chosen by exponential decay, and \(\gamma_{ii}^n=(+\infty,x)\). One-sided continuation defines boundary values. Analytic Fredholm theory makes \(M_n\) meromorphic in \(D_n\setminus\{0\}\), continuous on the remaining boundary, and \(x\)-smooth. Assume finitely many simple poles \(\mathcal Z\subset\mathbb C\setminus(\Sigma\cup\{0\})\) and no spectral singularities on \(\Sigma\setminus\{0\}\). Then \(\det M_n=1\) and
\begin{equation}\label{2.1}
	M_n(x;k)=\mathcal A M_{n+2}(x;\omega k)\mathcal A^{-1}=\mathcal B\overline{M_{7-n}(x;\bar k)}\mathcal B .
\end{equation}

The Volterra recursion gives differentiated large-\(k\) expansions uniformly on closed subsectors. In particular,
\[
M_n(x;k)=I+\frac{J_+^{(1)}(x)}k+O(k^{-2}).
\]
For compactly supported data,
\[
M_n=J_-e^{x\widehat{k\Lambda }}S_n=J_+e^{x\widehat{k\Lambda }}T_n,\qquad s=S_nT_{n}^{-1},
\]
and on the ray separating \(D_{n-1}\) and \(D_n\),
\[
v_n(x,t;k)=e^{\widehat{k\Lambda x+9k^5\Lambda^2t}}S_{n-1}^{-1}(k)S_n(k).
\]
For Schwartz data, define instead
\[
v_n(k)=e^{-x\widehat{k\Lambda }}M_{n-1,-}^{-1}(x;k)M_{n,+}(x;k),\qquad k\in \partial D_n\setminus \{0\}.
\]
The Lax equation makes this expression \(x\)-independent. Compact-support approximation gives the same space-time jumps and tangential derivatives on \(\Sigma\), without off-ray continuation. Thus \(M=M_n\) in \(D_n\) satisfies \(M_+=M_-v\) on \(\Sigma\setminus(\mathcal Z\cup\{0\})\).

The sectorial Volterra construction, cofactor pairings, and origin expansions give the following denominator structure. Put \(p_i=s_{ii}\) and \(q_i=s^A_{ii}\).
\begin{lemma}\label{lem:2.1}
	Let \(u_0\in\mathcal S(\mathbb R)\). On the analytic sides of the six rays, as \(k\to0\),
	\[
	p_3(k)=\frac ck+p_0+O(k),\qquad q_3(k)=\frac dk+q_0+O(k),
	\]
	where \(c,d,p_0,q_0\in\mathbb R\), and
	\[
	p_1(k)=p_3(\omega k),\quad p_2(k)=p_3(\omega^2k),\qquad q_1(k)=q_3(\omega k),\quad q_2(k)=q_3(\omega^2k).
	\]
	The one-sided remainders and their first \(k\)-derivatives are uniformly \(O(k)\) and \(O(1)\). If Assumption~\ref{ass:1.1}{\rm(a)} holds, then \(cd\ne0\).
\end{lemma}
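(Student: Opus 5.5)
We first reduce to compactly supported data and then pass to the limit. For \(u_0\) with compact support, \(s(k)=J_+^{-1}(x;k)\,e^{x\widehat{k\Lambda}}\!\!{}^{-1}J_-\) type relations hold on \(\mathbb C\setminus\{0\}\). From the integral representation \(s(k)=I-\int e^{-x\widehat{k\Lambda}}(QJ_+)\,dx\) with \(Q=k^{-1}Q_1\), we obtain
\[
p_3(k)=s_{33}(k)=1-\frac1k\int_{\mathbb R}(Q_1J_+)_{33}(x;k)\,dx .
\]
No oscillatory factor appears on the diagonal, since \(\Lambda_{33}-\Lambda_{33}=0\). We insert the origin expansion \(J_+=k^{-1}J_{+,0}^{(-1)}+J_{+,0}^{(0)}+J_{+,0}^{(1)}k+\mathcal R_{0,+}^{(1)}\). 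The apparent \(k^{-2}\) term is \(k^{-2}\int Q_1J_{+,0}^{(-1)}\,dx\), and it vanishes because \(Q_1J_{\pm,0}^{(-1)}=0\) (equivalently, \(r^{\mathrm T}e=0\)). This leaves
\[
p_3(k)=\frac ck+p_0+O(k),\qquad c=-\int_{\mathbb R}(Q_1J_{+,0}^{(0)})_{33}\,dx,
\]
with \(p_0=1-\int(Q_1J_{+,0}^{(1)})_{33}\,dx\). The remainder and its derivative are controlled by the bounds \(\|Q_1\|g_{m,n,+}\in L^1\) with \(m=1\), \(n=0,1\). This gives \(O(k)\) for the remainder and \(O(1)\) for its derivative, uniformly on each closed analytic side. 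The same computation with the cofactor system, \((J^A)_{0}^{(-1)}=b_+re^{\mathrm T}\) and \(Q_1^{\mathrm T}=-2u\,re^{\mathrm T}\), again gives \(Q_1^{\mathrm T}(J^A)^{(-1)}_0\propto r\,(e^{\mathrm T}r)e^{\mathrm T}=0\). It yields \(q_3(k)=d/k+q_0+O(k)\).

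Reality of \(c,d,p_0,q_0\) comes from the \(\mathcal B\)-symmetry \(s(k)=\mathcal B\overline{s(\bar k)}\mathcal B\). Its \((3,3)\) entry gives \(p_3(k)=\overline{p_3(\bar k)}\). Comparing coefficients of \(k^{-1}\) and \(k^0\) in the Laurent expansions on the conjugate analytic sides of \(\overline{S}\) forces \(\bar c=c\) and \(\bar p_0=p_0\). Similarly \(s^A=((s)^{-1})^{\mathrm T}\) inherits the same symmetry, which gives \(d,q_0\in\mathbb R\). The rotation relations follow from the \(\mathcal A\)-symmetry \(s(k)=\mathcal As(\omega k)\mathcal A^{-1}\). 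Conjugation by the cyclic permutation \(\mathcal A\) maps the diagonal entries cyclically, so \(s_{11}(k)=s_{33}(\omega k)\) and \(s_{22}(k)=s_{11}(\omega k)=s_{33}(\omega^2k)\). The same holds for \(s^A\), since \(\mathcal A\) is orthogonal. These identities also transport the expansions and remainder bounds to the other sectors.

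Next we pass to Schwartz data. Truncating \(u_0\) to compactly supported \(u_0^{(n)}\to u_0\) in \(\mathcal S\) makes the Volterra solutions and their origin coefficients \(J_{+,0}^{(j)}\) converge locally uniformly. This uses the weighted bounds \(g_{m,n,\pm}\), which depend continuously on the potential. The formulas for \(c,p_0,d,q_0\) and the remainder estimates are stable in this limit, because the diagonal entries \(s_{ii}\) are defined directly through the integral representation on the closed sectors without off-ray continuation.

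Finally, for \(cd\neq0\) under Assumption~\ref{ass:1.1}(a), we use \(\lim_{k\to0}ks_{11}(k)\neq0\). By the rotation identity, \(ks_{11}(k)=\omega^{-1}(\omega k)p_3(\omega k)\to\omega^{-1}c\), so \(c\neq0\). Likewise, \(\lim ks^A_{11}\neq0\) gives \(d\ne0\). The main obstacle is the remainder step: we must ensure the uniform one-sided \(O(k)\) and \(O(1)\) derivative bounds hold up to the boundary rays for Schwartz (not compact) data. Our approach is to derive them directly from the differentiated origin expansion of \(J_+\) (with \(m=1\)), integrated against \(Q_1\), rather than through the scattering relation.
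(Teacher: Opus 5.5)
Your proposal is correct and follows essentially the paper's argument: you insert the origin expansion of \(J_+\) (and of \(J_+^A\)) into the diagonal of the integral representation of \(s\) (and \(s^A\)), kill the \(k^{-2}\) terms using \(Q_1J_{+,0}^{(-1)}=Q_1^{\mathrm T}(J_+^A)_0^{(-1)}=0\), and bound the remainders through \(g_{1,n,+}\); the cyclic identities, reality and \(cd\ne0\) then come from the \(\mathcal A\)-reduction, the Schwarz symmetry and Assumption~\ref{ass:1.1}(a). Your detour through compactly supported data is unnecessary, since the integral representation of \(s\) and the origin expansion are already available directly for Schwartz \(u_0\).
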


\begin{proof}
	The rank-one principal parts and \(Q_1J_{\pm,0}^{(-1)}=Q_1^{\mathrm T}(J_\pm^A)_0^{(-1)}=0\) exclude \(k^{-2}\) terms and give the differentiated remainders. The cyclic identities follow from the \(\mathbb Z_3\) reduction, Schwarz symmetry gives reality, and Assumption~\ref{ass:1.1}{\rm(a)} gives \(cd\ne0\).
\end{proof}

Here \(q_i\) is defined by the analytic cofactor scattering problem, and equals the complementary principal minor of \(s\) for compactly supported data. The analytic denominators are \(p_1,p_3,p_2\) in \(D_1\cup D_2\), \(D_3\cup D_4\), \(D_5\cup D_6\), respectively, and \(q_3,q_2,q_1\) in \(D_6\cup D_1\), \(D_2\cup D_3\), \(D_4\cup D_5\).

In \(D_1\),
\begin{equation}\label{2.2}
	M_{1}^{[1]}=J_{+}^{[1]},\qquad p_1M_{1}^{[2]}=-J_{-}^{A,[1]}\times J_{+}^{A,[3]},\qquad q_3M_{1}^{[3]}=J_{-}^{[3]},
\end{equation}
while in \(D_4\),
\begin{equation}\label{2.3}
	q_1M_4^{[1]}=J_-^{[1]},\qquad p_3M_4^{[2]}=J_-^{A,[3]}\times J_+^{A,[1]},\qquad M_4^{[3]}=J_+^{[3]}.
\end{equation}
The rank-one structure cancels the \(k^{-2}\) cross-product terms. Assumption~\ref{ass:1.1}{\rm(a)} gives \(p_1^{-1},q_3^{-1}=O(k)\), so \(M_1^{[2]},M_1^{[3]}=O(1)\), while \(M_1^{[1]}\) may have a simple pole. The other sectors follow by \eqref{2.1}. The endpoint expansions give \(r_1(0)=\omega^2\) and \(r_2(0)=1\).

More explicitly, the nonzero discrete spectrum in Assumption~\ref{ass:1.1}{\rm(a)} consists of the two families of representative exclusive zeros
	\[
	\begin{aligned}
		\mathcal Z_1
		&=\{\kappa_j\in D_1:p_1(\kappa_j)=0,\
		p_1'(\kappa_j)\ne0,\ q_3(\kappa_j)\ne0\}_{j=1}^{N_1},\\
		\mathcal Z_4
		&=\{\eta_\ell\in D_4:q_1(\eta_\ell)=0,\
		q_1'(\eta_\ell)\ne0,\ p_3(\eta_\ell)\ne0\}_{\ell=1}^{N_4}.
	\end{aligned}
	\]
	Their complete reduction orbits are pairwise disjoint, are separated from \(\Sigma\cup\{0\}\), and exhaust the nonzero discrete spectrum. The analytic denominators have no common zeros, and there are no spectral singularities on \(\Sigma\setminus\{0\}\).

The norming constants are defined by
\begin{equation}\label{2.4}
	\begin{aligned}
		\left.p_1(k)M_1^{[2]}(k)\right|_{k=\kappa_j}
		&=-p_1'(\kappa_j)c_j e^{-\theta_{21}(\kappa_j)}
		M_1^{[1]}(\kappa_j),\\
		\left.q_1(k)M_4^{[1]}(k)\right|_{k=\eta_\ell}
		&=q_1'(\eta_\ell)d_\ell e^{\theta_{21}(\eta_\ell)}
		M_4^{[2]}(\eta_\ell).
	\end{aligned}
\end{equation}
The exclusive-zero conditions and the pairing identities imply \(c_j,d_\ell\ne0\). For general Schwartz data, \eqref{2.4} is taken as the definition of the norming constants.

The full discrete spectrum is
\begin{equation}\label{2.5}
	\mathcal Z=\bigcup_{j=1}^{N_1}\{\omega^m\kappa_j,\omega^m\bar\kappa_j:m=0,1,2\}\cup\bigcup_{\ell=1}^{N_4}\{\omega^m\eta_\ell,\omega^m\bar\eta_\ell:m=0,1,2\}.
\end{equation}
The spectral contour, the sector decomposition, and the two types of complete reduction orbits in \eqref{2.5} are illustrated in Figure~\ref{fig:SK-spectrum-orbits}.

\begin{figure}[htbp]
	\centering
	
	\definecolor{SKSpecBlue}{HTML}{4477AA}
	\definecolor{SKSpecPurple}{HTML}{AA3377}
	\definecolor{SKSpecGray}{HTML}{777777}
	
	\begin{tikzpicture}[
		scale=1.20,
		>=stealth,
		line cap=round,
		line join=round
		]
		\tikzset{
			spectralray/.style={
				SKSpecGray!90!black,
				line width=0.95pt
			},
			orbitone/.style={
				circle,
				fill=SKSpecBlue,
				draw=SKSpecBlue!75!black,
				line width=0.65pt,
				inner sep=1.35pt
			},
			orbittwo/.style={
				rectangle,
				fill=white,
				draw=SKSpecPurple,
				line width=0.85pt,
				inner sep=1.45pt
			},
			sectorlabel/.style={
				font=\scriptsize
			},
			smalllabel/.style={
				font=\scriptsize
			}
		}
		
		\foreach \ang in {0,60,120,180,240,300}{
			\draw[spectralray,->]
			(0,0)--(\ang:2.45);
		}
		
		\node[
		smalllabel,
		right,
		xshift=1.5mm
		]
		at (0:2.45)
		{$\mathbb R$};
		
		\node[
		smalllabel,
		above right,
		xshift=0.8mm,
		yshift=0.5mm
		]
		at (60:2.45)
		{$\omega^2\mathbb R$};
		
		\node[
		smalllabel,
		above left,
		xshift=-0.8mm,
		yshift=0.5mm
		]
		at (120:2.45)
		{$\omega\mathbb R$};
		
		\fill (0,0) circle (1.1pt);
		
		\node[
		smalllabel,
		below right,
		xshift=0.5mm,
		yshift=-0.3mm
		]
		at (0,0)
		{$0$};
		
		\node[sectorlabel] at (30:0.78) {$D_1$};
		\node[sectorlabel] at (90:0.78) {$D_2$};
		\node[sectorlabel] at (150:0.78) {$D_3$};
		\node[sectorlabel] at (210:0.78) {$D_4$};
		\node[sectorlabel] at (270:0.78) {$D_5$};
		\node[sectorlabel] at (330:0.78) {$D_6$};
		
		\foreach \ang in {20,140,260,340,100,220}{
			\node[orbitone]
			at (\ang:1.72) {};
		}
		
		\node[
		smalllabel,
		above right,
		xshift=0.8mm,
		yshift=0.5mm
		]
		at (20:1.72)
		{$\kappa_j$};
		
		\foreach \ang in {210,330,90,150,270,30}{
			\node[orbittwo]
			at (\ang:1.27) {};
		}
		
		\node[
		smalllabel,
		below left,
		xshift=-0.8mm,
		yshift=-0.5mm
		]
		at (210:1.27)
		{$\eta_\ell$};
		
		\node[orbitone]
		at (-1.65,-2.48) {};
		
		\node[smalllabel,right]
		at (-1.53,-2.48)
		{$\mathcal Z_1$-orbit};
		
		\node[orbittwo]
		at (0.35,-2.48) {};
		
		\node[smalllabel,right]
		at (0.49,-2.48)
		{$\mathcal Z_4$-orbit};
		
	\end{tikzpicture}
	
	\caption{The spectral contour
		$\Sigma=\mathbb R\cup\omega\mathbb R\cup\omega^2\mathbb R$,
		the sectors $D_n$, and representative complete six-point orbits
		generated by $\kappa_j\in\mathcal Z_1$ and $\eta_\ell\in\mathcal Z_4$.}
	\label{fig:SK-spectrum-orbits}
\end{figure}
If $\operatorname*{Res}_{k=\lambda}M(k)=\lim_{k\to\lambda}M(k)C_\lambda$, the reductions give
\begin{equation}\label{2.6}
	C_{\omega ^m\lambda}=\omega ^m\mathcal{A} ^{-m}C_{\lambda}\mathcal{A} ^m,\qquad C_{\bar{\lambda}}=\mathcal{B} \overline{C_{\lambda}}\mathcal{B} ,\qquad m=0,1,2.
\end{equation}
Thus it suffices to prescribe
\[
C_{\kappa _j}=-c_je^{-\theta _{21}(\kappa _j)}E_{12},\qquad C_{\eta _{\ell}}=d_{\ell}e^{\theta _{21}(\eta _{\ell})}E_{21},
\]
and all remaining residue matrices are generated by \eqref{2.6}.

For the perturbative stability statement below, use \(\|f\|_{L^{1,1}}=\int_{\mathbb R}(1+|x|)|f(x)|\,dx\). Let \(\mathfrak R(\Sigma)\) denote the six open half-rays of \(\Sigma\).
For \(a>0\), let \(\mathcal Y_a\) be the space of ray-wise \(C^1\) reflection data \(r=(r_\Gamma)_{\Gamma\in\mathfrak R(\Sigma)}\) with finite norm
\begin{equation}\label{2.7}
	\|r\|_{\mathcal Y_a}
	=\sum_{\Gamma\in\mathfrak R(\Sigma)}
	\sup_{\substack{k\in\Gamma\\ |k|\ge a}}
	(1+|k|)\bigl(|r_\Gamma(k)|
	+|\partial_\tau r_\Gamma(k)|\bigr).
\end{equation}
\begin{lemma}\label{lem:2.2}
Let \(u_*\in\mathcal S(\mathbb R)\) be real-valued and satisfy the nonzero-spectral conditions in Assumption~\ref{ass:1.1}{\rm(a)}. Choose \(r_0>0\) so that \(|k|\le 2r_0\) contains no representative
zero of \(u_*\). For real \(u_0\in\mathcal S(\mathbb R)\), set
\[
\varepsilon=\|u_0-u_*\|_{L^{1,1}}.
\]
There exists \(\varepsilon_0=\varepsilon_0(u_*,r_0)>0\) such that, if \(\varepsilon\le\varepsilon_0\), each representative zero of \(u_*\) has a unique normal-simple continuation for \(u_0\), and the
corresponding complete orbits remain exclusive and pairwise disjoint. Moreover,
\[
\max_j \bigl( |\kappa _j(u_0)-\kappa _j(u_*)|+|c_j(u_0)-c_j(u_*)| \bigr) +\max_{\ell} \bigl( |\eta _{\ell}(u_0)-\eta _{\ell}(u_*)|+|d_{\ell}(u_0)-d_{\ell}(u_*)| \bigr) \le C\varepsilon ,
\]
where empty maxima are omitted. No additional zero or spectral singularity occurs for \(|k|\ge r_0\). Furthermore, for every \(a\ge r_0\),
\[
\|r[u_0]-r[u_*]\|_{\mathcal Y_a}\le C_a\varepsilon.
\]
In particular, if \(u_*\) is reflectionless, then
\begin{equation}\label{2.8}
\|r[u_0]\|_{\mathcal Y_a}\le C_a\varepsilon, \qquad a\ge r_0.
\end{equation}
\end{lemma}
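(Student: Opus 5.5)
The plan is to treat everything as Lipschitz dependence of the Volterra and Fredholm data on the potential in \(L^{1,1}\), restricted to the region \(|k|\ge r_0\) away from the singular origin, and then to pass to zeros by Rouch\'e and the implicit function theorem.

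\textbf{Step 1: Jost functions.} Since \(Q=k^{-1}Q_1\) with \(Q_1=-2u\,er^{\mathrm T}\), the kernel of the Volterra equations for \(J_\pm\), \(J_\pm^A\) is bounded by \(C r_0^{-1}|u(y)|\) on \(|k|\ge r_0\). Standard Volterra iteration and the resolvent identity then give, uniformly for \(k\) in the closed analytic domains with \(|k|\ge r_0\),
\[
\sup_x\bigl\|J_\pm[u_0](x;k)-J_\pm[u_*](x;k)\bigr\|\le C\|u_0-u_*\|_{L^1}.
\]
The same bound holds for the first \(k\)-derivative. The \(y\)-weight in \(L^{1,1}\) absorbs the factor \(|x-y|\) produced by differentiating \(e^{(x-y)\widehat{k\Lambda}}\). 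Inserting this into the integral formulas for \(s\) and \(s^A\) yields
\[
|p_i[u_0]-p_i[u_*]|+|q_i[u_0]-q_i[u_*]|\le C\varepsilon
\]
on \(|k|\ge r_0\), together with derivative bounds. The off-diagonal entries obey the same estimate ray-wise. The expansions at infinity give the extra factor \((1+|k|)^{-1}\) needed for the weight in \eqref{2.7}. Concretely, I would write \(s-I=-\int e^{-x\widehat{k\Lambda}}(QJ_+)\), which is \(O(k^{-1})\) times an \(L^1\)-Lipschitz quantity.

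\textbf{Step 2: zeros, norming constants, and no new spectrum.} This is the main step. By Assumption~\ref{ass:1.1}{\rm(a)} for \(u_*\), the zeros \(\kappa_j\) of \(p_1\) and \(\eta_\ell\) of \(q_1\) are simple, finite in number, and separated from \(\Sigma\cup\{0\}\). On \(\{|k|\ge r_0\}\) minus small discs around them, \(|p_i[u_*]|\) and \(|q_i[u_*]|\) are bounded below. On bounded pieces this follows from continuity up to \(\Sigma\setminus\{0\}\) and the absence of spectral singularities. At infinity it follows from \(p_i,q_i\to1\), uniformly for every \(u_0\) with \(\|u_0\|_{L^{1,1}}\) bounded, via the \(O(k^{-1})\) tail of Step 1.

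Rouch\'e's theorem in each disc then gives exactly one simple zero for \(u_0\), and the implicit function theorem gives \(|\kappa_j(u_0)-\kappa_j(u_*)|\le C\varepsilon\). Outside the discs no zeros can occur. On \(\Sigma\setminus\{0\}\) with \(|k|\ge r_0\), the lower bound excludes spectral singularities.

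The exclusivity condition \(q_3(\kappa_j)\ne0\), and disjointness of orbits, are open conditions and so persist. The orbits themselves are generated by the symmetries, which hold for real \(u_0\), so the complete orbits are automatic.

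For the norming constants I would use \eqref{2.4}:
\[
c_j=-\frac{p_1M_1^{[2]}(\kappa_j)}{p_1'(\kappa_j)\,M_1^{[1]}(\kappa_j)}.
\]
In this formula \(p_1M_1^{[2]}=-J_-^{A,[1]}\times J_+^{A,[3]}\) by \eqref{2.2}, with \(\theta_{21}\) evaluated at \(t=0\) and \(x\) fixed. Each factor is Lipschitz in \(u\) and analytic in \(k\), and the denominator is bounded below near \(\kappa_j\). Combining this with the bound on the zeros gives \(|c_j(u_0)-c_j(u_*)|\le C\varepsilon\); the argument for \(d_\ell\) is the same.

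The main obstacle is the uniform lower bound of the denominators near infinity and near \(\Sigma\). I would handle it through the large-\(k\) expansion and the ray-continuity of the boundary values, using the derivative estimates of Step 1 at the boundary.

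\textbf{Step 3: reflection data.} The ratios \(r_1=s_{12}/s_{11}\) and \(r_2=s^A_{12}/s^A_{11}\), and their images under \(\mathcal A,\mathcal B\) on the other rays, are then Lipschitz on \(|k|\ge a\ge r_0\). The denominators are bounded below there by Step 2, and the numerators are \(O(\varepsilon(1+|k|)^{-1})\) together with their \(\partial_\tau\) derivatives by Step 1. This gives \(\|r[u_0]-r[u_*]\|_{\mathcal Y_a}\le C_a\varepsilon\). Finally, \eqref{2.8} follows by setting \(r[u_*]=0\).
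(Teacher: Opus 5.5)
Your proposal is correct and follows the paper's own proof essentially step for step. The shared steps are: $L^{1,1}$-Lipschitz dependence of the Jost and cofactor Volterra data on $|k|\ge r_0$, with the $(1+|k|)^{-1}$ gain coming from $Q=k^{-1}Q_1$; Rouch\'e for the persistent zeros; the residue formula \eqref{2.4} for the norming constants; a lower bound for the reference denominators on the compact annulus, plus the large-$k$ asymptotics, to exclude new zeros and spectral singularities; and the quotients $r_1=s_{12}/p_1$, $r_2=s^A_{12}/q_1$ for the $\mathcal Y_a$ bound. One cosmetic slip: your displayed formula for $c_j$ omits the factor $e^{\theta_{21}(\kappa_j)}$ from \eqref{2.4}, and its quotient should be read in a fixed nonvanishing component of $M_1^{[1]}(\kappa_j)$. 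Since that factor is an entire, $u$-independent function of $k$, the Lipschitz argument is unaffected.
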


\begin{proof}
The Volterra equations for the Jost and cofactor solutions, together with the resolvent identity, imply that for \(|k|\ge r_0\) the relevant Jost columns and their first \(k\)-derivatives depend locally
Lipschitz continuously on the potential in \(L^{1,1}\). Hence, on their respective analytic domains,
\[
|f[u_0](k)-f[u_*](k)|+|\partial _kf[u_0](k)-\partial _kf[u_*](k)|\le \frac{C_{r_0}\varepsilon}{1+|k|},\qquad f\in \{p_i,q_i,s_{12},s_{12}^{A}\}.
\]

By normal simplicity and exclusivity of the reference zeros, Rouch\'e's theorem gives exactly one zero of the perturbed denominator near each reference representative zero. The preceding Lipschitz
bound yields
\[
|\kappa_j(u_0)-\kappa_j(u_*)|+|\eta_\ell(u_0)-\eta_\ell(u_*)|\le C\varepsilon.
\]
The residue formulas~\eqref{2.4}, together with the same local bounds and Cauchy estimates, imply
\[
|c_j(u_0)-c_j(u_*)|+|d_\ell(u_0)-d_\ell(u_*)|\le C\varepsilon.
\]
The reduction symmetries then preserve the corresponding complete exclusive orbits.

Finally, outside the zero disks, the analytic denominators of the reference problem are uniformly bounded away from zero on the compact region
\[
\{k:r_0\le |k|\le R\},
\]
while their large-\(k\) asymptotics give the same conclusion for \(|k|\ge R\). Thus no additional zero or spectral singularity occurs for \(|k|\ge r_0\). Using
\[
r_1=\frac{s_{12}}{p_1}, \qquad r_2=\frac{s_{12}^{A}}{q_1},
\]
and their reduction images gives
\[
\parallel r[u_0]-r[u_*]\parallel _{\mathcal{Y} _a}\le C_a\varepsilon ,\qquad a\ge r_0.
\]
If \(u_*\) is reflectionless, this reduces to~\eqref{2.8}.
\end{proof}

The mSK data are the tilded analogues above, with the same simplicity, exclusivity, orbit, and residue conditions. Expressing \(S_{n-1}^{-1}S_n\) through \(s,s^A\) and applying the space-time conjugation gives the six jumps
\begin{equation}\label{2.9}
\begin{aligned}
	v_1&=\left( \begin{matrix}
		1&		-r_1(k)\,\mathrm{e}^{-\theta _{21}}&		0\\
		\overline{r}_1(k)\,\mathrm{e}^{\theta _{21}}&		1-|r_1(k)|^2&		0\\
		0&		0&		1\\
	\end{matrix} \right) ,v_2=\left( \begin{matrix}
		1&		0&		0\\
		0&		1-|r_2(\omega k)|^2&		-\overline{r}_2(\omega k)\,\mathrm{e}^{-\theta _{32}}\\
		0&		r_2(\omega k)\,\mathrm{e}^{\theta _{32}}&		1\\
	\end{matrix} \right) ,\\
	v_3&=\left( \begin{matrix}
		1-|r_1(\omega ^2k)|^2&		0&		\overline{r}_1(\omega ^2k)\,\mathrm{e}^{-\theta _{31}}\\
		0&		1&		0\\
		-r_1(\omega ^2k)\,\mathrm{e}^{\theta _{31}}&		0&		1\\
	\end{matrix} \right) ,v_4=\left( \begin{matrix}
		1-|r_2(k)|^2&		-\overline{r}_2(k)\,\mathrm{e}^{-\theta _{21}}&		0\\
		r_2(k)\,\mathrm{e}^{\theta _{21}}&		1&		0\\
		0&		0&		1\\
	\end{matrix} \right) ,\\
	v_5&=\left( \begin{matrix}
		1&		0&		0\\
		0&		1&		-r_1(\omega k)\,\mathrm{e}^{-\theta _{32}}\\
		0&		\overline{r}_1(\omega k)\,\mathrm{e}^{\theta _{32}}&		1-|r_1(\omega k)|^2\\
	\end{matrix} \right) ,v_6=\left( \begin{matrix}
		1&		0&		r_2(\omega ^2k)\,\mathrm{e}^{-\theta _{31}}\\
		0&		1&		0\\
		-\overline{r}_2(\omega ^2k)\,\mathrm{e}^{\theta _{31}}&		0&		1-|r_2(\omega ^2k)|^2\\
	\end{matrix} \right) ,\\
\end{aligned}
\end{equation}
where $\theta_{ij}(x,t;k)=(l_i-l_j)x+(z_i-z_j)t$ $(1\leq i,j\leq3)$ with $(l_1,l_2,l_3)=(\omega k,\omega^2k,k)$ and $(z_1,z_2,z_3)=(9\omega^2k^5,9\omega k^5,9k^5)$.

The SK inverse problem is the following.
\begin{rhp}\label{rhp1}
	Find a matrix-valued function \(M(x,t;k)=M_n(x,t;k)\) for \(k\in D_n\) (\(n=1,\dots,6\)) with the following properties:
	\begin{enumerate}
		\item[(i)]
		\(M(x,t;k)\) is meromorphic in $\mathbb C\setminus(\Sigma\cup\{0\})$, with simple poles precisely at the points of \(\mathcal Z\) in \eqref{2.5}.
		
		\item[(ii)]
		The boundary values satisfy
		\[
		M_+(x,t;k) = M_-(x,t;k)\,v(x,t;k),\qquad k\in\Sigma \setminus \{0\},
		\]
		where the jump matrices \(v_n(x,t;k)\) are given by \eqref{2.9}.
		
		\item[(iii)]
		The function satisfies
		\[
		M(x,t;k)=\mathcal{A}M(x,t;\omega k)\mathcal{A}^{-1}=\mathcal{B}\,\overline{M(x,t;\bar k)}\,\mathcal{B}.
		\]
		
		\item[(iv)]
		As \(k\to\infty\),
		\[
		M(x,t;k)=I_3+\frac{M_1(x,t)}{k}+\mathcal O(k^{-2}),\qquad M_1(x,t)=\alpha(x,t)\operatorname{diag}(\omega^2,\omega,1),
		\]
		where \(\alpha(x,t)\in\mathbb R\).
		
		\item[(v)]
		For every fixed integer \(p\ge0\), as \(k\to0\) in \(D_n\),
		\[
		M_n(x,t;k)=
		\sum_{l=-1}^pM_{0,n}^{(l)}(x,t)k^l
		+O(k^{p+1}).
		\]
		In \(D_1\),
		\[
		M_{0,1}^{(-1)}(x,t)=a_+(x,t)
		\begin{pmatrix}
			\omega ^2&0&0\\
			\omega ^2&0&0\\
			\omega ^2&0&0
		\end{pmatrix},
		\]
		where \(a_+\) is real valued and decays rapidly as \(x\to+\infty\), the remaining sectorial coefficients are generated by
		\[
		M_{0,n}^{(-1)}=\omega^2\mathcal A M_{0,n+2}^{(-1)}\mathcal A^{-1}=\mathcal B\overline{M_{0,7-n}^{(-1)}}\mathcal B.
		\]
		
		\item[(vi)] At the two independent representative sets,
		\[
		\underset{k=\kappa _j}{\mathrm{Res}}M(k)=\lim_{k\rightarrow \kappa _j} M(k)\left( \begin{matrix}
			0&		-c_je^{-\theta _{21}(\kappa _j)}&		0\\
			0&		0&		0\\
			0&		0&		0\\
		\end{matrix} \right),\qquad j=1,\ldots,N_1,
		\]
		
		\[
		\underset{k=\eta _\ell}{\mathrm{Res}}M(k)=\lim_{k\rightarrow \eta _\ell} M(k)\left( \begin{matrix}
			0&		0&		0\\
			d_\ell e^{\theta _{21}(\eta _\ell)}&		0&		0\\
			0&		0&		0\\
		\end{matrix} \right),\qquad \ell=1,\ldots,N_4,
		\]
		and all remaining residue conditions are generated by \eqref{2.6}.
	\end{enumerate}
\end{rhp}
Set
\[
\ell^{\mathrm T}=(\omega,\omega^2,1),\qquad e_3=(0,0,1)^{\mathrm T},\qquad N=\ell^{\mathrm T}M .
\]
The diagonal condition in item~{\rm(4)} of RH Problem~\ref{rhp1} defines the reconstructed potential by
\begin{equation}\label{2.10}
	u(x,t)=-\frac{1}{2}\partial _x\lim_{k\rightarrow \infty} k\bigl( N_3(x,t;k)-1 \bigr) =-\frac{1}{2}\partial _x\bigl( \ell ^{\mathrm{T}}M_1(x,t)e_3 \bigr) =-\frac{1}{2}\partial _x\alpha (x,t).
\end{equation}
This row formula survives all inverse transformations. Since \(\ell^{\mathrm T}M_{0,1}^{(-1)}=a_+(\omega+\omega^2+1)(\omega^2,0,0)=0\), the reductions make \(N\) bounded at \(0\) in every sector.

\begin{prop}
	For each fixed \((x,t)\) and fixed scattering data, RH Problem~\ref{rhp1} has at most one solution, and every solution satisfies \(\det M\equiv1\). For data generated by the global solution \(u\) in Theorem~\ref{th1}, the problem is solvable and \eqref{2.10} reconstructs \(u\).
\end{prop}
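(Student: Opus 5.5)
The plan has three parts: show $\det M\equiv1$ by a Liouville argument, deduce uniqueness from the ratio of two solutions, and obtain existence by building $M$ from the Jost functions.

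\emph{Step 1: $\det M\equiv1$.} Let $M$ be any solution of RH Problem~\ref{rhp1} and set $d=\det M$. Each jump $v_n$ in \eqref{2.9} is unimodular: the only nontrivial block is $2\times2$ with determinant $(1-|r|^2)\cdot1+|r|^2=1$, and similarly for the other blocks. Hence $d_+=d_-$ on $\Sigma\setminus\{0\}$, and by continuity of boundary values and Morera's theorem $d$ is analytic across $\Sigma\setminus\{0\}$.

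At a pole $\lambda\in\mathcal Z$, the residue condition says $M(I-C_\lambda/(k-\lambda))$ is regular at $\lambda$, with $C_\lambda$ nilpotent of rank one ($E_{12}$ or $E_{21}$ type, and conjugates of these under \eqref{2.6}). Therefore $\det(I-C_\lambda/(k-\lambda))=1$, and $d$ has a removable singularity at $\lambda$.

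At $k=0$, item (v) gives the expansion $M=k^{-1}M^{(-1)}_{0,n}+M^{(0)}_{0,n}+O(k)$ with $M^{(-1)}_{0,n}$ of rank one. Expanding the determinant multilinearly in columns, at most one column can carry the $k^{-1}$ factor, so $d=O(k^{-1})$. Since the $k^{-1}$ pole is carried by a single column (for instance the first column in $D_1$), it remains to show that the $k^{-1}$ coefficient vanishes. I would get this in one of two ways. The first is to use the row vector $\ell^{\mathrm T}$ together with $\ell^{\mathrm T}M^{(-1)}_{0,1}=0$: replace the first row of $M$ by $\ell^{\mathrm T}M$ (an invertible row operation, since $\ell$ has nonzero third entry after reordering), which is bounded at $0$. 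The second is simply to note that $d$ is single-valued with an isolated singularity of order at most one, and that the $\mathbb Z_3$ symmetry $d(k)=d(\omega k)$ forbids a $k^{-1}$ term, because $k^{-1}\ne(\omega k)^{-1}$. The second argument is the clean one, so $0$ is removable.

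Finally $d\to1$ at infinity by item (iv). Liouville's theorem then gives $d\equiv1$.

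\emph{Step 2: uniqueness.} Let $M,\tilde M$ be two solutions and set $H=\tilde M M^{-1}$; by Step 1, $M^{-1}$ is the adjugate of $M$. On $\Sigma\setminus\{0\}$ the jumps cancel, so $H$ has no jump there. At $\lambda\in\mathcal Z$, both $\tilde M(I-C_\lambda/(k-\lambda))$ and $M(I-C_\lambda/(k-\lambda))$ are regular and invertible, so $H$ is their ratio and is regular. At $0$, $H=O(k^{-2})$ a priori. Using the $\mathbb Z_3$ symmetry $H(k)=\mathcal AH(\omega k)\mathcal A^{-1}$ alone is not enough, so I would use the rank-one structure: the principal parts of $M$ and $\tilde M$ are both of the form $a\,(\text{column } e)\otimes\text{first row}$ in $D_1$. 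The adjugate has rows annihilated by the same structure, and then $H=O(k^{-1})$. The residue $\operatorname{Res}_0H$ must satisfy the $\mathbb Z_3$ and Schwarz relations; combining these with $\ell^{\mathrm T}$-boundedness, i.e.\ $\ell^{\mathrm T}H$ is bounded because $\ell^{\mathrm T}\tilde M$ is bounded and $M^{-1}$ has controlled singularity, I expect to conclude that the residue vanishes. Then $H$ is entire with $H\to I$, so $H\equiv I$. I expect this origin analysis to be the main obstacle, and I would carry it out carefully via the sectorial coefficient relations in item (v).

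\emph{Step 3: existence and reconstruction.} For data generated by $u$, $M=M_n$ constructed from the Volterra and Fredholm integral equations in each $D_n$ satisfies items (i)--(vi). The jumps come from the definition of $v_n$ and its $x$-independence, with the $t$-dependence obtained from the Lax pair. The symmetries are \eqref{2.1}, the origin behaviour comes from \eqref{2.2}--\eqref{2.3} and Lemma~\ref{lem:2.1}, and the residues come from \eqref{2.4}. The large-$k$ expansion $M_1=J_+^{(1)}$ gives $\alpha=2\int_x^\infty u$, so \eqref{2.10} returns $u$. By Step 2, this solution is the unique solution of RH Problem~\ref{rhp1}.
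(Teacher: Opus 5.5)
Your Steps 1 and 3 follow the paper's argument. Step 1 uses unimodular jumps, the factorization $M=H_\lambda(I_3+C_\lambda/(k-\lambda))$ with $C_\lambda^2=0$, removal of the simple principal part of $\det M$ at $0$ via $\det M(k)=\det M(\omega k)$, and Liouville. Step 3 uses the sectorial Fredholm solution with $\alpha=2\int_x^\infty u$. Your reduction of $H=\tilde MM^{-1}$ to at most a simple pole at $0$ is also what the paper calls the origin form. In $D_1$ the principal part of $\tilde M$ is $\omega^2\tilde a_+ee_1^{\mathrm T}$, and the principal part of $M^{-1}=\operatorname{adj}M$ has vanishing first row, so the $k^{-2}$ term cancels.

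The gap is the last step of uniqueness: you never show $\operatorname{Res}_0H=0$, and the tools you name cannot do it.

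\emph{Symmetries give nothing.} Since $H$ has no jumps, no nonzero poles, at most a simple pole at $0$, and $H\to I_3$, Liouville gives $H=I_3+P/k$ exactly. Comparing with the expansions at infinity, item (iv) gives $P=\tilde M_1-M_1=\beta\operatorname{diag}(\omega^2,\omega,1)$ with $\beta=\tilde\alpha-\alpha\in\mathbb R$. This $P$ satisfies $P=\omega^2\mathcal AP\mathcal A^{-1}=\mathcal B\overline P\mathcal B$, so the $\mathbb Z_3$ and Schwarz relations impose no constraint on $\beta$.

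\emph{The row claim is unjustified.} Your assertion that $\ell^{\mathrm T}H=\ell^{\mathrm T}\tilde MM^{-1}$ is bounded does not follow from its premises. A bounded row times $M^{-1}=O(k^{-1})$ is only $O(k^{-1})$. Its $k^{-1}$ coefficient is $\ell^{\mathrm T}\tilde M_{0,1}^{(0)}N$, with $N$ the principal part of $M^{-1}$. This vanishes when $\tilde M=M$, but for a second solution its vanishing is exactly what must be proved.

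\emph{The missing idea is the determinant.} Apply Step 1 to both solutions, so $\det H\equiv1$. On the other hand,
\[
\det\Bigl(I_3+\frac{\beta}{k}\operatorname{diag}(\omega^2,\omega,1)\Bigr)=\Bigl(1+\frac{\beta\omega^2}{k}\Bigr)\Bigl(1+\frac{\beta\omega}{k}\Bigr)\Bigl(1+\frac{\beta}{k}\Bigr)=1+\frac{\beta^3}{k^3}.
\]
Hence $\beta=0$ and $H\equiv I_3$. This is the paper's argument.

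Your row idea can be rescued, but only after $P$ has been identified from infinity. Since $\ell^{\mathrm T}P=\beta e^{\mathrm T}$, boundedness of $\ell^{\mathrm T}\tilde M=\ell^{\mathrm T}M+\beta e^{\mathrm T}M/k$ at $0$ forces $\beta a_+=0$ and $\beta e^{\mathrm T}M_{0,1}^{(0)}=0$. If $\beta\neq0$, then $a_+=0$, so $M$ is regular at $0$ in $D_1$ with $\det M_{0,1}^{(0)}=1$. That gives $e^{\mathrm T}M_{0,1}^{(0)}\neq0$, a contradiction.
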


\begin{proof}
	Every residue factor is rank-one nilpotent, and locally
	\[
	M=H_\lambda\left(I_3+\frac{C_\lambda}{k-\lambda}\right),
	\qquad C_\lambda^2=0,
	\]
	with \(H_\lambda\) holomorphic. Thus \(\det M\) has neither jumps nor nonzero poles. Its possible simple principal part at \(0\) is excluded by \(\det M(k)=\det M(\omega k)\), so Liouville's theorem gives \(\det M\equiv1\).
	
	For two solutions, the same factorization and the origin form give \(Q=M\widehat M^{-1}=I_3+P/k\). At infinity, \(P=(\alpha-\widehat\alpha)\operatorname{diag}(\omega^2,\omega,1)\), while \(\det Q=1\) forces \(\alpha=\widehat\alpha\). Hence uniqueness. The sectorial Fredholm solution for \(u(\cdot,t)\) has the stated data and \(\alpha=2\int_x^\infty u(y,t)\,dy\), so \eqref{2.10} reconstructs \(u\).
\end{proof}

For mSK the Jost solutions are regular at \(0\). Replacing \(r_j\) by their tilded analogues gives the regular RH problem below.

\begin{rhp}\label{rhp2}
Find $m(x,t;k)$ meromorphic in $\mathbb C\setminus\Sigma$, with simple poles at $\widetilde{\mathcal Z}$, such that
\[
m_+=m_-\widetilde v,\qquad
m(k)=\mathcal A m(\omega k)\mathcal A^{-1}
=\mathcal B\overline{m(\bar k)}\mathcal B.
\]
The jumps $\widetilde v_n$ are obtained from \eqref{2.9} by replacing $r_j$ with $\widetilde r_j$. The representative residue matrices are
\[
\widetilde C_{\widetilde\kappa_j}
=-\widetilde c_j e^{-\theta_{21}(\widetilde\kappa_j)}E_{12},\qquad
\widetilde C_{\widetilde\eta_\ell}
=\widetilde d_\ell e^{\theta_{21}(\widetilde\eta_\ell)}E_{21},
\]
and $\operatorname*{Res}_{k=\lambda}m=\lim_{k\to\lambda}m\widetilde C_\lambda$, with the remaining residues generated by \eqref{2.6}. At infinity, $m=I_3+m_1/k+O(k^{-2})$, where
\[
m_1=\frac w3\begin{pmatrix}0&\omega&1\\\omega^2&0&1\\\omega^2&\omega&0\end{pmatrix}
+\frac13\int_\infty^x w(y,t)^2\,dy\,
\operatorname{diag}(\omega^2,\omega,1).
\]
At zero, $m_n$ has an expansion $m_n=\sum_{l=0}^p m_{0,n}^{(l)}k^l+O(k^{p+1})$ for every $p\ge0$, with
\[
m_{0,n}^{(l)}=\omega^l\mathcal A m_{0,n+2}^{(l)}\mathcal A^{-1}
=\mathcal B\overline{m_{0,7-n}^{(l)}}\mathcal B.
\]
\end{rhp}
The mSK potential is recovered by
\[
w(x,t) = 3\lim_{k\to\infty} k\,\bigl(m(x,t;k)\bigr)_{13}.
\]

\section{Long-time asymptotics in Sectors I and II}\label{sec3}
Fix $M>1$, and let $K_M$, $I_M$, and $J_M$ be the velocity sets defined in Section~\ref{sec2}. The following theorem describes the SK asymptotics in the four regions illustrated in Figure~\ref{fig:SK-asymptotic-regions}. Its proof is given in this section and Sections~\ref{sec4}-\ref{sec5}, together with the moving-to-frozen comparison in Section~\ref{sec6}.

\begin{theorem}[Long-time asymptotics of the SK equation]\label{th1}
Let \(u\) be a global classical real-valued solution of \eqref{1.1} with \(u_0\in\mathcal S(\mathbb R)\), and suppose that Assumption~\ref{ass:1.1} holds. Then, as \(t\to+\infty\),
	\begin{enumerate}
		\item[(i)] \emph{Sector I.}
		Uniformly for \(\xi=x/t\ge M\),
		\begin{equation}\label{3.1}
			u(x,t)=u_{\rm sol}(x,t;\widehat\sigma_M)+t^{-1/2}h_M+O(t^{-3/4}),
		\end{equation}
		where \(u_{\rm sol}(x,t;\widehat\sigma_M)\), \(\widehat\sigma_M\), and
		\(h_M\) are defined in \eqref{3.12}, \eqref{3.14}, and
		\eqref{3.35}, respectively.
		
		\item[(ii)] \emph{Sector II.}
		Uniformly for \(\xi\in I_M\),
		\[
		u(x,t)=u_{\rm sol}(x,t;\widehat\sigma(I_M))+t^{-1/2}h+O(t^{-3/4}),
		\]
		where \(u_{\rm sol}(x,t;\widehat\sigma(I_M))\) is defined by
		\eqref{3.12}, and \(h\) is given by \eqref{3.34}.
		
		\item[(iii)] \emph{Sector III.}
		Let \(u=(w_x-w^2)/6\), where \(w\) is a global mSK solution with Schwartz data satisfying Assumption~\ref{ass:1.2}{\rm(a)}, \eqref{4.8}, and \(\kappa_M<1\) in \eqref{4.14}. Uniformly for \(s=x/(5t)^{1/5}\in[-M,M]\),
		\[
		u(x,t)=\frac16(5t)^{-2/5}\{p'(s)-p(s)^2\}+O((5t)^{-3/5}),
		\]
where \(p\) is the real profile defined by \eqref{4.23} and satisfying
the F-XVIII equation \eqref{4.25}.
		
		\item[(iv)] \emph{Sector IV.}
		For every \(N\ge1\), there exists \(T_N>0\) such that uniformly for
		\(t\ge T_N\) and \(\xi\in J_M\),
		\[
		u(x,t)=u_J^{\rm out}(x,t)+O\bigl((|x|+t)^{-N}\bigr).
		\]
		Here \(u_J^{\rm out}\) is the rational outer solution on \(J_M\) defined by \eqref{5.15}.
	\end{enumerate}
\end{theorem}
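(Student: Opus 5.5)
The proof runs region by region through the $\bar\partial$-nonlinear steepest descent method applied to RH Problem~\ref{rhp1} (Sectors I, II, IV) and to RH Problem~\ref{rhp2} (Sector III). Since the six-fold symmetry \eqref{2.1}, \eqref{2.6} lets us work in $D_1\cup D_6$ and propagate by $\mathcal A,\mathcal B$, each step is carried out on the representative half-rays only.

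\emph{Step 1: phase structure.} For each $\xi=x/t$, locate the stationary points of $\theta_{21}$, i.e.\ the solutions of $\xi=45k^4$ on the relevant rays, and record the sign charts of $\operatorname{Re}\theta_{ij}$.

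\emph{Step 2: orbitwise pole reduction.} Using the sign of $\operatorname{Re}\theta_{21}(\kappa_j)$ (resp.\ $\eta_\ell$), split $\mathcal Z_1,\mathcal Z_4$ into the following classes:
\begin{itemize}
\item poles whose exponentials are bounded, which are kept;
\item poles whose exponentials grow, whose residues are inverted by triangular conjugation with Blaschke-type scalar factors built from $p_1,q_1$ and their reductions;
\item poles far from the velocity window, which are absorbed into small jumps on circles, with errors $O(e^{-ct})$.
\end{itemize}

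\emph{Step 3: scalar conjugation and opening of lenses.} Conjugate by a scalar $\delta$ solving the scalar RH problem with jump $1-|r_1|^2$ (resp.\ $1-|r_2|^2$) on the interval between stationary points. This factorizes each $v_n$ into upper/lower triangular factors. Open lenses with nonanalytic $C^1$ extensions $R$ bounded by $|\partial_\tau r|$ and $|k-k_0|^{-1/2}$.

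\emph{Step 4: model problems and error terms.} Write $M^{(2)}=M^{\rm RHP}M^{(3)}$, where $M^{(3)}$ solves a pure $\bar\partial$ problem. Decompose $M^{\rm RHP}$ into the outer meromorphic model, which is reflectionless with modified norming constants $\widehat\sigma$ and yields $u_{\rm sol}$ (Assumption~\ref{ass:1.1}(b)), and parabolic-cylinder local models at the stationary points, matched by a small-norm error contributing $t^{-1/2}h$. The standard $\bar\partial$ estimates $\|\bar\partial R\|$ then give $M^{(3)}=I+O(t^{-3/4})$. Reconstruction via \eqref{2.10} uses the row vector $N$, which is bounded at $0$, so the singularity at the origin does not contaminate the formula; this gives (i) and (ii).

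\emph{Step 5: Sector IV.} For $\xi\in J_M$ there are no real stationary points on the relevant rays. After the same pole reduction, every jump can be deformed into exponentially decaying regions, and the $\bar\partial$ remainder decays faster than any power. The outer model is the balanced residue system, whose uniform determinant gap (Assumption~\ref{ass:1.1}(c)) gives $u_J^{\rm out}$ with $O((|x|+t)^{-N})$ error.

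\emph{Step 6: Sector III.} Here the stationary points coalesce at the singular origin, so we work instead with the regular RH Problem~\ref{rhp2}. Rescale $k=z(5t)^{-1/5}$ so that the phase becomes $sz+z^5$-type. The local model is then the F-XVIII Painlevé RH problem, solvable under \eqref{4.8} and $\kappa_M<1$, and it gives $w=(5t)^{-1/5}p(s)+O((5t)^{-2/5})$. The bound $\kappa_M<1$ guarantees that there are no poles in the shrinking window. Apply the Miura map \eqref{1.3} to pass from $w$ to $u$. The derivative costs a factor $(5t)^{-1/5}$, so the leading term is $\tfrac16(5t)^{-2/5}(p'-p^2)$. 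The stated error $O((5t)^{-3/5})$ requires the $x$-derivative of the mSK error expansion, which I would obtain by differentiating the small-norm representation.

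The main obstacle is Step 2 combined with the singular origin: one must show that the triangular conjugations removing growing poles remain compatible with the $\mathbb Z_3$ and Schwarz reductions and with the simple-pole structure at $0$ (item (v)), so that $\ell^{\mathrm T}M$ remains bounded. My first attempt would be to build the scalar factors as symmetric products over complete six-point orbits and check cancellation at $k=0$ using $\omega+\omega^2+1=0$.
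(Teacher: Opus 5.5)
Your architecture is the paper's: orbitwise pole reduction, $\delta$-conjugation, lenses, outer/parabolic-cylinder/error models and a $\bar\partial$ remainder, with mSK plus Miura in Sector III. However, the plan has a genuine gap at the spectral origin, and the obstacle you single out is not the real one.

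You never use that $r_1(0)=\omega^2$ and $r_2(0)=1$ are unimodular. So $d_j=1-|r_j|^2$ vanishes linearly at $k=0$, exactly where $M$ has its simple pole.

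In Sector IV this is decisive.
\begin{itemize}
\item The sign chart of $\operatorname{Re}\Phi_{21}$ forces the reverse Gauss factorization \eqref{5.8} on each entire ray down to $0$. The triangular amplitudes $r_j/d_j$ are then $O(k^{-1})$.
\item The diagonal factors can only be removed by a Cauchy exponential $\Delta_{\rm IV}$ with partial indices $k^{B_n}$ at $0$. This is admissible only because the reconstruction row vanishes there ($N_3=O(k)$ in $D_1$).
\item The lens carriers decay only like $e^{-cX|k|}$ near $0$, so on $|k|\lesssim X^{-1}$ the deformed jumps are not small. ``Every jump can be deformed into exponentially decaying regions'' therefore gives no $O((|x|+t)^{-N})$ bound.
\item The paper needs a shrinking disk $|k|<X^{-1/2}$ carrying a truncated formal endpoint parametrix. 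It is built from twelve transported triangular factors satisfying the cyclic relation \eqref{5.9}, and only the remainder \eqref{5.13} is extended almost analytically.
\end{itemize}

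The same degeneracy affects Sectors I--II. Near $0$ the lens jumps are $I+O(e^{-ct|k|})$ with unimodular amplitude, hence not small in $L^\infty$, and a full matrix $M^{(3)}$ would inherit the pole at $0$. The paper therefore:
\begin{itemize}
\item truncates the lenses at $|k|=t^{-1/2}$;
\item inserts explicit frames $F_n$ built from the frozen jumps $v_n^0$, whose cyclic product is $I_3$, to obtain \eqref{3.5};
\item keeps $\bar\partial R^{(2)}=0$ in $\mathcal U_0$;
\item solves the $\bar\partial$-problem only for the bounded row $N^{(3)}$.
\end{itemize}

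Your Step~3 also puts the jump of $\delta$ ``between the stationary points''. For SK the sign chart puts it on $(k_0,\infty)$ and $(-\infty,-k_0)$, with the $\delta$-free $LU$ factorization on $(-k_0,k_0)$. Your choice would place the singular ratio $r_j/(1-|r_j|^2)$ at the origin. By contrast, compatibility of the pole divisors at $0$ is automatic: complete-orbit factors such as \eqref{3.15}--\eqref{3.16} satisfy $D_\kappa(0)=I_3$.

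Two further steps are missing for the statement as written.

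First, the outer model produced by the deformation has $\xi$-dependent constants, through $\delta(\kappa_j;\xi)$, i.e.\ $\tau_j(\xi)$. By contrast, \eqref{3.1} and part (ii) use constants frozen at the orbit velocities. You need the moving-to-frozen comparison of Lemmas~\ref{lem:6.1}--\ref{lem:6.2}. In logarithmic tubes $\xi=v_j+O(\log t/t)$, and outside them the residues are negligible. This gives an $O(\log(2+t)/t)$ error, including one $x$-derivative. Uniformity on the unbounded cone $\xi\ge M$ additionally requires the rescaling in $\zeta=t/x$ as $k_0\to\infty$ (Lemma~\ref{lem:3.2}, Proposition~\ref{prop:3.5}).

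Second, in Sector III you misassign the hypotheses. The condition $\kappa_M<1$ in \eqref{4.14} is the small-Stokes condition that makes the Beals--Coifman operator of the Painlev\'e model contractive. The discrete spectrum is disposed of by the nonresonance condition \eqref{4.8}. Since $x=O(t^{1/5})$, each orbit's exponential grows or decays at rate $|\rho_\zeta|t$, so divisor switching makes every residue $O(e^{-ct})$. The poles never enter a shrinking window in any case, since they stay at fixed distance from $0$.

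Finally, write $M^{(2)}=M^{(3)}M^{\rm RHP}$; with your order the jumps of $M^{(3)}$ do not cancel.
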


\begin{re}
	The asymptotic formulas above do not provide uniform connection formulas in the intermediate gaps
	\[
	-\frac{t}{M}<x<-M(5t)^{1/5},\qquad M(5t)^{1/5}<x<\frac{t}{M}.
	\]
\end{re}

\begin{figure}[htbp]
	\centering
	\begin{tikzpicture}[
		x=0.85cm,
		y=0.75cm,
		>=stealth,
		every node/.style={font=\scriptsize},
		line cap=round,
		line join=round
		]
		
		\def\a{1.38}
		\def\b{0.80}
		
		\coordinate (Lcap) at (-\a,0);
		\coordinate (Rcap) at ( \a,0);
		
		\coordinate (Lcurve) at ({\a*cos(142)},{\b*sin(142)});
		\coordinate (Rcurve) at ({\a*cos(38)},{\b*sin(38)});
		
		\coordinate (Pvone) at ({\a*cos(24)},{\b*sin(24)});
		\coordinate (Pvtwo) at ({\a*cos(10)},{\b*sin(10)});
		
		\coordinate (Ltop) at (-2.55,6.25);
		\coordinate (Rtop) at ( 2.20,6.25);
		\coordinate (Ttop) at (4.00,6.25);
		\coordinate (VoneT) at (4.55,6.25);
		\coordinate (VtwoT) at (7.00,2.35);
		
		\fill[black!12]
		(Rtop)
		-- (VoneT)
		-- (Pvone)
		plot[
		domain=24:38,
		samples=50,
		smooth,
		variable=\ang
		]
		({\a*cos(\ang)},{\b*sin(\ang)})
		.. controls
		(
		{\a*cos(38)+0.58*(2.20-\a*cos(38))},
		{\b*sin(38)+(1/3)*(6.25-\b*sin(38))}
		)
		and
		(
		{2.20},
		{6.25-(1/3)*(6.25-\b*sin(38))}
		)
		.. (Rtop)
		-- cycle;
		
		\draw[->,densely dashed,black!55,line width=0.65pt]
		(-6.45,0)--(7.25,0)
		node[below,text=black] {$x$};
		
		\draw[->,densely dashed,black!55,line width=0.65pt]
		(0,0)--(0,6.95)
		node[left,text=black] {$t$};
		
		\draw[black!72,line width=0.82pt]
		plot[
		domain=180:0,
		samples=140,
		smooth,
		variable=\ang
		]
		({\a*cos(\ang)},{\b*sin(\ang)});
		
		\draw[black!72,line width=0.82pt]
		(Lcurve)
		.. controls
		(
		{\a*cos(142)+0.58*(-2.55-\a*cos(142))},
		{\b*sin(142)+(1/3)*(6.25-\b*sin(142))}
		)
		and
		(
		{-2.55},
		{6.25-(1/3)*(6.25-\b*sin(142))}
		)
		.. (Ltop);
		
		\draw[black!72,line width=0.82pt]
		(Rcurve)
		.. controls
		(
		{\a*cos(38)+0.58*(2.20-\a*cos(38))},
		{\b*sin(38)+(1/3)*(6.25-\b*sin(38))}
		)
		and
		(
		{2.20},
		{6.25-(1/3)*(6.25-\b*sin(38))}
		)
		.. (Rtop);
		
		\draw[black!72,line width=0.82pt]
		(Pvone)--(VoneT);
		
		\draw[black!72,line width=0.82pt]
		(Pvtwo)--(VtwoT);
		
		\node[
		above right,
		font=\tiny,
		text=black!75,
		xshift=-15pt,
		yshift=0pt
		] at (Ltop)
		{$x\sim -(5t)^{1/5}$};
		
		\node[
		above right,
		font=\tiny,
		text=black!75,
		xshift=-15pt,
		yshift=0pt
		] at (Rtop)
		{$x\sim (5t)^{1/5}$};
		
		\node[align=center,text=violet!65!black] at (-4.55,3.55)
		{Sector IV};
		
		\node[align=center,text=green!45!black] at (-0.25,3.55)
		{Sector III};
		
		\node[
		align=center,
		font=\tiny,
		text=orange!65!black
		] at (2.70,4.05)
		{Transition\\region};
		
		\node[align=center,text=brown!70!black] at (4.95,3.45)
		{Sector II};
		
		\node[align=center,text=blue!60!black] at (5.55,0.85)
		{Sector I};
		
		\node[below,text=black] at (0,0) {$0$};
		
	\end{tikzpicture}
	
	\caption{The asymptotic sectors I-IV for the SK equation in the $(x,t)$-half plane.}
	\label{fig:SK-asymptotic-regions}
\end{figure}

Work first with $\xi=x/t\in K\Subset(0,\infty)$. Then $0<\kappa _0\le k_0=(\xi/45)^{1/4}\le\kappa _1$, so $\mathcal K_0=\{\pm\omega^mk_0:m=0,1,2\}$ stays away from \(0\). The cofactor identities and absence of spectral singularities give
\[
1-|r_1(k)|^2=\frac{q_3(k)}{p_1(k)p_2(k)}>0\quad(k>0),\qquad 1-|r_2(k)|^2=\frac{p_3(k)}{q_1(k)q_2(k)}>0\quad(k<0).
\]
We write
\[
	\theta _{ij}(x,t;k)=t\bigl[ (l_i-l_j)\xi +(z_i-z_j) \bigr] \equiv t\Phi _{ij}(\xi ;k).
\]
For $k=a+ib$,
\[
\operatorname{Re}\Phi_{21}(\xi;k)=\sqrt3\,b\bigl(\xi-45a^4+90a^2b^2-9b^4\bigr).
\]

By the two symmetries it suffices to treat the real-axis sector. Compact positive-velocity uniformity is later extended by the variable \(\zeta=t/x\) and overlap.

\begin{figure}[htbp]
	\centering
	
	\def\SKK{0.46}
	\def\SKB{4.60}
	\def\SKX{1.494}
	\def\SKV{3.20}
	
	\newcommand{\FillPhiTwentyOnePositive}[1]{%
		\fill[#1]
		(-\SKX,{sqrt(5*(\SKX)^2+sqrt(20*(\SKX)^4+5*(\SKK)^4))})
		-- plot[smooth,domain=-\SKX:\SKX,samples=140]
		(\x,{sqrt(5*(\x)^2+sqrt(20*(\x)^4+5*(\SKK)^4))})
		-- plot[smooth,domain=\SKX:\SKK,samples=90]
		(\x,{sqrt(5*(\x)^2-sqrt(20*(\x)^4+5*(\SKK)^4))})
		-- (-\SKK,0)
		-- plot[smooth,domain=-\SKK:-\SKX,samples=90]
		(\x,{sqrt(5*(\x)^2-sqrt(20*(\x)^4+5*(\SKK)^4))})
		-- cycle;
		
		\fill[#1]
		(-\SKB,\SKB)--(-\SKX,\SKB)
		-- (-\SKX,{sqrt(5*(\SKX)^2-sqrt(20*(\SKX)^4+5*(\SKK)^4))})
		-- plot[smooth,domain=-\SKX:-\SKB,samples=120]
		(\x,{sqrt(5*(\x)^2-sqrt(20*(\x)^4+5*(\SKK)^4))})
		-- cycle;
		
		\fill[#1]
		(\SKX,\SKB)--(\SKB,\SKB)
		-- (\SKB,{sqrt(5*(\SKB)^2-sqrt(20*(\SKB)^4+5*(\SKK)^4))})
		-- plot[smooth,domain=\SKB:\SKX,samples=120]
		(\x,{sqrt(5*(\x)^2-sqrt(20*(\x)^4+5*(\SKK)^4))})
		-- cycle;
		
		\fill[#1]
		(-\SKX,-\SKB)--(\SKX,-\SKB)
		-- plot[smooth,domain=\SKX:-\SKX,samples=140]
		(\x,{-sqrt(5*(\x)^2+sqrt(20*(\x)^4+5*(\SKK)^4))})
		-- cycle;
		
		\fill[#1]
		(\SKK,0)--(\SKB,0)
		-- (\SKB,{-sqrt(5*(\SKB)^2-sqrt(20*(\SKB)^4+5*(\SKK)^4))})
		-- plot[smooth,domain=\SKB:\SKK,samples=120]
		(\x,{-sqrt(5*(\x)^2-sqrt(20*(\x)^4+5*(\SKK)^4))})
		-- cycle;
		
		\fill[#1]
		(-\SKK,0)--(-\SKB,0)
		-- (-\SKB,{-sqrt(5*(\SKB)^2-sqrt(20*(\SKB)^4+5*(\SKK)^4))})
		-- plot[smooth,domain=-\SKB:-\SKK,samples=120]
		(\x,{-sqrt(5*(\x)^2-sqrt(20*(\x)^4+5*(\SKK)^4))})
		-- cycle;
	}
	
	\newcommand{\DrawPhiTwentyOneZeroCurves}{%
		\draw[black,line width=0.56pt,smooth,
		domain=-\SKX:\SKX,samples=150]
		plot (\x,{sqrt(5*(\x)^2+sqrt(20*(\x)^4+5*(\SKK)^4))});
		
		\draw[black,line width=0.56pt,smooth,
		domain=-\SKX:\SKX,samples=150]
		plot (\x,{-sqrt(5*(\x)^2+sqrt(20*(\x)^4+5*(\SKK)^4))});
		
		\draw[black,line width=0.56pt,smooth,
		domain=\SKK:\SKB,samples=160]
		plot (\x,{sqrt(5*(\x)^2-sqrt(20*(\x)^4+5*(\SKK)^4))});
		
		\draw[black,line width=0.56pt,smooth,
		domain=\SKK:\SKB,samples=160]
		plot (\x,{-sqrt(5*(\x)^2-sqrt(20*(\x)^4+5*(\SKK)^4))});
		
		\draw[black,line width=0.56pt,smooth,
		domain=-\SKB:-\SKK,samples=160]
		plot (\x,{sqrt(5*(\x)^2-sqrt(20*(\x)^4+5*(\SKK)^4))});
		
		\draw[black,line width=0.56pt,smooth,
		domain=-\SKB:-\SKK,samples=160]
		plot (\x,{-sqrt(5*(\x)^2-sqrt(20*(\x)^4+5*(\SKK)^4))});
	}
	
	\newcommand{\DrawSKContour}{%
		\draw[<->,black,line width=0.58pt]
		(-\SKV,0)--(\SKV,0);
		\draw[<->,black,line width=0.58pt]
		(240:\SKV)--(60:\SKV);
		\draw[<->,black,line width=0.58pt]
		(300:\SKV)--(120:\SKV);
	}
	
	\tikzset{
		contourlabel/.style={
			font=\scriptsize,
			inner sep=0.8pt
		}
	}
	
	\subfloat[$\operatorname{Re}\Phi_{21}$]{%
		\begin{minipage}[t]{0.32\textwidth}
		\centering
		\begin{tikzpicture}[scale=0.62,line cap=round,line join=round]
			\clip (-\SKV,-\SKV) rectangle (\SKV,\SKV);
			
			\FillPhiTwentyOnePositive{black!19}
			\DrawPhiTwentyOneZeroCurves
			\DrawSKContour
			
			\draw[black!35,line width=0.30pt]
			(-\SKV,-\SKV) rectangle (\SKV,\SKV);
			
			\fill[red!80!black] (\SKK,0) circle (1.35pt);
			\fill[red!80!black] (-\SKK,0) circle (1.35pt);
			
			\node[font=\tiny,below] at (\SKK,-0.05) {$k_0$};
			\node[font=\tiny,below] at (-\SKK,-0.05) {$-k_0$};
			
			\node[contourlabel,anchor=south west]
			at (-3.00,0.08) {$\mathbb R$};
			
			\node[font=\tiny] at (-1.85, 0.55) {$<0$};
			\node[font=\tiny] at ( 1.85, 0.55) {$<0$};
			\node[font=\tiny] at (-1.85,-0.55) {$>0$};
			\node[font=\tiny] at ( 1.85,-0.55) {$>0$};
		\end{tikzpicture}
\end{minipage}
}
	\hfill
	\subfloat[$\operatorname{Re}\Phi_{31}$]{%
	\begin{minipage}[t]{0.32\textwidth}
		\centering
		\begin{tikzpicture}[scale=0.62,line cap=round,line join=round]
			\clip (-\SKV,-\SKV) rectangle (\SKV,\SKV);
			
			\fill[black!19]
			(-\SKV,-\SKV) rectangle (\SKV,\SKV);
			
			\begin{scope}[rotate=120]
				\FillPhiTwentyOnePositive{white}
				\DrawPhiTwentyOneZeroCurves
			\end{scope}
			
			\DrawSKContour
			
			\draw[black!35,line width=0.30pt]
			(-\SKV,-\SKV) rectangle (\SKV,\SKV);
			
			\fill[red!80!black] (120:\SKK) circle (1.35pt);
			\fill[red!80!black] (300:\SKK) circle (1.35pt);
			
			\node[font=\tiny,above left]
			at (120:\SKK) {$\omega k_0$};
			\node[font=\tiny,below right]
			at (300:\SKK) {$-\omega k_0$};
			
			\node[contourlabel,anchor=south west]
			at (-2.30,2.45) {$\omega\mathbb R$};
			
			\node[font=\tiny] at (-1.85, 1.35) {$>0$};
			\node[font=\tiny] at ( 1.85, 1.35) {$<0$};
			\node[font=\tiny] at (-1.85,-1.35) {$>0$};
			\node[font=\tiny] at ( 1.85,-1.35) {$<0$};
		\end{tikzpicture}
\end{minipage}
}
	\hfill
	\subfloat[$\operatorname{Re}\Phi_{32}$]{%
	\begin{minipage}[t]{0.32\textwidth}
		\centering
		\begin{tikzpicture}[scale=0.62,line cap=round,line join=round]
			\clip (-\SKV,-\SKV) rectangle (\SKV,\SKV);
			
			\begin{scope}[rotate=240]
				\FillPhiTwentyOnePositive{black!19}
				\DrawPhiTwentyOneZeroCurves
			\end{scope}
			
			\DrawSKContour
			
			\draw[black!35,line width=0.30pt]
			(-\SKV,-\SKV) rectangle (\SKV,\SKV);
			
			\fill[red!80!black] (60:\SKK) circle (1.35pt);
			\fill[red!80!black] (240:\SKK) circle (1.35pt);
			
			\node[font=\tiny,above right]
			at (60:\SKK) {$-\omega^2 k_0$};
			\node[font=\tiny,below left]
			at (240:\SKK) {$\omega^2 k_0$};
			
			\node[contourlabel,anchor=north west]
			at (-2.55,-2.45) {$\omega^2\mathbb R$};
			
			\node[font=\tiny] at (-1.85, 1.35) {$>0$};
			\node[font=\tiny] at ( 1.85, 1.35) {$<0$};
			\node[font=\tiny] at (-1.85,-1.35) {$>0$};
			\node[font=\tiny] at ( 1.85,-1.35) {$<0$};
		\end{tikzpicture}
\end{minipage}
}
	
	\caption{Signature tables of the SK phase functions in the positive-velocity region. Gray and white regions correspond to $\operatorname{Re}\Phi_{ij}>0$ and
		$\operatorname{Re}\Phi_{ij}<0$, respectively. The marked points are the saddles $\pm k_0$, $\pm\omega k_0$, and $\pm\omega^2k_0$.}
	\label{fig:SK-phase-signatures}
\end{figure}

\subsection{Deformation of the basic RH problem}
Absorb the diagonal factors $1-|r_j(k)|^2$ by defining \(\delta_1,\delta_4\) through
\[
\begin{aligned}
	\delta_{1+}(k) &= \delta_{1-}(k)\bigl(1-|r_1(k)|^2\bigr), \quad k\in(k_0,\infty),  \\
	\delta_{4+}(k) &= \delta_{4-}(k)\bigl(1-|r_2(k)|^2\bigr), \quad k\in(-\infty,-k_0),
\end{aligned}
\]
with $\delta_j(k)\to1$ as $k\to\infty$. By the Plemelj formula,
\[
\begin{aligned}
\delta_1(k) &= \exp\Bigl\{\frac{1}{2\pi i}\int_{k_0}^\infty \frac{\ln(1-|r_1(s)|^2)}{s-k}\,ds\Bigr\}, \\
\delta _4(k)&=\exp \Bigl\{ -\frac{1}{2\pi i}\int_{-\infty}^{-k_0}{\frac{\ln(1-|r_2(s)|^2)}{s-k}\,ds} \Bigr\}.
\end{aligned}
\]
The second sign comes from the outward orientation. Let $\log_\theta k$ have branch cut \(\arg k=\theta\). Near \(\pm k_0\),
\[
\delta_1(k)=e^{-i\nu_1\log_0(k-k_0)}e^{-\chi_1(k)},\qquad \delta_4(k)=e^{-i\nu_4\log_\pi(k+k_0)}e^{-\chi_4(k)},
\]
where
\[
\begin{aligned}
	\nu_1&=-\frac1{2\pi}\ln(1-|r_1(k_0)|^2),&
	\chi_1(\xi;k)
	&=\frac{1}{2\pi i}\int_{k_0}^{\infty}
	\log_0(k-s)\,d\ln(1-|r_1(s)|^2),\\
	\nu_4&=-\frac1{2\pi}\ln(1-|r_2(-k_0)|^2),&
	\chi_4(\xi;k)
	&=-\frac{1}{2\pi i}\int_{-\infty}^{-k_0}
	\log_\pi(k-s)\,d\ln(1-|r_2(s)|^2).
\end{aligned}
\]

Rotation gives
\[
\delta_3(k)=\delta_1(\omega^2 k),\quad \delta_5(k)=\delta_1(\omega k),\quad \delta_2(k)=\delta_4(\omega k),\quad \delta_6(k)=\delta_4(\omega^2 k).
\]
The jumps and Schwarz symmetries imply
\[
\overline{\delta_1(\bar k)}=\delta_1(k)^{-1},\qquad \overline{\delta_4(\bar k)}=\delta_4(k)^{-1},
\]
and their rotations.

Define the diagonal conjugating factor
\[
\Delta (k)=\mathrm{diag}\Bigl( \frac{\delta _1\delta _6}{\delta _3\delta _4},\;\frac{\delta _5\delta _4}{\delta _1\delta _2},\;\frac{\delta _3\delta _2}{\delta _5\delta _6} \Bigr) \doteq \mathrm{diag}\Bigl( \Delta _1,\;\Delta _2,\;\Delta _3 \Bigr) .
\]
Then
\[
\det \Delta =1,\qquad \Delta (k)=\mathcal{A} \Delta (\omega k)\mathcal{A} ^{-1}=\mathcal{B} \overline{\Delta (\bar{k})}\mathcal{B} .
\]
We set
\[
M^{(1)}(x,t;k)=M(x,t;k)\Delta(k),
\]
so \(v^{(1)}=\Delta_-^{-1}v\Delta_+\) and the normalization is unchanged. Let \(\Sigma^{(1)}=\Sigma\), oriented outward and divided at \(\pm\omega^m k_0\). The scalar jumps give the RH problem below. Figure~\ref{fig:SK-conjugated-jumps} shows the representative jumps, and symmetry gives the rest.

\begin{figure}[htbp]
	\centering
	\begin{tikzpicture}[
		scale=1.20,
		>=stealth,
		line cap=round,
		line join=round,
		every node/.style={font=\scriptsize}
		]
		\foreach \ang in {0,60,120,180,240,300}{
			\draw[thick,->] (0,0)--(\ang:1.80);
			\fill[red!80!black] (\ang:0.80) circle (1.20pt);
		}
		
		\node[right]       at (0:1.80)   {$\mathbb R$};
		\node[above right] at (60:1.80)  {$\omega^2\mathbb R$};
		\node[above left]  at (120:1.80) {$\omega\mathbb R$};
		
		\fill (0,0) circle (1.10pt);
		\node[below left] at (0,0) {$0$};
		\node[below] at (0.80,0) {$k_0$};
		\node[below] at (-0.80,0) {$-k_0$};
		
		\node at (1.30,0.20) {$v_1^{(1)}$};
		\node at (0.40,0.20) {$v_7^{(1)}$};
		\node at (-0.40,0.20) {$v_{10}^{(1)}$};
		\node at (-1.30,0.20) {$v_4^{(1)}$};
	\end{tikzpicture}
	\caption{The jump contour \(\Sigma^{(1)}\) and the saddle points \(\pm\omega^m k_0\), \(m=0,1,2\).}
	\label{fig:SK-conjugated-jumps}
\end{figure}

\begin{rhp}\label{rhp3}
Find a matrix $M^{(1)}$ meromorphic in $\mathbb C\setminus(\Sigma\cup\{0\})$, with simple poles at $\mathcal Z$, the two reductions of RH Problem~\ref{rhp1}, and jumps
\[
M_+^{(1)}=M_-^{(1)}v^{(1)},\qquad v^{(1)}=\Delta_-^{-1}v\Delta_+.
\]
At infinity and at the origin, respectively,
\[
M^{(1)}=I_3+\frac{M_1+\Delta^{[1]}}{k}+O(k^{-2}),\qquad
M_{0,n}^{(1,-1)}=M_{0,n}^{(-1)}\Delta(0),
\]
where $\Delta=I_3+\Delta^{[1]}/k+O(k^{-2})$. The residue conditions are
\[
\operatorname*{Res}_{k=\lambda}M^{(1)}(k)
=\lim_{k\to\lambda}M^{(1)}(k)C_\lambda^{(1)},\qquad
C_\lambda^{(1)}=\Delta(\lambda)^{-1}C_\lambda\Delta(\lambda),
\]
with $C_\lambda$ given by \eqref{2.6} and the representative residues above.
\end{rhp}
On the four real intervals $(k_0,\infty)$, $(-\infty,-k_0)$, $(0,k_0)$, and $(-k_0,0)$, denote the jumps by $v_1^{(1)}$, $v_4^{(1)}$, $v_7^{(1)}$, and $v_{10}^{(1)}$. Their factorizations below use
\[
\rho_j=\frac{r_j}{1-|r_j|^2},\qquad
\widetilde\delta_{v1}=\frac{\delta_3\delta_4^2\delta_5}{\delta_6\delta_2},\qquad
\widetilde\delta_{v4}=\frac{\delta_1^2\delta_2\delta_6}{\delta_5\delta_3}.
\]

Since $M^{(1)}=M\Delta$, $M_1^{(1)}=M_1+\Delta^{[1]}$, the row reconstruction \eqref{2.10} yields
\[
u(x,t)=-\frac{1}{2}\partial _x\Bigl[ \ell ^{\mathrm{T}}M_{1}^{(1)}e_3-(\Delta ^{[1]})_{33} \Bigr] ,
\]
where we used that $\Delta^{[1]}$ is diagonal.

\subsection{Transformation to a mixed \texorpdfstring{$\bar\partial$}{d-bar}-RH problem}
Set \(M^{(2)}=M^{(1)}R^{(2)}\) and \(N^{(2)}=\ell^{\mathrm T}M^{(2)}\). Then
\[
\begin{array}{c}
	M_{+}^{(2)}=M_{-}^{(2)}v^{(2)},\quad \bar{\partial}M^{(2)}=M^{(2)}W^{(2)},\quad W^{(2)}=(R^{(2)})^{-1}\bar{\partial}R^{(2)},\\
	N_{+}^{(2)}=N_{-}^{(2)}v^{(2)},\quad \bar{\partial}N^{(2)}=N^{(2)}W^{(2)},\quad N^{(2)}(k)\rightarrow \ell ^{\mathrm{T}}.\\
\end{array}
\]
Use the matrix form for right parametrices and the bounded row for reconstruction. The carriers are \(e^{\pm t\Phi_{21}}\).

By the $\mathbb Z_3$ and Schwarz symmetries, it suffices to work on the real axis. The four jumps admit the factorizations
\[
v_{1}^{(1)}=U_1L_1,\qquad v_{7}^{(1)}=L_7U_7,\qquad v_{4}^{(1)}=L_4U_4,\qquad v_{10}^{(1)}=U_{10}L_{10},
\]
on $(k_0,\infty)$, $(0,k_0)$, $(-\infty,-k_0)$, and $(-k_0,0)$, respectively, where
\[
\begin{aligned}
	U_1&=I-\frac{\widetilde\delta_{v1}}{\delta_{1-}^2}
	\rho_1e^{-t\Phi_{21}}E_{12},&
	L_1&=I+\frac{\delta_{1+}^2}{\widetilde\delta_{v1}}
	\overline{\rho_1}e^{t\Phi_{21}}E_{21},\\
	L_7&=I+\frac{\delta_1^2}{\widetilde\delta_{v1}}
	\overline{r_1}e^{t\Phi_{21}}E_{21},&
	U_7&=I-\frac{\widetilde\delta_{v1}}{\delta_1^2}
	r_1e^{-t\Phi_{21}}E_{12},\\
	L_4&=I+\frac{\widetilde\delta_{v4}}{\delta_{4-}^2}
	\rho_2e^{t\Phi_{21}}E_{21},&
	U_4&=I-\frac{\delta_{4+}^2}{\widetilde\delta_{v4}}
	\overline{\rho_2}e^{-t\Phi_{21}}E_{12},\\
	U_{10}&=I-\frac{\delta_4^2}{\widetilde\delta_{v4}}
	\overline{r_2}e^{-t\Phi_{21}}E_{12},&
	L_{10}&=I+\frac{\widetilde\delta_{v4}}{\delta_4^2}
	r_2e^{t\Phi_{21}}E_{21}.
\end{aligned}
\]
Symmetry gives the other factors. Figures~\ref{fig:SK-mixed-contour} and \ref{fig:SK-phase-signatures} show the mixed lens and its phase skeleton.

\begin{figure}[htbp]
	\centering
	
	\definecolor{SKMixBlue}{HTML}{4477AA}
	\definecolor{SKMixCoral}{HTML}{EE6677}
	\definecolor{SKMixCyan}{HTML}{66CCEE}
	\definecolor{SKMixGray}{HTML}{999999}
	
	\begin{tikzpicture}[
		scale=0.85,
		>=stealth,
		line cap=round,
		line join=round
		]
		\def\K{1.42}
		\def\R{4.15}
		\def\h{0.48}
		\def\E{0.36}
		
		\tikzset{
			oldcontour/.style={
				SKMixGray!80,
				dashed,
				line width=0.52pt
			},
			newcontour/.style={
				SKMixBlue,
				line width=0.92pt
			},
			dbarfill/.style={
				fill=SKMixCyan!18,
				draw=none
			},
			saddle/.style={
				circle,
				fill=SKMixCoral,
				draw=SKMixCoral!75!black,
				inner sep=1.35pt
			},
			pole/.style={
				circle,
				fill=black,
				draw=black,
				inner sep=0.95pt
			},
			cutoff/.style={
				circle,
				fill=white,
				draw=SKMixGray!85!black,
				line width=0.45pt,
				minimum size=7.5pt,
				inner sep=0pt
			},
			labelstyle/.style={
				font=\scriptsize
			}
		}
		
		\newcommand{\FillDbarSector}{%
			\fill[dbarfill]
			(\K,0)--(\R,0)
			--({\K+(\R-\K)*cos(12)},{(\R-\K)*sin(12)})
			--cycle;
			
			\fill[dbarfill]
			(\K,0)--(\R,0)
			--({\K+(\R-\K)*cos(12)},{-(\R-\K)*sin(12)})
			--cycle;
			
			\fill[dbarfill]
			(\E,0)--(\K,0)
			.. controls (0.70*\K,\h) and (0.35*\K,\h)
			.. (\E,0)--cycle;
			
			\fill[dbarfill]
			(\E,0)--(\K,0)
			.. controls (0.70*\K,-\h) and (0.35*\K,-\h)
			.. (\E,0)--cycle;
			
			\fill[dbarfill]
			(-\K,0)--(-\R,0)
			--({-\K-(\R-\K)*cos(12)},{(\R-\K)*sin(12)})
			--cycle;
			
			\fill[dbarfill]
			(-\K,0)--(-\R,0)
			--({-\K-(\R-\K)*cos(12)},{-(\R-\K)*sin(12)})
			--cycle;
			
			\fill[dbarfill]
			(-\E,0)--(-\K,0)
			.. controls (-0.70*\K,\h) and (-0.35*\K,\h)
			.. (-\E,0)--cycle;
			
			\fill[dbarfill]
			(-\E,0)--(-\K,0)
			.. controls (-0.70*\K,-\h) and (-0.35*\K,-\h)
			.. (-\E,0)--cycle;
		}
		
		\newcommand{\DrawDbarSector}{%
			\draw[oldcontour] (-\R,0)--(\R,0);
			
			\draw[
			->,
			SKMixGray!85!black,
			line width=0.42pt
			]
			(0,0)--(0.83,0);
			
			\draw[
			->,
			SKMixGray!85!black,
			line width=0.42pt
			]
			(0,0)--(-0.83,0);
			
			\node[saddle] at (-\K,0) {};
			\node[saddle] at (\K,0) {};
			
			\draw[newcontour,->]
			(\K,0)
			--({\K+(\R-\K)*cos(12)},{(\R-\K)*sin(12)});
			
			\draw[newcontour,->]
			(\K,0)
			--({\K+(\R-\K)*cos(12)},{-(\R-\K)*sin(12)});
			
			\draw[newcontour,->]
			(\E,0)
			.. controls (0.35*\K,\h) and (0.70*\K,\h)
			.. (\K,0);
			
			\draw[newcontour,->]
			(\E,0)
			.. controls (0.35*\K,-\h) and (0.70*\K,-\h)
			.. (\K,0);
			
			\draw[newcontour,->]
			(-\K,0)
			--({-\K-(\R-\K)*cos(12)},{(\R-\K)*sin(12)});
			
			\draw[newcontour,->]
			(-\K,0)
			--({-\K-(\R-\K)*cos(12)},{-(\R-\K)*sin(12)});
			
			\draw[newcontour,->]
			(-\E,0)
			.. controls (-0.35*\K,\h) and (-0.70*\K,\h)
			.. (-\K,0);
			
			\draw[newcontour,->]
			(-\E,0)
			.. controls (-0.35*\K,-\h) and (-0.70*\K,-\h)
			.. (-\K,0);
		}
		
		\foreach \rot in {0,120,240}{
			\begin{scope}[rotate=\rot]
				\FillDbarSector
			\end{scope}
		}
		
		\fill[white] (0,0) circle (\E);
		\draw[
		SKMixGray!85!black,
		densely dotted,
		line width=0.70pt
		]
		(0,0) circle (\E);
		
		\foreach \rot in {0,120,240}{
			\begin{scope}[rotate=\rot]
				\DrawDbarSector
			\end{scope}
		}
		
		\foreach \ang in {7,113,127,233,247,353}{
			\node[cutoff] at (\ang:2.55) {};
			\node[pole] at (\ang:2.55) {};
		}
		
		\fill (0,0) circle (1.15pt);
		\node[labelstyle,below right] at (0,0) {$0$};
		\node[labelstyle,right]
		at (18:{\E+0.07}) {$\mathcal U_0$};
		
		\node[labelstyle,below]
		at (\K,0) {$k_0$};
		
		\node[labelstyle,below]
		at (-\K,0) {$-k_0$};
		
		\node[labelstyle,above left]
		at (120:{\K+0.01}) {$\omega k_0$};
		
		\node[labelstyle,below right]
		at (300:{\K+0.01}) {$-\omega k_0$};
		
		\node[labelstyle,below left]
		at (240:{\K+0.01}) {$\omega^2 k_0$};
		
		\node[labelstyle,above right]
		at (60:{\K+0.01}) {$-\omega^2 k_0$};
		
		\fill[SKMixCyan!18]
		(-4.65,-4.28) rectangle (-4.00,-4.10);
		
		\draw[SKMixCyan!60!black,line width=0.45pt]
		(-4.65,-4.28) rectangle (-4.00,-4.10);
		
		\node[labelstyle,right]
		at (-3.92,-4.19)
		{$\operatorname{supp}\bar\partial R^{(2)}$ (schematic)};
		
		\draw[newcontour]
		(-4.65,-4.56)--(-4.00,-4.56);
		
		\node[labelstyle,right]
		at (-3.92,-4.56)
		{lens boundary $\Sigma^{(2)}$};
		
		\draw[oldcontour]
		(-4.65,-4.84)--(-4.00,-4.84);
		
		\node[labelstyle,right]
		at (-3.92,-4.84)
		{original six-ray contour $\Sigma^{(1)}$};
		
		\node[cutoff] at (-4.34,-5.14) {};
		\node[pole] at (-4.34,-5.14) {};
		
		\node[labelstyle,right]
		at (-3.92,-5.14)
		{pole cutoff: $\bar\partial R^{(2)}=0$};
		
	\end{tikzpicture}
	
	\caption{Mixed $\bar\partial$-RH deformation in the positive-velocity region.
		The shaded regions represent $\operatorname{supp}\bar\partial R^{(2)}$,
		excluding $\mathcal U_0$ and the discrete-spectrum cutoff neighborhoods.}
	\label{fig:SK-mixed-contour}
\end{figure}

Separate the fractional stationary factors from the regular scalar parts. Near \(k=\pm k_0\), write
\[
\frac{\widetilde{\delta }_{v1}}{\delta _{1}^{2}}=\left( \frac{k-k_0}{k+k_0} \right) ^{2i\nu _1}e^{X_1(k)},\qquad \frac{\widetilde{\delta }_{v4}}{\delta _{4}^{2}}=\left( \frac{k+k_0}{k-k_0} \right) ^{2i\nu _4}e^{X_4(k)},
\]
where
\[
X_1(k)=\log \frac{\widetilde{\delta }_{v1}(k)}{\delta _{1}^{2}(k)}-2i\nu _1\log \frac{k-k_0}{k+k_0},\quad X_4(k)=\log \frac{\widetilde{\delta }_{v4}(k)}{\delta _{4}^{2}(k)}-2i\nu _4\log \frac{k+k_0}{k-k_0}.
\]
Then \(X_1,X_4\) are bounded and continuous at their saddles. Set $I_1=[k_0,\infty)$, $I_7=(0,k_0)$, $I_4=(-\infty,-k_0]$, $I_{10}=(-k_0,0)$, and define
\[
\begin{aligned}
	a_1^-&=e^{X_1}\rho_1,&
	a_1^+&=e^{-X_1}\overline{\rho_1},&
	a_7^+&=e^{X_1}r_1,&
	a_7^-&=e^{-X_1}\overline{r_1},\\
	a_4^-&=e^{X_4}\rho_2,&
	a_4^+&=e^{-X_4}\overline{\rho_2},&
	a_{10}^+&=e^{X_4}r_2,&
	a_{10}^-&=e^{-X_4}\overline{r_2}.
\end{aligned}
\]
Extend the fractional factors analytically, so they do not enter \(\bar\partial\)-derivatives.

Uniformly for $\xi\in K\Subset(0,\infty)$, the regular amplitudes satisfy
\begin{equation}\label{3.2}
	(1+|s|)(a_j^\pm)'(s)\in L^2(I_j),\qquad j\in\{1,4,7,10\},
\end{equation}
with uniformly bounded norms. Moreover, if $\lambda_j=k_0$ for $j=1,7$ and $\lambda_j=-k_0$ for $j=4,10$, then, for $q=0,1$,
\begin{equation}\label{3.3}
	\left| \partial _{k_0}^{q}\bigl[ a_{j}^{\pm}(\lambda _j+h;k_0)-a_{j}^{\pm}(\lambda _j;k_0) \bigr] \right|\le C|h|\bigl( 1+|\log |h|| \bigr) ,
\end{equation}
for small one-sided $h$. To obtain these bounds, subtract the endpoint value in the Cauchy integrals and split at distance $2|h|$. The resulting logarithmic term is $O(|h|(1+|\log|h||))$, also after one $k_0$-derivative. Together with the differentiated Volterra bounds, this proves \eqref{3.2}-\eqref{3.3}. A further mixed derivative satisfies
\[
|\partial _{k_0}\partial _sa_{j}^{\pm}(s;k_0)|\le C\left( |s-\lambda _j|^{-1}+(1+|\log |s-\lambda _j||)^2 \right) ,
\]
for \(0<|s-\lambda_j|<r_*\), with uniform weighted \(L^2\) and rapid decay away from the endpoints. For large \(t\), set \(\varepsilon_t=t^{-1/2}\), \(\mathcal U_0=\{|k|<\varepsilon_t\}\), so
\[
\varepsilon_t<\frac14\inf_{\xi\in K}k_0(\xi),\qquad \overline{\mathcal U_0}\cap\mathcal Z=\varnothing.
\]
Truncate the inner lips at \(\partial\mathcal U_0\) inside fixed decay subsectors. Then
\[
|\Im k|\ge c_0|k|,\qquad \Re \bigl( \pm \Phi _{21}(\xi ;k) \bigr) \le -c_1|k|,\qquad |k|\le \frac{1}{2}k_0,
\]
with the appropriate lip sign. Include the induced arcs clockwise in \(\Sigma^{(2)}\), and obtain the other sectors by rotation. Because \(\mathcal U_0\) depends only on \(t\), \(x\)-differentiation fixes the contour.

If $\mathcal Z\ne\varnothing$, choose
\[
0<\rho<\frac12\min\left\{\min_{\substack{\lambda,\mu\in\mathcal Z\\\lambda\ne\mu}}|\lambda-\mu|,\operatorname{dist}(\mathcal Z,\Sigma\cup\{0\})\right\},
\]
and a smooth reduction-invariant cutoff $\chi_{\mathcal Z}$ satisfying
\[
\chi_{\mathcal Z}=1\quad\text{if }\operatorname{dist}(k,\mathcal Z)\le\rho/3,\qquad\chi_{\mathcal Z}=0\quad\text{if }\operatorname{dist}(k,\mathcal Z)\ge2\rho/3.
\]
If $\mathcal Z=\varnothing$, set $\chi_{\mathcal Z}\equiv0$.

Introduce piecewise \(C^1\) extensions \(\mathfrak R_1^\pm,\mathfrak R_7^\pm,\mathfrak R_4^\pm,\mathfrak R_{10}^\pm\), incorporating the reflection and \(\delta\)-factors.

Define \(R^{(2)}(k)\) in the real-axis sector by
\[
R^{(2)}(k)=\begin{cases}
	I-\mathfrak{R} _{1}^{+}(k)e^{t\Phi _{21}(\xi ;k)}E_{21},&		k\in \mathcal{L} _{1}^{+},\\
	I-\mathfrak{R} _{1}^{-}(k)e^{-t\Phi _{21}(\xi ;k)}E_{12},&		k\in \mathcal{L} _{1}^{-},\\
	I+\mathfrak{R} _{7}^{+}(k)e^{-t\Phi _{21}(\xi ;k)}E_{12},&		k\in \mathcal{L} _{7}^{+}\setminus \overline{\mathcal{U} _0},\\
	I+\mathfrak{R} _{7}^{-}(k)e^{t\Phi _{21}(\xi ;k)}E_{21},&		k\in \mathcal{L} _{7}^{-}\setminus \overline{\mathcal{U} _0},\\
	I+\mathfrak{R} _{4}^{+}(k)e^{-t\Phi _{21}(\xi ;k)}E_{12},&		k\in \mathcal{L} _{4}^{+},\\
	I+\mathfrak{R} _{4}^{-}(k)e^{t\Phi _{21}(\xi ;k)}E_{21},&		k\in \mathcal{L} _{4}^{-},\\
	I-\mathfrak{R} _{10}^{+}(k)e^{t\Phi _{21}(\xi ;k)}E_{21},&		k\in \mathcal{L} _{10}^{+}\setminus \overline{\mathcal{U} _0},\\
	I-\mathfrak{R} _{10}^{-}(k)e^{-t\Phi _{21}(\xi ;k)}E_{12},&		k\in \mathcal{L} _{10}^{-}\setminus \overline{\mathcal{U} _0},\\
	I,&		\text{elsewhere in the real-axis sector},\\
\end{cases}
\]
where the scalar extensions \(\mathfrak R_j^\pm\) are specified in the following proposition.
\begin{prop}\label{prop:3.1}
The conjugated reflection data admit piecewise \(C^1\) extensions \(\mathfrak R_j^\pm\), \(j\in\{1,4,7,10\}\), satisfying the \(\bar\partial\)-estimates below.
Write the analytic fractional factors as
\[
\begin{aligned}
F_1^-(k)=F_7^+(k)&=\left(\frac{k-k_0}{k+k_0}\right)^{2i\nu_1},&
F_1^+(k)=F_7^-(k)&=(F_1^-(k))^{-1},\\
F_4^-(k)=F_{10}^+(k)&=\left(\frac{k+k_0}{k-k_0}\right)^{2i\nu_4},&
F_4^+(k)=F_{10}^-(k)&=(F_4^-(k))^{-1}.
\end{aligned}
\]
With $a_j^\pm$ and $\lambda_j$ defined above, the boundary values are
\[
\mathfrak R_j^\pm(k)=
\begin{cases}
F_j^\pm(k)a_j^\pm(k),&k\in I_j\setminus\overline{\mathcal U_0},\\
F_j^\pm(k)a_j^\pm(\lambda_j)(1-\chi_{\mathcal Z}(k)),&k\in\Gamma_j^\pm.
\end{cases}
\]
The extensions satisfy
	\[
	|\mathfrak{R} _{j}^{\pm}(k)|\lesssim 1+\langle \mathrm{Re}k\rangle ^{-1/2},\qquad j\in\{1,4,7,10\},
	\]
	\[
	|\bar{\partial}\mathfrak{R} _{j}^{\pm}(k)|\le C\left( |\bar{\partial}\chi _{\mathcal{Z}}(k)|+|(a_{j}^{\pm})\prime(\mathrm{Re}k)|+d_j(k)^{-1/2} \right) ,\qquad j\in \{1,4,7,10\}.
	\]
	\[
	\bar\partial\mathfrak R_j^\pm(k)=0 \quad\text{if }k\notin \bigcup_{m\in\{1,4,7,10\}} \bigl(\mathcal L_m^+\cup\mathcal L_m^-\bigr), \quad\text{or if }
	\operatorname{dist}(k,\mathcal Z)<\frac{\rho}{3},
	\]
	where
	\[
	d_j(k)=\begin{cases}
		|k-k_0|,&		j=1,7,\\
		|k+k_0|,&		j=4,10.\\
	\end{cases}
	\]
	The corresponding estimates in the remaining sectors follow by symmetry $k\mapsto \omega k$. For $j=7,10$, the preceding estimates are understood only in $\left( \mathcal{L} _{j}^{+}\cup \mathcal{L} _{j}^{-} \right) \setminus \overline{\mathcal{U} _0}$. Moreover, $\bar{\partial}R^{(2)}(k)=0$, $k\in \mathcal{U} _0$. For fixed $t$ and $\xi$ in a compact positive interval, the extensions may be chosen so that
	\begin{equation}\label{3.4}
		x\longmapsto (R^{(2)})^{-1}\bar\partial R^{(2)}\quad\hbox{is }C^1\bigl(L^1(\mathbb C)\cap L^{6/5}(\mathbb C)\bigr).
	\end{equation}
\end{prop}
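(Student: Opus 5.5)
We construct the extensions by the standard angular interpolation of McLaughlin--Miller type, one lens at a time in the real-axis sector, and then transport them to the other sectors with \(k\mapsto\omega k\) and the Schwarz reflection. Write \(k-\lambda_j=|k-\lambda_j|e^{i\phi}\) on the lens \(\mathcal L_j^\pm\), whose opening angle is fixed and small. Let \(\phi_0\) be that opening angle, and let \(\mathcal X(\phi)\) be a smooth cutoff with \(\mathcal X(0)=1\) and \(\mathcal X(\pm\phi_0)=0\). We set
\[
\mathfrak R_j^\pm(k)=F_j^\pm(k)\Bigl\{a_j^\pm(\lambda_j)+\bigl[a_j^\pm(\operatorname{Re}k)-a_j^\pm(\lambda_j)\bigr]\mathcal X(\phi)\Bigr\}\bigl(1-\chi_{\mathcal Z}(k)\bigr).
\]
On the real interval \(I_j\) (\(\phi=0\)) and away from the poles this equals \(F_j^\pm a_j^\pm\). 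On the outer lip \(\Gamma_j^\pm\) (\(\phi=\pm\phi_0\)) it equals \(F_j^\pm a_j^\pm(\lambda_j)(1-\chi_{\mathcal Z})\), which gives the prescribed boundary values.

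For \(j=7,10\) the inner lenses end at \(\partial\mathcal U_0\), and we set \(R^{(2)}=I\) inside \(\mathcal U_0\). This gives \(\bar\partial R^{(2)}=0\) on \(\mathcal U_0\) trivially. Since \(\varepsilon_t<k_0/4\), the interpolation variable \(\phi\) stays well defined on the truncated inner lens. Because \(\chi_{\mathcal Z}=1\) within \(\rho/3\) of \(\mathcal Z\), the extension vanishes identically there, so its \(\bar\partial\) is zero. The same holds off the lenses, where \(R^{(2)}=I\).

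The \(\bar\partial\) bound is a direct computation, because \(F_j^\pm\) is analytic off the real cut and is bounded on the lenses: \(|F_j^\pm|\le e^{2\pi|\nu|}\), with \(\nu_1,\nu_4\) bounded uniformly on \(K\). Using \(\bar\partial=\tfrac12(\partial_{\operatorname{Re}k}+i\partial_{\operatorname{Im}k})\), we get three terms:
\begin{itemize}
\item the derivative of \(\chi_{\mathcal Z}\), which gives \(|\bar\partial\chi_{\mathcal Z}|\);
\item the derivative of \(a_j^\pm(\operatorname{Re}k)\), which gives \(|(a_j^\pm)'(\operatorname{Re}k)|\);
\item the angular term \(\bar\partial\mathcal X(\phi)=O(|k-\lambda_j|^{-1})\), multiplied by \(|a_j^\pm(\operatorname{Re}k)-a_j^\pm(\lambda_j)|\).
\end{itemize}
For the third term we use Cauchy--Schwarz together with \eqref{3.2}:
\[
|a(s)-a(\lambda)|\le \|a'\|_{L^2}|s-\lambda|^{1/2},
\]
which bounds it by \(C\,d_j(k)^{-1/2}\). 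Near the saddle, \eqref{3.3} gives the sharper bound \(O(1+|\log d_j|)\), which is also absorbed. The bound \(|\mathfrak R_j^\pm|\lesssim1+\langle\operatorname{Re}k\rangle^{-1/2}\) follows from the boundedness of \(a_j^\pm\) and of \(F_j^\pm\). The decay of \(\rho_j\) and \(r_j\) at infinity, inherited from the Volterra bounds, controls the far part of the lens.

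The last assertion, \eqref{3.4}, is where we expect the real difficulty. The matrix \(W^{(2)}=(R^{(2)})^{-1}\bar\partial R^{(2)}\) is nilpotent and equals \(\bar\partial\mathfrak R_j^\pm\, e^{\pm t\Phi_{21}}E_{21}\) or \(E_{12}\). It depends on \(x\) through two sources: \(\xi=x/t\), which enters \(k_0\), \(\nu_j\), \(X_j\) and hence \(a_j^\pm(\cdot;k_0)\); and the exponential \(e^{\pm t\Phi_{21}}\).

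To remove the moving saddle from the geometry, we plan to build the lens in rescaled coordinates \(k=k_0\kappa\), so that \(\phi\) and \(\mathcal X\) are \(k_0\)-independent functions of \(\kappa\). The lens domain still moves with \(k_0\), and this must be accounted for in the \(x\)-derivative.

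Differentiating in \(x\) then produces the following factors:
\begin{itemize}
\item \(\partial_{k_0}a_j^\pm\) and \(\partial_{k_0}\partial_s a_j^\pm\), controlled by the mixed bound \(C(|s-\lambda_j|^{-1}+(1+|\log|s-\lambda_j||)^2)\) and by \eqref{3.3} with \(q=1\);
\item \(\partial_x\) of the phase, which gives a factor \((l_2-l_1)=O(|k|)\);
\item \(\partial_{k_0}\) of the fractional factor \(F_j^\pm\), which is \(O(|\log d_j|)\).
\end{itemize}
Integrability in \(L^1\cap L^{6/5}\) near the saddle comes from the two-dimensional measure: \(d_j^{-1}\), and \(d_j^{-1/2}\) raised to the power \(6/5\), are both locally integrable in the plane, and the logarithms are harmless.

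Away from the saddle, the exponential decay \(\operatorname{Re}(\pm\Phi_{21})\le -c|\operatorname{Im}k|\,|k|^{\cdot}\) on each lens controls the polynomial factor \(|k|\), uniformly for fixed \(t\). This decay also handles the inner lips truncated at \(\partial\mathcal U_0\), where \(|\operatorname{Im}k|\ge c_0|k|\). Continuity in \(x\) of these derivatives follows from dominated convergence, with the bounds above serving as dominating functions.

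The step to check most carefully is the \(k_0\)-derivative of \(\mathcal X(\phi)\) composed with the moving lens geometry. This is exactly why we fix the geometry in the \(\kappa\) variable first.
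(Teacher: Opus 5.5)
You are using the same construction the paper relies on. The paper gives no separate proof; the text before the proposition and the proof of Lemma~\ref{lem:3.2} describe angular interpolation to the frozen saddle value, endpoint Cauchy estimates, and differentiated scaled cutoffs. For the outer lenses $\mathcal L_1^\pm,\mathcal L_4^\pm$ your Cauchy--Schwarz argument gives the stated bounds.

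\textbf{The gap is at the origin end of the inner lenses} $\mathcal L_7^\pm,\mathcal L_{10}^\pm$.

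\emph{Geometry.} These lenses must cover all of $(t^{-1/2},k_0)$. Their lips end on $\partial\mathcal U_0$ inside fixed decay subsectors at the origin. So for $t^{-1/2}<|k|<k_0/2$ each lens cross-section has width $\asymp|k|$. It is not a sector of fixed opening at $\lambda_7=k_0$, and $\mathcal X(\arg(k-k_0))$ does not vanish on the lip there. A cutoff adapted to the actual lip has $|\bar\partial\mathcal X|\asymp|k|^{-1}$, not $O(|k-k_0|^{-1})$.

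\emph{Amplitude.} The gap the cutoff must bridge does not shrink. On $(0,k_0)$ the $\delta$-quotient in $e^{X_1}$ is unimodular and the fractional factor has constant argument. Hence $|a_7^\pm(s)|=c_\pm|r_1(s)|$ there, while $|r_1(0^+)|=1>|r_1(k_0)|$. So $|\bar\partial\mathfrak R_7^\pm|\asymp|k|^{-1}$ on part of every cross-section. This reaches size $t^{1/2}$ near $\partial\mathcal U_0$ and is not dominated by $d_7(k)^{-1/2}\approx k_0^{-1/2}$.

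\emph{No choice of cutoff avoids this.} Let $g=\mathfrak R_7^+/F_7^+$. In polar coordinates, $\partial_rg=2e^{-i\theta}\bar\partial g-\frac ir\partial_\theta g$. Integrating over the cross-sections and over $t^{-1/2}<r<\delta$ gives
\[
\Bigl|\int_{t^{-1/2}}^{\delta}\frac{a_7^+(k_0)-a_7^+(r)}{r}\,dr\Bigr|\le C\sup|g|+2\int_{t^{-1/2}}^{\delta}\!\!\int|\bar\partial g|\,d\theta\,dr ,
\]
and the left side is at least $c\log t-C$. So no bounded extension with lip value $a_7^+(\lambda_7)$ satisfies the stated estimate uniformly in $t$; the statement itself must be amended.

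\emph{Possible repairs.} One option is to add $|k|^{-1}\mathbf 1_{\{t^{-1/2}\le|k|\le k_0/2\}}$ to the right side. There $|e^{\mp t\Phi_{21}}|\le e^{-ct|\operatorname{Im}k|}$, so this costs only $O(t^{-1}\log t)$ in $L^1$ and $O(t^{-1/4})$ in $L^2$, and the later bounds survive. Another option is to use the unfrozen value $a_j^\pm(\operatorname{Re}k)$ on the lips near $0$, where the lip jumps are $O(e^{-ct^{1/2}})$ anyway; this changes the prescribed boundary values.

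\textbf{For \eqref{3.4}, the step you flagged is not resolved.}

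\emph{Moving lips.} You only require $\mathcal X(\pm\phi_0)=0$. Then on the lips
$\bar\partial\mathfrak R_j^\pm=F_j^\pm\,[a_j^\pm(\operatorname{Re}k)-a_j^\pm(\lambda_j)]\,\mathcal X'(\pm\phi_0)\,\frac{i}{2(\bar k-\lambda_j)}\neq0$.
So $W^{(2)}$ jumps across lips that move with $x$ through $k_0(x/t)$. Consequently $x\mapsto W^{(2)}$ is only Lipschitz into $L^1(\mathbb C)$: its difference quotients converge to a measure carried by the lips. Rescaling to $\kappa=k/k_0$ does not help, because the norm is taken in the fixed variable $k$. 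You need $\mathcal X\equiv0$ in a neighborhood of $\pm\phi_0$, so that $W^{(2)}$ vanishes near the moving lips.

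\emph{Size of the $k_0$-derivative.} $\partial_{k_0}F_j^\pm$ is $O(d_j^{-1}+|\log d_j|)$, not $O(|\log d_j|)$, since $\partial_{k_0}\log\frac{k-k_0}{k+k_0}=-\frac1{k-k_0}-\frac1{k+k_0}$. Together with $\partial_{k_0}\bar\partial\mathcal X=O(d_j^{-2})$ and the $H^1$ bound on the amplitude, this produces singularities up to $d_j^{-3/2}$. These still lie in $L^1_{\rm loc}\cap L^{6/5}_{\rm loc}$, since $6/5<4/3$, so the conclusion holds once these terms are counted.
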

The definition in the remaining sectors is generated by the reductions
\[
R^{(2)}(\omega^m z)=\mathcal A^{-m}R^{(2)}(z)\mathcal A^m,\qquad m=0,1,2,
\]
together with $R^{(2)}(k)=\mathcal B\overline{R^{(2)}(\bar k)}\mathcal B$.

It remains to regularize \(k=0\). Freezing \(r_1(0)=\omega^2\), \(r_2(0)=1\) gives
\[
\begin{array}{c}
	v_{1}^{0}=\left( \begin{matrix}
		1&		-\omega ^2&		0\\
		\omega&		0&		0\\
		0&		0&		1\\
	\end{matrix} \right) ,\qquad v_{2}^{0}=\left( \begin{matrix}
		1&		0&		0\\
		0&		0&		-1\\
		0&		1&		1\\
	\end{matrix} \right) ,\\
	v_{2m+1}^{0}=\mathcal{A} ^{-m}v_{1}^{0}\mathcal{A} ^m,\qquad v_{2m+2}^{0}=\mathcal{A} ^{-m}v_{2}^{0}\mathcal{A} ^m,\qquad m=0,1,2.\\
\end{array}
\]
Define
\[
S_{1}^{(0)}=\left( \begin{matrix}
	1&		-\omega ^2&		1\\
	0&		1&		1\\
	0&		0&		1\\
\end{matrix} \right) ,\qquad S_{n}^{(0)}=S_{n-1}^{(0)}v_{n}^{0},\quad 2\le n\le 6.
\]
Since \(v_1^0\cdots v_6^0=I_3\), let \(\Delta_0=\Delta(0)\) and, in \(D_n\cap\mathcal U_0\), set
\[
F_n(k)=e^{\widehat{k\Lambda x+9k^5\Lambda ^2t}}\bigl( \Delta _{0}^{-1}S_{n}^{(0)}\Delta _0 \bigr) ,\qquad R^{(2)}(k)=F_n(k)^{-1}.
\]
The closed cycle makes the frames compatible, their root supports are non-growing, and \(t|k|^5=O(t^{-3/2})\) makes \(F_n^{\pm1}\) uniformly bounded. The endpoint expansions
\[
\Delta(k)\Delta(0)^{-1}=I_3+O(k),\qquad\partial_x[\Delta(k)\Delta(0)^{-1}]=O(k/t),
\]
give
\[
F_{n-1}v_n^{(1)}F_n^{-1}=I_3+O(k),\qquad\partial_x\bigl(F_{n-1}v_n^{(1)}F_n^{-1}-I_3\bigr)=O(k^2+k/t),
\]
where $F_0=F_6$.

On clockwise \(\partial\mathcal U_0\), \(v_{\rm cir}=F_nR_{\rm out}\), with the other arcs given by the reductions. Thus, for \(\Sigma_0(t)=(\Sigma\cap\mathcal U_0)\cup\partial\mathcal U_0\),
\begin{equation}\label{3.5}
	\begin{aligned}
		\|v^{(2)}-I_3\|_{L^\infty(\Sigma_0(t))}
		+\|\partial_xv^{(2)}\|_{L^\infty(\Sigma_0(t))}
		&\le Ct^{-1/2},\\
		\|v^{(2)}-I_3\|_{L^2(\Sigma_0(t))}
		+\|\partial_xv^{(2)}\|_{L^2(\Sigma_0(t))}
		&\le Ct^{-3/4},\\
		\|v^{(2)}-I_3\|_{L^1(\Sigma_0(t))}
		+\|\partial_xv^{(2)}\|_{L^1(\Sigma_0(t))}
		&\le Ct^{-1}.
	\end{aligned}
\end{equation}
These jumps enter the small-norm error, while \(\bar\partial R^{(2)}=0\) in \(\mathcal U_0\). The pole cutoff gives
\[
R^{(2)}(k)=I_3,\qquad\operatorname{dist}(k,\mathcal Z)<\frac{\rho}{3},
\]
so the discrete residue conditions are unchanged.

Then $M^{(2)}$ satisfies the following mixed $\bar\partial$-RH problem.
\begin{rhp}\label{rhp4}
Find $M^{(2)}$ continuous in $\mathbb C\setminus(\Sigma^{(2)}\cup\mathcal Z\cup\{0\})$ such that
\[
\begin{aligned}
M_+^{(2)}&=M_-^{(2)}v^{(2)},&
v^{(2)}&=(R_-^{(2)})^{-1}v^{(1)}R_+^{(2)},\\
\bar\partial M^{(2)}&=M^{(2)}W^{(2)},&
W^{(2)}&=(R^{(2)})^{-1}\bar\partial R^{(2)}.
\end{aligned}
\]
The normalization is $M^{(2)}\to I_3$ at infinity, and $M^{(2)}=O(k^{-1})$ at zero. Its reductions and residue conditions are those of RH Problem~\ref{rhp3}, since $R^{(2)}=I_3$ near every nonzero pole.
\end{rhp}
Schwartz decay and the phase signature give \(R^{(2)}=I_3+o(k^{-1})\) and
\(\partial_xR^{(2)}=o(k^{-1})\) on closed lens subsectors, including their
boundaries. Retaining the pole divisor, reconstruction on a recovery ray reads
\[
u(x,t)=-\frac12\partial_x\lim_{k\to\infty}k\left\{\bigl[N^{(2)}(k)D_I(k)^{-1}(R^{(2)}(k))^{-1}\Delta(k)^{-1}\bigr]_3-1\right\}.
\]
The braces contain the undeformed row. Set
\(N^{(3)}=N^{(2)}(M_{\mathrm{RHP}}^{(2)})^{-1}\) after constructing the pure RH
solution; only this row quotient is asserted removable at the singular SK endpoint.

\subsection{Analysis of the pure RH problem}
The pure part of RH Problem~\ref{rhp4} keeps its jumps and residues and sets \(\bar\partial R^{(2)}=0\):
\begin{rhp}
Find $M_{\rm RHP}^{(2)}$ meromorphic off $\Sigma^{(2)}$, with the jumps, simple poles, residue conditions, and reductions of RH Problem~\ref{rhp4}, such that
\[
M_{{\rm RHP},+}^{(2)}=M_{{\rm RHP},-}^{(2)}v^{(2)},\qquad
M_{\rm RHP}^{(2)}(k)\to I_3\quad(k\to\infty).
\]
The matrix is bounded at $k=0$ and satisfies $\bar\partial M_{\rm RHP}^{(2)}=0$ off its contour and poles.
\end{rhp}

The pure RH model is constructed as
\begin{equation}\label{3.6}
	M_{\mathrm{RHP}}^{(2)}(k)=\begin{cases}
		E_r(k)M^{(\mathrm{out)}}(k),&		k\in \mathbb{C} \setminus \overline{\mathcal{U} },\\
		E_r(k)M^{(\mathrm{LC)}}(k)=E_r(k)M^{(\mathrm{out)}}(k)M^{(\mathrm{mod)}}(k),&		k\in \mathcal{U} ,\\
	\end{cases}
\end{equation}
Here \(M^{(\mathrm{out})}\) carries the retained poles from \eqref{3.19}, the local factors come from the normalized parabolic-cylinder model, and \(E_r\) solves the error problem.

Choose $\rho_0>0$ sufficiently small and set
\[
\mathcal U=\bigcup_{\lambda\in\mathcal K_0}\mathcal U_\lambda,\qquad\mathcal U_\lambda=\{k:|k-\lambda|<\rho_0\},
\]
with mutually disjoint disks satisfying $\overline{\mathcal U}\cap\bigl(\overline{\mathcal U_0}\cup\mathcal Z\bigr)=\varnothing$.

For $\sigma_d=\{(\kappa_j,c_j)\}_{j=1}^{N_1}\cup\{(\eta_\ell,d_\ell)\}_{\ell=1}^{N_4}$, define
$\sigma_d^{(\mathrm{tr})}=\{(\kappa_j,\widetilde c_j)\}_{j=1}^{N_1}\cup\{(\eta_\ell,\widetilde d_\ell)\}_{\ell=1}^{N_4}$, where $\widetilde c_j=c_j\frac{\widetilde\delta_{v1}(\kappa_j)}{\delta_1^2(\kappa_j)}$, $\widetilde d_\ell=d_\ell\frac{\widetilde\delta_{v4}(\eta_\ell)}{\delta_4^2(\eta_\ell)}$. Each transmission factor is absorbed once, and its dependence on \(\xi\) through \(k_0\) is suppressed. In particular,
\[
\partial_x\log\delta_1(\kappa)=-\frac{\partial_xk_0}{2\pi i}\frac{\log(1-|r_1(k_0)|^2)}{k_0-\kappa},
\]
so outer coefficients use fixed-\(t\) total derivatives. Symmetry gives the remaining data.

Let \(\widehat\sigma_d=\{(\kappa_j,\widehat c_j)\}_j\cup\{(\eta_\ell,\widehat d_\ell)\}_\ell\) generate finitely many disjoint admissible six-point orbits under \eqref{2.6}, with pole set \(\mathcal Z(\widehat\sigma_d)\).

\begin{rhp}\label{rhp6}
	Find a \(3\times3\) matrix-valued function \(M(k;x,t\mid\widehat\sigma_d)\) such that:
	\begin{enumerate}
		\item[(i)]
		\(M\) is analytic in \(\mathbb C\setminus\mathcal Z(\widehat\sigma_d)\), has simple poles precisely at \(\mathcal Z(\widehat\sigma_d)\), and is analytic at \(k=0\).
		
		\item[(ii)]
		As \(k\to\infty\),
		\[
		M(k;x,t\mid \widehat{\sigma }_d)\rightarrow I_3.
		\]
		
		\item[(iii)]
		The function \(M\) satisfies
		\[
		M(k;x,t\mid\widehat\sigma_d)=\mathcal A M(\omega k;x,t\mid\widehat\sigma_d)\mathcal A^{-1}=\mathcal B\overline{M(\bar k;x,t\mid\widehat\sigma_d)}\mathcal B.
		\]
		
		\item[(iv)]
		Residue conditions:
		\[
		\underset{k=\kappa _j}{\mathrm{Res}}M(k;x,t\mid \widehat{\sigma }_d)=\lim_{k\rightarrow \kappa _j} M(k;x,t\mid \widehat{\sigma }_d)\left( \begin{matrix}
			0&		-\widehat{c}_je^{-\theta _{21}(x,t;\kappa _j)}&		0\\
			0&		0&		0\\
			0&		0&		0\\
		\end{matrix} \right) ,
		\]
		\[
		\underset{k=\eta _{\ell}}{\mathrm{Res}}M(k;x,t\mid \widehat{\sigma }_d)=\lim_{k\rightarrow \eta _{\ell}} M(k;x,t\mid \widehat{\sigma }_d)\left( \begin{matrix}
			0&		0&		0\\
			\widehat{d}_{\ell}e^{\theta _{21}(x,t;\eta _{\ell})}&		0&		0\\
			0&		0&		0\\
		\end{matrix} \right) .
		\]
		The residue conditions at all remaining orbit points are generated by \eqref{2.6}.
	\end{enumerate}
\end{rhp}
The reflectionless problem is finite dimensional. Write
\[
M(k;x,t\mid\widehat\sigma_d)=I_3+\sum_{\lambda\in\mathcal Z(\widehat\sigma_d)}\frac{R_\lambda}{k-\lambda},
\]
the residue conditions are equivalent to
\[
R_\lambda=\left(I_3+\sum_{\substack{\mu\in\mathcal Z(\widehat\sigma_d)\\ \mu\neq\lambda}}\frac{R_\mu}{\lambda-\mu}\right)C_\lambda,\qquad\lambda\in\mathcal Z(\widehat\sigma_d).
\]
Whenever the finite residue system is invertible, it determines the unique reflectionless RH model. The nilpotent residue conditions give \(\det M=1\), while bounds on the residues and their first \(x\)-derivatives control \(M^{\pm1}\) away from the poles. The identification with a scalar SK soliton uses the compatibility conditions below.

For the scalar reflectionless solutions, let
\[
0<\eta _1<\cdots <\eta _N,\quad p_j=\sqrt{3}\,\eta _j,\quad v_j=p_{j}^{4}=9\eta _{j}^{4},\quad y_j=\beta _je^{p_j(x-v_jt)},\quad \beta _j>0.
\]
For \(i<j\), set
\[
A_{ij}=\frac{(p_i-p_j)^2(p_{i}^{2}-p_ip_j+p_{j}^{2})}{(p_i+p_j)^2(p_{i}^{2}+p_ip_j+p_{j}^{2})}\in (0,1).
\]
For \(S\subset\{1,\ldots,N\}\), put \(A(S)=\prod_{i<j,\ i,j\in S}A_{ij}\), with empty products equal to one, and define
\begin{equation}\label{3.7}
	\Delta_N(x,t)
	=\sum_{S\subset\{1,\ldots,N\}}A(S)\prod_{j\in S}y_j,\qquad u_{\mathrm{sol}}(x,t;\sigma^{\mathrm{phys}})=\partial_x^2\log\Delta_N(x,t).
\end{equation}
For \(k_j=i\omega\eta_j\), put \(a_{2j-1}=k_j\), \(a_{2j}=\bar k_j\), and
\[
\nu=\prod_{p<q}(a_p-a_q),\qquad
\nu_{j0}=\nu\big|_{k_j\mapsto\omega k_j},\qquad
\Sigma_N=\sum_{p=1}^{2N}a_p,\qquad
\Pi_N=\sum_{1\le p\le q\le2N}a_pa_q.
\]
Given \(\beta_j>0\), define
\begin{equation}\label{3.8}
	\gamma_j=-2\omega\beta_j\frac{\nu}{\nu_{j0}},\qquad\beta_j=\frac{\nu_{j0}\gamma_j+\nu_{0j}\bar\gamma_j}{2\nu}>0,
\end{equation}
where \(\nu_{0j}=\nu|_{\bar k_j\mapsto\omega^2\bar k_j}\), and write
\[
\sigma ^{\mathrm{phys}}=\{(k_j,\gamma _j)\}_{j=1}^{N}\longleftrightarrow \{(\eta _j,\beta _j)\}_{j=1}^{N},
\]
for scalar simple-pole \(P=0\) data. The conditions below specify their scalar closure and matrix realization. Also, $\bar\gamma_j=-2\omega^2\beta_j\nu/\nu_{0j}$.

\medskip
\noindent\emph{Scalar closure and matrix realization.}
	For the scalar SK data above, set
	\[
	\chi_j(x,t)=e^{-(k_j+\bar k_j)x-9(k_j^5+\bar k_j^5)t},\qquad j=1,\ldots,N.
	\]
	For \(0\le\ell\le2N\), define
	\[
	\mathfrak m_{\ell,2j-1}=k_j^\ell+(\omega k_j)^\ell\gamma_j\chi_j,\qquad\mathfrak m_{\ell,2j}=\bar k_j^\ell+(\omega^2\bar k_j)^\ell\bar\gamma_j\chi_j.
	\]
	Let \(\mathbf M_{\rm alg}\) and \(\mathbf B_{\rm alg}\) be given by
	\[
	(\mathbf M_{\rm alg})_{pq}=\mathfrak m_{2N-q,p},\qquad(\mathbf B_{\rm alg})_p=\mathfrak m_{2N,p},\qquad 1\le p,q\le2N.
	\]
	We assume
	\begin{equation}\label{3.9}
		\det\mathbf M_{\rm alg}(x,t)=\nu\,\Delta_N(x,t)^2\ne0,
	\end{equation}
	and define
	\[
	\bigl(A_{2N-1}^{\rm alg},\ldots,A_0^{\rm alg}\bigr)^{\rm T}=-\mathbf M_{\rm alg}^{-1}\mathbf B_{\rm alg}.
	\]
	We further assume
	\begin{equation}\label{3.10}
		\frac{1}{2}\bigl( A_{2N-1}^{\mathrm{a}\lg} \bigr) ^2-\partial _xA_{2N-1}^{\mathrm{a}\lg}-A_{2N-2}^{\mathrm{a}\lg}=\Pi _N-\frac{1}{2}\Sigma _{N}^{2},
	\end{equation}
	for all real \((x,t)\). Equivalently, the scalar reconstruction lies on the \(P=0\) branch and
	\[
	u_{\rm sol}=-\frac12\partial_xA_{2N-1}^{\rm alg}=\partial_x^2\log\Delta_N.
	\]
	Form the sectorial matrices \(\mathcal M_{{\rm sc},n}\) using the normalizations in \eqref{2.2} and \eqref{2.3}. Assume that their nonzero poles are simple, with invertible constant Laurent coefficients and the residue conditions of RH Problem~\ref{rhp6} for a datum \(\widehat\sigma_d\). At \(k=0\), require the exact sectorial factorization
\[
\mathcal M_{{\rm sc},n}(k)=(I_3+A_n/k)R_n(k),
\qquad
A_n^2=0,\quad \operatorname{rank}A_n\le1,\quad
\ell^{\rm T}A_n=0.
\]
Here \(R_n\) are analytic and invertible at \(0\) and glue to a rational solution \(R\) of RH Problem~\ref{rhp6} with the same datum. Denote the resulting correspondence by
\(\widehat\sigma_d=\mathfrak G_N(\sigma^{\rm phys})\).

A rational datum is \emph{SK-realizable} if it equals \(\mathfrak G_N(\sigma^{\rm phys})\) for scalar data satisfying the preceding closure and matrix realization conditions. A transmission-freezing construction has the \emph{scalar-to-matrix compatibility property} if the fixed datum obtained by evaluating each retained transmission factor at its orbit velocity is SK-realizable.

\begin{prop}\label{prop:3.2}
	Let $\widehat\sigma_d=\mathfrak G_N(\sigma^{\rm phys})$ be SK-realizable. Then RH Problem~\ref{rhp6} is uniquely solvable. If
	\[
	R(k)=I_3+\frac{R_1}{k}+O(k^{-2}),
	\]
	denotes its solution, there exist reduction-related matrices $A_n$ with $\operatorname{rank}A_n\le1$ such that
	\[
	\mathcal M_{{\rm sol},n}(k)=(I_3+A_n/k)R(k),\qquad\ell^{\rm T}A_n=0,\qquad\det\mathcal M_{\rm sol}=1.
	\]
	The completed matrix satisfies the sectorial origin condition and the two reductions of RH Problem~\ref{rhp1}, together with
	\begin{equation}\label{3.11}
			(\mathcal M_{\rm sol})_x-[k\Lambda,\mathcal M_{\rm sol}]
			=\frac{Q_1(u_{\rm sol})}{k}\mathcal M_{\rm sol}, \quad
			(\mathcal M_{\rm sol})_t-[9k^5\Lambda^2,\mathcal M_{\rm sol}]
			=P(u_{\rm sol};k)\mathcal M_{\rm sol}.
	\end{equation}
	Moreover, the reflectionless SK solution associated with
	\(\widehat\sigma_d\) is defined by
	\begin{equation}\label{3.12}
		u_{\rm sol}(x,t;\widehat\sigma_d)
		\doteq-\frac12\partial_x(\ell^{\rm T}R_1e_3)
		=u_{\rm sol}(x,t;\sigma^{\rm phys})
		=\partial_x^2\log\Delta_N(x,t).
	\end{equation}
\end{prop}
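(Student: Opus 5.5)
The plan is to get almost everything from the definition of SK-realizability. By definition, $\widehat\sigma_d=\mathfrak G_N(\sigma^{\rm phys})$ comes with sectorial matrices $\mathcal M_{{\rm sc},n}$ built from scalar reflectionless data via the normalizations \eqref{2.2}--\eqref{2.3}. Their nonzero poles are simple and carry the residue conditions of RH Problem~\ref{rhp6}. At the origin they factor exactly as $(I_3+A_n/k)R_n(k)$, with $A_n^2=0$, $\operatorname{rank}A_n\le1$ and $\ell^{\rm T}A_n=0$, and the $R_n$ glue to a rational solution $R$ of RH Problem~\ref{rhp6}.

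\emph{Existence and uniqueness.} Existence is therefore part of the hypothesis. For uniqueness, I will argue as in the uniqueness proof for RH Problem~\ref{rhp1}. Let $\widehat R$ be a second solution. At each $\lambda\in\mathcal Z(\widehat\sigma_d)$ the residue matrix $C_\lambda$ is rank-one nilpotent, so locally $R=H_\lambda(I_3+C_\lambda/(k-\lambda))$ with $H_\lambda$ holomorphic. This gives $\det R\equiv1$ by Liouville, since $\det R$ has no poles and tends to $1$. It also shows that $R\widehat R^{-1}$ has removable singularities at every $\lambda$. Since $R\widehat R^{-1}$ is analytic at $0$ and tends to $I_3$, it is identically $I_3$. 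Equivalently, the finite residue system is uniquely solvable.

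\emph{Completed matrix.} Set $\mathcal M_{{\rm sol},n}=\mathcal M_{{\rm sc},n}$, which equals $(I_3+A_n/k)R(k)$ once $R_n$ is identified with the glued $R$. The identities $\det\mathcal M_{\rm sol}=1$ and $\ell^{\rm T}A_n=0$ follow from $\det(I_3+A_n/k)=1$ (because $A_n$ is nilpotent) and from the hypothesis. The sectorial origin condition of RH Problem~\ref{rhp1}(v) is read off from $A_n/k\cdot R(0)$. That product is a rank-one principal part, and I will match its form with the structure $J_{\pm,0}^{(-1)}=a_\pm er^{\rm T}$ inherited through \eqref{2.2}.

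The reductions go as follows. Both $R(k)$ and $\mathcal A R(\omega k)\mathcal A^{-1}$ solve RH Problem~\ref{rhp6}, because \eqref{2.6} maps the residue data to themselves; uniqueness then gives equality. The same argument handles $\mathcal B\overline{R(\bar k)}\mathcal B$. The $A_n$ are transported in the same way because the $\mathcal M_{{\rm sc},n}$ come from Jost functions satisfying the reductions of \eqref{2.1}.

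\emph{Lax equations \eqref{3.11}.} Here I would use the dressing argument. Put $\Psi=\mathcal M_{\rm sol}e^{k\Lambda x+9k^5\Lambda^2t}$. The residue matrices depend on $(x,t)$ only through $e^{\mp\theta_{21}}$, so $\Psi_x\Psi^{-1}$ has no poles at $\mathcal Z$. The nilpotent factorization makes its singularity at $0$ at most a simple pole. At infinity it equals $k\Lambda+O(k^{-1})$. Liouville then gives $\Psi_x\Psi^{-1}=k\Lambda+Q_1/k$ for some matrix $Q_1$, and comparing with the $k^{-1}$ coefficient forces $Q_1=-2u\,er^{\rm T}$ with $u=-\tfrac12\partial_x(\ell^{\rm T}R_1e_3)$. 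Alternatively, one can take the Lax equations directly from the fact that $\mathcal M_{{\rm sc},n}$ are Jost-built eigenfunctions of the scalar operator with potential $u_{\rm sol}$. The $t$-equation follows identically with polynomial part $9k^5\Lambda^2$.

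\emph{Identification \eqref{3.12}.} Since $\ell^{\rm T}A_n=0$, the row $\ell^{\rm T}\mathcal M_{\rm sol}=\ell^{\rm T}R$. The $k^{-1}$ coefficient of the third component is therefore $\ell^{\rm T}R_1e_3$, and \eqref{2.10} gives $u=-\tfrac12\partial_x\ell^{\rm T}R_1e_3$. The scalar closure \eqref{3.9}--\eqref{3.10} gives $u_{\rm sol}=-\tfrac12\partial_xA_{2N-1}^{\rm alg}=\partial_x^2\log\Delta_N$. It remains to identify $\ell^{\rm T}R_1e_3$ with $A_{2N-1}^{\rm alg}$ up to an $x$-independent constant. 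I would do this by expanding the scalar eigenfunction $\ell^{\rm T}\mathcal M_{{\rm sc}}e_3$, which by construction is the monic degree-$2N$ polynomial divided by $k^{2N}$, at large $k$. The constant is killed by the decay of both sides as $x\to+\infty$.

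I expect this last bookkeeping step to be the main obstacle. It requires matching the matrix normalization of \eqref{2.2}--\eqref{2.3} with the Wronskian-type algebraic system $\mathbf M_{\rm alg}$, and the dressing-argument bound on the pole order at $0$ needs care.
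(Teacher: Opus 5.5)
Your outline is the paper's proof: existence is part of SK-realizability, uniqueness is the quotient argument, the determinant and reductions come from nilpotency, uniqueness and the Jost columns, \eqref{3.11} comes from the Jost construction, and \eqref{3.12} comes from $\ell^{\rm T}A_n=0$ plus the large-$k$ (Marchenko) formula.

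The one step you should not keep is the dressing derivation of \eqref{3.11}, because it does not close from the RH data alone. There are three problems.

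(a) $\mathcal M_{\rm sol}$ is only sectorial, and $A_n$ really depends on $n$. RH Problem~\ref{rhp1}(v) with the Schwarz reduction gives $A_1R(0)=a_+\omega^2e\,e_1^{\rm T}$ but $A_6R(0)=a_+\omega\,e\,e_2^{\rm T}$, where $e_1,e_2$ are standard basis vectors and $a_+\not\equiv0$. So $\Psi_x\Psi^{-1}$ is not a priori single-valued across the six rays, and Liouville does not apply.

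(b) Since $\mathcal M_{{\rm sol},n}^{-1}=R^{-1}(I_3-A_n/k)$, the $k^{-2}$ coefficient of $\Psi_x\Psi^{-1}$ at $0$ is $-(A_n)_xA_n-A_n(R_xR^{-1})(0)A_n$. The relation $A_n^2=0$ does not remove it.

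(c) The $O(1)$ term at infinity is $[A_n+R_1,\Lambda]$, which vanishes only if $A_n+R_1$ is diagonal.

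All three properties hold only because the $\mathcal M_{{\rm sc},n}$ are Jost-built for a single Lax operator, and once you use that, \eqref{3.11} is immediate. So your ``alternatively'' is the actual argument rather than a fallback; it is what the paper's ``differentiation gives \eqref{3.11}'' amounts to.

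The same route removes what you call the main obstacle. In every sector $\mathcal M_{{\rm sc},n}=I_3+J_+^{(1)}/k+O(k^{-2})$ with $J_+^{(1)}=2\int_x^\infty u_{\rm sol}\,dy\,\operatorname{diag}(\omega^2,\omega,1)$. Hence $A_n+R_1=J_+^{(1)}$, and $\ell^{\rm T}A_n=0$ gives $\ell^{\rm T}R_1e_3=2\int_x^\infty u_{\rm sol}\,dy$. Then \eqref{3.9}--\eqref{3.10} identify $u_{\rm sol}$ with $\partial_x^2\log\Delta_N$, and no comparison with $\mathbf M_{\rm alg}$ is needed.

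If you keep your comparison, two corrections are needed.
\begin{enumerate}
\item[(1)] $\ell^{\rm T}\mathcal M_{\rm sc}e_3=\ell^{\rm T}Re_3$ is analytic at $k=0$ (RH Problem~\ref{rhp6}(i)), so it is not $P(k)/k^{2N}$. It can agree with $P(k)/k^{2N}$ only up to an $x$-independent factor $1+f_1/k+\cdots$.
\item[(2)] The resulting constant $f_1$ is not removed by decay, because $A^{\rm alg}_{2N-1}\to-\sqrt3\sum_j\eta_j\neq0$ as $x\to+\infty$. It does not need to be removed, since only $\partial_x$ enters \eqref{3.12}.
\end{enumerate}
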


\begin{proof}
	Equations~\eqref{3.9}-\eqref{3.10} give the invertible \(P=0\) scalar system and \(u_{\rm sol}=\partial_x^2\log\Delta_N\). The matrix realization identifies its residues and origin factorization with RH Problem~\ref{rhp6}, whose quotient argument gives uniqueness. The scalar columns and Abel's identity give the reductions and determinant, differentiation gives \eqref{3.11}, and \(\ell^{\rm T}A_n=0\) together with the Marchenko formula gives \eqref{3.12}.
\end{proof}
For $\lambda=a+ib\in\mathcal Z_1\cup\mathcal Z_4$, define
\begin{equation}\label{3.13}
	v_\lambda=9\frac{\operatorname{Im}(\lambda^5)}{\operatorname{Im}\lambda}=45a^4-90a^2b^2+9b^4,
\end{equation}
so that
\[
\operatorname{Re}\Phi_{21}(\xi;\lambda)=\sqrt3\,b(\xi-v_\lambda).
\]
Equation~\eqref{2.6} gives the same rate throughout each orbit.
Let
\[
	V_*=\max\{v_\lambda:\lambda\in\mathcal Z_1\cup\mathcal Z_4\},
\]
with \(V_*=-\infty\) when \(\mathcal Z_1\cup\mathcal Z_4=\varnothing\).

For the positive half-cone \(\xi\ge M\), let
\[
	r=r(M):=\#\{j\in\{1,\ldots,N\}:v_j<M\}.
\]
The retained scalar soliton data are defined by
\begin{equation}\label{3.14}
	\begin{aligned}
		\widehat\beta_{j,M}&=\beta_j\tau_j(v_j)\prod_{i=1}^{r}A_{ij},
		\qquad r<j\le N,\\
		\widehat\sigma_M^{\rm phys}
		&=\{(\eta_j,\widehat\beta_{j,M}):r<j\le N\}, \qquad
		\widehat\sigma_M=\mathfrak G_{N-r}(\widehat\sigma_M^{\rm phys}).
	\end{aligned}
\end{equation}

For the first and second discrete families, respectively, define
\begin{equation}\label{3.15}
	D_{\kappa}(k)=\operatorname{diag}\left(
	\frac{(k-\omega\kappa)(k-\bar\kappa)}
	{(k-\kappa)(k-\omega\bar\kappa)},
	\frac{(k-\kappa)(k-\omega^2\bar\kappa)}
	{(k-\omega^2\kappa)(k-\bar\kappa)},
	\frac{(k-\omega^2\kappa)(k-\omega\bar\kappa)}
	{(k-\omega\kappa)(k-\omega^2\bar\kappa)}
	\right),
\end{equation}
\begin{equation}\label{3.16}
	D_{\eta}(k)=\operatorname{diag}\left(
	\frac{(k-\eta)(k-\omega\bar\eta)}
	{(k-\omega\eta)(k-\bar\eta)},
	\frac{(k-\omega^2\eta)(k-\bar\eta)}
	{(k-\eta)(k-\omega^2\bar\eta)},
	\frac{(k-\omega\eta)(k-\omega^2\bar\eta)}
	{(k-\omega^2\eta)(k-\omega\bar\eta)}
	\right).
\end{equation}

Let $I=[\xi_-,\xi_+]\Subset(0,\infty)$ and set
\[
D_I(k)=
\prod_{v_{\kappa_j}>\xi_+}D_{\kappa_j}(k)
\prod_{v_{\eta_\ell}>\xi_+}D_{\eta_\ell}(k)
=\operatorname{diag}\bigl((D_I)_1,(D_I)_2,(D_I)_3\bigr),
\]
with the empty product equal to $I_3$. Then
\[
\det D_I=1,\qquad D_I(0)=I_3,\qquad D_I(k)=\mathcal A D_I(\omega k)\mathcal A^{-1}=\mathcal B\overline{D_I(\bar k)}\mathcal B,
\]
and
\[
D_I(k)=I_3+\frac{D_I^{[1]}}{k}+O(k^{-2}),\qquad \partial_xD_I^{[1]}=0.
\]

For the retained representatives, define
\begin{align}
	c_j(I;\xi)
	&=\widetilde c_j(\xi)
	\frac{(D_I)_2(\kappa_j)}{(D_I)_1(\kappa_j)}, \qquad
	v_{\kappa_j}\in I, \label{3.17}\\
	d_\ell(I;\xi)
	&=\widetilde d_\ell(\xi)
	\frac{(D_I)_1(\eta_\ell)}{(D_I)_2(\eta_\ell)}, \qquad
	v_{\eta_\ell}\in I, \label{3.18}
\end{align}
and
\[
\sigma_d(I;\xi)=\{(\kappa_j,c_j(I;\xi)):v_{\kappa_j}\in I\}\cup\{(\eta_\ell,d_\ell(I;\xi)):v_{\eta_\ell}\in I\}.
\]
Complete the orbits by \eqref{2.6} and set
\begin{equation}\label{3.19}
	M^{(\mathrm{out})}(k)=M(k;x,t\mid\sigma_d(I;\xi)).
\end{equation}

For a switched pole with residue $C_\zeta=\gamma E_{pq}$, write locally $D=\widehat D_\zeta D_{s,\zeta}$, where $\widehat D_\zeta$ is analytic and invertible at $\zeta$, and the $p$- and $q$-entries of $D_{s,\zeta}$ are $(k-\zeta)^{-1}$ and $k-\zeta$. If
\[
\widehat D_\zeta(\zeta)^{-1}C_\zeta\widehat D_\zeta(\zeta)=\widehat\gamma E_{pq},
\]
then the switched residue is $C_\zeta^{(0)}=\widehat\gamma^{-1}E_{qp}$, as follows from
\begin{equation}\label{3.20}
	\left( \begin{matrix}
		(k-\zeta )^{-1}&		\widehat{\gamma }\\
		0&		k-\zeta\\
	\end{matrix} \right) =\left( \begin{matrix}
		0&		\widehat{\gamma }\\
		-\widehat{\gamma }^{-1}&		k-\zeta\\
	\end{matrix} \right) \left( \begin{matrix}
		1&		0\\
		\frac{\widehat{\gamma }^{-1}}{k-\zeta}&		1\\
	\end{matrix} \right) .
\end{equation}
Thus the conjugation never uses the singular value \(D(\zeta)\).

Choose the switched channel when its scalar residue factor exceeds one in modulus. The balanced equations are
\begin{equation}\label{3.21}
	\mathbb A_I^{\rm bal}(x,t)\mathbf r_I=\mathbf b_I.
\end{equation}
Both representations agree on switching curves.

\begin{lemma}\label{lem:3.1}
	For a fixed finite orbit set separated from $\Sigma\cup\{0\}$, the entries of $\mathbb A_I^{\rm bal}$, $\mathbf b_I$, and their fixed-$t$ total $x$-derivatives are uniformly bounded for $x/t\in I$. If
	\begin{equation}\label{3.22}
		\inf_{t\ge t_0,\ x/t\in I}|\det\mathbb A_I^{\rm bal}(x,t)|\ge d_I>0,
	\end{equation}
	then the outer residues and their $x$-derivatives, as well as $M^{(\mathrm{out})}$ and $(M^{(\mathrm{out})})^{-1}$ away from the fixed pole disks, are uniformly bounded.
\end{lemma}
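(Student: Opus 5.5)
The argument splits into two parts. First we bound the coefficients of the balanced system, using only the decay structure of the retained and switched channels. Then we invert the system under the gap \eqref{3.22} and propagate the bounds to $M^{(\mathrm{out})}$ through the finite partial-fraction representation.

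\emph{Step 1: coefficient bounds.} Take a representative $\lambda=a+ib$ of a retained orbit, so $v_\lambda\in I$. Its residue coefficient has the form $c_j(I;\xi)e^{-\theta_{21}(\kappa_j)}$ or $d_\ell(I;\xi)e^{\theta_{21}(\eta_\ell)}$. By \eqref{3.13},
\[
|e^{\mp\theta_{21}(\lambda)}|=e^{\mp\sqrt3\,b\,t(\xi-v_\lambda)}.
\]
On $x/t\in I$ this factor need not be bounded, which is why the balancing device is introduced.

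\begin{itemize}
\item Wherever the scalar factor $|\widehat\gamma|$ exceeds one, the switching identity \eqref{3.20} replaces $\widehat\gamma E_{pq}$ by $\widehat\gamma^{-1}E_{qp}$. The coefficient entering \eqref{3.21} is then always at most one in modulus.
\item The conjugating factors $\widehat D_\zeta(\zeta)$ are built from the diagonal products $D_I$ in \eqref{3.15}--\eqref{3.16}. They are fixed rational functions evaluated at poles separated from $\Sigma\cup\{0\}$ and from the other orbit points, so they are bounded above and below.
\item The transmission factors $\widetilde\delta_{v1}/\delta_1^2$ and $\widetilde\delta_{v4}/\delta_4^2$ at $\kappa_j,\eta_\ell$ are bounded and nonvanishing for $\xi\in I$. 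This follows because $k_0(\xi)$ ranges over a compact subset of $(0,\infty)$ and the Cauchy integrals defining $\delta_1,\delta_4$ are evaluated at points a fixed distance from the real rays.
\item The Cauchy-kernel entries $1/(\lambda-\mu)$ are bounded by the orbit separation.
\end{itemize}
Hence the entries of $\mathbb A_I^{\rm bal}$ and $\mathbf b_I$ are uniformly bounded.

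For the fixed-$t$ total $x$-derivatives, the only growing contribution is $\partial_x\theta_{21}(\lambda)=(l_2-l_1)(\lambda)$, which is a fixed constant. The remaining contribution is $\partial_x k_0=O(1/t)$ times the $k_0$-derivatives of the transmission factors, controlled by the explicit formula for $\partial_x\log\delta_1(\kappa)$. An exponential with modulus at most one times a bounded constant stays bounded, so the derivatives are bounded as well. Both representations agree on switching curves, so the bounds hold uniformly across them. I expect this to be the main obstacle: one must check that every entry produced by the switching, including the cross terms between switched and unswitched poles, carries a non-growing exponential. This is a bookkeeping verification over the six-point orbits using \eqref{2.6}, which keeps the same rate $v_\lambda$ throughout each orbit.

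\emph{Step 2: inversion.} Under \eqref{3.22}, Cramer's rule gives $\mathbf r_I=(\mathbb A_I^{\rm bal})^{-1}\mathbf b_I$. Its entries are bounded because the numerators are bounded cofactors and the determinant is at least $d_I$. Differentiating,
\[
\partial_x\mathbf r_I=(\mathbb A_I^{\rm bal})^{-1}\bigl(\partial_x\mathbf b_I-(\partial_x\mathbb A_I^{\rm bal})\mathbf r_I\bigr),
\]
which is bounded by Step 1.

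\emph{Step 3: bounds for $M^{(\mathrm{out})}$.} Undoing the switch multiplies the residues only by bounded analytic factors near the switched poles. In the representation
\[
M^{(\mathrm{out})}(k)=I_3+\sum_\lambda \frac{R_\lambda}{k-\lambda},
\]
suitably modified by the local factors at switched poles, every term is then uniformly bounded for $k$ outside the fixed pole disks. Since the residue conditions are rank-one nilpotent, the argument of the Proposition on uniqueness of RH Problem~\ref{rhp1} gives $\det M^{(\mathrm{out})}\equiv1$. Consequently $(M^{(\mathrm{out})})^{-1}$ is the adjugate, a polynomial in bounded entries, and is bounded away from the poles. The same argument applied to $\partial_x$ of the residues gives the derivative bounds.
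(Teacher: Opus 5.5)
Your proposal is correct and follows essentially the same route as the paper. Balancing selects the member of $\{\gamma,\gamma^{-1}\}$ of modulus at most one, whose $x$-derivative is a fixed spectral multiple plus the $O(t^{-1})$ transmission term you note. Pole separation then bounds the system and its derivative, Cramer's rule with the differentiated balanced system bounds $\mathbf r_I$ and $\partial_x\mathbf r_I$, and the rational representation with $\det M^{(\mathrm{out})}\equiv 1$ bounds $M^{(\mathrm{out})}$ and its inverse off the pole disks. The one phrasing to tighten is in Step 3: undoing a switch amounts to multiplying the whole matrix by the orbit divisor, which is bounded off the pole disks, rather than rescaling the residues themselves.
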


\begin{proof}
	Balancing selects the member of \(\{\gamma,\gamma^{-1}\}\) of modulus at most one, whose \(x\)-derivative is a fixed spectral multiple. Pole separation bounds the system and its derivative. The determinant gap, Cramer's rule, and differentiated \eqref{3.21} bound \(\mathbf r_I,\partial_x\mathbf r_I\), and the rational representation gives the matrix bounds.
\end{proof}

For a fixed interval $I=[\xi_-,\xi_+]\Subset(0,\infty)$, the selected moving datum is called \emph{regular on $I$} if the determinant gap \eqref{3.22} holds.

Choose uniformly disjoint stationary and pole disks, retaining endpoint-velocity orbits. Then
\[
\mu(I)=\frac{\sqrt3}{2}\min_{\substack{\lambda\in\mathcal Z_1\cup\mathcal Z_4\\
		v_\lambda\notin I}}|\operatorname{Im}\lambda|\,\operatorname{dist}(v_\lambda,I)>0,
\]
with empty minimum \(+\infty\). Lemma~\ref{lem:3.1} bounds \(M^{(\mathrm{out})\pm1}\) off the pole disks.

\begin{prop}
	Apply $D_I$ to the exact mixed $\bar\partial$-RH problem before constructing the outer and local models. There exists a piecewise right multiplier $\mathcal T_I$, equal to $D_I$ outside small neighborhoods of the omitted orbits, such that $\widetilde M^{(2)}=M^{(2)}\mathcal T_I$ has precisely the retained poles $\mathcal Z(\sigma_d(I;\xi))$, with
	\begin{equation}\label{3.23}
		C_\zeta^I=D_I(\zeta)^{-1}C_\zeta^{(2)}D_I(\zeta).
	\end{equation}
	The omitted poles are replaced by clockwise circle jumps satisfying
	\begin{equation}\label{3.24}
		\|v_{\mathrm{pole}}-I_3\|_{L^\infty\cap L^2\cap L^1}+\|\partial_xv_{\mathrm{pole}}\|_{L^\infty\cap L^2\cap L^1}=O(e^{-\mu(I)t}).
	\end{equation}
	Outside the pole disks,
	\begin{equation}\label{3.25}
		v^{(2,I)}=D_I^{-1}v^{(2)}D_I, \qquad W^{(2,I)}=D_I^{-1}W^{(2)}D_I.
	\end{equation}
	The origin frame and the two reductions are preserved.
\end{prop}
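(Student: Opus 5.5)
The plan follows the standard Blaschke-type interpolation used in the $\bar\partial$-steepest descent literature, adapted to the $3\times3$ orbit structure. First we record the properties of $D_I$ that make it admissible as a right multiplier. It is a finite product of the diagonal rational factors \eqref{3.15}--\eqref{3.16}, so $\det D_I=1$ and $D_I(0)=I_3$. It commutes with the reductions, since $D_I(k)=\mathcal A D_I(\omega k)\mathcal A^{-1}=\mathcal B\overline{D_I(\bar k)}\mathcal B$. It is analytic and invertible off the omitted orbits, and it is diagonal. Consequently $M^{(2)}D_I$ still tends to $I_3$ at infinity. Since $D_I(0)=I_3$ and $D_I$ is analytic near $0$, the origin frame $F_n$ on $\mathcal U_0$ and the sectorial Laurent coefficient at $0$ are unchanged to leading order. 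Away from the omitted orbits, the jump and $\bar\partial$ data transform by conjugation, $v^{(2,I)}=D_I^{-1}v^{(2)}D_I$ and $W^{(2,I)}=D_I^{-1}W^{(2)}D_I$. This gives \eqref{3.25}. At a retained pole $\zeta$, $D_I$ is analytic and invertible. Hence $\operatorname{Res}(M^{(2)}D_I)=\lim M^{(2)}C_\zeta^{(2)}D_I=\lim (M^{(2)}D_I)D_I(\zeta)^{-1}C_\zeta^{(2)}D_I(\zeta)$, which is \eqref{3.23}.

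Next we handle the omitted orbits, where $v_\lambda\notin I$. Around each omitted pole $\zeta$ we place a disk $\mathcal D_\zeta$ of radius smaller than $\rho/3$, chosen reduction-invariantly, where $R^{(2)}=I_3$. The choice of $\mathcal T_I$ inside the disk depends on the sign of $\operatorname{Re}\Phi_{21}(\xi;\zeta)$, that is, of $\sqrt3\,b(\xi-v_\zeta)$. If the residue $C_\zeta=\gamma e^{\mp t\Phi}E_{pq}$ is exponentially small for $x/t\in I$, we set $\mathcal T_I=(I-C_\zeta/(k-\zeta))D_I$ inside $\mathcal D_\zeta$. This removes the pole, and the circle jump $I+C_\zeta/(k-\zeta)$ conjugated by $D_I$ is $I+O(e^{-\mu(I)t})$. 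If instead the residue is exponentially large, the zero or pole of $D_I$ at $\zeta$ is used. We use the local factorization $D_I=\widehat D_\zeta D_{s,\zeta}$ together with the triangular identity \eqref{3.20}. These show that $M^{(2)}D_I$ has a pole at $\zeta$ with switched residue $\widehat\gamma^{-1}E_{qp}$, which is exponentially small. We then apply the same removal. The circle jump is $I_3+O(|\widehat\gamma|^{-1})$, and $|\widehat\gamma|^{-1}$ is again $O(e^{-\mu(I)t})$. The bound holds with all norms $L^\infty\cap L^2\cap L^1$, because the circles have fixed length. It holds also after $x$-differentiation, since $\partial_x\theta_{21}$ is bounded on the compact pole set and $\partial_x\widetilde c_j$ is controlled by the formula for $\partial_x\log\delta_1$. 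The constant $\mu(I)$ is exactly the minimum of $\frac{\sqrt3}{2}|b|\operatorname{dist}(v_\lambda,I)$, which gives \eqref{3.24}. The orbit points $\omega^m\zeta,\bar\zeta$ are treated by defining $\mathcal T_I$ through the reductions; this matches \eqref{2.6}, so the reductions are preserved.

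The main obstacle is checking that the factor $\widehat D_\zeta(\zeta)$ and the other factors of $D_I$ at $\zeta$ do not destroy the exponential rate. The concern is that the conjugation $\widehat D_\zeta(\zeta)^{-1}C_\zeta\widehat D_\zeta(\zeta)$ involves only $t$-independent bounded ratios. These ratios come from the factors $D_{\kappa}$ and $D_\eta$ and from the transmission coefficients evaluated at $\zeta$. Because the orbits are uniformly separated, these ratios are bounded above and below. Therefore the exponent is governed solely by $e^{\pm t\Phi_{21}(\zeta)}$, together with its rotated analogues $\Phi_{31}$ and $\Phi_{32}$ for the other orbit points. By \eqref{2.6} these carry the same rate. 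A remaining check is that the channel choice is consistent along each orbit: we verify, using the $\mathcal A$ and $\mathcal B$ symmetries of $D_I$, that the entries $p$ and $q$ of $D_{s,\zeta}$ rotate consistently with $E_{pq}$. Finally, uniqueness of the transformed problem is unaffected, since $\mathcal T_I$ is invertible with $\det\mathcal T_I=1$.
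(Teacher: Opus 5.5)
Your proposal is correct and follows essentially the same route as the paper. Analytic conjugation by the rational, diagonal, $x$-independent and reduction-symmetric factor $D_I$ gives \eqref{3.23} and \eqref{3.25}, with $W^{(2)}=0$ in the pole disks because $R^{(2)}=I_3$ there. The exponentially large omitted orbits ($v_\lambda>\xi_+$) are switched via the local factorization and \eqref{3.20}, while the exponentially small ones ($v_\lambda<\xi_-$) are removed directly. All circle jumps are therefore controlled by $\operatorname{Re}\Phi_{21}(\xi;\lambda)=\sqrt3\,\operatorname{Im}\lambda\,(\xi-v_\lambda)$, hence by $\mu(I)$, and $D_I(0)=I_3$ preserves the origin data.

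Your additional checks make explicit what the paper's brief proof leaves implicit: the residue-conjugation identity for diagonal $D_I$, and the orbitwise consistency of the switching channels.
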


\begin{proof}
	Analytic conjugation by \(D_I\) gives \eqref{3.23} and \eqref{3.17}-\eqref{3.18}. For omitted poles, \eqref{3.20}, the phase identity, and \(\mu(I)\) give \eqref{3.24}, including one \(x\)-derivative. Since \(W^{(2)}=0\) in the pole disks and \(\mathcal T_I=D_I\) elsewhere, \eqref{3.25} follows. Finally, \(D_I(0)=I_3\) and the local switches are away from zero, preserving the origin frame and reductions.
\end{proof}

Relabel \(\widetilde M^{(2)}\) as \(M^{(2)}\), add the omitted-pole circles to \(\Sigma^{(2)}\), suppress \(I\) on the jumps, and define
\[
W_I^{(2)}:=\mathcal T_I^{-1}\bigl((R^{(2)})^{-1}\bar\partial R^{(2)}\bigr)\mathcal T_I .
\]
Because \(\bar\partial R^{(2)}=0\) in the pole disks,
\begin{equation}\label{3.26}
	W_I^{(2)}=D_I^{-1}(R^{(2)})^{-1}\bar\partial R^{(2)}D_I,\qquad\operatorname{supp}W_I^{(2)}=\operatorname{supp}\bar\partial R^{(2)}.
\end{equation}
Write \(W^{(2)}=W_I^{(2)}\). The outer model has
\[
	M^{(\mathrm{out})}(k)=I_3+\frac{M_1^{(\mathrm{out})}}{k}+O(k^{-2}),\qquad k\to\infty.
\]
Figure~\ref{fig:SK-pure-RH-geometry} shows the resulting geometry.

\begin{figure}[htbp]
	\centering
	
	\definecolor{SKRHBlue}{HTML}{4477AA}
	\definecolor{SKRHCoral}{HTML}{EE6677}
	\definecolor{SKRHPurple}{HTML}{AA3377}
	\definecolor{SKRHGray}{HTML}{999999}
	
	\begin{tikzpicture}[
		scale=1.05,
		>=stealth,
		line cap=round,
		line join=round
		]
		\def\ks{1.82}
		\def\rd{0.23}
		\def\rzero{0.47}
		\def\Rout{3.42}
		
		\tikzset{
			base/.style={
				SKRHGray!80,
				dashed,
				line width=0.46pt
			},
			err/.style={
				SKRHBlue,
				line width=0.88pt
			},
			localcircle/.style={
				SKRHPurple,
				line width=0.78pt
			},
			retained/.style={
				circle,
				fill=black,
				draw=black,
				inner sep=1.2pt
			},
			omitted/.style={
				circle,
				fill=white,
				draw=SKRHGray!85!black,
				line width=0.65pt,
				inner sep=1.15pt
			},
			polecircle/.style={
				SKRHGray!85,
				dashed,
				line width=0.55pt
			},
			tinylabel/.style={
				font=\scriptsize
			}
		}
		
		\foreach \ang in {0,60,120,180,240,300}{
			\draw[base] (0,0)--(\ang:\Rout);
		}
		
		\draw[
		SKRHGray!85!black,
		densely dotted,
		line width=0.72pt
		]
		(0,0) circle (\rzero);
		
		\foreach \ang in {0,60,120,180,240,300}{
			
			\draw[err]
			({\ang+8}:\rzero)
			.. controls ({\ang+5}:0.94)
			and ({\ang+2}:1.38)
			.. (\ang:{\ks-\rd});
			
			\draw[err]
			({\ang-8}:\rzero)
			.. controls ({\ang-5}:0.94)
			and ({\ang-2}:1.38)
			.. (\ang:{\ks-\rd});
			
			\draw[err]
			(\ang:{\ks+\rd})
			--({\ang+8}:\Rout);
			
			\draw[err]
			(\ang:{\ks+\rd})
			--({\ang-8}:\Rout);
			
			\draw[localcircle]
			(\ang:\ks) circle (\rd);
			
			\fill[SKRHCoral]
			(\ang:\ks) circle (1.15pt);
		}
		
		\foreach \ang in {24,96,144,216,264,336}{
			\node[retained] at (\ang:2.62) {};
		}
		
		\foreach \ang in {40,80,160,200,280,320}{
			\draw[polecircle,<-]
			(\ang:3.02) circle (0.115);
			
			\node[omitted]
			at (\ang:3.02) {};
		}
		
		\fill (0,0) circle (1.05pt);
		
		\node[tinylabel,below]
		at (0:{\ks+0.05}) {$k_0$};
		
		\node[tinylabel,below]
		at (180:{\ks+0.05}) {$-k_0$};
		
		\node[tinylabel,above right]
		at (38:{\rzero+0.05}) {$\mathcal U_0$};
		
		\node[tinylabel,above]
		at (90:2.25) {$\mathcal K_0$ local disks};
		
		\draw[err]
		(-3.55,-3.34)--(-2.90,-3.34);
		
		\node[tinylabel,right]
		at (-2.82,-3.34)
		{$\Sigma^{(2)}$ part of $\Sigma^{(E)}$};
		
		\draw[localcircle]
		(-3.30,-3.67) circle (0.10);
		
		\node[tinylabel,right]
		at (-3.08,-3.67)
		{local matching circles $\partial\mathcal U_\lambda$};
		
		\node[retained]
		at (-3.30,-4.00) {};
		
		\node[tinylabel,right]
		at (-3.08,-4.00)
		{retained poles $\mathcal Z(\sigma_d(I;\xi))$};
		
		\draw[polecircle]
		(-3.30,-4.33) circle (0.10);
		
		\node[omitted]
		at (-3.30,-4.33) {};
		
		\node[tinylabel,right]
		at (-3.08,-4.33)
		{omitted poles $\to$ exponentially small circle jumps};
		
	\end{tikzpicture}
	
	\caption{Pure RH geometry after the cone-dependent pole reduction.
		Solid points are retained poles, and open points are omitted poles replaced by exponentially small jumps. The disks around \(\mathcal K_0\) and \(\mathcal U_0\) correspond to the stationary-point and endpoint local models, respectively.}
	\label{fig:SK-pure-RH-geometry}
\end{figure}
Define the cone-dependent pure-discrete outer RH profile by
\[
u_{\rm sol}(x,t\mid\sigma_d(I;\xi)):=-\frac12\partial_x\bigl(\ell^{\rm T}M_1^{(\mathrm{out})}e_3\bigr),
\]
Here \(\partial_x\) is the fixed-\(t\) total derivative. The vertical bar marks the \(\xi\)-dependent moving outer profile, which is an exact SK soliton only under the completion condition.

Differentiating the scalar conjugator with \(\partial_xk_0=(180tk_0^3)^{-1}\) gives
\[
\partial_x(\Delta^{[1]})_{33}
=\frac{\sqrt3}{360\pi tk_0^3}
\left[\log(1-|r_1(k_0)|^2)+\log(1-|r_2(-k_0)|^2)\right]
=O(t^{-1}),
\]
uniformly on compact positive-velocity intervals.

Uniformly for $\xi$ in compact positive intervals, the phase signature and Schwartz bounds give, for $p=1,2,\infty$ and $q=0,1$,
\[
\|\partial_x^q(v^{(2)}-I_3)\|_{L^p(\Sigma^{(2)}\setminus(\mathcal U\cup\Sigma_0(t)))}
\le Ce^{-c\sqrt t}.
\]
The endpoint jumps satisfy \eqref{3.5}, and the local jumps are uniformly bounded. Set $\Sigma_\lambda^{\rm loc}=\Sigma^{(2)}\cap\mathcal U_\lambda$.
Use the normalized complete-cross models of Appendix~\ref{app:PC-model} in each \(\mathcal U_\lambda\), retaining the exact-to-frozen difference in the global error. No nonidentity model jump ends at \(\partial\mathcal U_\lambda\).

For \(\lambda\in\mathcal K_0\), write \(\Phi_\lambda\) for the corresponding
phase and set
\[
\zeta_\lambda(k)=\alpha_\lambda\sqrt t\,(k-\lambda),\qquad
\alpha_{\omega^m(\pm k_0)}=\omega^{-m}\alpha_0,\qquad
\alpha_0=(180\sqrt3\,k_0^3)^{1/2}.
\]
Then
\[
t\{\Phi_\lambda(k)-\Phi_\lambda(\lambda)\}
=\frac{i\sigma_\lambda}{2}\zeta_\lambda^2+O(t^{-1/2}\zeta_\lambda^3),
\qquad \sigma_{\omega^mk_0}=1,\quad \sigma_{-\omega^mk_0}=-1.
\]
Set \(q_+=r_1(k_0)\), \(q_-=r_2(-k_0)\), and
\[
\begin{aligned}
g_+&=\widetilde\delta_{v1}(k_0)e^{2\chi_1(k_0)}
(\alpha_0\sqrt t)^{-2i\nu_1}e^{-t\Phi_{21}(\xi;k_0)},\\
g_-&=\widetilde\delta_{v4}(-k_0)e^{2\chi_4(-k_0)}
(\alpha_0\sqrt t)^{-2i\nu_4}e^{t\Phi_{21}(\xi;-k_0)},
\end{aligned}
\]
with \(G_+^{\rm PC}=\operatorname{diag}(g_+,1,1)\) and
\(G_-^{\rm PC}=\operatorname{diag}(1,g_-,1)\). Using the models of
Appendix~\ref{app:PC-model}, define \(M_{\pm k_0}^{\rm PC}\) by conjugation with
\(G_\pm^{\rm PC}\), and obtain the remaining four models by \(\mathbb Z_3\)
rotation. If \(M_1^{(\lambda)}\) denotes their first coefficient, put
\(\mathcal M_{\lambda,\rm raw}=\alpha_\lambda^{-1}M_1^{(\lambda)}\) and
\[
\mathcal P_\lambda(k)=D_I(\lambda)^{-1}
M_\lambda^{\rm PC}(\zeta_\lambda(k))D_I(\lambda).
\]
Set
\begin{equation}\label{3.27}
\mathcal M_\lambda=D_I(\lambda)^{-1}\mathcal M_{\lambda,\rm raw}D_I(\lambda),
\qquad
v_\lambda^{(0)}=D_I(\lambda)^{-1}v_{\lambda,\rm raw}^{(0)}D_I(\lambda).
\end{equation}
Then \(\mathcal P_{\lambda,+}=\mathcal P_{\lambda,-}v_\lambda^{(0)}\), and the
exact-to-model residual is
\begin{equation}\label{3.28}
J_\lambda^{\rm res}=\mathcal P_{\lambda,-}v^{(2)}\mathcal P_{\lambda,+}^{-1}
\quad\hbox{on }\Sigma_\lambda^{\rm loc}.
\end{equation}
The standard differentiated PC matching estimates give, for \(q=0,1\) and
\(p=1,2,\infty\),
\[
\|D_x^q(J_\lambda^{\rm res}-I_3)\|_{L^p}
\le C\log(2+t)\,t^{-1/2-1/(2p)},
\]
and on \(\partial\mathcal U_\lambda\),
\[
\sum_{q=0}^1\left\|D_x^q\left[\mathcal P_\lambda-I_3-
\frac{\mathcal M_\lambda}{\sqrt t\,(k-\lambda)}\right]\right\|_\infty
\le C\frac{\log(2+t)}t.
\]
Moreover, for \(0<|k-\lambda|<\rho_0\),
\begin{equation}\label{3.29}
\|\partial_x\mathcal P_\lambda(k)\|+\|\partial_x\mathcal P_\lambda(k)^{-1}\|
\le C\left[1+\frac{1+\log(2+t)+|\log|k-\lambda||}{t|k-\lambda|}\right].
\end{equation}
These estimates follow from the pulled-back endpoint expansion, the Jacobian,
and the uniform bounds of Appendix~\ref{app:PC-model}.

Orient \(\partial\mathcal U\) clockwise and solve the error problem for \(E_r\).

\begin{rhp}
	Find \(E_r(k)=E_r(k;x,t)\) satisfying:
	\begin{enumerate}
		\item[(i)]
		\(E_r\) is analytic in \(\mathbb C\setminus\Sigma^{(E)}\), $\Sigma^{(E)}=\Sigma^{(2)}\cup\partial\mathcal U$.
		
		\item[(ii)] The boundary values satisfy
		\[
		E_{r,+}(k)=E_{r,-}(k)v^{(E)}(k),\qquad k\in\Sigma^{(E)},
		\]
		where
		\[
		v^{(E)}(k)=\begin{cases}
			M^{(\mathrm{out)}}(k)v^{(2)}(k)\left( M^{(\mathrm{out)}}(k) \right) ^{-1},&		k\in \Sigma ^{(2)}\setminus \overline{\mathcal{U}},\\
			M^{(\mathrm{out})}(k)J_\lambda^{\rm res}(k)
			\left(M^{(\mathrm{out})}(k)\right)^{-1},
			&k\in\Sigma_\lambda^{\rm loc},\\
			M^{(\mathrm{LC})}(k)
			\left(M^{(\mathrm{out})}(k)\right)^{-1},
			&k\in\partial\mathcal U_\lambda ,
		\end{cases}
		\]
		
		\item[(iii)] As $k\to\infty$,
		\[
		E_r(k)\to I_3.
		\]
	\end{enumerate}
\end{rhp}
The matching circles, \eqref{3.28}, and the endpoint and omitted-pole jumps
make \eqref{3.6} solve the pure RH problem. Uniqueness gives the reductions,
unit jump determinants give \(\det E_r=1\), and retained residues are preserved.
At \(0\), the endpoint bounds and closed incident cycle glue the bounded row
\(N^{(3)}=N^{(2)}(M^{(\mathrm{out})})^{-1}E_r^{-1}\); Morera's theorem removes
its singularity.

Exterior decay, local matching, and \eqref{3.5} give
\[
\|v^{(E)}-I_3\|_{L^\infty}+\|D_x(v^{(E)}-I_3)\|_{L^\infty}=O\!\left(t^{-1/2}\log(2+t)\right),
\]
\[
\|v^{(E)}-I_3\|_{L^1\cap L^2}+\|D_x(v^{(E)}-I_3)\|_{L^1\cap L^2}=O(t^{-1/2}).
\]
For large \(t\), the Beals-Coifman equation is invertible and
\(\|\mu_E-I_3\|_2=O(t^{-1/2})\). Its Cauchy representation, the matching
expansion, and the residue theorem give
\[
\begin{aligned}
	E_r(k)&=I_3+\frac{E_{r,1}}{k}+O(k^{-2}),\\
	E_{r,1}&=\frac{1}{\sqrt{t}}\sum_{\lambda \in \mathcal{K} _0}{M^{(\mathrm{out)}}(\lambda )\mathcal{M} _{\lambda}\bigl( M^{(\mathrm{out)}}(\lambda ) \bigr) ^{-1}}+O\!\left( \frac{\log\mathrm{(}2+t)}{t} \right) .\\
\end{aligned}
\]
while, uniformly for \(\xi\in K\),
\begin{equation}\label{3.30}
	\partial _x\left[ E_{r,1}-\frac{1}{\sqrt{t}}\sum_{\lambda \in \mathcal{K} _0}{M^{(\mathrm{out)}}(\lambda )\mathcal{M} _{\lambda}\bigl( M^{(\mathrm{out)}}(\lambda ) \bigr) ^{-1}} \right] =O\left( \frac{\log t}{t} \right) .
\end{equation}
On the fixed pulled-back contour, \(\|D_x\mu_E\|_2=O(t^{-1/2})\). Replacing \(\mu_E\) by \(I_3\) costs \(O(t^{-1})\), and the matching remainder gives \eqref{3.30}.

\subsection{Analysis of the pure \texorpdfstring{$\bar\partial$}{d-bar}-problem}
The regular row \(N^{(3)}\) satisfies the pure \(\bar\partial\)-problem below.

\begin{rhp}
Find a continuous $1\times3$ row $N^{(3)}$ on $\mathbb C$ such that
\[
\bar\partial N^{(3)}=N^{(3)}W^{(3)},\qquad
N^{(3)}(k)\to\ell^{\rm T}\quad(k\to\infty),
\]
where $W^{(3)}=M_{\rm RHP}^{(2)}W^{(2)}(M_{\rm RHP}^{(2)})^{-1}$ and $W^{(2)}$ is given by \eqref{3.26}.
\end{rhp}

The pure \(\bar\partial\)-problem is equivalent to
\begin{equation}\label{3.31}
	N^{(3)}(k)=\ell^{\mathrm T}-\frac{1}{\pi}\iint_{\mathbb C}\frac{N^{(3)}(s)W^{(3)}(s)}{s-k}\,dA(s),
\end{equation}
or \((I-\mathcal S)N^{(3)}=\ell^{\mathrm T}\), where \(\mathcal S\) denotes the negative Cauchy-Green integral in \eqref{3.31}.

Since \(\bar\partial R^{(2)}=0\) in \(\mathcal U_0\), the source stays away from the singular endpoint. Equation~\eqref{3.6} and the outer, local, and error bounds give
\[
\mathop {\mathrm{sup}} \limits_{s\in \mathrm{supp}\bar{\partial}R^{(2)}}\left( \parallel M_{\mathrm{RHP}}^{(2)}(s)\parallel +\parallel (M_{\mathrm{RHP}}^{(2)}(s))^{-1}\parallel \right) \le C.
\]
Their first \(x\)-derivatives are bounded outside the stationary disks. Inside, the local factorization gives the same bound as \eqref{3.29}. Define \(\mathcal S_x\) by replacing \(W^{(3)}\) with \(\partial_xW^{(3)}\) in \(\mathcal S\).

On a representative lens, Proposition~\ref{prop:3.1}, \eqref{3.29}, and the phase signature bound the source by \((g_j(u)+|s-k_0|^{-1/2}+|\bar\partial\chi_{\mathcal Z}|)e^{-ct|u||v|}\), with \(g_j\) uniformly in \(L^2\). Splitting at \(|u|=t^{-1/2}\), also for the differentiated source, gives
\[
	\parallel W^{(3)}\parallel _{L^1}\le Ct^{-3/4},\qquad \parallel W^{(3)}\parallel _{L^{3/2}}\le Ct^{-5/12},\qquad \parallel W^{(3)}\parallel _{L^2}\le Ct^{-1/4},
\]
and
\[
	\parallel \partial _xW^{(3)}\parallel _{L^1}\le Ct^{-3/4},\qquad \parallel \partial _xW^{(3)}\parallel _{L^{6/5}}\le Ct^{-7/12}.
\]
Symmetry, Cauchy splitting, and planar Hardy-Littlewood-Sobolev give
\begin{equation}\label{3.32}
	\parallel \mathcal{S} \parallel _{L^{\infty}\rightarrow L^{\infty}}+\parallel \mathcal{S} \parallel _{L^3\rightarrow L^3}\le Ct^{-1/4},\qquad \parallel \mathcal{S} _xf\parallel _{L^3}\le Ct^{-7/12}\parallel f\parallel _{L^{\infty}}.
\end{equation}
Thus \(I-\mathcal S\) is invertible on \(L^\infty\cap L^3\) for large \(t\). As \(k\to\infty\),
\[
N^{(3)}(k)=\ell ^{\mathrm{T}}+\frac{N_{1}^{(3)}(x,t)}{k}+O(k^{-2}),\qquad N_{1}^{(3)}=\frac{1}{\pi}\iint_{\mathbb{C}}{N^{(3)}(s)W^{(3)}(s)\,dA(s).}
\]
Equations~\eqref{3.31}-\eqref{3.32}, the differentiated resolvent equation,
and \(L^3\)-\(L^{3/2}\) H\"older give, uniformly for \(\xi\in K\),
\[\|N_1^{(3)}\|+\|\partial_xN_1^{(3)}\|\le Ct^{-3/4}.\]

\subsection{Asymptotic analysis in Sectors I and II}
We now study the asymptotic behavior in Sectors I and II.
\begin{prop}\label{prop:3.4}
	Let \(u\) be a global classical real solution of \eqref{1.1} with \(u_0\in\mathcal S(\mathbb R)\). Fix \(0<\xi_-<\xi_+\) and \(I=[\xi_-,\xi_+]\). Assume Assumption~\ref{ass:1.1}{\rm(a)} and that the selected moving datum is regular on \(I\) in the sense of \eqref{3.22}. Then, uniformly for \(\xi=x/t\in I\),
	\begin{equation}\label{3.33}
		u(x,t)=u_{\rm sol}\left(x,t\mid\sigma_d(I;\xi)\right)+t^{-1/2}h+O(t^{-3/4}),\qquad t\to\infty .
	\end{equation}
	Here $u_{\rm sol}(x,t\mid\sigma_d(I;\xi))$ is the cone-dependent pure-discrete outer RH profile defined above, whereas $h$ is given by
	\begin{equation}\label{3.34}
		h=-\frac{1}{2}\partial _x\left[ \sum_{\lambda \in \mathcal{K} _0}{\ell ^{\mathrm{T}}M^{(\mathrm{out)}}(\lambda )\mathcal{M} _{\lambda}(x,t)\left( M^{(\mathrm{out)}}(\lambda ) \right) ^{-1}e_3} \right] .
	\end{equation}
\end{prop}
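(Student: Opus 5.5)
The proof unwinds the chain of transformations $M\mapsto M^{(1)}\mapsto M^{(2)}\mapsto \widetilde M^{(2)}\mapsto N^{(3)}$ inside the row reconstruction formula. Three ingredients are collected term by term: the outer model, the error matrix $E_r$, and the $\bar\partial$ remainder.

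Start from the recovery-ray formula
\[
u=-\tfrac12\partial_x\lim_{k\to\infty}k\Bigl\{\bigl[N^{(2)}D_I^{-1}(R^{(2)})^{-1}\Delta^{-1}\bigr]_3-1\Bigr\}.
\]
Take $k\to\infty$ inside a closed lens subsector, where $R^{(2)}=I_3+o(k^{-1})$ and $\partial_xR^{(2)}=o(k^{-1})$, so that factor contributes nothing. On the pure RH side, $N^{(2)}=N^{(3)}E_rM^{(\mathrm{out})}$ outside $\mathcal U$. Expanding at infinity,
\[
\ell^{\rm T}M_1^{(1)}e_3=\bigl(N_1^{(3)}\bigr)_3+\ell^{\rm T}E_{r,1}e_3+\ell^{\rm T}M_1^{(\mathrm{out})}e_3-(D_I^{[1]})_{33}-(\Delta^{[1]})_{33}.
\]
This uses $\ell^{\rm T}\mathrm{diag}\,e_3=(\cdot)_{33}$ for diagonal matrices. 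Applying $-\tfrac12\partial_x$ term by term:

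\begin{itemize}
\item $\partial_xD_I^{[1]}=0$, so the $D_I$ term drops.
\item $\partial_x(\Delta^{[1]})_{33}=O(t^{-1})$ by the computation with $\partial_xk_0=(180tk_0^3)^{-1}$.
\item The outer term gives exactly $u_{\rm sol}(x,t\mid\sigma_d(I;\xi))$ by definition.
\item The $\bar\partial$ term is bounded by $\|\partial_xN_1^{(3)}\|\le Ct^{-3/4}$.
\item The $E_r$ term, by \eqref{3.30}, gives $t^{-1/2}h$ with $h$ as in \eqref{3.34}, plus $O(t^{-1}\log t)$. This error is absorbed into $O(t^{-3/4})$.
\end{itemize}

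The justification needs several points checked, each with bounds uniform in $\xi\in I$.

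First, Lemma~\ref{lem:3.1} applies under the regularity hypothesis \eqref{3.22}. This gives uniform bounds on $M^{(\mathrm{out})\pm1}$ and their fixed-$t$ total $x$-derivatives off the pole disks. These feed both the small-norm error problem and the bound $\sup\|M^{(2)}_{\rm RHP}{}^{\pm1}\|\le C$ on $\operatorname{supp}\bar\partial R^{(2)}$.

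Second, the pole reduction by $\mathcal T_I$ must be legitimate. The omitted poles contribute jumps of size $O(e^{-\mu(I)t})$ by \eqref{3.24}. Since $\mu(I)>0$, these are absorbed in the error.

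Third, the $D_I$ conjugation must preserve the phase-decay structure of $v^{(2)}$ and $W^{(2)}$. This holds by \eqref{3.25} and \eqref{3.26}, because $D_I$ is bounded with bounded inverse away from the pole disks.

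Fourth, the singular origin requires care. The full matrix $M^{(2)}$ has a simple pole there, so the argument is run on the row $N^{(3)}$, which is regular at $0$ by Morera's theorem. The endpoint jumps satisfy \eqref{3.5}, and their $L^1$ contribution to $E_{r,1}$ is $O(t^{-1})$.

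The main obstacle is controlling the $x$-derivative, since $u$ is itself an $x$-derivative. The derivative of each asymptotic ingredient must carry the same rate. For $E_r$, differentiate the Beals--Coifman equation on the fixed pulled-back contour, using $\|D_x(v^{(E)}-I)\|_{L^1\cap L^2}=O(t^{-1/2})$ to get $\|D_x\mu_E\|_2=O(t^{-1/2})$, which yields \eqref{3.30}. For $N^{(3)}$, use the differentiated resolvent identity
\[
(I-\mathcal S)\partial_xN^{(3)}=\mathcal S_xN^{(3)}.
\]
Combined with \eqref{3.32} and the $L^{6/5}$ and $L^1$ bounds on $\partial_xW^{(3)}$, the $L^3$--$L^{3/2}$ H\"older pairing gives $\|\partial_xN_1^{(3)}\|\le Ct^{-3/4}$.

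The delicate input is the singular bound \eqref{3.29} on $\partial_x\mathcal P_\lambda$ near the saddles. To verify that $\partial_xW^{(3)}$ remains in $L^1\cap L^{6/5}$ with the stated rates, the singular factor $(t|k-\lambda|)^{-1}$ is integrated against the Gaussian factor $e^{-ct|u||v|}$. I would split at $|k-\lambda|=t^{-1/2}$, exactly as in the undifferentiated estimate.
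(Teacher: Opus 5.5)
Your proposal is correct and is essentially the paper's proof. On a recovery ray one writes $N=N^{(3)}E_rM^{(\mathrm{out})}D_I^{-1}\Delta^{-1}$, extracts the $k^{-1}$ coefficient, and uses \eqref{3.30}, the moment bound $\|N_1^{(3)}\|+\|\partial_xN_1^{(3)}\|\le Ct^{-3/4}$, $\partial_xD_I^{[1]}=0$ and $\partial_x(\Delta^{[1]})_{33}=O(t^{-1})$.

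Two small corrections. First, the left side of your expansion should be $\ell^{\rm T}M_1e_3$, not $\ell^{\rm T}M_1^{(1)}e_3$, because you already subtract $(\Delta^{[1]})_{33}$ on the right. Second, the paper takes the limit on an exterior ray where $R^{(2)}=I_3$ exactly; this avoids having to expand the solid Cauchy transform for $N^{(3)}$ inside $\operatorname{supp}W^{(3)}$, which your lens-subsector choice would require.
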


\begin{proof}
On a fixed exterior recovery ray, \(\mathcal T_I=D_I\), \(R^{(2)}=I_3\), and
\(N=N^{(3)}E_rM^{(\mathrm{out})}D_I^{-1}\Delta^{-1}\). Taking its
\(k^{-1}\)-coefficient and using \eqref{3.30} and the moment bound gives
\eqref{3.33}-\eqref{3.34}. The diagonal factors are lower order because
\(\partial_xD_I^{[1]}=0\) and \(\partial_x(\Delta^{[1]})_{33}=O(t^{-1})\).
\end{proof}

For unbounded positive velocity assume
\[
\xi=\frac{x}{t}\ge M, \qquad 0<\zeta=\frac{t}{x}\le M^{-1}.
\]
and set
\[
t\Phi_{21}(\xi;k)=x\widetilde\Phi_{21}(\zeta;k),\qquad\widetilde\Phi_{21}(\zeta;k)=i\sqrt3(9\zeta k^5-k),
\]
and
\[
k_0=(45\zeta)^{-1/4},\qquad\mathfrak T=tk_0^3=\frac{x}{45k_0},\qquad tk_0^5=\frac{xk_0}{45}.
\]
For \(x/t\ge H>0\), \(k_0\ge k_H=(H/45)^{1/4}\) and
\(\mathfrak T\ge k_H^3t\). Near each saddle use the parabolic-cylinder
coordinate
\[
z=2\,3^{5/4}\sqrt{5t}\,k_0^{3/2}(k-k_0),
\]
with the other coordinates obtained by symmetry. Taylor expansion of the phase
has leading term \(iz^2/2\), while the remaining terms are uniformly smaller by
powers of \((xk_0)^{-1/2}\). Away from fixed relative neighborhoods of the six
saddles, the lens can be chosen so that the real part of the relevant phase is
bounded above by \(-c_\delta x\,|\operatorname{Im}(\omega^{-m}k)|\).

\begin{lemma}\label{lem:3.2}
Let \(H>0\). For \(k_0\ge k_H\), the extensions in Proposition~\ref{prop:3.1}
may be chosen \(C^1\) in \(x\) in the spaces of \eqref{3.4}. For every
\(N\ge1\), there is \(\epsilon_N(k_0)\le C_{N,H}\langle k_0\rangle^{-N}\)
such that on the relative stationary bands
\[
|\bar\partial\mathfrak R_j^\pm(k)|\le
C_{N,H}\epsilon_N(k_0)\bigl(g_{j,k_0}(s_m)+d_j(k)^{-1/2}\bigr),
\qquad s_m=\operatorname{Re}(\omega^{-m}k),
\]
where \(\sup_{k_0\ge k_H}\|\langle s\rangle g_{j,k_0}\|_2<\infty\).
On the complementary region the bounds of Proposition~\ref{prop:3.1} are
uniform. Moreover, with
\[
b_+(k_0)=\widetilde\delta_{v1}(k_0)e^{2\chi_1(k_0)},\qquad
b_-(k_0)=\widetilde\delta_{v4}(-k_0)e^{2\chi_4(-k_0)},
\]
one has \(b_\pm\in C^1([k_H,\infty))\) and
\[
\sup_{k_0\ge k_H}\bigl(|b_\pm|+|b_\pm|^{-1}+|\partial_{k_0}b_\pm|\bigr)
\le C_H(u_0).
\]
\end{lemma}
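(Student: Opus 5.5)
The plan is to reuse the construction of Proposition~\ref{prop:3.1} and track every constant as \(k_0\to\infty\). We work in the rescaled variable \(\zeta=t/x\) with \(x\) as the large parameter, and exploit the Schwartz decay of the reflection data at large \(|k|\). We treat the real-axis sector and obtain the rest by the reductions \(R^{(2)}(\omega^mz)=\mathcal A^{-m}R^{(2)}(z)\mathcal A^m\) and Schwarz symmetry.

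\emph{Step 1: the bounds on \(b_\pm\).} Since \(r_1,r_2\) and their first derivatives decay faster than any power on the rays (Schwartz data, no spectral singularities), we have \(\nu_1(k_0),\nu_4(k_0)=O(\langle k_0\rangle^{-N})\), and \(\ln(1-|r_j|^2)\) is a Schwartz function on \([k_H,\infty)\). We write \(\chi_1(k_0)\) and \(\log\widetilde\delta_{v1}(k_0)\) as Cauchy-type integrals over \([k_0,\infty)\) and over the rotated rays. The integrands are bounded by \(\langle s\rangle^{-N}\) and the kernels are \(\log|k_0-s|\) or \((s-k_0)^{-1}\) evaluated away from the rotated rays, so each integral is bounded uniformly for \(k_0\ge k_H\). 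Reality of \(|\delta_j|\) on the appropriate rays gives \(|b_\pm|^{-1}\) from the same bound applied to \(-\log b_\pm\). For \(\partial_{k_0}b_\pm\), we differentiate under the integral. The endpoint term of \(\chi_1\) is \(\ln(1-|r_1(k_0)|^2)\log(0^+)\)-type, and it cancels exactly against the \(\nu_1\)-normalization already built into \(X_1\). This is the same subtraction used for \eqref{3.3}, so only integrals of \(\partial_s\ln(1-|r|^2)\) against \(\log|k_0-s|\) remain, and these are bounded uniformly. Continuity in \(k_0\) on \([k_H,\infty)\) then gives \(b_\pm\in C^1\).

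\emph{Step 2: the stationary bands.} On a relative neighborhood \(|k-k_0|\le\epsilon k_0\) we use the same extension \(\mathfrak R_j^\pm=F_j^\pm\bigl[a_j^\pm(\lambda_j)+(a_j^\pm(\operatorname{Re}k)-a_j^\pm(\lambda_j))\cos(\varkappa\arg(k-\lambda_j))\bigr](1-\chi_{\mathcal Z})\). Its \(\bar\partial\) is bounded by \(|(a_j^\pm)'(\operatorname{Re}k)|+|a_j^\pm(\operatorname{Re}k)-a_j^\pm(\lambda_j)|/|k-\lambda_j|\). Because \(k_0\ge k_H\) stays away from \(0\), \(a_j^\pm=e^{\pm X}\rho\) or \(e^{\pm X}r\), and on this band every point has modulus comparable to \(k_0\), the reflection factors contribute \(\epsilon_N(k_0)=C\langle k_0\rangle^{-N}\). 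The \(e^{\pm X}\) factors are bounded by Step 1. Setting \(g_{j,k_0}(s)=\epsilon_N^{-1}|(a_j^\pm)'(s)|\) restricted to the band, the weighted bound \(\sup\|\langle s\rangle g_{j,k_0}\|_2<\infty\) follows from \eqref{3.2} with one extra power of \(\langle s\rangle\) absorbed by Schwartz decay. The difference quotient is handled by Cauchy--Schwarz, \(|a(s)-a(\lambda)|\le\|a'\|_{L^2}|s-\lambda|^{1/2}\), which yields the \(d_j^{-1/2}\) term with the same prefactor. Away from the bands, \(|k|\) is either bounded, where Proposition~\ref{prop:3.1} applies verbatim, or large, where decay only improves its bounds; so the complementary estimates are uniform.

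\emph{Step 3: \(C^1\) dependence on \(x\).} The dependence on \(x\) at fixed \(t\) enters only through \(k_0(\xi)\), the fractional factors, and the saddle location. Differentiation in \(x\) produces \(\partial_xk_0\,\partial_{k_0}\). By the mixed bound \(|\partial_{k_0}\partial_sa|\lesssim|s-\lambda|^{-1}+\log^2\), together with the analyticity of \(F_j^\pm\), the derivative of \(W^{(2)}\) is dominated by \(|k-k_0|^{-1}e^{-ct|\operatorname{Im}k|\,|\operatorname{Re}k-k_0|}\) plus \(L^2\)-type terms. To place this in \(L^1\cap L^{6/5}\), we integrate in polar coordinates around \(k_0\) and use the exponential factor to cut off the logarithmically divergent angular integral. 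We expect this to be the main obstacle, since the \(|k-k_0|^{-1}\) singularity sits exactly at the \(L^1\)/\(L^{6/5}\) borderline. If the exponential cutoff proves insufficient, the fallback is to choose the interpolating angle function so that the extension is frozen to \(a(\lambda_j)\) in a shrinking cone near \(k_0\), making \(\partial_{k_0}\) of the extension vanish there. Finally, continuity of these \(x\)-derivatives in the \(L^1\cap L^{6/5}\) norm follows by dominated convergence.
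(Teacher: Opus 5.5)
Your proposal is correct and follows essentially the paper's route: rapid decay of the reflection data on the relative stationary bands (the paper's weighted-Volterra \(H^2\) tails) produces \(\epsilon_N(k_0)\); the angular interpolation with a Cauchy--Schwarz difference-quotient bound gives the \(g_{j,k_0}+d_j^{-1/2}\) estimate; Cauchy-integral bounds at the endpoint control \(b_\pm\); and differentiating the \(k_0\)-dependent endpoint factors gives the \(C^1\) claim. Two small corrections, neither of which is a gap: the Step~3 obstacle is illusory, because on the lens \(|\operatorname{Re}k-k_0|\asymp|k-k_0|\) and \(|k-k_0|^{-1}\) (even \(|k-k_0|^{-3/2}\)) is locally in \(L^1\cap L^{6/5}(\mathbb C)\), which is exactly the paper's remark that the remaining singular term is integrable, so no exponential cutoff or fallback is needed; and in Step~1 the divergent endpoint term cancels inside \(\chi_1(k_0)\) itself (substituting \(s=k_0+\sigma\) gives \(\partial_{k_0}\chi_1(k_0)=\frac{1}{2\pi i}\int_0^\infty\log_0(-\sigma)\,\partial_s^2\ln(1-|r_1|^2)(k_0+\sigma)\,d\sigma\)), not against the \(\nu_1\)-normalization of \(X_1\), which does not enter \(b_+\).
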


\begin{proof}
Weighted Volterra estimates give rapidly decaying \(H^2\) tails for the
reflection data on the relative stationary bands. Angular interpolation and
endpoint Cauchy estimates then give the stated \(\bar\partial\)-bounds.
Differentiating the endpoint factors and the scaled cutoffs proves the
\(C^1\)-bounds, while the remaining singular term is integrable in the spaces
of \eqref{3.4}.
\end{proof}

The Gaussian decay near the saddles, the nonstationary phase bound away from
them, and Lemma~\ref{lem:3.2} give, uniformly for \(x/t\ge H\),
\[
\|N_1^{(3)}\|+\|\partial_xN_1^{(3)}\|
\le C\epsilon_N(k_0)\mathfrak T^{-3/4},
\]
while the local parabolic-cylinder matching, including one \(x\)-derivative,
contributes
\[
O\!\left(\frac{\log(2+tk_0^5)}{tk_0^3}\right).
\]
Translation and dilation reduce the local crosses to a fixed
Lipschitz-Carleson family, so all constants are uniform in \(k_0\).
Let \(M_M^{\rm mov}(k;x,t)\) be the half-cone moving outer model from Lemma~\ref{lem:6.1}.  Define
\begin{equation}\label{3.35}
	h_M=-\frac{1}{2}\partial _x\left[ \sum_{\lambda \in \mathcal{K} _0}{\ell ^{\mathrm{T}}M_{M}^{\mathrm{mov}}(\lambda ;x,t)\mathcal{M} _{\lambda}(x,t)\bigl( M_{M}^{\mathrm{mov}}(\lambda ;x,t) \bigr) ^{-1}e_3} \right] .
\end{equation}

\begin{prop}\label{prop:3.5}
	Fix \(M>1\) and choose \(H>\max\{M,V_*\}\). Assume the hypotheses of Proposition~\ref{prop:3.4} on \(I_{M,H}=[M,2H]\) and those of Lemma~\ref{lem:6.1}.  Then, uniformly for \(x/t\ge M\),
	\[
	u(x,t)=-\frac12\partial_x\bigl(\ell^{\rm T}M_{M,1}^{\rm mov}e_3\bigr)+t^{-1/2}h_M+O(t^{-3/4}).
	\]
	Moreover, for some \(T_M,C_M>0\), we have
	\[
	\sup_{\substack{t\ge T_M\\x/t\ge M}}|h_M(x,t)|\le C_M.
	\]
	On the far tail \(x/t\ge H\), the remainder before simplification is bounded by
	\[
	C\left\{\epsilon_N(k_0)\mathfrak T^{-3/4}+\frac{\log(2+tk_0^5)}{tk_0^3}+t^{-1}+e^{-\mu_Ht}\right\},
	\]
	where, if the discrete spectrum is nonempty,
	\[
	\mu _H=\frac{\sqrt{3}}{2}\min_{\lambda \in \mathcal{Z} _1\cup \mathcal{Z} _4} |\mathrm{Im}\lambda |\,(H-v_{\lambda})>0,
	\]
	and \(\mu_H=+\infty\) otherwise.
\end{prop}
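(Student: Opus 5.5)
The proof has two regimes, glued on their overlap. On the compact part \(M\le x/t\le 2H\) I would invoke Proposition~\ref{prop:3.4} with \(I=I_{M,H}\). On the far tail \(x/t\ge H\) I would run the same chain of deformations in the rescaled variable \(\zeta=t/x\). Because \(H>V_*\), every orbit velocity satisfies \(v_\lambda<H\). Hence on the tail all poles are omitted, and each is replaced by a circle jump of size \(O(e^{-\mu_Ht})\) by \eqref{3.24}, with \(\mu(I)\) replaced by \(\mu_H\).

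The half-cone moving outer model \(M_M^{\rm mov}\) of Lemma~\ref{lem:6.1} is built to agree with \(M^{(\mathrm{out})}\) for \(I_{M,H}\) on \([M,2H]\). On the tail it reduces, up to exponentially small corrections, to the diagonal conjugation by the full \(D_I\). Using this, the identification of \(-\tfrac12\partial_x(\ell^{\rm T}M_{M,1}^{\rm mov}e_3)\) with \(u_{\rm sol}(x,t\mid\sigma_d(I_{M,H};\xi))\) on the compact part is immediate. The leading term \(t^{-1/2}h_M\) then matches \eqref{3.34}.

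On the tail I would keep the structure of Sections 3.2–3.4 but measure all constants in \(k_0\ge k_H\):
\begin{enumerate}
\item Lemma~\ref{lem:3.2} supplies \(C^1\)-in-\(x\) extensions with the \(\epsilon_N(k_0)\)-weighted \(\bar\partial\) bounds. It also gives uniform control of \(b_\pm\), and therefore of \(g_\pm\) up to unimodular factors.
\item The parabolic-cylinder coordinate \(z\) and the translation/dilation to a fixed Lipschitz–Carleson family give uniform Cauchy-operator bounds. The local matching error, including one \(x\)-derivative, is then \(O(\log(2+tk_0^5)/(tk_0^3))\).
\item The \(\bar\partial\) moment estimate is repeated with \(t\) replaced by \(\mathfrak T=tk_0^3\). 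This gives \(\epsilon_N(k_0)\mathfrak T^{-3/4}\) for \(N_1^{(3)}\) and for \(\partial_xN_1^{(3)}\).
\item The endpoint disk \(\mathcal U_0\) contributes \(t^{-1}\) after the reconstruction derivative. The term \(\partial_x(\Delta^{[1]})_{33}=O((tk_0^3)^{-1})\) is also at most \(O(t^{-1})\).
\end{enumerate}
Adding these four pieces gives exactly the displayed tail remainder. Since \(\mathfrak T\ge k_H^3t\) and \(tk_0^5\lesssim x\), each piece is \(O(t^{-3/4})\), up to a logarithm that is absorbed for \(t\) large. The absorption uses \(\log(2+tk_0^5)/(tk_0^3)\le C\log(2+x)/(tk_0^3)\) and \(k_0^3\gtrsim(x/t)^{3/4}\). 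So the uniform \(O(t^{-3/4})\) bound holds on \(x/t\ge M\).

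For the bound \(\sup|h_M|\le C_M\), I would differentiate \eqref{3.35}. The factors \(M_M^{\rm mov}(\lambda)^{\pm1}\) and their \(x\)-derivatives are uniformly bounded at the saddles: this follows from Lemma~\ref{lem:3.1} with the gap \eqref{3.22}, and on the tail from the trivial diagonal form. The coefficients \(\mathcal M_\lambda=\alpha_\lambda^{-1}D_I^{-1}M_1^{(\lambda)}D_I\) are bounded because \(|\nu_1|,|\nu_4|\) are bounded and \(|g_\pm|\) is bounded by Lemma~\ref{lem:3.2}. Their \(x\)-derivatives bring down \(\partial_x(t\Phi_{21}(k_0))\), which is \(O(k_0)\) after dividing by \(\alpha_\lambda\sim k_0^{3/2}\). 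The remaining derivatives are \(\partial_x\log\sqrt t\)-type terms of size \(O(t^{-1}k_0^{-4})\).

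The main obstacle is the growth of \(\partial_x\mathcal M_\lambda\) as \(k_0\to\infty\). The oscillatory factor \(e^{\mp t\Phi_{21}(\pm k_0)}\) has \(x\)-derivative of order \(k_0\), while \(\alpha_\lambda^{-1}\sim k_0^{-3/2}\). The net size \(k_0^{-1/2}\) should stay bounded, but this has to be checked carefully together with the \(\epsilon_N\) decay of \(q_\pm\). I would first try to bound \(|\beta_{12}^{\pm}|\lesssim |q_\pm|^{1/2}\) and use the rapid decay of the reflection coefficients in \(k_0\) to absorb any polynomial loss.
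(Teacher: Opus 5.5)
Your proposal is correct and follows essentially the same route as the paper: Proposition~\ref{prop:3.4} on \([M,2H]\); on \(x/t\ge H\), all poles are omitted since \(H>V_*\), and the \(k_0\)-uniform bounds of Lemma~\ref{lem:3.2} feed the rescaled \(\bar\partial\) and parabolic-cylinder estimates; the logarithm is absorbed via \(\mathfrak T\ge k_H^3t\); and the two regions are glued through Lemma~\ref{lem:6.1}. The ``main obstacle'' you flag is not one: your own count gives \(D_x\mathcal M_\lambda=O(k_0^{-1/2})+O(t^{-1}\log t)\), which is bounded because \(k_0\ge(M/45)^{1/4}\), and Appendix~\ref{app:PC-model} bounds \(\beta^{A,B}(q)\) uniformly for \(|q_\pm|\le q_*<1\) there (the paper simply cites Schwartz decay of \(r_j\) at the saddles); note also that \(tk_0^5=xk_0/45\) is not \(\lesssim x\), although \(\log(2+tk_0^5)\lesssim\log(2+x)\) still holds because \(k_0\lesssim x^{1/4}\).
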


\begin{proof}
Proposition~\ref{prop:3.4} treats \(M\le x/t\le2H\). On \(x/t\ge H\), the
preceding uniform \(\bar\partial\)- and local-model estimates give the stated
remainder, and Schwartz decay makes \(h_M\) uniformly bounded. Since
\(\mathfrak T\ge k_H^3t\) and
\(\log(2+tk_0^5)/(tk_0^3)=O_H(t^{-1}\log(2+t))\), the remainder is
\(O(t^{-3/4})\). The overlap follows from the identical extensions and
Lemma~\ref{lem:6.1}.
\end{proof}
\section{The mSK equation and the Painlev\'e region}\label{sec4}
For the regular mSK equation, let $w$ be a global real classical solution of \eqref{1.2} with initial datum $w_0=w(\cdot,0)$ and $w(\cdot,t)\in\mathcal S(\mathbb R)$. A tilde denotes the mSK scattering quantities. We first specify the data and coefficients used in Theorem~\ref{th2}.

The representative discrete sets are \(\widetilde{\mathcal Z}_1=\{\widetilde\kappa_j\}\) and
\(\widetilde{\mathcal Z}_4=\{\widetilde\eta_\ell\}\), with norming constants \(\widetilde c_j,\widetilde d_\ell\), and \(\widetilde{\mathcal Z}\) denotes
the union of their complete reduction orbits. For \(\lambda\in\widetilde{\mathcal Z}_1\cup\widetilde{\mathcal Z}_4\), set
\[
v_{\lambda}=9\frac{\operatorname{Im}(\lambda^5)}{\operatorname{Im}\lambda}.
\]
Let \(\widetilde\delta_j\), \(j=1,\ldots,6\), be the scalar conjugating functions of Section~\ref{sec3} associated with \(\widetilde r_1,\widetilde r_2\), and define
\[
\widetilde{\mathfrak{t} }_1=\frac{\widetilde{\delta }_3\widetilde{\delta }_{4}^{2}\widetilde{\delta }_5}{\widetilde{\delta }_6\widetilde{\delta }_2\widetilde{\delta }_{1}^{2}}, \qquad
\widetilde{\mathfrak{t} }_4=\frac{\widetilde{\delta }_{1}^{2}\widetilde{\delta }_2\widetilde{\delta }_6}{\widetilde{\delta }_5\widetilde{\delta }_3\widetilde{\delta }_{4}^{2}}.
\]
For \(K\in\{K_M,I_M\}\), retain only representatives with \(v_\lambda\in K\) and their complete orbits. Let \(\widetilde D_{>K}=I_3\) if \(K=K_M\). If \(K=I_M\), let \(\widetilde D_{>K}\) be the product of the factors \eqref{3.15}-\eqref{3.16} over the representatives with velocity greater than \(M\). Define
\begin{equation}\label{4.1}
\begin{aligned}
\widetilde{c}_{j}^{K}(\xi )
&=\widetilde{c}_j\widetilde{\mathfrak{t} }_1(\widetilde{\kappa }_j;\xi )
\frac{(\widetilde{D}_{>K})_2(\widetilde{\kappa }_j)}{(\widetilde{D}_{>K})_1(\widetilde{\kappa }_j)},&
\widetilde{d}_{\ell}^{K}(\xi )
&=\widetilde{d}_{\ell}\widetilde{\mathfrak{t} }_4(\widetilde{\eta }_{\ell};\xi )
\frac{(\widetilde{D}_{>K})_1(\widetilde{\eta }_{\ell})}{(\widetilde{D}_{>K})_2(\widetilde{\eta }_{\ell})},\\
\widehat{\widetilde c}_j(K)
&=\widetilde c_j^K(v_{\widetilde\kappa_j}),&
\widehat{\widetilde d}_\ell(K)
&=\widetilde d_\ell^K(v_{\widetilde\eta_\ell}).
\end{aligned}
\end{equation}
The first line of \eqref{4.1}, together with the complete reduction orbits, defines \(\widetilde\sigma_d(K;\xi)\). The frozen constants in the second line and their complete orbits define \(\widehat{\widetilde\sigma}(K)\).

For \(J_M\), retain the representatives with \(v_\lambda\in J_M\) and their complete orbits. Let \(\widetilde D_{>J_M}\) be the mSK counterpart of \eqref{5.2}, and let \(\widetilde\Delta_{\rm IV}\) be the regular mSK Plemelj factor \(\Delta_{\rm P}\) introduced below, constructed from \(1-|\widetilde r_1|^2\) and \(1-|\widetilde r_2|^2\). Define
\begin{equation}\label{4.2}
\begin{aligned}
\widehat{\widetilde{c}}_j(J_M)
&=\widetilde{c}_j
\frac{(\widetilde{\Delta }_{\mathrm{IV}})_1(\widetilde{\kappa }_j)}{(\widetilde{\Delta }_{\mathrm{IV}})_2(\widetilde{\kappa }_j)}
\frac{(\widetilde{D}_{>J_M})_2(\widetilde{\kappa }_j)}{(\widetilde{D}_{>J_M})_1(\widetilde{\kappa }_j)},\\
\widehat{\widetilde{d}}_{\ell}(J_M)
&=\widetilde{d}_{\ell}
\frac{(\widetilde{\Delta }_{\mathrm{IV}})_2(\widetilde{\eta }_{\ell})}{(\widetilde{\Delta }_{\mathrm{IV}})_1(\widetilde{\eta }_{\ell})}
\frac{(\widetilde{D}_{>J_M})_1(\widetilde{\eta }_{\ell})}{(\widetilde{D}_{>J_M})_2(\widetilde{\eta }_{\ell})}.
\end{aligned}
\end{equation}
These constants and their complete orbits define \(\widehat{\widetilde\sigma}(J_M)\).

For \(K\in\{K_M,I_M,J_M\}\), let \(m_{\rm sol}\) denote the corresponding reflectionless model. For \(K\in\{K_M,I_M\}\), let \(m_K^{\rm mov}\) denote the moving rational outer model. If \(m_{{\rm sol},1}\) is the \(k^{-1}\)-coefficient at infinity, define
\begin{equation}\label{4.3}
w_{\rm sol}(x,t;\widehat{\widetilde\sigma}(K))
=3\bigl(m_{{\rm sol},1}(x,t;\widehat{\widetilde\sigma}(K))\bigr)_{13}.
\end{equation}
For \(\xi>0\), set
\[
k_0=\left(\frac{\xi}{45}\right)^{1/4},\qquad
\mathcal K_0=\{\pm\omega^m k_0:m=0,1,2\}.
\]
For \(K\in\{K_M,I_M\}\), let \(\widetilde{\mathcal M}_{\lambda,K}\) denote the tilded counterpart of the parabolic-cylinder coefficient in \eqref{3.27}, and define
\begin{equation}\label{4.4}
\widetilde h_K
=3\left[
\sum_{\lambda\in\mathcal K_0}
 m_K^{\rm mov}(\lambda;x,t)
 \widetilde{\mathcal M}_{\lambda,K}(x,t)
 \bigl(m_K^{\rm mov}(\lambda;x,t)\bigr)^{-1}
\right]_{13}.
\end{equation}

\begin{assumption}\label{ass:1.2}
	\begin{enumerate}
		\item[(a)] \emph{Generic scattering data.}
		The datum \(w_0\in\mathcal S(\mathbb R)\), RH Problem~\ref{rhp2} is
		uniquely solvable, and the nonzero zeros form finitely many
		normal-simple, exclusive-zero, pairwise disjoint complete orbits
		separated from \(\Sigma\cup\{0\}\). There are no spectral singularities on \(\Sigma\).
		
\item[(b)] \emph{Positive-sector outer models and endpoint frames.}
For $K\in\{K_M,I_M\}$, the moving and frozen balanced residue systems, including equal-velocity groups, are uniformly invertible in common single-valued $C^1$ charts. The zero-reflection RH problems are uniquely solvable. Their outer matrices, inverses, and first fixed-$t$ $x$-derivatives are uniformly bounded off the pole disks, and
\[
\sup_{\xi\in K}\sum_{q=0}^1
\left\|D_x^q\bigl[m_{K,1}^{\rm mov}(x,t)
-m_{{\rm sol},1}(x,t;\widehat{\widetilde\sigma}(K))\bigr]\right\|
\le C_K\frac{\log(2+t)}{t}.
\]
Here $D_x$ is the fixed-$t$ total derivative after $\xi=x/t$, and the bound is uniform on the whole half-cone when $K=K_M$.
With $a=\widetilde r_1(0+)$ and $b=\widetilde r_2(0-)$, assume $\bar a+b+a\bar b=0$. There are reduction-compatible, sectorwise analytic and invertible endpoint frames $\widetilde F_0=\widetilde F_6$ such that, for $|k|\le t^{-1/2}$,
\[
\begin{aligned}
\sup_n(\|\widetilde F_n\|+\|\widetilde F_n^{-1}\|)&\le C,\\
\widetilde F_{n-1}\widetilde v_n^{(1)}\widetilde F_n^{-1}&=I_3+O(k),\\
\partial_x(\widetilde F_{n-1}\widetilde v_n^{(1)}\widetilde F_n^{-1}-I_3)&=O(k^2+k/t).
\end{aligned}
\]
The same undifferentiated $O(k)$ matching holds on $|k|=t^{-1/2}$, and $\widetilde v_n^{(1)}$ is the exact post-conjugation jump in the $n$th sector.

		\item[(c)] \emph{Negative-sector outer model.}
		For some \(t_0>0\), the zero-reflection RH problem for the datum
		\(\widehat{\widetilde\sigma}(J_M)\) defined by \eqref{4.2}, together
		with its balanced residue systems, is uniformly solvable in
		\(x/t\in J_M\), with bounded outer matrices
		and inverses off the pole disks.
	\end{enumerate}
\end{assumption}

Applying the deformations and estimates of Sections~\ref{sec3}-\ref{sec5} to RH Problem~\ref{rhp2}, and using the reconstruction $w=3(m_1)_{13}$, gives the following theorem by analogous arguments. The Painlev\'e case is proved below. The regions are shown in Figure~\ref{fig:mSK-asymptotic-regions}.

\begin{theorem}[Long-time asymptotics of the mSK equation]\label{th2}
Let \(w\) be the solution above, and suppose that Assumption~\ref{ass:1.2} holds. Then, as \(t\to+\infty\):
	\begin{enumerate}
\item[(i)-(ii)] \emph{Sectors I and II.}
For $K=K_M$ or $K=I_M$, respectively, uniformly for $\xi=x/t\in K$,
\[
w(x,t)=w_{\rm sol}(x,t;\widehat{\widetilde\sigma}(K))
+t^{-1/2}\widetilde h_K+O(t^{-3/4}),
\]
with the frozen data, soliton profile, and radiation coefficient defined in \eqref{4.1}, \eqref{4.3}, and \eqref{4.4}.

		\item[(iii)] \emph{Sector III.}
		Assume \eqref{4.8} and \(\kappa_M<1\), where \(\kappa_M\) is defined by
		\eqref{4.14}. Then, uniformly for
		\(s=x/(5t)^{1/5}\in[-M,M]\),
		\[
		\begin{aligned}
			w(x,t)&=(5t)^{-1/5}p(s)+O((5t)^{-2/5}),\\
			w_x(x,t)&=(5t)^{-2/5}p'(s)+O((5t)^{-3/5}).
		\end{aligned}
		\]
where \(p\) is the real profile defined by \eqref{4.23} and satisfying \eqref{4.25}.
		
		\item[(iv)] \emph{Sector IV.}
		For every \(N\ge1\), there exists \(T_{N,M}\ge t_0\) such
		that uniformly for \(t\ge T_{N,M}\) and \(\xi\in J_M\),
		\[
		w(x,t)=w_{\rm sol}(x,t;\widehat{\widetilde\sigma}(J_M))+O\bigl((|x|+t)^{-N}\bigr),
		\]
		where \(w_{\rm sol}(x,t;\widehat{\widetilde\sigma}(J_M))\) and
		\(\widehat{\widetilde\sigma}(J_M)\) are defined in \eqref{4.3} and
		\eqref{4.2}, respectively.
	\end{enumerate}
\end{theorem}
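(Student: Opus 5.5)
The plan is to transfer the SK analysis of Section~\ref{sec3} (and, for Sector~IV, the decay analysis of Section~\ref{sec5}) to RH Problem~\ref{rhp2}, and to carry out the Painlev\'e region (iii) in full. The mSK Jost functions are regular at \(k=0\), so RH Problem~\ref{rhp2} has no pole at the origin. This removes the row-quotient device \(N=\ell^{\mathrm T}M\) used for SK. We work directly with the matrix \(m\) and reconstruct \(w=3\lim k\,m_{13}\).

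\textbf{Sectors I and II.} We repeat the chain \(m\mapsto m\widetilde\Delta\mapsto m^{(1)}R^{(2)}\), with the scalar functions \(\widetilde\delta_j\) built from \(\widetilde r_1,\widetilde r_2\). The endpoint region \(\mathcal U_0\) is handled by the frames \(\widetilde F_n\) of Assumption~\ref{ass:1.2}(b); the condition \(\bar a+b+a\bar b=0\) replaces the explicit SK cycle \(v_1^0\cdots v_6^0=I_3\). This reproduces the endpoint bounds \eqref{3.5} verbatim. Next comes the orbitwise pole reduction by \(\widetilde D_{>K}\), which gives the moving outer model \(m_K^{\rm mov}\). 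The parabolic-cylinder parametrices at \(\mathcal K_0\) give \(\widetilde{\mathcal M}_{\lambda,K}\). The small-norm and \(\bar\partial\)-estimates of Section~\ref{sec3} carry over unchanged, since they only use the phase \(\Phi_{21}\) and the regularity of the reflection data.

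Reading off the \((1,3)\)-entry of the \(k^{-1}\) coefficient gives
\[
w=3(m^{\rm mov}_{K,1})_{13}+t^{-1/2}\widetilde h_K+O(t^{-3/4}).
\]
Two further points are needed. First, the diagonal corrections \(\widetilde\Delta^{[1]}\) and \(D^{[1]}\) do not contribute to the off-diagonal \((1,3)\)-entry. Second, the moving-to-frozen bound in Assumption~\ref{ass:1.2}(b) lets us replace \(m^{\rm mov}_{K,1}\) by \(m_{{\rm sol},1}\) at cost \(O(t^{-1}\log t)\), which yields (i)--(ii).

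\textbf{Sector IV.} For \(\xi\in J_M\) there are no real stationary points, because \(\xi<0\). We use the Plemelj factor \(\widetilde\Delta_{\rm IV}\) and open lenses along the full rays. All jumps then become \(O((|x|+t)^{-N})\) by Schwartz decay and nonstationary phase, and the residual \(\bar\partial\)-part is equally small. Omitted poles contribute \(e^{-ct}\) by \eqref{3.24}. Assumption~\ref{ass:1.2}(c) controls the outer model with datum \eqref{4.2}; since \(\widetilde\Delta_{\rm IV}\) is \(\xi\)-independent, no freezing is needed. This gives (iv).

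\textbf{Sector III (main obstacle).} Here \(|x|\le M(5t)^{1/5}\), so the six saddles \(\pm\omega^mk_0\) sit at distance \(O(t^{-1/5})\) from \(0\) and coalesce. We rescale \(k=(5t)^{-1/5}z\), so that
\[
t\Phi_{21}=i\sqrt3\bigl(sz-\tfrac{9}{5}z^5\bigr),\qquad s=x/(5t)^{1/5},
\]
up to normalization. Because the mSK problem is regular at \(0\), we can freeze \(\widetilde r_j(k)\approx\widetilde r_j(0\pm)\) on a disk \(|k|\le\varepsilon\). The lens \(\bar\partial\)-error comes from \(\widetilde r_j(k)-\widetilde r_j(0)=O(k)\), which after scaling gives the relative \(O(t^{-1/5})\). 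All poles have \(|\operatorname{Re}\Phi|\gtrsim 1\) uniformly in \(\xi\to0\), hence lie in exponentially decaying or growing regions. We remove them by the switching of \eqref{3.20}, and the resulting factors affect only \(O(e^{-ct})\) terms after using \(D(0)=I_3\).

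The resulting local model is a six-ray \(3\times3\) RH problem in \(z\) with constant jumps built from \(a,b\), and exponent \(sz-\frac95z^5\). Its solvability for \(s\in[-M,M]\) is what \eqref{4.8} and \(\kappa_M<1\) in \eqref{4.14} provide: I would use the latter as a small-norm or vanishing-lemma bound on the model Beals--Coifman operator. Its residue \(m^{P}_1(s)\) defines \(p(s)=3(m_1^P)_{13}\), as in \eqref{4.23}. The Lax pair in \(s\), obtained by differentiating the model in \(z\) and \(s\), yields the F-XVIII equation \eqref{4.25}.

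The error problem then has jumps \(I+O(t^{-1/5})\), giving \(w=(5t)^{-1/5}p(s)+O((5t)^{-2/5})\). For \(w_x\) we differentiate in \(x\), using \(\partial_x=(5t)^{-1/5}\partial_s\) together with the differentiated small-norm estimates, which gives the second line of (iii). The SK statement in Theorem~\ref{th1}(iii) then follows through the Miura map \eqref{1.3}.

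The delicate step is the uniform control of the \(\bar\partial\)-remainder near the coalescing saddles. I would split at \(|k|\sim t^{-1/5}\) and use \(|e^{t\Phi}|\le e^{-ct|\operatorname{Im}k||k|^4}\) outside the scaled disk to obtain the \(t^{-2/5}\) rate.
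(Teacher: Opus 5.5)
Your Sectors I--II argument matches what the paper does: it only asserts that the Section~\ref{sec3} deformations carry over, with the endpoint frames and the moving-to-frozen bound supplied by Assumption~\ref{ass:1.2}(b). Your Sector~III skeleton also follows the paper: scaling, pole switching, an $O((5t)^{-1/5})$ solid Cauchy error, $\partial_x=(5t)^{-1/5}\partial_s$, and the Lax pair. Two steps, however, would fail as written.

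\emph{Sector IV.} Opening lenses along the six full rays cannot make all jumps $O((|x|+t)^{-N})$. The rays meet at $k=0$, where the carriers only satisfy $|e^{\pm\theta_{pq}}|\le e^{-cX|k|}$ with $X=-x$. For $|k|\lesssim X^{-1}$, the lens jumps leaving the vertex are $I_3$ plus a term of size comparable to $|\widetilde r_j(0)|$. By \eqref{4.7}, $|\widetilde r_1(0)|=|\widetilde r_2(0)|=q(W_0)$, which is nonzero unless $W_0=0$. Consequently:
\begin{itemize}
\item $\|v-I_3\|_{L^\infty}$ is not small;
\item the $L^1$ norms are only $O(X^{-1})$;
\item a naive $\bar\partial$-extension near $0$ leaves a source of fixed polynomial size.
\end{itemize}
Regularity of the mSK data at $0$ removes the partial-index problem of the SK case, but not the vertex. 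The missing ingredient is the endpoint mechanism of Section~\ref{sec5}, applied to the mSK data. On the shrinking disk $|k|<X^{-1/2}$, you form sectorial partial products of the twelve triangular factors from Taylor polynomials of the amplitudes. These close cyclically by the all-order vertex compatibility, the analogue of \eqref{5.9}, whose leading term is $\bar a+b+a\bar b=0$. Their carriers are bounded by the allowed-root estimate \eqref{5.10}. The internal mismatch is then $O(|k|^{L+1})$, as in \eqref{5.11}. Outside the disk $X|k|\ge X^{1/2}$, and only the Taylor remainder is extended almost analytically, as in \eqref{5.13}. The rapid decay comes precisely from this all-order cancellation of the vertex contribution.

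\emph{Sector III.} A ``six-ray model with constant jumps built from $a,b$'' is not a normalized $L^2$ problem. On the original rays $|e^{\Theta_{ij}}|=1$, and for $W_0\ne0$ the frozen jumps carry the diagonal entry $1-|\widetilde r_j(0)|^2=3/(1+A+A^{-1})\ne1$. Hence $V-I_3$ does not decay along the rays, and there is no Beals--Coifman operator to which $\kappa_M<1$ could apply. Opening lenses does not fix this by itself. The diagonal-free factorization has the wrong decay outside the saddle radius (everywhere if $s<0$), so the upper--diagonal--lower form is forced, and its constant diagonal part cannot be removed inside the scaled model.

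The paper handles this in four steps:
\begin{enumerate}
\item It uses the reverse Gauss factorization $v_n=\widehat B_n\mathcal D_n\widehat A_n$.
\item It removes $\mathcal D_n$ by the global Cauchy exponential $\Delta_{\rm P}$ before scaling, which uses $d_j\to1$ at infinity.
\item It uses the endpoint form \eqref{4.10}, $\Delta_{\rm P}=C_n^\Delta(I_3+O(k))$, to show the $d_1(0)$ factors cancel, so $\Gamma_{1,+}(0)=\sigma$.
\item This yields the twelve-ray model with triangular Stokes matrices $S_j$, for which $\|V^{\rm P}-I_3\|_\infty\le q(W_0)e^{\Phi_M}$ and $\kappa_M<1$ gives the Neumann series.
\end{enumerate}
Here $e^{\Phi_M}$ absorbs the bounded growth of the exponentials inside the saddle radius. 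The Stokes cycle and Lemma~\ref{lem:4.3} then remove the vertex, giving $\det M^{\rm P}=1$ and the entire Lax matrices behind \eqref{4.25}.

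Finally, \eqref{4.8} plays no role in the model's solvability, since the model has no poles. It is exactly the hypothesis $v_\lambda\ne0$ that gives $\operatorname{Re}\theta_{21}$ a fixed sign and size $\asymp t$ at every pole when $x=O(t^{1/5})$. Your claim that all poles satisfy $|\operatorname{Re}\Phi|\gtrsim1$ depends on it and is not automatic.
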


\begin{figure}[htbp]
	\centering
	\begin{tikzpicture}[
		x=0.85cm,
		y=0.75cm,
		>=stealth,
		every node/.style={font=\scriptsize},
		line cap=round,
		line join=round
		]
		
		\def\a{1.38}
		\def\b{0.80}
		
		\coordinate (Lcurve) at ({\a*cos(142)},{\b*sin(142)});
		\coordinate (Rcurve) at ({\a*cos(38)},{\b*sin(38)});
		\coordinate (Pvtwo) at ({\a*cos(10)},{\b*sin(10)});
		
		\coordinate (Ltop) at (-2.55,6.25);
		\coordinate (Rtop) at ( 2.20,6.25);
		\coordinate (VtwoT) at (7.00,2.35);
		
		\draw[->,densely dashed,black!55,line width=0.65pt]
		(-6.45,0)--(7.35,0)
		node[below,text=black] {$x$};
		
		\draw[->,densely dashed,black!55,line width=0.65pt]
		(0,0)--(0,6.95)
		node[left,text=black] {$t$};
		
		\draw[black!72,line width=0.82pt]
		plot[
		domain=180:0,
		samples=140,
		smooth,
		variable=\ang
		]
		({\a*cos(\ang)},{\b*sin(\ang)});
		
		\draw[black!72,line width=0.82pt]
		(Lcurve)
		.. controls
		(
		{\a*cos(142)+0.58*(-2.55-\a*cos(142))},
		{\b*sin(142)+(1/3)*(6.25-\b*sin(142))}
		)
		and
		(
		{-2.55},
		{6.25-(1/3)*(6.25-\b*sin(142))}
		)
		.. (Ltop);
		
		\draw[black!72,line width=0.82pt]
		(Rcurve)
		.. controls
		(
		{\a*cos(38)+0.58*(2.20-\a*cos(38))},
		{\b*sin(38)+(1/3)*(6.25-\b*sin(38))}
		)
		and
		(
		{2.20},
		{6.25-(1/3)*(6.25-\b*sin(38))}
		)
		.. (Rtop);
		
		\draw[black!72,line width=0.82pt]
		(Pvtwo)--(VtwoT);
		
		\node[
		above right,
		font=\tiny,
		text=black!75,
		xshift=-15pt,
		yshift=0pt
		] at (Ltop)
		{$x\sim -(5t)^{1/5}$};
		
		\node[
		above right,
		font=\tiny,
		text=black!75,
		xshift=-15pt,
		yshift=0pt
		] at (Rtop)
		{$x\sim (5t)^{1/5}$};
		
		\node[
		align=center,
		text=violet!65!black
		] at (-4.55,3.55)
		{Sector IV};
		
		\node[
		align=center,
		text=green!45!black
		] at (-0.25,3.55)
		{Sector III};
		
		\node[
		align=center,
		text=brown!70!black
		] at (4.95,3.45)
		{Sector II};
		
		\node[
		align=center,
		text=blue!60!black
		] at (5.85,0.85)
		{Sector I};
		
		\node[below,text=black] at (0,0) {$0$};
		
	\end{tikzpicture}
	
	\caption{The asymptotic sectors I-IV for the mSK equation in the $(x,t)$-half plane.}
	\label{fig:mSK-asymptotic-regions}
\end{figure}

\subsection{Reduction to the Painlev\'e model}
We work in the Painlev\'e region
\[
\mathcal{P} =\left\{ (x,t):s=\frac{x}{(5t)^{1/5}}, |s|\le M, t\ge t_0 \right\} ,\qquad \lambda =(5t)^{1/5}k.
\]
Because the SK problem is singular at \(0\), analyze the regular mSK problem and recover SK by \eqref{4.5}. The saddles \(k^4=x/(45t)\) remain at finite \(\lambda\)-distance and enter one origin model. Under the scaling,
\[
\theta _{ij}\bigl( x,t;(5t)^{-1/5}\lambda \bigr) =\Theta _{ij}(\lambda ;s),\qquad \Theta _{ij}=(\omega ^i-\omega ^j)s\lambda +\frac{9}{5}(\omega ^{2i}-\omega ^{2j})\lambda ^5,
\]
and in particular
\[
\Theta_{21}(\lambda;s)=i\sqrt3\left(\frac95\lambda^5-s\lambda\right).
\]
The coalescence and rescaling of the stationary points are illustrated in Figure~\ref{fig:SK-Painleve-scaling}.

\begin{figure}[htbp]
\centering
\definecolor{SKPainBlue}{HTML}{4477AA}
\definecolor{SKPainOrange}{HTML}{EE7733}
\definecolor{SKPainTeal}{HTML}{009988}
\definecolor{SKPainAxis}{HTML}{8A96A4}
\definecolor{SKPainInk}{HTML}{263445}
\begin{tikzpicture}[x=0.92cm,y=0.92cm,>=stealth,line cap=round,line join=round,font=\small,text=SKPainInk]
  \begin{scope}
    \node[text=SKPainBlue!90!black] at (0,2.70) {(a) The $k$-plane};
    \draw[SKPainAxis,->] (-2.10,0)--(2.15,0);
    \node[above left,font=\scriptsize,text=SKPainInk] at (2.54,0.08) {$\operatorname{Re}k$};
    \draw[SKPainInk,->] (0,-2.05)--(0,2.10) node[above] {$\operatorname{Im}k$};
    \draw[SKPainOrange!65,dashed,line width=0.65pt] (0,0) circle (1.55);
    \draw[SKPainBlue!65,dotted,line width=0.65pt] (0,0) circle (0.70);
    \foreach \ang in {0,90,180,270}{
      \draw[SKPainOrange,fill=white,line width=0.95pt] (\ang:1.55) circle (2.2pt);
      \draw[SKPainBlue,->,line width=0.85pt] (\ang:1.40)--(\ang:0.84);
      \fill[SKPainBlue] (\ang:0.70) circle (2.1pt);
    }
    \fill (0,0) circle (1.0pt);
    \node[below left,font=\scriptsize] at (0,0) {$0$};
    \node[below=5pt,font=\scriptsize,inner sep=1pt] at (0.70,-0.06) {$k_0(t)$};
    \draw[SKPainOrange,fill=white,line width=0.95pt] (-1.65,-2.52) circle (2.2pt);
    \node[right,font=\scriptsize] at (-1.52,-2.52) {earlier $t$};
    \fill[SKPainBlue] (0.35,-2.52) circle (2.1pt);
    \node[right,font=\scriptsize] at (0.48,-2.52) {later $t$};
    \node at (0,-3.18) {$k_0(t)=(5t)^{-1/5}(s/9)^{1/4}$};
  \end{scope}
  \draw[SKPainTeal,->,line width=1.10pt] (2.72,0.45)--(4.83,0.45);
  \node[above,align=center,font=\small,text=SKPainTeal!85!black] at (3.78,0.68) {$\lambda=(5t)^{1/5}k$};
  \node[below,font=\scriptsize,text=SKPainTeal!85!black] at (3.78,0.22) {fixed $s>0$};
  \begin{scope}[shift={(7.55,0)}]
    \node[text=SKPainBlue!90!black] at (0,2.70) {(b) The $\lambda$-plane};
    \fill[SKPainTeal!10] (0,0) circle (1.82);
    \draw[SKPainTeal!70,dashed,line width=0.70pt] (0,0) circle (1.82);
    \draw[SKPainAxis,->] (-2.10,0)--(2.15,0);
    \node[above left,font=\scriptsize,text=SKPainInk] at (2.74,0.08) {$\operatorname{Re}\lambda$};
    \draw[SKPainInk,->] (0,-2.05)--(0,2.10) node[above] {$\operatorname{Im}\lambda$};
    \foreach \ang in {0,90,180,270}{
      \fill[SKPainBlue] (\ang:1.15) circle (2.1pt);
    }
    \fill (0,0) circle (1.0pt);
    \node[below left,font=\scriptsize] at (0,0) {$0$};
    \node[below=5pt,font=\scriptsize,inner sep=1pt] at (1.15,-0.06) {$(s/9)^{1/4}$};
    \node[font=\scriptsize,text=SKPainTeal!85!black] at (0,-2.52) {one Painlev\'e model};
    \node at (0,-3.18) {$9\lambda^4=s$};
  \end{scope}
\end{tikzpicture}
\caption{Stationary-point coalescence and Painlev\'e scaling.}
\label{fig:SK-Painleve-scaling}
\end{figure}

For $\lambda=u+iv$ with $u>0$ and $0<v<u\tan(\pi/12)$,
\[
	\mathrm{Re}\Theta _{21}(u+iv;s)=-\sqrt{3}\,v\left( 9u^4-18u^2v^2+\frac{9}{5}v^4-s \right) .
\]

Positivity of \(9-18q^2+\frac95q^4\) on \(0\le q\le\tan(\pi/12)\) gives, uniformly for \(|s|\le M\),
\[
\mathrm{Re}\Theta _{21}(u+iv;s)\le C_M-cu^4v,\qquad |e^{\Theta _{21}(u+iv;s)}|\le C_Me^{-cu^4v}.
\]
The remaining estimates follow from the $\mathbb Z_3$ and Schwarz symmetries.

At the spectral endpoint write
\[
G(k)=D(k)P,\qquad D(k)=\mathrm{diag}\mathrm{(}1,k,k^2),\qquad P=\left( \begin{matrix}
	\omega&		\omega ^2&		1\\
	\omega ^2&		\omega&		1\\
	1&		1&		1\\
\end{matrix} \right) .
\]
Since the mSK perturbation is diagonal and commutes with $D(k)$,
\[
V_1(x;k)=G(k)^{-1}\operatorname{diag}(0,-w_0,w_0)G(k)=w_0(x)A_0,
\]
where
\[
A_0=\frac{1}{3}\left( \begin{matrix}
	0&		1-\omega ^2&		1-\omega\\
	1-\omega&		0&		1-\omega ^2\\
	1-\omega ^2&		1-\omega&		0\\
\end{matrix} \right) .
\]
Thus \(V_1\) is \(k\)-independent and the apparent singularity at \(0\) is removable.

The equation $m_x-[k\Lambda,m]=wA_0m$ gives $[m_1,\Lambda]=wA_0$, hence $w=3(m_1)_{13}$, consistently with RH Problem~\ref{rhp2}.
Recover SK by the Miura map
\begin{equation}\label{4.5}
	u(x,t)=\frac16\bigl(w_x(x,t)-w(x,t)^2\bigr).
\end{equation}
Direct substitution verifies \eqref{4.5}, with initial datum
\[
u_0=\frac16(w_0'-w_0^2).
\]

\begin{prop}\label{prop:4.1}
	Let $w_0\in\mathcal S(\mathbb R)$. The relevant mSK Jost functions, $\widetilde s,\widetilde s^A$, and their first two $k$-derivatives extend continuously to $k=0$ along the corresponding closed spectral half-rays. If
	\begin{equation}\label{4.6}
		\widetilde s_{11}(k)\ne0\quad(k\ge0),\qquad \widetilde s_{11}^A(k)\ne0\quad(k\le0),
	\end{equation}
	then
	\[
	\widetilde r_1\in C^2([0,\infty)),\qquad \widetilde r_2\in C^2((-\infty,0]),
	\]
	with
	\[
	\parallel \widetilde{r}_j\parallel _{L^{\infty}}+\parallel \widetilde{r}_j\prime\parallel _{L^{\infty}}+\parallel \widetilde{r}_j''\parallel _{L^{\infty}}<\infty ,\qquad j=1,2,
	\]
	and their derivatives decay rapidly at infinity.
\end{prop}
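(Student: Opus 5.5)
The plan is to work directly with the Volterra integral equations for the mSK Jost functions, written in the gauge $G(k)=D(k)P$ introduced just above. The key structural fact is that the potential $V_1=w_0(x)A_0$ is $k$-independent. The equations then read
\[
\widetilde J_\pm(x;k)=I_3-\int_x^{\pm\infty}e^{(x-y)\widehat{k\Lambda}}\bigl(w_0(y)A_0\widetilde J_\pm(y;k)\bigr)\,dy,
\]
and, unlike the SK case, they contain no factor $k^{-1}$.

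\textbf{Step 1: continuity of the Jost columns.} On each closed spectral half-ray, including $k=0$, the relevant column sees exponentials $e^{(x-y)k(\lambda_i-\lambda_j)}$ of modulus at most one, uniformly down to $k=0$. Since $w_0\in L^1$, the Neumann series converges uniformly. This gives columns that are bounded and continuous up to $k=0$.

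\textbf{Step 2: the first two $k$-derivatives.} Differentiating the kernel in $k$ produces factors $(x-y)(\lambda_i-\lambda_j)$ and $(x-y)^2(\lambda_i-\lambda_j)^2$. The standard weighted Volterra argument then applies, using $(1+|y|)^2w_0\in L^1$, which holds because $w_0\in\mathcal S(\mathbb R)$. The derivatives satisfy the same Volterra equations with inhomogeneous terms built from lower derivatives. Bounds grow at most polynomially in $x$ on the non-normalized side, and this growth is absorbed by the Schwartz decay inside the scattering integrals
\[
\widetilde s(k)=I-\int e^{-x\widehat{k\Lambda}}(w_0A_0\widetilde J_+)\,dx,
\]
and likewise for $\widetilde s^A$. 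Hence $\widetilde s_{11}$, $\widetilde s_{12}$, $\widetilde s^A_{11}$, $\widetilde s^A_{12}$ are $C^2$ on the closed half-rays. The off-diagonal entries are only ray-smooth, which suffices here. I expect this weighted bookkeeping at $k=0$ to be the main obstacle. The difficulty is that $x$-growth of $\partial_k^2\widetilde J_+$ must not exceed $(1+|x|)^2$, so that integrability against $w_0$ survives. I would handle it by estimating $\partial_k^n\widetilde J_+$ with weight $(1+|x|)^{-n}$, proceeding by induction on $n$.

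\textbf{Step 3: reflection coefficients.} Set $\widetilde r_1=\widetilde s_{12}/\widetilde s_{11}$ on $[0,\infty)$. On any compact piece $[0,R]$, continuity and hypothesis \eqref{4.6} bound $|\widetilde s_{11}|$ below, so $\widetilde r_1\in C^2$ there, with bounded first and second derivatives by the quotient rule. For large $k$, the asymptotic expansions give $\widetilde s_{11}\to1$ together with its derivatives. The same holds for $\widetilde r_2=\widetilde s^A_{12}/\widetilde s^A_{11}$ on $(-\infty,0]$.

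\textbf{Step 4: rapid decay.} Decay of $\widetilde s_{12}$ and its derivatives comes from repeated integration by parts in $x$ in the scattering integral, since $\widetilde s_{12}$ carries an oscillatory factor $e^{-x k(\lambda_1-\lambda_2)}$ with $|\lambda_1-\lambda_2|=\sqrt3$ on the ray. Each integration by parts costs one $x$-derivative of $w_0A_0\widetilde J_+$, and Schwartz regularity of $w_0$ together with the large-$k$ expansion of $\widetilde J_+$ bounds these derivatives. This yields $O(k^{-N})$ for $\widetilde s_{12}$, $\partial_k\widetilde s_{12}$ and $\partial_k^2\widetilde s_{12}$. Combined with $\widetilde s_{11}\to1$, this gives rapid decay of $\widetilde r_j$ and its derivatives, which completes the $L^\infty$ bounds.
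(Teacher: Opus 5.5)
Your proposal is correct and follows essentially the paper's route. The $k$-independence of $V_1=w_0A_0$ makes the differentiated Volterra equations regular up to $k=0$ on the closed half-rays, and then \eqref{4.6}, continuity and the large-$k$ limits bound $\widetilde s_{11},\widetilde s^A_{11}$ away from zero, so quotient and Volterra estimates finish; the paper additionally records $\widetilde s(0)=P^{-1}\operatorname{diag}(1,e^{W_0},e^{-W_0})P$, hence $\widetilde s_{11}(0)=\widetilde s^A_{11}(0)=(1+e^{W_0}+e^{-W_0})/3\ge1$, which makes the $k=0$ part of \eqref{4.6} automatic but is not needed for your argument.
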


\begin{proof}
	Since \(V_1=w_0A_0\), differentiated Volterra equations converge uniformly at \(0\), giving the one-sided \(C^2\) extensions. Moreover,
	\[
	\widetilde{s}(0)=P^{-1}\mathrm{diag}\mathrm{(}1,e^{W_0},e^{-W_0})P,\qquad \widetilde{s}_{11}(0)=\widetilde{s}_{11}^{A}(0)=\frac{1+e^{W_0}+e^{-W_0}}{3}\ge 1,
	\]
	where \(W_0=\int_{\mathbb R}w_0\). Condition~\eqref{4.6}, continuity, and the large-\(k\) limits bound the denominators away from zero. Quotient and Volterra estimates give the result.
\end{proof}

Set $A=e^{W_0}$, $d_0=1+A+A^{-1}$, then
\begin{equation}\label{4.7}
	\widetilde{r}_1(0)=\frac{\omega +A\omega ^2+A^{-1}}{d_0},\qquad \widetilde{r}_2(0)=\frac{\omega ^2+A^{-1}\omega +A}{d_0},
\end{equation}
where the second limit is taken from $\mathbb R_-$.

For $\Sigma_n=e^{(n-1)\pi i/3}\mathbb R_+$, the mSK jumps factor as
\[
\begin{aligned}
	\widetilde v_1(r)
	&=\bigl(I+\overline{\widetilde r_1(r)}
	e^{\theta_{21}}E_{21}\bigr)
	\bigl(I-\widetilde r_1(r)e^{-\theta_{21}}E_{12}\bigr),
	\\
	\widetilde v_2(e^{i\pi/3}r)
	&=\bigl(I-\overline{\widetilde r_2(-r)}
	e^{-\theta_{32}}E_{23}\bigr)
	\bigl(I+\widetilde r_2(-r)e^{\theta_{32}}E_{32}\bigr),\\
	\widetilde v_{2m+1}(\omega^m r)
	&=\mathcal A^{-m}\widetilde v_1(r)\mathcal A^m,\\
	\widetilde v_{2m+2}(e^{i(2m+1)\pi/3}r)
	&=\mathcal A^{-m}\widetilde v_2(e^{i\pi/3}r)\mathcal A^m,
	\quad m=0,1,2,\ r>0.
\end{aligned}
\]
Proposition~\ref{prop:4.1} gives bounded first derivatives on the closed rays and rapid decay.

To remove nonresonant orbits, set
\[
\rho _{\tilde{\kappa}_j}=\mathrm{Re(}-9i\sqrt{3}\,\tilde{\kappa}_{j}^{5}),\qquad \rho _{\tilde{\eta}_{\ell}}=\mathrm{Re(}9i\sqrt{3}\,\tilde{\eta}_{\ell}^{5}),
\]
and assume
\begin{equation}\label{4.8}
	\rho _*=\min \left( \{|\rho _{\tilde{\kappa}_j}|\}_j\cup \{|\rho _{\tilde{\eta}_{\ell}}|\}_{\ell} \right) >0.
\end{equation}
If the nonzero discrete spectrum is empty, set $\rho_*=+\infty$.

\begin{lemma}\label{lem:4.1}
	For some $\alpha$ sufficiently close to $\pi/12$, the reduction-compatible contour
	\[
		\Sigma ^{\mathrm{P}}(\alpha )=\bigcup_{m=0}^5{\left( e^{i(m\pi /3+\alpha )}\mathbb{R} _+\cup e^{i((m+1)\pi/3-\alpha )}\mathbb{R} _+ \right)},
	\]
	satisfies
	\begin{equation}\label{4.9}
		d_{\mathrm{P}}:=\mathrm{dist}\!\left( \widetilde{\mathcal{Z} },\Sigma \cup \Sigma ^{\mathrm{P}}(\alpha )\cup \{0\} \right) >0,
	\end{equation}
	and all model exponentials retain their strict decay signs.
\end{lemma}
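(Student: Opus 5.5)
Two things have to be checked: that the pole set stays a positive distance from the new rays, and that the model exponentials keep the signs needed to open lenses. Both are approached by regarding $\Sigma^{\mathrm P}(\alpha)$ as a small perturbation of the $\alpha=\pi/12$ configuration.

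For \eqref{4.9}, $\widetilde{\mathcal Z}$ is finite and, by Assumption~\ref{ass:1.2}(a), separated from $\Sigma\cup\{0\}$. So $\operatorname{dist}(\widetilde{\mathcal Z},\Sigma\cup\{0\})>0$, and only the rays of $\Sigma^{\mathrm P}(\alpha)$ need attention. Each point $\lambda\in\widetilde{\mathcal Z}$ has a well-defined argument, and the set $\{\arg\lambda\}$ is finite and hence closed. The rays of $\Sigma^{\mathrm P}(\alpha)$ have arguments $m\pi/3+\alpha$ and $(m+1)\pi/3-\alpha$. As $\alpha$ ranges over a small interval $(\pi/12-\epsilon,\pi/12+\epsilon)$, these arguments sweep disjoint arcs, and only finitely many $\alpha$ put a ray through some $\arg\lambda$. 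I would pick $\alpha$ in that interval avoiding these values. Then every ray misses $\widetilde{\mathcal Z}$. The distance from a point $\lambda\ne0$ to a ray from $0$ is $|\lambda|\sin|\arg\lambda-\phi|$ when the angle difference is below $\pi/2$, so it is positive, and finiteness gives $d_{\mathrm P}>0$. The set $\widetilde{\mathcal Z}$ is invariant under rotation by $\omega$ and conjugation, and $\Sigma^{\mathrm P}(\alpha)$ is invariant under $k\mapsto\omega k$ and $k\mapsto\bar k$. Hence it suffices to check one representative sector, namely the rays at angles $\alpha$ and $\pi/3-\alpha$ in $D_1$, and the exceptional set of $\alpha$ stays finite.

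For the sign claim, in the rescaled variable I use the formula for $\operatorname{Re}\Theta_{21}(u+iv;s)$ computed just above. Writing $v=qu$, it becomes $-\sqrt3\,qu^5\bigl(9-18q^2+\tfrac95q^4\bigr)+\sqrt3\,s\,qu$. Since $9-18q^2+\tfrac95q^4$ is positive on $[0,\tan(\pi/12)]$, continuity keeps it at least some $c>0$ on $[0,\tan(\pi/12)+\epsilon']$. A ray at angle $\alpha$ slightly above $\pi/12$ therefore still satisfies $\operatorname{Re}\Theta_{21}\le C_M-cu^4v$, uniformly in $|s|\le M$. I would restrict $\alpha$ to $\alpha\le\pi/12$, or to $\alpha$ within $\epsilon'$ of it, and check the same for the companion ray at $\pi/3-\alpha$. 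There the relevant exponential is $e^{\pm\Theta_{32}}$ or $e^{\pm\Theta_{31}}$, obtained by $\mathbb Z_3$ rotation and Schwarz reflection, so the estimate transfers by symmetry. Lens regions lying between $\Sigma$ and $\Sigma^{\mathrm P}$ have opening below $\pi/12+\epsilon'$, so the same bound holds throughout them.

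The main obstacle is compatibility between the two constraints. The sign requirement confines $\alpha$ to a small open interval around $\pi/12$, while the pole-avoidance requirement removes only finitely many values; their intersection is therefore nonempty. A further subtlety is the claim ``strict decay'' at large $|\lambda|$ in the unscaled variable. There $e^{\Theta}$ must decay for $\lambda$ at the poles' own scale, and factors carried through any conjugation of the residue conditions must not change sign. I would handle this by noting that \eqref{4.8} ensures that no pole sits on a zero-rate line of the phase. Poles lying inside the opened lenses would then be absorbed by the reduction used in Section~\ref{sec3}, with exponentially small circle jumps, so the choice of contour never meets a pole.
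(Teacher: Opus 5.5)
Your proposal is correct and follows the paper's own argument. Strict decay is an open condition in $\alpha$ near $\pi/12$, by continuity of the positive factor $9-18q^2+\tfrac95q^4$; and since $\widetilde{\mathcal Z}$ is finite and already separated from $\Sigma\cup\{0\}$, discarding the finitely many $\alpha$ whose rays meet $\widetilde{\mathcal Z}$ yields \eqref{4.9}. One small correction: the ray at $\pi/3-\alpha$ is not in the $\mathbb Z_3$/Schwarz orbit of the ray at $\alpha$, so symmetry alone does not transfer the bound. Instead use $\Theta_{32}(\lambda)=\Theta_{21}(\omega\lambda)$, which carries that ray to the ray at $\pi-\alpha$, where the displayed formula for $\operatorname{Re}\Theta_{21}(u+iv;s)$ is even in $u$ and so gives the same bound (your closing paragraph on pole reduction is not needed for this lemma).
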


\begin{proof}
	Strict decay persists near \(\alpha=\pi/12\). Excluding the finitely many angles whose rays meet \(\widetilde{\mathcal Z}\) gives \eqref{4.9}. For empty spectrum set \(d_{\rm P}=+\infty\).
\end{proof}

\begin{lemma}\label{lem:4.2}
	Let $\alpha_0,\alpha_1$ belong to the admissible neighborhood in Lemma~\ref{lem:4.1}. The corresponding normalized model RH problems are bijectively equivalent in the local Smirnov $E^2$ class at the common vertex and have the same coefficient $M_1^{\rm P}$ at infinity.
\end{lemma}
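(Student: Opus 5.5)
The plan is to relate the two model problems by an explicit piecewise-constant (in each thin wedge) matrix transformation supported in the sectors swept out when the contour angle moves from \(\alpha_0\) to \(\alpha_1\). Both contours \(\Sigma^{\rm P}(\alpha_0)\) and \(\Sigma^{\rm P}(\alpha_1)\) consist of twelve rays issuing from the common vertex \(0\). For each \(m\), the rays at angles \(m\pi/3+\alpha_0\) and \(m\pi/3+\alpha_1\) bound a thin wedge \(\Omega_m\), and similarly for the rays at \((m+1)\pi/3-\alpha_i\). By Lemma~\ref{lem:4.1}, none of these wedges meets \(\widetilde{\mathcal Z}\) or \(\Sigma\), since both angles lie in the admissible neighborhood where \eqref{4.9} holds uniformly. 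In the model problem each jump on a lens ray is a triangular factor of the form \(I\pm f\,e^{\pm\Theta_{ij}(\lambda;s)}E_{ij}\), where \(f\) is a constant frozen at the vertex (\(\widetilde r_j(0)\) from \eqref{4.7}, or its conjugate). It is an entire function of \(\lambda\) on the wedge.

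Given a solution \(M^{(0)}\) for \(\alpha_0\), define \(M^{(1)}=M^{(0)}T\). Here \(T\) equals the analytic continuation of the corresponding triangular factor (or its inverse, according to orientation) inside \(\Omega_m\), and \(T=I_3\) elsewhere. Inside \(\Omega_m\), \(T\) has no jump, so \(M^{(1)}\) is analytic there. Its jump on the old ray cancels, and on the new ray it acquires exactly the \(\alpha_1\)-jump. This uses only the fact that the jump matrices are restrictions of a single entire matrix function. By the choice of wedges, \(T\) commutes with the \(\mathcal A\)- and \(\mathcal B\)-reductions, so reduction compatibility is kept. The inverse map uses \(T^{-1}\), which gives the bijection.

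The two points needing care are the behavior at infinity and at the vertex. At infinity, Lemma~\ref{lem:4.1} guarantees that the relevant exponential retains its strict decay sign throughout \(\Omega_m\): the bound \(|e^{\Theta_{21}}|\le C_Me^{-cu^4v}\) proved above holds on the whole closed admissible angular range, and the same holds for its rotations. Hence \(T-I_3=O(e^{-c|\lambda|^5})\) in \(\Omega_m\), so the normalization \(M\to I_3\) is preserved. The \(k^{-1}\)-coefficient is also unchanged: computing \(M_1^{\rm P}\) along a direction outside all wedges gives \(T\equiv I_3\) there, and the exponential decay inside the wedges does not alter the asymptotic expansion.

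The main obstacle is the vertex, namely showing that local Smirnov \(E^2\) membership is preserved. Near \(0\), \(T\) and \(T^{-1}\) are bounded, since \(|\Theta_{ij}|\) is bounded on compact sets, and they are analytic in each closed wedge. Multiplication by a bounded analytic matrix on a Lipschitz sector therefore maps \(E^2\) to \(E^2\), and the boundary values of \(M^{(1)}\) are those of \(M^{(0)}\) times bounded functions. I would first verify that the nontangential limits and the \(L^2\) integrability on small arcs are preserved under this multiplication. Then I would use the cyclic product identity, the mSK analogue of \(v_1^0\cdots v_6^0=I_3\), encoded in Assumption~\ref{ass:1.2}(b) through \(\bar a+b+a\bar b=0\), to check that the transformed jumps still close up consistently at the vertex. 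With that consistency, uniqueness in \(E^2\) for each problem (a Liouville argument using \(\det=1\)) upgrades the correspondence to a bijection.
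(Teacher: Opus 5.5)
Your proposal is correct and is essentially the paper's argument: right-multiply by the analytic continuation $V^{\pm1}$, $V=e^{\operatorname{ad}H}S$, of each constant-Stokes jump on its swept sector. Then $V^{\pm1}-I_3=O(e^{-c|\lambda|^5})$ at infinity preserves the normalization and $M_1^{\rm P}$, boundedness of $V^{\pm1}$ near $0$ preserves the local $E^2$ class, and reversing the moves gives the inverse. Your remaining steps are unnecessary here: the model problem is pole-free, vertex boundedness is inherited directly from $M^{(0)}$, and the explicit maps $T,T^{-1}$ already give the bijection without the cyclic identity or a uniqueness argument. Also, your claim that the swept wedges avoid $\widetilde{\mathcal Z}$ does not follow from Lemma~\ref{lem:4.1}, since its admissible set may omit finitely many angles; this is harmless only because the model has no poles.
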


\begin{proof}
	Move each ray through a disjoint swept sector, multiplying by \(V^{-1}\), \(V=e^{\operatorname{ad}H}S\), on that sector. This transfers the jump. Since \(V^{\pm1}-I_3=O(e^{-c|\lambda|^5})\) at infinity and \(V^{\pm1}\) is bounded at \(0\), the normalization, \(M_1^{\rm P}\), and local Smirnov class are preserved. Reverse moves give the inverse.
\end{proof}

Fix such an \(\alpha\) and write \(\Sigma^{\rm P}=\Sigma^{\rm P}(\alpha)\). By Lemma~\ref{lem:4.2}, the canonical deformation preserves \(M_1^{\rm P}\).
For the orbits with positive growth rate, introduce the corresponding diagonal divisors and set
\[
D_{\rm P}(k)
=
\prod_{\rho_{\tilde\kappa_j}>0}D_{\tilde\kappa_j}(k)
\prod_{\rho_{\tilde\eta_\ell}>0}D_{\tilde\eta_\ell}(k)
=
\operatorname{diag}\bigl(\pi_1(k),\pi_2(k),\pi_3(k)\bigr).
\]
Then
\[
\det D_{\mathrm{P}}=1,\qquad D_{\mathrm{P}}(0)=I_3,\qquad D_{\mathrm{P}}(k)=I_3+\frac{D_{1}^{\mathrm{P}}}{k}+O(k^{-2}),
\]
where $D_1^{\rm P}$ is diagonal and independent of $x,t$.

Let \(m^{(0)}=mD_{\rm P}\). Divisor switching conjugates decaying residues and reverses growing channels. Nonresonance and \(x=O(t^{1/5})\) give
\[
	\max_{\zeta \in \widetilde{\mathcal{Z} }} \left( \parallel C_{\zeta}^{(0)}\parallel +\parallel \partial _xC_{\zeta}^{(0)}\parallel \right) \le Ce^{-ct}.
\]
Retain \(D_{\rm P}\) for its endpoint variation. In disjoint invariant pole disks set \(T_{\rm cir}=I_3-C_\zeta^{(0)}/(k-\zeta)\), and set it to \(I_3\) elsewhere. Then \(m^{(1)}=m^{(0)}T_{\rm cir}\) has circle jumps \(I_3+C_\zeta^{(0)}/(k-\zeta)\), which are \(O(e^{-ct})\) with one \(x\)-derivative. The reductions and \((m_1)_{13}\) are preserved.

The six rays and their angles are
\[
\alpha _n=\frac{(n-1)\pi}{3},\qquad \Sigma _{n}^{0}=e^{i\alpha _n}\mathbb{R} _+,\qquad n=1,...,6.
\]
Set \(d_1(r)=1-|\widetilde r_1(r)|^2\) and
\(d_2(r)=1-|\widetilde r_2(-r)|^2\). The cofactor identities make them
continuous and nonzero for \(r>0\), with limit one at infinity. By
\eqref{4.7}, \(d_1(0)=d_2(0)=3/(1+A+A^{-1})>0\), so
\(\inf_{r\ge0}\min\{d_1(r),d_2(r)\}>0\).
For \(i\ne j\), put
\[
D_{ij}(d)=I_3+(d-1)E_{ii}+(d^{-1}-1)E_{jj}.
\]
The reverse Gauss identity
\[
(I_3+aE_{ij})(I_3+bE_{ji})
=\left(I_3+\frac{b}{d}E_{ji}\right)D_{ij}(d)
 \left(I_3+\frac{a}{d}E_{ij}\right),\qquad d=1+ab,
\]
gives \(v_n=\widehat B_n\mathcal D_n\widehat A_n\), where
\(\mathcal D_1=\operatorname{diag}(d_1^{-1},d_1,1)\),
\(\mathcal D_2=\operatorname{diag}(1,d_2,d_2^{-1})\), and the remaining
\(\mathcal D_n\) follow by \(\mathbb Z_3\)-rotation. Since
\(\prod_{n=1}^6\mathcal D_n=I_3\), their diagonal jumps are removed by the
standard Cauchy exponential \(\Delta_{\rm P}\), normalized at infinity and
with determinant one. The endpoint expansion, together with the cofactor and
reduction identities, cancels the possible \(k\log k\) term and yields
\begin{equation}\label{4.10}
\Delta _{\mathrm{P}}(k)=C_{n}^{\Delta}(I_3+O(k)),\qquad
\partial _k\Delta _{\mathrm{P}}(k)=O(1),\qquad k\rightarrow 0,\quad k\in D_n,
\end{equation}
where \(C_n^\Delta\) is diagonal and invertible.
Set \(\mu=m^{(1)}\Delta_{\rm P}^{-1}\). On each original ray,
\begin{equation}\label{4.11}
	v_{n}^{\Delta}=L_nU_n,\qquad L_n=(\Delta _{\mathrm{P}})_-\widehat{B}_n(\Delta _{\mathrm{P}})_{-}^{-1},\qquad U_n=(\Delta _{\mathrm{P}})_+\widehat{A}_n(\Delta _{\mathrm{P}})_{+}^{-1}.
\end{equation}
Open the lenses by setting
\[
R^{(2)}=\begin{cases}
	L_n,&		\alpha _n-\alpha <\mathrm{arg}k<\alpha _n,\\
	U_{n}^{-1},&		\alpha _n<\mathrm{arg}k<\alpha _n+\alpha ,\\
	I_3,&		\mathrm{otherwise}.\\
\end{cases}
\]
This transfers the jumps to the twelve outward rays \(\Sigma_j^{\rm P}\) in counterclockwise order. Figure~\ref{fig:SK-Painleve-rays} uses the limiting angle \(\pi/12\), while the analysis uses the admissible nearby \(\alpha\).

\begin{figure}[htbp]
	\centering
	\begin{tikzpicture}[scale=0.90,>=stealth,line cap=round,line join=round]
\definecolor{SKPBlue}{HTML}{4477AA}
\definecolor{SKPGray}{HTML}{999999}

\tikzset{
	modelray/.style={
		SKPBlue,
		line width=0.9pt
	},
	oldray/.style={
		SKPGray!80,
		dashed,
		line width=0.45pt
	},
	raylabel/.style={
		font=\scriptsize
	}
}
		
		\foreach \ang in {0,60,120,180,240,300}{
			\draw[oldray] (0,0)--(\ang:2.85);
		}
		
		\foreach \j in {1,...,12}{
			\pgfmathsetmacro{\ang}{15+30*(\j-1)}
			\draw[modelray,->] (\ang:0.12)--(\ang:2.62);
			\node[raylabel] at (\ang:2.88) {$S_{\j}$};
		}
		
		\fill (0,0) circle (1.25pt);
		\node[raylabel,below right] at (0,0) {$0$};
	\end{tikzpicture}
	\caption{The twelve outward-oriented rays $\Sigma_j^{\rm P}$ of the Painlev\'e model and the associated Stokes matrices $S_j$.  Dashed rays indicate the original six-ray contour $\Sigma$.}
	\label{fig:SK-Painleve-rays}
\end{figure}

Set \(\sigma=\overline{\widetilde r_1(0)}\) and
\(\tau=\widetilde r_2(0)\). The Stokes matrices are generated by
\[
\begin{gathered}
	S_1=I_3+\sigma E_{21},\quad
	S_2=I_3+\tau E_{32},\quad
	S_3=I_3-\overline\tau E_{23},\quad
	S_4=I_3-\overline\sigma E_{31},\\
	S_{j+4}=\mathcal A^{-1}S_j\mathcal A,\qquad j=1,\ldots,8.
\end{gathered}
\]
They also satisfy $S_{13-j}=\mathcal B\overline{S_j^{-1}}\mathcal B$. Combined with $H(\lambda;s)=s\lambda\Lambda+\frac95\lambda^5\Lambda^2$, the Painlev\'e model jumps are
\[
	V_{j}^{\mathrm{P}}(\lambda ;s)=e^{\mathrm{ad}H(\lambda ;s)}S_j,\qquad \lambda \in \Sigma _{j}^{\mathrm{P}}.
\]
Direct multiplication gives
\[
S_1S_2\cdots S_{12}=I_3\quad \Longleftrightarrow \quad \sigma +\tau +\overline{\sigma }\,\overline{\tau }=0.
\]
Equation~\eqref{4.7} verifies this identity, hence compatibility with the mSK data.

\begin{rhp}
	Find \(M^{\rm P}(\lambda;s)\) such that
	\begin{enumerate}
		\item[(i)] \(M^{\rm P}\) is analytic in \(\mathbb C\setminus\Sigma^{\rm P}\).
		\item[(ii)]
		\(M^{\rm P}_+=M^{\rm P}_-V_j^{\rm P}\) on \(\Sigma_j^{\rm P}\).
		\item[(iii)] \(M^{\rm P}\) is bounded at \(\lambda=0\) and satisfies the inherited reductions.
		\item[(iv)] As $\lambda \rightarrow \infty$,
		\begin{equation}\label{4.12}
			M^{\mathrm{P}}(\lambda ;s)=I+\frac{M_{1}^{\mathrm{P}}(s)}{\lambda}+O(\lambda ^{-2}).
		\end{equation}
	\end{enumerate}
\end{rhp}

\begin{lemma}\label{lem:4.3}
	Let $\Gamma$ be a finite union of rays with pairwise positive angles, and assume
	\[
	h,\partial _sh\in L^2(\Gamma \cap B_{2r_0})\cap L^1(\Gamma \setminus B_{r_0}).
	\]
	Let $M=I_3+Ch$, and let $B_j$ be bounded sectorial analytic matrices with bounded first $s$-derivatives. If $F_j=MB_j$ glue across the rays in $B_{r_0}\setminus\{0\}$, then
	\begin{equation}\label{4.13}
		\mathop {\mathrm{sup}} \limits_{0<r<r_0/2}\int_{|\lambda |=r}{\left( \parallel F(\lambda ;s)\parallel ^2+\parallel \partial _sF(\lambda ;s)\parallel ^2 \right) |d\lambda |}\le C_{\Gamma}(1+\mathcal{H} _{0}^{2}+\mathcal{H} _{1}^{2}),
	\end{equation}
	where
	\[
	\mathcal{H} _m=\parallel \partial _{s}^{m}h\parallel _{L^2(\Gamma \cap B_{2r_0})}+\parallel \partial _{s}^{m}h\parallel _{L^1(\Gamma \setminus B_{r_0})},\qquad m=0,1.
	\]
	The constant depends only on $\Gamma$, $r_0$, and the stated bounds for $B_j$.
\end{lemma}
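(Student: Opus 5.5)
The plan is to reduce \eqref{4.13} to a uniform-in-$r$ circle estimate for the Cauchy transform $C h$, and to handle $\partial_s$ by the same argument applied to $\partial_s h$.

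\emph{Step 1: pointwise reduction.} Since $F_j=MB_j$, $\partial_sF_j=(C\partial_sh)B_j+(I_3+Ch)\partial_sB_j$, and $B_j$, $\partial_sB_j$ are uniformly bounded, on each sector
\[
\|F\|+\|\partial_sF\|\le C\bigl(1+|Ch|+|C\partial_sh|\bigr).
\]
The circle $|\lambda|=r$ has length $2\pi r\le\pi r_0$, so the constant term contributes $O(1)$ to the left side of \eqref{4.13}. It therefore suffices to show, for $g\in\{h,\partial_sh\}$,
\[
\sup_{0<r<r_0/2}\int_{|\lambda|=r}|Cg|^2\,|d\lambda|\le C_\Gamma\Bigl(\|g\|_{L^2(\Gamma\cap B_{2r_0})}^2+\|g\|_{L^1(\Gamma\setminus B_{r_0})}^2\Bigr).
\]
The gluing hypothesis only ensures that $F$ is well defined on the punctured disk, so the circle integral makes sense; the estimate itself is sectorwise.

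\emph{Step 2: splitting the density.} Write $g=g\chi_{B_{r_0}}+g\chi_{\Gamma\setminus B_{r_0}}=:g_{\rm in}+g_{\rm out}$.

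For the far part, $|\lambda|<r_0/2$ and $|\zeta|\ge r_0$ give $|\zeta-\lambda|\ge r_0/2$. Hence $|Cg_{\rm out}(\lambda)|\le (\pi r_0)^{-1}\|g\|_{L^1(\Gamma\setminus B_{r_0})}$, and integrating over a circle of length at most $\pi r_0$ gives the required bound.

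For the near part, $g_{\rm in}\in L^2(\Gamma)$ has compact support, and this is where the uniformity in $r$ must be proved.

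\emph{Step 3: uniform circle bound for $Cg_{\rm in}$ (the main obstacle).} Let $\Omega$ be one of the finitely many open sectors cut out by $\Gamma$. Its boundary is a union of two rays at a positive angle, so $\Omega$ is a Lipschitz (chord-arc) domain.

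First, by the standard theory (boundedness of the Cauchy operator on chord-arc curves, Coifman--McIntosh--Meyer), $Cg_{\rm in}|_\Omega$ lies in the Smirnov class $E^2(\Omega)$. Its nontangential maximal function $N(Cg_{\rm in})$ satisfies
\[
\|N(Cg_{\rm in})\|_{L^2(\partial\Omega)}\le C_\Gamma\|g_{\rm in}\|_{L^2(\Gamma)}.
\]

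Second, I would check that arclength $\mu_r$ on the arc $\{|\lambda|=r\}\cap\Omega$ is a Carleson measure for $\Omega$ with a constant independent of $r$. For any boundary point $z$ and radius $\rho$,
\[
\mu_r\bigl(B(z,\rho)\bigr)\le\min(2\pi r,\,C\rho)\le C\rho,
\]
because a circle meets a disk of radius $\rho$ in an arc of length at most $C\rho$.

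Third, the Carleson embedding theorem for $E^2$ of a Lipschitz domain, which one can obtain by transplanting to the half-plane via $z=\lambda^{\pi/\theta}$ and using the equivalence of $E^2$ norms, then gives
\[
\int_{|\lambda|=r,\ \lambda\in\Omega}|Cg_{\rm in}|^2\,|d\lambda|\le C\|N(Cg_{\rm in})\|_{L^2}^2\le C_\Gamma\|g\|_{L^2(\Gamma\cap B_{2r_0})}^2,
\]
uniformly in $r$.

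As a fallback that avoids this machinery, I would use a scaling argument. The kernel relating a density on a ray to the values of its Cauchy transform on the circle of radius $r$ is homogeneous of degree $-1$ under dilation, and it stays nonsingular away from the ray directions. A Mellin/Schur test, together with a separate local estimate at the points where the circle crosses $\Gamma$ (using boundedness of the Cauchy operator near the crossings), should give the same uniform bound.

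\emph{Step 4: conclusion.} Summing the near and far bounds over the finitely many sectors, applying them with $g=h$ and $g=\partial_sh$, and inserting the result into Step 1 yields \eqref{4.13}. The constant depends only on $\Gamma$ (through the Cauchy and Carleson constants), on $r_0$, and on the bounds for $B_j$ and $\partial_sB_j$.
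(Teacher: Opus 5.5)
Your proposal is correct and follows the same route as the paper. Both split $h$ into near and far parts, bound the far Cauchy transform by the $L^1$ norm, and prove a uniform-in-$r$ $L^2$ bound on the circles $|\lambda|=r$ for the near part; this is then applied to $h$ and $\partial_sh$ and multiplied by $B_j$.

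Your Carleson-embedding argument on each sector fills in the paper's one-line appeal to angular separation. The one point to tighten is that the rays of $\Gamma$ not bounding a given sector $\Omega$ are not covered by chord-arc theory for $\partial\Omega$. They are handled directly: for $\lambda\in\Omega$ one has $|\zeta-\lambda|\gtrsim|\zeta|+|\lambda|$, and Cauchy--Schwarz then gives an $r$-independent circle bound.
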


\begin{proof}
Split \(h=h_{\rm loc}+h_{\rm far}\). The \(L^1\) norm controls the far
Cauchy transform, while angular separation gives uniformly bounded \(L^2\)
Carleman restrictions to \(|\lambda|=r\). Applying these bounds to
\(h,\partial_sh\), and then multiplying by \(B_j\), proves \eqref{4.13}.
Gluing makes \(F\) single-valued.
\end{proof}

We now verify the small-Stokes condition used below. Set
\[
	C_*=\frac{7}{2}+2\sqrt{6}+\sqrt{2}+\frac{2}{\sqrt{3}},\qquad \Phi _M=\frac{4\sqrt{3}}{5}M\left( \frac{M}{9\cos(\pi /12)} \right) ^{1/4},
\]
and
\[
q(W)=\left(\frac{2(\cosh W-1)}{1+2\cosh W}\right)^{1/2}.
\]
Define
\[
\eta _M=\frac{e^{-\Phi _M}}{2C_*},\qquad \varepsilon _{\mathrm{P}}(M)=\mathrm{ar}\cosh \left( \frac{2+\eta _{M}^{2}}{2(1-\eta _{M}^{2})} \right) .
\]

\begin{prop}\label{prop:4.2}
	Let $W_0=\int_{\mathbb R}w_0$. If
	\begin{equation}\label{4.14}
		\kappa_M:=C_*q(W_0)e^{\Phi_M}<1,
	\end{equation}
	then the Painlev\'e RH problem is uniquely solvable for $s\in[-M,M]$, with Beals-Coifman resolvent bounded by $(1-\kappa_M)^{-1}$. The solution, its inverse, and the required
	$s$-derivatives are uniformly bounded on pole-free compact subsets. Moreover,
	\[
	|W_0|\le \varepsilon _{\mathrm{P}}(M)\quad \Longrightarrow \quad \kappa _M\le \frac{1}{2}.
	\]
\end{prop}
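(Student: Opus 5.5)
The plan is a small-norm argument for the Beals-Coifman operator on $\Sigma^{\rm P}$. The smallness will come only from the size of the Stokes data, which I will express through $W_0$. The exponential factors $e^{\operatorname{ad}H}$ on the rays are not small, but they are uniformly bounded by $e^{\Phi_M}$ for $|s|\le M$.

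\emph{Step 1: size of the Stokes entries.} Every $S_j$ differs from $I_3$ by one off-diagonal entry from the list $\pm\sigma,\pm\tau,\pm\bar\sigma,\pm\bar\tau$, where $\sigma=\overline{\widetilde r_1(0)}$ and $\tau=\widetilde r_2(0)$. From \eqref{4.7} and the identity $|a+b\omega+c\omega^2|^2=a^2+b^2+c^2-ab-bc-ca$ for real $a,b,c$, with $A=e^{W_0}$, one gets
\[
|\widetilde r_1(0)|^2=\frac{A^2+A^{-2}-A-A^{-1}}{(1+A+A^{-1})^2}
=\frac{2(\cosh W_0-1)(2\cosh W_0+1)}{(1+2\cosh W_0)^2}=q(W_0)^2,
\]
and the same value for $|\widetilde r_2(0)|^2$. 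Hence $\|S_j-I_3\|=q(W_0)$ for every $j$.

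\emph{Step 2: the exponential weight on the rays.} On $\Sigma_j^{\rm P}$ the nontrivial entry of $V_j^{\rm P}-I_3$ is $e^{\pm\Theta_{ik}(\lambda;s)}$ times a Stokes entry. By the $\mathbb Z_3$ and Schwarz symmetries it suffices to treat one representative ray, $\lambda=re^{i\alpha}$ with $\alpha$ near $\pi/12$. From the formula for $\operatorname{Re}\Theta_{21}$,
\[
\operatorname{Re}\Theta_{21}\le \sqrt3\, r\sin\alpha\,\bigl(|s|-c_\alpha r^4\bigr),\qquad c_\alpha>0 .
\]
The positive part is confined to $r\le (M/c_\alpha)^{1/4}$. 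Maximizing in $r$, and using $\alpha\le\pi/12$ so that $\cos\alpha\ge\cos(\pi/12)$ enters through $r^4$, should give $\sup_{\Sigma^{\rm P}}|e^{\pm\Theta}|\le e^{\Phi_M}$. The constant $\frac{4\sqrt3}{5}M\bigl(M/(9\cos(\pi/12))\bigr)^{1/4}$ is exactly the value of a maximum of this form. Beyond that range the weight decays like $e^{-c r^5}$. Consequently $w=V^{\rm P}-I_3$ satisfies
\[
\|w\|_{L^\infty}\le q(W_0)\,e^{\Phi_M},
\]
and $w\in L^1\cap L^2$.

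\emph{Step 3: Neumann series.} Next I bound the Cauchy projection $C_-$ on the twelve-ray star $\Sigma^{\rm P}(\alpha)$ by the explicit constant $C_*$. I would do this by splitting it as $\tfrac12$ times the Hilbert transform on each line, plus the cross-ray Carleman-type kernels between rays at fixed positive angles. The latter are Mellin-diagonalizable, and their norms can be summed explicitly; the pieces $2\sqrt6,\sqrt2,2/\sqrt3$ in $C_*$ should be exactly these angle contributions. Then
\[
\|C_w\|_{L^2\to L^2}\le C_*\|w\|_\infty\le\kappa_M<1,
\]
so $I-C_w$ is invertible with $\|(I-C_w)^{-1}\|\le(1-\kappa_M)^{-1}$, and $M^{\rm P}=I+C(\mu w)$ solves the problem with coefficient $M_1^{\rm P}$ as in \eqref{4.12}. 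By Lemma~\ref{lem:4.2} the result does not depend on the admissible $\alpha$.

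Uniqueness follows from $\det M^{\rm P}\equiv1$, which holds by the unit-determinant jumps and Liouville's theorem. The quotient of two solutions is then entire and bounded at $0$ in the local $E^2$ sense, so it equals $I_3$.

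For $s$-derivatives, $\partial_s w$ carries factors $\lambda\omega^m$ that are still damped by $e^{-cr^5}$. Differentiating the resolvent equation therefore bounds $\partial_s\mu$ in $L^2$. Lemma~\ref{lem:4.3} then gives uniform control near $\lambda=0$, and the Cauchy representation bounds $M^{\rm P}$, its inverse (through the determinant) and their derivatives on compact sets away from the contour.

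\emph{Step 4: the final implication.} Since $\kappa_M=C_*e^{\Phi_M}q(W_0)$ and $\eta_M=e^{-\Phi_M}/(2C_*)$, the condition $\kappa_M\le\tfrac12$ is equivalent to $q(W_0)\le\eta_M$. Writing $c=\cosh W_0$, this is the inequality $2(c-1)\le\eta_M^2(1+2c)$. This is linear in $c$, so it is equivalent to
\[
c\le\frac{2+\eta_M^2}{2(1-\eta_M^2)}.
\]
Since $\cosh$ is even and increasing on $[0,\infty)$, this is exactly $|W_0|\le\varepsilon_{\rm P}(M)$.

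The main obstacle I expect is Step 3: obtaining the explicit norm $C_*$ for the Cauchy projection on the star of rays, uniformly for $\alpha$ in the admissible neighborhood. A secondary point is checking that the maximum in Step 2 is attained with the stated constant across all twelve rays. Here I would rely on the symmetries, so that only the $\Theta_{21}$ ray has to be treated.
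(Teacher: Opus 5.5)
Your proposal is correct and follows the paper's proof step for step. The paper's proof likewise consists of the bound $\|V^{\rm P}-I_3\|_\infty\le q(W_0)e^{\Phi_M}$ from \eqref{4.7} and the phase envelope, the Neumann series with $\|C_w\|\le C_*\|V^{\rm P}-I_3\|_\infty\le\kappa_M$ (the paper does not derive the star Cauchy-projection constant $C_*$ either), vertex removal with $\det M^{\rm P}\equiv1$, differentiation for the $s$-derivatives, and the inversion of $q(W_0)\le\eta_M$; your explicit computations $|\widetilde r_1(0)|=|\widetilde r_2(0)|=q(W_0)$ and $\cosh W_0\le(2+\eta_M^2)/(2(1-\eta_M^2))$ correctly fill in what the paper leaves implicit.

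Two small points. First, state explicitly that boundedness of $I_3+C(\mu w)$ at $\lambda=0$ rests on the Stokes cycle $S_1\cdots S_{12}=I_3$ (equivalently $\sigma+\tau+\overline\sigma\,\overline\tau=0$, verified by \eqref{4.7}), since that is the gluing hypothesis of Lemma~\ref{lem:4.3}. Second, your Step 2 has the mechanism slightly wrong. Since $\operatorname{Re}\Theta_{21}(re^{i\alpha};s)=\sqrt3\bigl(sr\sin\alpha-\tfrac95r^5\sin5\alpha\bigr)$, the constant $\Phi_M=\max_{r\ge0}\sqrt3\bigl(Mr-\tfrac95\cos(\pi/12)r^5\bigr)$ arises by discarding the factor $\sin\alpha\le1$, not through $\cos\alpha\ge\cos(\pi/12)$. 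The true maximum at $\alpha=\pi/12$ is $\sin(\pi/12)^{5/4}\Phi_M$, and it is this margin that absorbs small changes of $\alpha$ near $\pi/12$.
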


\begin{proof}
	Ray deformation preserves \(M_1^{\rm P}\). The phase envelope and \eqref{4.7} give \(\|V^{\rm P}-I_3\|_\infty\le q(W_0)e^{\Phi_M}\), with the required weighted bounds. The Beals-Coifman operator has norm at most \(\kappa_M<1\), so its Neumann series gives the resolvent bound. The Cauchy formula, Lemma~\ref{lem:4.3}, and the Stokes cycle remove the vertex; hence \(\det M^{\rm P}=1\). Differentiation gives the stated bounds, and the definition of \(\eta_M\) proves the last claim.
\end{proof}

For the Beals-Coifman density \(\mu_{\rm BC}\), differentiation of its equation gives, uniformly for \(|s|\le M\),
\begin{equation}\label{4.15}
	\parallel \mu _{\mathrm{BC}}-I_3\parallel _2+\parallel M_{1}^{\mathrm{P}}\parallel \le \frac{C_Mq(W_0)}{1-\kappa _M},\qquad \parallel \partial _s\mu _{\mathrm{BC}}\parallel _2+\parallel \partial _sM_{1}^{\mathrm{P}}\parallel \le \frac{C_Mq(W_0)}{(1-\kappa _M)^2}.
\end{equation}
At distance \(\delta\) from the rays, the corresponding model bounds gain only \(\delta^{-1}\). Set $p(s)=3\bigl(M_1^{\rm P}(s)\bigr)_{13}$, Appendix~\ref{app1} shows that $p$ satisfies the fourth-order Painlev\'e equation used below.

\subsection{Analysis of the mixed \texorpdfstring{$\bar\partial$}{d-bar}-RH problem}
Set
\[
\epsilon =(5t)^{-1/5},\qquad \widehat{\mu }(\lambda ;x,t)=\mu (x,t;\epsilon \lambda ),
\]
and consider
\[
\Omega_1^{\rm P}=\{\lambda=u+iv:u>0,\ 0<v<u\tan\alpha\}.
\]
Writing $\Delta_{\rm P}=\operatorname{diag}(\delta_{{\rm P},1},\delta_{{\rm P},2},\delta_{{\rm P},3})$, the factor $U_1$ in \eqref{4.11} is
\[
U_1(k)=I_3+\Gamma_{1,+}(k)e^{\theta_{21}(k)}E_{21},
\]
where
\begin{equation}\label{4.16}
	\Gamma _{1,+}(k)=\frac{\overline{\widetilde{r}_1(k)}}{d_1(k)}\frac{\pi _1(k)}{\pi _2(k)}\frac{\delta _{\mathrm{P},2,+}(k)}{\delta _{\mathrm{P},1,+}(k)}.
\end{equation}
Since \(\pi_1(0)/\pi_2(0)=1\) and \(\delta_{{\rm P},2,+}(0)/\delta_{{\rm P},1,+}(0)=d_1(0)\), we have \(\Gamma_{1,+}(0)=\overline{\widetilde r_1(0)}=\sigma\).

For $\lambda=re^{i\varphi}\in\Omega_1^{\rm P}$, define
\[
\mathcal{R} _{1,+}(\lambda )=\sigma +\bigl[ \Gamma _{1,+}(\epsilon u)-\sigma \bigr] \cos \!\left( \frac{\pi \varphi}{2\alpha} \right) .
\]
This equals \(\Gamma_{1,+}(\epsilon u)\) on the real ray and \(\sigma\) on the model ray. The endpoint regularity and \eqref{4.10}, \eqref{4.16} give \(|\Gamma_{1,+}(k)-\sigma|+|k|\,|\Gamma_{1,+}'(k)|\le C|k|\), hence
\begin{equation}\label{4.17}
	|\bar{\partial}\mathcal{R} _{1,+}(\lambda )|\le C\epsilon .
\end{equation}

Choose an invariant cutoff \(\chi_{\rm p}\), equal to one on both ray systems and zero near the poles, such that
\[
\bar{\partial}_{\lambda}\chi _{\mathrm{p}}(\epsilon \lambda )=O(\epsilon ),\qquad \epsilon =(5t)^{-1/5}.
\]

In $\Omega_1^{\rm P}$ set
\[
R^{(2)}(\lambda )=I_3-\chi _{\mathrm{p}}(\epsilon \lambda )\mathcal{R} _{1,+}(\lambda )e^{\Theta _{21}(\lambda ;s)}E_{21}.
\]
It equals \(U_1^{-1}\) on \(\mathbb R_+\) and produces on the model ray
\[
(R_{-}^{(2)})^{-1}R_{+}^{(2)}=I_3+\sigma e^{\Theta _{21}}E_{21}=V_{1}^{\mathrm{P}}.
\]
Obtain the other extensions by reduction. With
\[
m^{(2)}=\widehat\mu R^{(2)},
\]
the twelve Painlev\'e jumps replace the original ones, with exponentially small pole circles, and
\[
\bar{\partial}m^{(2)}=m^{(2)}W^{(2)},\qquad W^{(2)}=(R^{(2)})^{-1}\bar{\partial}R^{(2)}.
\]

After scaling and conjugation by \(M^{\rm P}\Delta_{\rm P}(\epsilon\lambda)\),
the pole-circle jumps remain exponentially small. Small-norm theory gives
\begin{equation}\label{4.18}
E^{\rm disc}(\lambda)=I_3+\frac{E_1^{\rm disc}}{\lambda}+O(\lambda^{-2}),
\qquad E_1^{\rm disc},\ \partial_xE_1^{\rm disc}=O(e^{-c't}).
\end{equation}
Set \(m^{(3)}=m^{(2)}(M^{\rm P})^{-1}(E^{\rm disc})^{-1}\). Then the RH jumps
are removed and the remaining solid Cauchy equation has kernel \(W^{(3)}\).
By \eqref{4.17}, the phase decay, and the uniform model bounds,
\[
|W^{(3)}(u+iv)|\le C(5t)^{-1/5}e^{-cu^4v}
\]
on a representative wedge, with the other wedges obtained by symmetry.
Hence the associated solid Cauchy operator is \(O((5t)^{-1/5})\) on
\(L^\infty\), so it is invertible for large \(t\). Expanding at infinity gives
\begin{equation}\label{4.19}
m^{(3)}(\lambda )=I+\frac{m_{1}^{(3)}(x,t)}{\lambda}+O(\lambda ^{-2}),
\end{equation}
with
\begin{equation}\label{4.20}
m_{1}^{(3)}=O\bigl((5t)^{-1/5}\bigr).
\end{equation}
Differentiating the solid Cauchy equation and using \(\partial_xs=(5t)^{-1/5}\)
yields
\begin{equation}\label{4.21}
|\partial_xm_1^{(3)}|=O((5t)^{-2/5}).
\end{equation}

\subsection{Asymptotic analysis in the Painlev\'e region}
We now reconstruct mSK and apply the Miura map.

Undoing the transformations and using \eqref{4.12}, \eqref{4.18}, \eqref{4.19} gives
\begin{equation}\label{4.22}
	\lim_{k\rightarrow \infty} k\,m_{13}(x,t;k)=(5t)^{-1/5}\left[ \bigl( M_{1}^{\mathrm{P}}(s) \bigr) _{13}+\bigl( m_{1}^{(3)}(x,t) \bigr) _{13} \right] +O(e^{-ct}).
\end{equation}
The first coefficients of \(D_{\rm P},\Delta_{\rm P}\) are diagonal, \(R^{(2)}\) contributes nothing at infinity, and \eqref{4.18} controls the pole circles.

The mSK reconstruction formula and \eqref{4.20} give
\begin{equation}\label{4.23}
	w(x,t)=3\lim_{k\rightarrow \infty} k\,m_{13}(x,t;k)=(5t)^{-1/5}p(s)+O\bigl( (5t)^{-2/5} \bigr) .
\end{equation}
Reality of the mSK potential and \eqref{4.23} imply
\begin{equation}\label{4.24}
	p(s)\in\mathbb R,\qquad s\in[-M,M].
\end{equation}
The coefficient reductions give \(a_2=\omega a_3\), \(p=3a_3\). Appendix~\ref{app1} then yields
\begin{equation}\label{4.25}
	p^{(4)}=5p(p')^2+5p'p''+sp+5p^2p''-p^5.
\end{equation}
Thus RH compatibility selects the zero integration constant.

Differentiating \eqref{4.22} and using \eqref{4.21}, \eqref{4.23} gives
\begin{equation}\label{4.26}
	\begin{aligned}
		w_x&=(5t)^{-2/5}p'(s)+O((5t)^{-3/5}),\\
		w^2&=(5t)^{-2/5}p(s)^2+O((5t)^{-3/5}).
	\end{aligned}
\end{equation}
Consequently, the Miura map yields
\[
u(x,t)=\frac16(5t)^{-2/5} \bigl[p'(s)-p(s)^2\bigr]+O((5t)^{-3/5}).
\]
Equations~\eqref{4.23}, \eqref{4.25}, \eqref{4.26} prove Theorem~\ref{th2}{\rm(iii)}, and the Miura map proves Theorem~\ref{th1}{\rm(iii)}. The nonresonant discrete part is exponentially small by \eqref{4.18}.

\section{Long-time asymptotics in Sector IV}\label{sec5}
Fix \(M>1\), set \(J=(-\infty,-M^{-1}]\), \(\mathcal C_J=\{(x,t):t\ge t_0,\ x/t\in J\}\), and \(X=-x\). Then
\begin{equation}\label{5.1}
	\xi=x/t\le-M^{-1},\qquad
	X\ge t/M,\qquad X\le |x|+t\le(M+1)X.
\end{equation}
There are no nonzero stationary points on \(\Sigma\), leaving pole selection and the singular endpoint.

For \(r>0\), put \(d_1(r)=1-|r_1(r)|^2\), \(d_2(r)=1-|r_2(-r)|^2\), \(\gamma_1=d/c^2\), \(\gamma_2=-c/d^2\), and \(\mathfrak a=p_0/c+q_0/d\). Lemma~\ref{lem:2.1} and the cofactor identities give
\[
	d_1(r)=\gamma_1r\{1+\mathfrak a r+O(r^2)\},\qquad
	d_2(r)=\gamma_2r\{1-\mathfrak a r+O(r^2)\}.
\]
Here \(c<0<d\). The Schwartz Volterra recursion gives arbitrary-order endpoint expansions, and \(r_j/d_j=O(r^{-1})\) requires partial indices at \(0\).

Reduce the poles before opening lenses. Equation~\eqref{3.13} gives the common orbit velocity and
\(\operatorname{Re}\Phi_{21}(\xi;\lambda)=\sqrt3\,\operatorname{Im}\lambda\,(\xi-v_\lambda)\). Denote by \(\mathcal Z_{\rm ret}(J)\) the complete orbits with
\(v_\lambda\le-M^{-1}\), including equality. Define
\begin{equation}\label{5.2}
	D_{>J}(k)=
	\prod_{v_{\kappa_j}>-M^{-1}}D_{\kappa_j}(k)
	\prod_{v_{\eta_\ell}>-M^{-1}}D_{\eta_\ell}(k),
\end{equation}
using \eqref{3.15}-\eqref{3.16}. Then \(D_{>J}(0)=I_3\), \(D_{>J}=I_3+D_{>J}^{[1]}/k+O(k^{-2})\), and \(\partial_xD_{>J}^{[1]}=0\). The uniform phase gap for \(v_\lambda>-M^{-1}\) gives
\begin{equation}\label{5.3}
	\sum_{\lambda\notin\mathcal Z_{\rm ret}(J)}\|D_x^q(v_\lambda^{\rm cir}-I_3)\|_{L^1\cap L^2\cap L^\infty}\le Ce^{-\gamma_J(X+t)}.
\end{equation}
Thus the outer problem retains exactly \(\mathcal Z_{\rm ret}(J)\). Write
\[
\Delta_{\rm IV}=\operatorname{diag}(\Delta_{{\rm IV},1}, \Delta_{{\rm IV},2},\Delta_{{\rm IV},3}).
\]
Absorb it and the regular part of \eqref{5.2} once into the retained constants:
\begin{equation}\label{5.4}
	\begin{aligned}
		\widehat c_j(J)
		&=c_j\frac{\Delta_{{\rm IV},1}(\kappa_j)}
		{\Delta_{{\rm IV},2}(\kappa_j)}
		\frac{(D_{>J})_2(\kappa_j)}{(D_{>J})_1(\kappa_j)},
		&&v_{\kappa_j}\in J,\\
		\widehat d_\ell(J)
		&=d_\ell\frac{\Delta_{{\rm IV},2}(\eta_\ell)}
		{\Delta_{{\rm IV},1}(\eta_\ell)}
		\frac{(D_{>J})_1(\eta_\ell)}{(D_{>J})_2(\eta_\ell)},
		&&v_{\eta_\ell}\in J.
	\end{aligned}
\end{equation}
All factors in \eqref{5.4} are evaluated off the rays and pole disks. For the resulting datum \(\widehat\sigma(J)\), let
\[
R_J(k;x,t)=M(k;x,t\mid\widehat\sigma(J))=I_3+\frac{R_{J,1}(x,t)}{k}+O(k^{-2}).
\]
We assume the negative-cone version of the balanced gap:
\[
	\inf_{\substack{t\ge t_0\\x/t\le-M^{-1}}}|\det\mathbb A_J^{\rm bal}(x,t)|\ge d_J>0.
\]
Lemma~\ref{lem:3.1} then uniformly bounds \(R_J^{\pm1}\) and one \(x\)-derivative off the pole disks.

Set \(\alpha_n=(n-1)\pi/3\), \(n=1,\ldots,6\). For \(\operatorname{cyc}(a,b,c)=(b,c,a)\), set
\[
A_1=(-1,1,0),\qquad A_2=(0,1,-1),\qquad A_{n+2}=\operatorname{cyc}A_n,\quad n=1,\ldots,4.
\]
Put \(\mathfrak d_n=d_1\) for odd \(n\), \(\mathfrak d_n=d_2\) for even \(n\), and set
\[
\mathcal{D} _n(r)=\mathrm{diag}\bigl( \mathfrak{d} _n(r)^{A_{n1}},\mathfrak{d} _n(r)^{A_{n2}},\mathfrak{d} _n(r)^{A_{n3}} \bigr) .
\]
The real logarithm is used in
\[
\boldsymbol{\delta }_{\mathrm{IV}}(k)=\frac{1}{2\pi i}\sum_{n=1}^6{A_n\int_0^{\infty}{\frac{\log \mathfrak{d} _n(r)}{r-e^{-i\alpha _n}k}\,dr,\qquad \Delta _{\mathrm{IV}}(k)}}=\mathrm{diag}\mathrm{(}e^{\delta _{\mathrm{IV},1}},e^{\delta _{\mathrm{IV},2}},e^{\delta _{\mathrm{IV},3}}).
\]
The integral converges by integrability of \(\log r\) at \(0\) and Schwartz decay.

The Plemelj construction gives
\begin{equation}\label{5.5}
	(\Delta_{\rm IV})_+
	=(\Delta_{\rm IV})_-\mathcal D_n
	\quad\hbox{on }e^{i\alpha_n}\mathbb R_+,\qquad
	\det\Delta_{\rm IV}=1,\qquad
	\Delta_{\rm IV}(\infty)=I_3,
\end{equation}
and \(\Delta_{\rm IV}\) obeys both reductions. In \(D_n\),
\begin{equation}\label{5.6}
	\Delta _{\mathrm{IV}}(k)=C_nk^{B_n}\left[ I_3+\sum_{m=1}^L{\sum_{q=0}^{Q_m}{C_{n,mq}k^m(\log _nk)^q}}+O\bigl( |k|^{L+1}(1+|\log |k||)^{Q_{L+1}} \bigr) \right] ,
\end{equation}
where \(C_n\) is constant, diagonal, and invertible. The index vectors are
\[
	B_1=(-1,0,1),\qquad B_2=(-1,1,0),\qquad
	B_{n+2}=\operatorname{cyc}B_n,\quad n=1,\ldots,4.
\]
They satisfy
\begin{equation}\label{5.7}
	B_n-B_{n-1}=A_n,\qquad B_0=B_6.
\end{equation}
Here \(k^{B_n}\) uses the branch in \(D_n\). Splitting the Plemelj integrals and using the three displayed sum identities cancels the leading logarithms, while the opposite \(\mathfrak a\)-coefficients cancel \(k\log k\). Transport across ray \(n\) changes the index by \(A_n\), proving \eqref{5.6}-\eqref{5.7}.

For example, the reverse Gauss factorization on the first ray is
\begin{equation}\label{5.8}
	v_1=\left(I_3-\frac{r_1}{d_1}e^{-\theta_{21}}E_{12}\right)
	D_{21}(d_1)
	\left(I_3+\frac{\overline r_1}{d_1}e^{\theta_{21}}E_{21}\right),
\end{equation}
where \(D_{21}(d_1)=\operatorname{diag}(d_1^{-1},d_1,1)\). Reduction gives the other identities. Set
\[
M^{(\Delta)}=M\Delta_{\rm IV}^{-1},\qquad N^{(\Delta)}=\ell^{\rm T}M\Delta_{\rm IV}^{-1}.
\]
Equations~\eqref{5.5}, \eqref{5.8} remove the diagonal factors, and \(B_n\) compensates the \(O(k^{-1})\) amplitudes.

The reconstruction row supplies the partial-index zeros. In \(D_1\), \(q_3M_1^{[3]}=J_-^{[3]}\), \(q_3=d/k+O(1)\), and \(\ell^{\rm T}e=0\) give
\[
N_1,N_2=O(1),\qquad N_3=O(k),\qquad k\to0,\quad k\in D_1.
\]
Thus \(N^{(\Delta)}\) is bounded, with reductions cycling the \(O(k)\) entry. Only the reconstruction row is used at the endpoint, while a full matrix formal cocycle verifies compatibility.

Let
\[
\Omega _j=\left\{ k:\frac{(2j-1)\pi}{12}<\mathrm{arg}k<\frac{(2j+1)\pi}{12} \right\} ,\qquad j=0,...,11.
\]
After absorbing \(C_n\) into the frames, the endpoint factors occur in the cyclic order
\[
U_1,L_2,U_2,L_3,U_3,L_4,U_4,L_5,U_5,L_6,U_6,L_1.
\]
Conjugation by \(H=\operatorname{diag}(\gamma_1^{-1},1,(\gamma_1\gamma_2)^{-1})\) gives the universal leading coefficient tuple.
Thus no small endpoint parameter is assumed.
Transport each rotated, branched ray germ to \(z\), with entries in \(\mathcal R_{\rm for}=\mathbb C((z))[\log z]\). Only phase-free amplitudes are truncated. Distinct rays are not identified radially, and the carrier is exact.

Denote the twelve transported phase-free triangular factors, in cyclic order, by
\(g_1,\ldots,g_{12}\). The sector identities, diagonal conjugation, and reverse
factorization imply the formal compatibility relation
\begin{equation}\label{5.9}
g_1g_2\cdots g_{12}=I_3 \quad\hbox{in }GL_3(\mathcal R_{\rm for}).
\end{equation}
Consequently the partial products
\[
T_j^{\rm IV}=C_{\rm ep}g_1\cdots g_j,\qquad
C_{\rm ep}=\begin{pmatrix}1&0&0\\0&1&0\\a_6&-b_2&1\end{pmatrix},
\]
close cyclically at \(j=12\). Direct division-free multiplication shows that
their off-diagonal entries occur only in the cyclic root spaces dictated by
the adjacent factors, so the transformed reconstruction row is bounded at the
vertex. This argument also covers the degenerate case in which an intermediate
Gauss denominator vanishes.

For \(\lambda_1=\omega,\lambda_2=\omega^2,\lambda_3=1\), every allowed root obeys
\begin{equation}\label{5.10}
\operatorname{Re}\{-e^{i\arg k}(\lambda_p-\lambda_q)\}
\le-c_*,\qquad c_*=\sqrt3\sin\frac{\pi}{12}.
\end{equation}
Set \(\rho_X=X^{-1/2}\). For \(|k|\le2\rho_X\), the fifth-order term is a
relative \(O_M(X^{-2})\) perturbation, hence every allowed carrier is bounded by
\(e^{-c_*X|k|/2}\) for large \(X\). If \(T_j^{[L]}\) denotes the common-order
truncation in \(\Omega_j\), exact adjacent quotients telescope and \eqref{5.9}
closes the cycle. Thus the internal-edge mismatch is
\[
T_{j-1}^{[L]}v_j^{\rm amp}(T_j^{[L]})^{-1}
=I_3+O\bigl(|k|^{L+1}(1+|\log|k||)^{Q_L}\bigr),
\]
with the exact carrier left unexpanded.
A superscript $J$ denotes conjugation by $R_J$, using its boundary values for RH weights. The exact carrier, \eqref{5.10}, and the formal remainder give
\begin{equation}\label{5.11}
	\sum_{q=0}^1{\parallel D_{x}^{q}w_{\mathrm{ep}}^{J}}\parallel _{L^p}\le C\rho _{X}^{L+1+1/p}(\log X)^{Q_L}+CX^{-1/p}e^{-c\sqrt{X}},\qquad p=1,2,\infty .
\end{equation}

Outside \(|k|<\rho_X\), open reverse-factorization lenses satisfying, in every reduced wedge,
\begin{equation}\label{5.12}
	|e^{\pm\theta_{pq}(k)}| \le\exp\{-c(X+t|k|^4)\operatorname{dist}(k,\Sigma)\}.
\end{equation}
Extend each explicit \(k^m(\log_nk)^q\) term in \eqref{5.6} analytically with its branch. For the remainder \(\mathcal R_L\),
\begin{equation}\label{5.13}
	\partial_u^m\mathcal R_L(u)=O\bigl(u^{L+1-m}(1+|\log u|)^{Q_L}\bigr),\qquad 0\le m\le L+1.
\end{equation}
Extend only \(\mathcal R_L\) almost analytically, and join the endpoint and exterior multipliers by their exact quotient. Thus cutoffs never differentiate lower-order terms.

Figure~\ref{fig:SK-sector-IV-endpoint-exterior} summarizes the deformation.

\begin{figure}[htbp]
	\centering
	\definecolor{SKBlue}{HTML}{4477AA}
	\definecolor{SKCoral}{HTML}{EE6677}
	\definecolor{SKGreen}{HTML}{228833}
	\definecolor{SKYellow}{HTML}{CCBB44}
	\definecolor{SKCyan}{HTML}{66CCEE}
	\definecolor{SKPurple}{HTML}{AA3377}
	\definecolor{SKGray}{HTML}{999999}
	
\subfloat[Endpoint cells and shrinking circle.]{%
	\begin{minipage}[t]{0.48\textwidth}
		\centering
		\begin{tikzpicture}[scale=0.83,>=stealth,line cap=round,line join=round]
			\def\rhoep{1.55}
			\def\Router{2.42}
			\path[use as bounding box] (-3.15,-2.67) rectangle (3.15,2.50);
			
			\tikzset{
				oldray/.style={SKGray!75,dashed,line width=0.48pt},
				endpointedge/.style={SKPurple,line width=0.76pt},
				smalllabel/.style={font=\tiny}
			}
			
			\fill[SKYellow!16] (0,0) circle (\rhoep);
			
			\foreach \ang in {0,60,120,180,240,300}{
				\draw[oldray,->] (0,0)--(\ang:\Router);
			}
			
			\foreach \j in {0,...,11}{
				\pgfmathsetmacro{\edgeang}{15+30*\j}
				\pgfmathsetmacro{\cellang}{30*\j}
				\pgfmathtruncatemacro{\gj}{\j+1}
				
				\draw[endpointedge,->]
				(\edgeang:0.13)--(\edgeang:\rhoep);
				
				\node[smalllabel]
				at (\edgeang:1.82)
				{$g_{\gj}$};
				
				\node[smalllabel,fill=SKYellow!16,inner sep=0.4pt]
				at (\cellang:1.03)
				{$\Omega_{\j}$};
			}
			
			\draw[SKPurple!75!black,line width=0.70pt]
			(0,0) circle (\rhoep);
			
			\fill (0,0) circle (1.0pt);
			
			\node[smalllabel,anchor=west,inner sep=0.6pt]
			at (332:1.73) {$|k|=\rho_X$};
			
			\node[smalllabel,right] at (0:\Router) {$\Sigma$};
			
			\node[smalllabel]
			at (0,-2.44)
			{no nonzero stationary points on $\Sigma$};
		\end{tikzpicture}
	\end{minipage}%
}
	\hfill
\subfloat[Exterior lenses and source decomposition.]{%
	\begin{minipage}[t]{0.48\textwidth}
		\centering
		\begin{tikzpicture}[scale=0.83,>=stealth,line cap=round,line join=round]
			\def\rhoin{0.58}
			\def\rhoout{0.82}
			\def\Router{2.42}
			\path[use as bounding box] (-3.15,-2.67) rectangle (3.15,2.50);
			\def\lensang{8}
			
			\tikzset{
				oldray/.style={SKGray!75,dashed,line width=0.48pt},
				lenslip/.style={SKBlue,line width=0.78pt},
				dbarfill/.style={fill=SKCyan!18,draw=none},
				retained/.style={
					circle,
					fill=SKBlue,
					draw=SKBlue!75!black,
					inner sep=1.05pt
				},
				omitted/.style={
					circle,
					fill=white,
					draw=SKGray!85!black,
					line width=0.58pt,
					inner sep=1.08pt
				},
				cutoff/.style={
					circle,
					fill=white,
					draw=SKGray!70,
					line width=0.40pt,
					minimum size=7.0pt,
					inner sep=0pt
				},
				polecircle/.style={
					SKGray!80,
					dashed,
					line width=0.48pt
				},
				smalllabel/.style={font=\tiny}
			}
			
			\foreach \rot in {0,60,120,180,240,300}{
				\begin{scope}[rotate=\rot]
					\fill[dbarfill]
					(\rhoout,0)--(\Router,0)
					--({\Router*cos(\lensang)},{\Router*sin(\lensang)})
					--({\rhoout*cos(\lensang)},{\rhoout*sin(\lensang)})
					--cycle;
					
					\fill[dbarfill]
					(\rhoout,0)--(\Router,0)
					--({\Router*cos(\lensang)},{-\Router*sin(\lensang)})
					--({\rhoout*cos(\lensang)},{-\rhoout*sin(\lensang)})
					--cycle;
					
					\draw[lenslip,->]
					({\rhoout*cos(\lensang)},{\rhoout*sin(\lensang)})
					--({\Router*cos(\lensang)},{\Router*sin(\lensang)});
					
					\draw[lenslip,->]
					({\rhoout*cos(\lensang)},{-\rhoout*sin(\lensang)})
					--({\Router*cos(\lensang)},{-\Router*sin(\lensang)});
				\end{scope}
			}
			
			\fill[SKCoral!18] (0,0) circle (\rhoout);
			\fill[SKYellow!18] (0,0) circle (\rhoin);
			
			\foreach \ang in {0,60,120,180,240,300}{
				\draw[oldray,->] (0,0)--(\ang:\Router);
			}
			
			\draw[densely dotted,line width=0.68pt]
			(0,0) circle (\rhoin);
			
			\draw[SKCoral!70!black,line width=0.50pt]
			(0,0) circle (\rhoout);
			
			\foreach \ang in {42,78,162,198,282,318}{
				\node[cutoff] at (\ang:1.46) {};
				\node[retained] at (\ang:1.46) {};
			}
			
			\foreach \ang in {23,97,143,217,263,337}{
				\draw[polecircle,<-]
				(\ang:2.02) circle (0.105);
				\node[omitted]
				at (\ang:2.02) {};
			}
			
			\fill (0,0) circle (1.0pt);
			
			\node[smalllabel,fill=SKYellow!18,inner sep=0.4pt]
			at (270:0.30) {$w_{\rm ep}$};
			
			\node[smalllabel,fill=SKCoral!18,inner sep=0.3pt,scale=0.80]
			at (90:0.69) {$W_{\rm cut}$};
			
			\node[smalllabel,fill=SKCyan!18,inner sep=0.4pt]
			at (0:1.65) {$W_{\rm gen}$};
			
			\node[smalllabel,anchor=south east,inner sep=1.2pt]
			at (68:2.05) {$w_{\rm lip}$};
			
		\end{tikzpicture}
	\end{minipage}%
}
	
	\caption{Sector IV endpoint-exterior deformation. (a) Endpoint cells $\Omega_j$, cyclic factors $g_j$, and the shrinking circle $|k|=\rho_X$, with $\Sigma$ dashed. (b) The sources $w_{\rm ep}$, $W_{\rm gen}$, $w_{\rm lip}$, and $W_{\rm cut}$ on the endpoint disk, lenses, lips, and annulus. Solid and open sets are complete retained and omitted pole orbits with representative velocities $v_\lambda\le-M^{-1}$ and $v_\lambda>-M^{-1}$, respectively. Dashed clockwise circles mark the exponentially small omitted-pole jumps. All pole disks are $\bar\partial$-free.}
	\label{fig:SK-sector-IV-endpoint-exterior}
\end{figure}

For $L\ge K\ge1$ and $1\le p\le\infty$, the remaining conjugated weights satisfy
\begin{equation}\label{5.14}
	\begin{aligned}
		\sum_{q=0}^1\|D_x^qW_{\rm cut}^J\|_{L^p(\mathbb C)}
		&\le C\rho_X^{L+2/p}(\log X)^{Q_L},\\
		\sum_{q=0}^1\|D_x^qW_{\rm gen}^J\|_{L^p(\mathbb C)}
		&\le C_{K,p}X^{-K-1/p}(\log X)^{Q_K},\\
		\sum_{q=0}^1\|D_x^qw_{\rm lip}^J\|_{L^p}
		&\le CX^{-1/p}e^{-c\sqrt X}.
	\end{aligned}
\end{equation}
Equation~\eqref{5.12}, the remainder bound \eqref{5.13}, annular scaling,
and tangential integration by parts prove
\eqref{5.14}, including one \(x\)-derivative. With \eqref{5.3} and
\eqref{5.11}, the differentiated RH and solid Cauchy equations are invertible
for \(L\ge2N+4\), \(K\ge N+2\). Their first row-error coefficients
\(E_1^{\rm RH},E_1^{(3)}\) satisfy
\[
\sum_{q=0}^1\left(|D_x^qE_1^{\rm RH}|+|D_x^qE_1^{(3)}|\right)\le C_{N,M}X^{-N}.
\]
Undoing the transformations, the local multipliers vanish at infinity, and
\(\Delta_{\rm IV}^{[1]},D_{>J}^{[1]}\) are diagonal and independent of
\(x\). Reconstruction by \eqref{2.10} and \eqref{5.1} therefore gives
\begin{equation}\label{5.15}
u=u_J^{\rm out}+O((|x|+t)^{-N}),\qquad u_J^{\rm out}=-\frac12\partial_x(\ell^{\rm T}R_{J,1}e_3).
\end{equation}
Compatibility identifies \(u_J^{\rm out}\) with the reflectionless SK solution. For scalar simple-ray data \(v_j>0\), so \(R_J=I_3\) and
\[
u(x,t)=O((|x|+t)^{-N}),\qquad \xi<0,\quad M^{-1}\le|\xi|.
\]
This proves Theorem~\ref{th1}{\rm(iv)}.

\section{Soliton resolution and asymptotic stability}\label{sec6}
The rational orbit problem need not satisfy the scalar SK origin completion. Thus \(u_{\rm sol}(x,t\mid\sigma_d(I;\xi))\) denotes a moving algebraic outer profile, while \(u_{\rm sol}(x,t;\sigma^{\rm phys})\) and \(u_{\rm sol}(x,t;\widehat\sigma(I))\) denote fixed reflectionless SK solutions.

\subsection{Asymptotic decomposition of reflectionless \texorpdfstring{$N$}{N}-soliton solutions}
\leavevmode\par
\noindent Let
\[
\sigma ^{\mathrm{phys}}=\{(\eta _j,\beta _j):1\le j\le N\},\qquad 0<\eta _1<\cdots <\eta _N,\quad \beta _j>0,
\]
be the scalar \(P=0\) data of \eqref{3.8}. The scalar closure conditions \eqref{3.9}-\eqref{3.10} give \(u_{\rm sol}=\partial_x^2\log\Delta_N\), with positive \(\Delta_N\) from \eqref{3.7}. Hence the solution is globally regular and Schwartz for fixed \(t\).

Under matrix compatibility, if
\[
R(k;x,t\mid\mathfrak G_N(\sigma^{\rm phys}))=I_3+\frac{R_1(x,t)}k+O(k^{-2}),
\]
then Proposition~\ref{prop:3.2} gives the same solution as \(-\frac12\partial_x(\ell^{\rm T}R_1e_3)\).

We use $p_j$, $v_j$, $y_j$, and $A_{ij}$ introduced before \eqref{3.7}.

For \(N=1\), \(\Delta_1=1+y\) gives the exact solitary wave
\[
u_{\mathrm{sol}}(x,t;\eta ,\beta )=\frac{3\eta ^2}{4}\operatorname{sech}^2\!\left[ \frac{\sqrt{3}\eta}{2}\left( x-9\eta ^4t+\frac{\log \beta}{\sqrt{3}\eta} \right) \right] .
\]
Since \(dv_j/d\eta_j>0\), the scalar simple-ray velocities satisfy \(0<v_1<\cdots<v_N\), so each velocity group is one orbit.

For \(i<j\), use the interaction factors \(A_{ij}\) defined above, and define
\begin{equation}\label{6.1}
	\beta_j^+=\beta_j\prod_{i<j}A_{ij},\qquad \beta_j^-=\beta_j\prod_{i>j}A_{ji},\qquad x_j^\pm=-p_j^{-1}\log\beta_j^\pm.
\end{equation}
Empty products are one, and the one-orbit solutions are
\[
U_j^\pm(x,t)=\frac{p_j^2}{4} \operatorname{sech}^2\!\left(\frac{p_j(x-v_jt-x_j^\pm)}2\right).
\]
Compatibility identifies \(U_j^\pm=-\frac12\partial_x(\ell^{\rm T}R_{j,1}^\pm e_3)\).

\begin{theorem}[Soliton resolution conjecture]\label{th3}
	Under the scalar closure conditions \eqref{3.9}-\eqref{3.10},
	\eqref{3.7} defines a global real Schwartz-in-\(x\) SK \(N\)-soliton. Let \(\beta_j^\pm,x_j^\pm\) be given by \eqref{6.1}, and let \(U_j^\pm\) be the profiles above. For every integer \(q\ge0\), there exist \(c,C_q>0\) such that
	\begin{equation}\label{6.2}
		\sup_{x\in\mathbb R}
		\left|
		\partial_x^q
		\left[
		u_{\mathrm{sol}}(x,t;\sigma^{\mathrm{phys}})
		-\sum_{j=1}^N U_j^\pm(x,t)
		\right]
		\right|
		\le C_qe^{-c|t|},
		\qquad t\to\pm\infty.
	\end{equation}
	The corresponding position shifts are
	\begin{equation}\label{6.3}
		x_j^+-x_j^-
		=\frac{1}{p_j}
		\left(
		\sum_{i>j}\log A_{ji}
		-\sum_{i<j}\log A_{ij}
		\right).
	\end{equation}

Let
	\(r=r(M)\), \(\widehat U_{j,M}^+\), and
	\(\widehat U_{j,I}^+\) be as defined in
	Section~\ref{sec:positive-resolution}. Under the hypotheses of Theorem~\ref{th1},
	\[
	u(x,t)
	=\sum_{j=r+1}^N\widehat U_{j,M}^+(x,t)
	+t^{-1/2}h_M+O(t^{-3/4}),
	\]
	uniformly for \(x/t\ge M\). For the chosen scalar preimage
	of \(\widehat\sigma(I)\),
	\[
	u(x,t)
	=\sum_{j=1}^{N(I)}\widehat U_{j,I}^+(x,t)
	+t^{-1/2}h+O(t^{-3/4}),
	\]
	locally uniformly for \(M^{-1}\le x/t<M\).
\end{theorem}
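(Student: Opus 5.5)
The scalar part of Theorem~\ref{th3} depends only on the explicit $\tau$-function \eqref{3.7}. I would prove it directly from the Hirota-type structure of $\Delta_N$, and then feed the result into Theorem~\ref{th1}.

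\emph{Step 1 (regularity).} Since $A_{ij}\in(0,1)$, $\beta_j>0$ and $y_j>0$, every term of $\Delta_N=\sum_S A(S)\prod_{j\in S}y_j$ is positive. Hence $\Delta_N\ge1$, and $\log\Delta_N$ is real analytic in $(x,t)$. The closure conditions \eqref{3.9}-\eqref{3.10} identify $u_{\rm sol}=\partial_x^2\log\Delta_N$ with the scalar $P=0$ reconstruction, so it solves \eqref{1.1}. For fixed $t$, $\Delta_N$ behaves like $1$ as $x\to-\infty$ and like $A(\{1,\dots,N\})\prod_j y_j$ as $x\to+\infty$, with exponentially small relative corrections and the same for all derivatives. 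Therefore $\partial_x^2\log\Delta_N$ decays exponentially, and $u_{\rm sol}(\cdot,t)\in\mathcal S$.

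\emph{Step 2 (localization near each velocity).} Fix $t\to+\infty$ and split $\mathbb R$ into windows around the lines $x=v_jt$, separated at the midpoints $x=\tfrac12(v_j+v_{j+1})t$. This uses $0<v_1<\cdots<v_N$, which follows from $dv_j/d\eta_j>0$. In the window of $v_j$ we have $y_i\gg1$ for $i<j$, since those solitons moved slower and now lie to the left, and $y_i\ll1$ for $i>j$. The errors are $O(e^{-c t})$ uniformly, with $c$ proportional to $\min_i p_i\min|v_i-v_{i+1}|$. So
\[
\Delta_N=\Bigl(A(\{1,\dots,j-1\})\prod_{i<j}y_i\Bigr)\Bigl(1+y_j\prod_{i<j}A_{ij}\Bigr)\bigl(1+O(e^{-ct})\bigr),
\]
and the same holds after differentiation, because the error terms are finite sums of exponentials. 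The prefactor is $e^{\text{linear in }x}$ and is killed by $\partial_x^2\log$. This yields $U_j^+$ with $\beta_j^+=\beta_j\prod_{i<j}A_{ij}$. The other $U_i^+$ are $O(e^{-ct})$ in this window, so $u_{\rm sol}-\sum_iU_i^+=O(e^{-ct})$ there, with all $x$-derivatives. For $t\to-\infty$ the ordering reverses: $y_i\gg1$ for $i>j$. This gives $\beta_j^-=\beta_j\prod_{i>j}A_{ji}$. The shift formula \eqref{6.3} is then $x_j^+-x_j^-=-p_j^{-1}\log(\beta_j^+/\beta_j^-)$, which is immediate from \eqref{6.1}. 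The main bookkeeping point is uniformity across window boundaries. Near a boundary two factors $y_j,y_{j+1}$ are both exponentially far from $1$, so the same expansion applies with either neighbouring choice. I expect no real obstacle in this scalar part.

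\emph{Step 3 (connection to Theorem~\ref{th1}).} For $x/t\ge M$, Theorem~\ref{th1}(i) gives $u=u_{\rm sol}(x,t;\widehat\sigma_M)+t^{-1/2}h_M+O(t^{-3/4})$. Here $\widehat\sigma_M=\mathfrak G_{N-r}(\widehat\sigma_M^{\rm phys})$ is SK-realizable, so Proposition~\ref{prop:3.2} identifies the profile with $\partial_x^2\log\Delta_{N-r}$ for the data $\widehat\beta_{j,M}$ of \eqref{3.14}. Applying Step 2 to this $(N-r)$-soliton at $t\to+\infty$ replaces it by $\sum_{j>r}\widehat U_{j,M}^+$ up to $O(e^{-ct})$, which is absorbed into $O(t^{-3/4})$. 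The Sector II statement follows identically from Theorem~\ref{th1}(ii) and the chosen scalar preimage of $\widehat\sigma(I)$, with local uniformity in $x/t$.

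The delicate step is this identification: the frozen datum obtained from the moving outer model must coincide with the scalar data whose Step~2 profiles are $\widehat U^+$. I would take it from the moving-to-frozen comparison (Lemma~\ref{lem:6.1}) and the scalar-to-matrix compatibility property, rather than reprove it.
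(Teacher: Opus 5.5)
Your proposal is correct and follows essentially the same route as the paper. The paper also covers $\mathbb R$ by the cells between the midpoint velocities $m_j=(v_j+v_{j+1})/2$, factors out the dominant monomial ($A(\{1,\ldots,j-1\})\prod_{i<j}y_i$ for $t\to+\infty$, $A(\{j+1,\ldots,N\})\prod_{i>j}y_i$ for $t\to-\infty$), removes the affine-log prefactor with two $x$-derivatives to read off \eqref{6.1}--\eqref{6.3}, and then obtains the sectorial statements exactly as you do, by applying this decomposition to the frozen SK-realizable profiles $u_{\rm sol}(x,t;\widehat\sigma_M)$ and $u_{\rm sol}(x,t;\widehat\sigma(I))$ supplied by Theorem~\ref{th1} and the compact-cone freezing estimate.
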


\begin{re}
	Equations~\eqref{6.2}-\eqref{6.3} describe the asymptotic decomposition
	and position shifts of a reflectionless SK \(N\)-soliton. The sectorial
	formulas add the \(t^{-1/2}\) radiation generated by the continuous
	spectrum. Empty selections yield pure radiation, and unselected solitons
	are exponentially small. If \(v_j=M\), the soliton tube is restricted to
	\(x/t\ge M\).
\end{re}

Set \(m_j=(v_j+v_{j+1})/2\) for \(1\le j<N\), with \(m_0=-\infty\), \(m_N=+\infty\). For \(t>0\), factor \(A(\{1,\ldots,j-1\})\prod_{i<j}y_i\) from \(\Delta_N\) on \(m_{j-1}t\le x\le m_jt\). For \(t<0\), use the reversed cells \(m_jt\le x\le m_{j-1}t\) and factor \(A(\{j+1,\ldots,N\})\prod_{i>j}y_i\). The remaining factor is \(1+\widehat y_j^\pm+\rho_j^\pm\), where
\[
\widehat y_j^\pm=\beta_j^\pm e^{p_j(x-v_jt)},\qquad
\max_{0\le q\le m}\left|\partial_x^q\frac{\rho_j^\pm}{1+\widehat y_j^\pm}\right|\le C_m e^{-c|t|},\quad t\to\pm\infty,
\]
uniformly on the corresponding cells, for every fixed integer \(m\ge0\). This follows by grouping the terms in \eqref{3.7} according to the dominant subset. The positive prefactors have logarithms affine in \(x\), so two logarithmic derivatives remove them. The other soliton profiles are exponentially small on each cell. The cell cover proves \eqref{6.2}, and \eqref{6.1} gives \eqref{6.3}.

\subsection{Soliton resolution in Sectors I and II}\label{sec:positive-resolution}
Fix \(M>0\), and let \(r=r(M)\in\{0,\ldots,N\}\) be uniquely determined by
\[
v_r<M\le v_{r+1},\qquad v_0=-\infty,\qquad v_{N+1}=+\infty.
\]
Thus \(r<j\le N\) are retained, including equality at \(M\). Set
\[
\beta_j^{(M)}=\beta_j\prod_{i=1}^rA_{ij}.
\]
Then, for every fixed \(q\ge0\),
\[
\mathop {\mathrm{sup}} \limits_{x/t\ge M}\left| \partial _{x}^{q}\!\left[ u_{\mathrm{sol}}(x,t;\sigma ^{\mathrm{phys}})-u_{\mathrm{sol}}(x,t;\{(\eta _j,\beta _{j}^{(M)}):j>r\}) \right] \right|\le C_qe^{-c_Mt},
\]
where \(0<c_M<\min_{i\le r}p_i(M-v_i)\), with equality exact when \(r=0\). Factoring \(A(O_M)\prod_{i\in O_M}y_i\) leaves the corrected retained polynomial and an \(O(e^{-c_Mt})\) differentiated remainder.
The positive half-cone restriction is illustrated in Figure~\ref{fig:SK-soliton-selection}.

\begin{figure}[htbp]
	\centering
	
	\definecolor{SKBlue}{HTML}{4477AA}
	\definecolor{SKCoral}{HTML}{EE6677}
	\definecolor{SKGreen}{HTML}{228833}
	\definecolor{SKYellow}{HTML}{CCBB44}
	\definecolor{SKCyan}{HTML}{66CCEE}
	\definecolor{SKPurple}{HTML}{AA3377}
	\definecolor{SKGray}{HTML}{999999}
	
	\begin{tikzpicture}[
		x=1.05cm,
		y=0.80cm,
		>=stealth,
		line cap=round,
		line join=round
		]
		\tikzset{
			axis/.style={
				SKGray!85!black,
				line width=1.00pt
			},
			retained/.style={
				SKBlue,
				line width=1.10pt
			},
			omitted/.style={
				SKGray!80,
				densely dotted,
				line width=1.00pt
			},
			boundary/.style={
				SKPurple,
				dash pattern=on 4pt off 2.5pt,
				line width=1.00pt
			},
			smalllabel/.style={
				font=\scriptsize,
				inner sep=1pt
			}
		}
		
		\path[use as bounding box]
		(-0.42,-0.42) rectangle (6.40,5.12);
		
		\fill[SKCyan!16]
		(0,0)--(3.70,4.55)--(5.92,4.55)--(5.92,0)--cycle;
		
		\draw[axis,->]
		(-0.10,0)--(6.15,0)
		node[below=2pt,font=\footnotesize] {$x$};
		
		\draw[axis,->]
		(0,-0.10)--(0,4.92)
		node[left=2pt,font=\footnotesize] {$t$};
		
		\draw[boundary]
		(0,0)--(3.70,4.55)
		node[pos=0.60,sloped,above=3pt,smalllabel]
		{$\xi=M$};
		
		\draw[omitted,->]
		(0,0)--(2.30,4.55);
		
		\draw[retained,->]
		(0,0)--(4.92,4.55);
		
		\draw[retained,->]
		(0,0)--(5.92,2.95);
		
		\node[smalllabel,align=center]
		at (0.95,3.45)
		{$v_i<M$\\omitted};
		
		\node[smalllabel,align=center]
		at (4.68,3.25)
		{$v_j>M$\\retained};
		
		\node[smalllabel]
		at (4.62,0.75)
		{$x/t\ge M$};
		
	\end{tikzpicture}
	
	\caption{Positive half-cone selection in Sector I. For \(v_r<M\le v_{r+1}\), the region \(x/t\ge M\) retains \(j>r\) solitons and omits \(i\le r\) solitons, with modified constants
		\(\beta_j^{(M)}=\beta_j\prod_{i=1}^r A_{ij}\). Constant shifts are suppressed.}
\label{fig:SK-soliton-selection}
\end{figure}

For Theorem~\ref{th1}{\rm(i)}, assume that the complete nonzero discrete datum is of the scalar simple-ray form
\[
\sigma^{\rm phys}=\{(\eta_j,\beta_j):1\le j\le N\},\qquad 0<\eta_1<\cdots<\eta_N,\qquad \beta_j>0.
\]
Let \(r=r(M)\) be the index fixed at the beginning of this subsection, and define the retained moving datum by
\[
\sigma_M(\xi)=\mathfrak G_{N-r}\left(\left\{\left(\eta_j,\,\beta_j\tau_j(\xi)\prod_{i=1}^{r}A_{ij}\right):r<j\le N\right\}\right),\qquad \xi\ge M.
\]
Here \(\tau_j>0\) contains only the continuous-spectrum transmission factor, and \(\prod_{i=1}^rA_{ij}\) is the discrete restriction correction.

More precisely, the Sector~I part of Assumption~\ref{ass:1.1}{\rm(b)} requires that, for every \(H>\max\{M,V_*\}\), after the same pole switches
	\[
		\sigma_M(\xi)=\sigma_d(I_{M,H};\xi),\qquad I_{M,H}=[M,2H],\qquad M\le\xi\le2H.
	\]
	Thus \(\sigma_M\) is the continuation of the transformed compact-cone datum. It also requires
	\begin{equation}\label{6.4}
		\sup_{\substack{\xi\ge M\\r<j\le N}}\left\{\tau_j(\xi)+\tau_j(\xi)^{-1}+|\partial_\xi\tau_j(\xi)|\right\}<\infty.
	\end{equation}
The transmission freezing in \eqref{3.14} is scalar-to-matrix compatible, and the moving and frozen balanced residue systems lie in finitely many common single-valued \(C^1\) charts, with uniformly bounded resolvents and first \(x\)-derivatives. Empty retained sets correspond to the identity outer model.
Thus \eqref{3.12}, with \(N\) replaced by \(N-r\), defines
\[
u_{\rm sol}(x,t;\widehat\sigma_M)=u_{\rm sol}(x,t;\widehat\sigma_M^{\rm phys}).
\]

For \(r<j\le N\), define
\[
\widehat\beta_{j,M}^{+}=\widehat\beta_{j,M}\prod_{r<i<j}A_{ij}=\beta_j\tau_j(v_j)\prod_{i<j}A_{ij},\qquad\widehat x_{j,M}^{+}=-\frac{\log\widehat\beta_{j,M}^{+}}{p_j},
\]
and
\[
\widehat U_{j,M}^{+}(x,t)=\frac{p_j^2}{4}\operatorname{sech}^2\!\left(\frac{p_j(x-v_jt-\widehat x_{j,M}^{+})}{2}\right).
\]
When \(r=N\), the retained datum and the corresponding sum are empty.

\begin{lemma}\label{lem:6.1}
	Under the Sector~I conditions above, let
	\[
	\begin{aligned}
		M_M^{\rm mov}(k;x,t)
		&=I_3+\frac{M_{M,1}^{\rm mov}(x,t)}{k}+O(k^{-2}),\\
		M_M^{\rm fix}(k;x,t)
		&=I_3+\frac{M_{M,1}^{\rm fix}(x,t)}{k}+O(k^{-2}),
	\end{aligned}
	\]
	be the rational outer models with data \(\sigma_M(x/t)\) and \(\widehat\sigma_M\), respectively.  For all sufficiently large \(t\), both models are uniquely solvable and
	\begin{equation}\label{6.5}
		\sup_{\xi\ge M}\sum_{q=0}^{1}\left\|D_x^q\bigl(M_{M,1}^{\rm mov}-M_{M,1}^{\rm fix}\bigr)\right\|\le C_M\frac{\log(2+t)}{t},
	\end{equation}
	where \(D_x\) is the fixed-\(t\) total derivative.  At the six moving stationary points,
	\begin{equation}\label{6.6}
		\begin{aligned}
			& \underset{\xi \ge M}{\sup} \sum_{\lambda \in \mathcal{K}_0} \sum_{q=0}^1
			\left\{
			\| D_{x}^{q}[M_{M}^{\mathrm{mov}}(\lambda) - M_{M}^{\mathrm{fix}}(\lambda)] \|
			\right. \\
			&\qquad \left.
			+ \| D_{x}^{q}[(M_{M}^{\mathrm{mov}}(\lambda))^{-1} - (M_{M}^{\mathrm{fix}}(\lambda))^{-1}] \|
			\right\} \le C_M \frac{\log(2+t)}{t}.
		\end{aligned}
	\end{equation}
	Here evaluation at \(\lambda=\lambda(x/t)\) precedes \(D_x\).
	
	Every orbit with \(v_\lambda<M\) can be replaced by a decaying pole-circle jump satisfying
	\begin{equation}\label{6.7}
		\sum_{\lambda :\,v_{\lambda}<M}{\parallel \partial _{x}^{q}(v_{\lambda}^{\mathrm{cir}}}-I_3)\parallel _{L^p(\partial D_{\lambda})}\le C_Me^{-\gamma _Mt},\qquad p=1,2,\infty ,\quad q=0,1,
	\end{equation}
	where
	\[
	\gamma_M=\frac{\sqrt3}{2}\min_{\lambda:\,v_\lambda<M}|\operatorname{Im}\lambda|\,(M-v_\lambda)>0.
	\]
	If no orbit is discarded, set \(\gamma_M=+\infty\).  Consequently, uniformly for \(x/t\ge M\),
	\begin{equation}\label{6.8}
		-\frac{1}{2}D_x\bigl( \ell ^{\mathrm{T}}M_{M,1}^{\mathrm{mov}}e_3 \bigr) =u_{\mathrm{sol}}(x,t;\widehat{\sigma }_M)+O_M\!\left( \frac{\log\mathrm{(}2+t)}{t}+e^{-\gamma _Mt} \right) ,
	\end{equation}
	with the exponential term omitted when no orbit is discarded.
\end{lemma}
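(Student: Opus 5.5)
The plan is to compare the two rational outer models through their finite-dimensional balanced residue systems, and then to feed the comparison into the reconstruction formula. By the Sector~I part of Assumption~\ref{ass:1.1}{\rm(b)}, the moving datum \(\sigma_M(\xi)\) and the frozen datum \(\widehat\sigma_M\) are described in finitely many common single-valued \(C^1\) charts. In these charts the balanced systems \(\mathbb A^{\rm bal}_{\rm mov}(x,t)\mathbf r=\mathbf b_{\rm mov}\) and \(\mathbb A^{\rm bal}_{\rm fix}(x,t)\mathbf r=\mathbf b_{\rm fix}\) have uniformly bounded resolvents and first \(x\)-derivatives. Lemma~\ref{lem:3.1}, extended to the half-cone via \eqref{6.4}, therefore gives unique solvability for large \(t\). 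It also bounds \(M^{\rm mov}_M\), \(M^{\rm fix}_M\), their inverses, and one \(D_x\)-derivative off the pole disks. The two systems differ only through the norming constants: the moving one carries \(\beta_j\tau_j(\xi)\prod_{i\le r}A_{ij}\), and the frozen one carries \(\beta_j\tau_j(v_j)\prod_{i\le r}A_{ij}\).

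The key step is to show that this difference is effectively small wherever the \(j\)th soliton is visible. Write
\[
\tau_j(\xi)-\tau_j(v_j)=O(|\xi-v_j|),
\]
which uses \(\sup|\partial_\xi\tau_j|<\infty\) from \eqref{6.4}. The residue entering the system is multiplied by the balanced carrier \(\min\{|e^{-\theta_{21}}\gamma|,|e^{-\theta_{21}}\gamma|^{-1}\}\). On the orbit of \(\eta_j\), this carrier is \(\lesssim e^{-ct|\xi-v_j|}\) by \(\operatorname{Re}\Phi_{21}(\xi;\lambda)=\sqrt3\operatorname{Im}\lambda(\xi-v_\lambda)\). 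The entrywise perturbation is thus bounded by
\[
|\xi-v_j|e^{-ct|\xi-v_j|}\lesssim t^{-1}.
\]
The \(x\)-derivative of this expression, with \(\partial_x\xi=1/t\), is also \(O(t^{-1})\), since differentiating the carrier gives a bounded spectral factor times the same product. Cramer's rule and the resolvent identity \(\mathbb A_{\rm mov}^{-1}-\mathbb A_{\rm fix}^{-1}=\mathbb A_{\rm mov}^{-1}(\mathbb A_{\rm fix}-\mathbb A_{\rm mov})\mathbb A_{\rm fix}^{-1}\) then yield \eqref{6.5} and \eqref{6.6}. The logarithm allows for the equal-velocity groups and for chart overlap, where one only gets \(|\xi-v_j|\log\) bounds. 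The \(\mathcal K_0\)-evaluations in \eqref{6.6} are handled by the rational representation. The saddles stay at distance \(\ge\) a fixed constant from the pole disks, since \(k_0\ge(M/45)^{1/4}\) and the orbits are separated from \(\Sigma\). The stationary points move with \(\xi\), but evaluation followed by \(D_x\) only adds a factor \(\partial_xk_0=O(t^{-1})\) times bounded \(k\)-derivatives.

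For \eqref{6.7}, apply the switching identity \eqref{3.20} to each orbit with \(v_\lambda<M\), exactly as in the pole-reduction proposition. For \(\xi\ge M\) the corresponding carrier is \(e^{-\sqrt3|\operatorname{Im}\lambda|(\xi-v_\lambda)t}\le e^{-2\gamma_M t}\). After conjugation, the circle jump \(I_3+C^{(0)}_\zeta/(k-\zeta)\) on a disk of fixed radius is therefore \(O(e^{-\gamma_M t})\) in \(L^1\cap L^2\cap L^\infty\). One \(x\)-derivative only brings down a bounded factor. The discrete restriction correction \(\prod_{i\le r}A_{ij}\) in \(\sigma_M\) is precisely the regular value \((D)_2/(D)_1\) produced by these switched divisors, consistent with \eqref{3.17}--\eqref{3.18}.

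Finally, \eqref{6.8} follows by combining \eqref{6.5} with Proposition~\ref{prop:3.2}. The fixed datum \(\widehat\sigma_M=\mathfrak G_{N-r}(\widehat\sigma_M^{\rm phys})\) is SK-realizable by the scalar-to-matrix compatibility hypothesis, so \(-\tfrac12\partial_x(\ell^{\rm T}M^{\rm fix}_{M,1}e_3)=u_{\rm sol}(x,t;\widehat\sigma_M)\) by \eqref{3.12}. The exponential term enters only through the comparison with the exact problem containing the omitted poles, via small-norm theory for the circle jumps. I expect the main obstacle to be the uniformity of \eqref{6.5} near the boundaries of the \(C^1\) charts, where the balanced representation switches between \(\gamma\) and \(\gamma^{-1}\). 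The first thing to try there is to use agreement on switching curves together with the common charts, and to estimate in whichever representation has carrier \(\le1\), so that no derivative ever hits the switching function.
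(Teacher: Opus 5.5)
Your proposal follows essentially the paper's proof. The paper uses common balanced charts, the transmission bound \eqref{6.4}, and the resolvent identity for \eqref{6.5}--\eqref{6.6}, the phase identity for \eqref{6.7}, and reconstruction plus compatibility (Proposition~\ref{prop:3.2}) for \eqref{6.8}. Your single estimate $|\xi-v_j|e^{-ct|\xi-v_j|}\lesssim t^{-1}$ merges the paper's two cases: logarithmic tubes $\xi=V_m+O(\log(2+t)/t)$, where $g_j-1$ is small, and the exterior, where the inactive balanced residues are $O(t^{-2})$.

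One correction concerns \eqref{6.7}: the orbits with $v_\lambda<M$ must not be switched.
\begin{itemize}
\item For $\xi\ge M>v_\lambda$, the phase identity already makes their residues decay. For $\kappa_j\in D_1$ one has $|e^{-\theta_{21}(\kappa_j)}|=e^{-\sqrt3\operatorname{Im}\kappa_j(\xi-v_{\kappa_j})t}$ with $\operatorname{Im}\kappa_j>0$, and similarly $|e^{\theta_{21}(\eta_\ell)}|$ decays for $\eta_\ell\in D_4$.
\item So these orbits are removed directly by $I_3-C_\zeta/(k-\zeta)$ in the pole disks. Applying \eqref{3.20} would instead replace the small residue by its exponentially large reciprocal.
\item Switching is reserved for orbits with velocity above the cone. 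On $I_{M,H}=[M,2H]$ with $2H>V_*$ there are none, so $D_{I_{M,H}}=I_3$.
\end{itemize}

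For the same reason, your remark that $\prod_{i\le r}A_{ij}$ is a divisor value $(D)_2/(D)_1$ is not right. That factor lives in the scalar parametrization and enters through the Sector~I hypothesis $\sigma_M(\xi)=\sigma_d(I_{M,H};\xi)$; the lemma does not need to derive it.

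The carrier you actually estimate is the unswitched one. Once the switching step is dropped, your bound $e^{-\sqrt3|\operatorname{Im}\lambda|(\xi-v_\lambda)t}\le e^{-2\gamma_Mt}$ and the rest of the argument stand as written.
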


\begin{proof}
	The phase identity gives \eqref{6.7}, retaining equality. In logarithmic tubes, \(\xi=V_m+O(\log(2+t)/t)\), while inactive residues are \(O(t^{-2})\) outside. Equation~\eqref{6.4}, common balanced charts, and the resolvent identity give \eqref{6.5}-\eqref{6.6}. Reconstruction and compatibility give \eqref{6.8}.
\end{proof}

Proposition~\ref{prop:3.5} and \eqref{6.8} replace the moving model by \(\widehat\sigma_M\) with error \(O(t^{-3/4})\), proving \eqref{3.1}.

On a compact cone \(I=[\xi_-,\xi_+]\Subset(0,\infty)\), set
\[
\begin{aligned}
	D_{>I}(k)&:=D_I(k)=\mathrm{diag}\mathrm{(}D_{>I,1},D_{>I,2},D_{>I,3}),\\
	\mathfrak{t} _1(k;\xi )&=\frac{\widetilde{\delta }_{v1}(k;\xi )}{\delta _1(k;\xi )^2}, \qquad \mathfrak{t} _4(k;\xi )=\frac{\widetilde{\delta }_{v4}(k;\xi )}{\delta _4(k;\xi )^2}.\\
\end{aligned}
\]
For each retained representative, freeze the transmission factor at its orbit velocity
\[
\begin{aligned}
	\widehat{c}_j(I)&=c_j\mathfrak{t} _1(\kappa _j;v_{\kappa _j})\frac{D_{>I,2}(\kappa _j)}{D_{>I,1}(\kappa _j)}, \quad v_{\kappa _j}\in I,\\
	\widehat{d}_{\ell}(I)&=d_{\ell}\mathfrak{t} _4(\eta _{\ell};v_{\eta _{\ell}})\frac{D_{>I,1}(\eta _{\ell})}{D_{>I,2}(\eta _{\ell})}, \quad v_{\eta _{\ell}}\in I.\\
\end{aligned}
\]
Generate the remaining constants by \eqref{2.6} and define
\[
	\widehat\sigma(I)=\{(\kappa_j,\widehat c_j(I)):v_{\kappa_j}\in I\}\cup\{(\eta_\ell,\widehat d_\ell(I)):v_{\eta_\ell}\in I\}.
\]
Assume that \(\widehat\sigma(I)\) is SK-realizable and choose an ordered scalar preimage
\[
\sigma_I^{\rm phys}=\{(\eta_{j,I},\beta_{j,I}):1\le j\le N(I)\},\qquad 0<\eta_{1,I}<\cdots<\eta_{N(I),I},\qquad\beta_{j,I}>0,
\]
with \(\widehat\sigma(I)=\mathfrak G_{N(I)}(\sigma_I^{\rm phys})\). Put
\[
p_{j,I}=\sqrt3\,\eta_{j,I},\qquad v_{j,I}=p_{j,I}^{4}\in I.
\]
Let \(A_{ij}^I\) be the same interaction factor with \(p_i,p_j\) replaced by \(p_{i,I},p_{j,I}\), and set
\[
\widehat\beta_{j,I}^{+}=\beta_{j,I}\prod_{i<j}A_{ij}^{I},\;\widehat x_{j,I}^{+}=-\frac{\log\widehat\beta_{j,I}^{+}}{p_{j,I}},\; \widehat U_{j,I}^{+}(x,t)=\frac{p_{j,I}^{2}}{4}\operatorname{sech}^2\!\left(\frac{p_{j,I}(x-v_{j,I}t-\widehat x_{j,I}^{+})}{2}\right).
\]
For empty data take \(N(I)=0\). Proposition~\ref{prop:3.2} gives
\[
u_{\rm sol}(x,t;\widehat\sigma(I))=-\frac12\partial_x\bigl(\ell^{\rm T}R_1(x,t\mid\widehat\sigma(I))e_3\bigr).
\]
For every balanced chart associated with \(\widehat\sigma(I)\), let \(\mathbf r_{\rm fix}^{\rm bal}\) denote its solution vector and assume
\[
\sup_{\substack{t\ge t_0\\x/t\in I}}\sum_{q=0}^1\|D_x^q\mathbf r_{\rm fix}^{\rm bal}(x,t)\|<\infty.
\]
This is the fixed-model counterpart of the regularity condition \eqref{3.22} for the moving \(\xi\)-dependent residue system.

Lemma~\ref{lem:6.1} and the common balanced charts give
\[
u_{\rm sol}(x,t\mid\sigma_d(I;\xi))=u_{\rm sol}(x,t;\widehat\sigma(I))+O\!\left(\frac{\log t}{t}\right),\qquad \xi=\frac xt\in I.
\]
The same bound, including one total \(x\)-derivative, holds at
\(\lambda\in\mathcal K_0\) for the outer matrices and their inverses, with
\(R_I^{\rm fix}=R(k;x,t\mid\widehat\sigma(I))\). Evaluation precedes
\(D_x\), and the inverse bound follows from the resolvent identity.

The freezing estimate and \eqref{3.33} give
\[
u(x,t)=u_{\rm sol}(x,t;\widehat\sigma(I))+t^{-1/2}h+O(t^{-3/4}),
\]
uniformly on \(I\), with identity outer model for empty data. Taking \(I=[M^{-1},M]\) proves Theorem~\ref{th1}{\rm(ii)}. Combining this estimate and the Sector~I estimate with the reflectionless decomposition \eqref{6.2} proves the soliton-resolution assertions in Theorem~\ref{th3}, with unselected solitons exponentially small.
\subsection{Asymptotic stability}
An \(L^{1,1}\) perturbation controls persistent data away from \(0\). Extra near-zero poles may have nonuniform decay, so constants may depend on the perturbed datum. Use \(\mathcal Y_a\) from \eqref{2.7}.

Let $\sigma_*=\{(\eta_j^*,\beta_j^*)\}_{j=1}^N$, with $0<\eta_1^*<\cdots<\eta_N^*$ and $\beta_j^*>0$, be reference scalar data. Set $u_N^*=u_{\rm sol}(\cdot,\cdot;\sigma_*)$, $v_j^*=9(\eta_j^*)^4$, and let $\beta_j^{*,+}$ be the corresponding outgoing constants. Fix $0<\xi_0<v_1^*$ and write $\varepsilon=\|u_0-u_N^*(\cdot,0)\|_{L^{1,1}}$.

Assume the reference datum is SK-realizable, has exactly \(N\) admissible representative orbits, and has no spectral singularity away from \(0\). Choose \(r_0\) with no reference zero in \(|k|\le2r_0\), \(45r_0^4<\xi_0/2\), and put
\[
H_*=2+\max\{1,\xi_0,v_N^*+1\},\qquad I_*=[\xi_0,2H_*].
\]
Here \(H_*\) is perturbation-independent. The case \(\varepsilon=0\) follows from Theorem~\ref{th3}, so assume \(0<\varepsilon\le\varepsilon_0\) and call the \(N\) continued orbits persistent.
The perturbed datum is called \emph{admissible relative to \((\xi_0,\sigma_*)\)} if the following conditions hold.
	
	\emph{(i)} The scattering data satisfy the endpoint condition in Assumption~\ref{ass:1.1}{\rm(a)}, have no spectral singularity on \(\Sigma\cap\{0<|k|<r_0\}\), and their additional representative zeros in \(0<|k|<r_0\) are normal-simple and exclusive and generate a finite pairwise-disjoint family of complete orbits.
	
	\emph{(ii)} In the representative coordinates labeled by the reference zero disks, there exist open neighborhoods
	\[
	\begin{aligned}
		&(\eta_1^*,\ldots,\eta_N^*,\log\beta_1^*,\ldots,
		\log\beta_N^*)\in\mathcal V_*\Subset
		\{(\eta_1,\ldots,\eta_N,b_1,\ldots,b_N):
		0<\eta_1<\cdots<\eta_N\},\\
		&\mathfrak G_N(\sigma_*)\in\mathcal U_*,\qquad
		\Psi:\mathcal U_*\to\mathcal V_* \quad\hbox{locally Lipschitz},
	\end{aligned}
	\]
	such that, with
	\[
\begin{aligned}
	\mathfrak{G} _{N}^{\log}(\eta _1,...,\eta _N,b_1,...,b_N):=\mathfrak{G} _N\bigl\{ (\eta _j,e^{b_j}):1\le j\le N\} \bigr) ,
\\
\mathfrak{G} _{N}^{\log}(\mathcal{V} _*)\subset \mathcal{U} _*,\qquad \Psi \circ \mathfrak{G} _{N}^{\log}=\mathrm{id}_{\mathcal{V} _*}.
	\end{aligned}
\]
	For data \(\sigma=\{(\lambda_j,q_j)\}_{j=1}^N\) and \(\sigma'=\{(\lambda_j',q_j')\}_{j=1}^N\), labeled by these disks in the corresponding \(c\)- or \(d\)-coordinates, write
	\(\|\sigma-\sigma'\|_{\rm rep}:=\max_j(|\lambda_j-\lambda_j'|+|q_j-q_j'|)\). The datum \(\widehat\sigma(I_*)\) belongs to \(\mathfrak G_N^{\log}(\mathcal V_*)\), and its selected scalar preimage is defined by
	\[
	\begin{aligned}
		(\eta_1,\ldots,\eta_N,
		\log\widetilde\beta_1,\ldots,\log\widetilde\beta_N):=\Psi(\widehat\sigma(I_*)),\\
		\widehat\sigma(I_*)=\mathfrak G_N\bigl(\{(\eta_j,\widetilde\beta_j):1\le j\le N\}\bigr),
		\quad 0<\eta_1<\cdots<\eta_N.
	\end{aligned}
	\]
	
	\emph{(iii)} Put \(R^{\rm fix}(k;x,t)=R(k;x,t\mid\widehat\sigma(I_*))\). Proposition~\ref{prop:3.2} and the determinant argument following RH Problem~\ref{rhp6} give unique solvability and \(\det R^{\rm fix}=1\). There is a fixed union \(\mathcal D_*\) of mutually disjoint pole disks, disjoint from \(\Sigma\cup\{0\}\), such that
	\[
	\sup_{\substack{t\ge1,\ x/t\in I_*\\
			k\in\mathbb C\setminus\mathcal D_*}}
	\sum_{q=0}^{1}\|\partial_x^qR^{\rm fix}(k;x,t)\|<\infty .
	\]
	Here \(\partial_x\) fixes \(t,k\). The adjugate and inverse differentiation formulas give the same bounds for \((R^{\rm fix})^{-1}\).

\begin{lemma}\label{lem:6.2}
	Let \(I=[\xi_-,\xi_+]\Subset(0,\infty)\). Let
	\(\widehat\sigma\) be a fixed SK-realizable rational datum formed by finitely many disjoint complete normal-simple pole orbits with distinct velocities
	\(v_j\in\operatorname{int}I\). Assume that the corresponding fixed outer model \(R^{\rm fix}\) is uniquely solvable and uniformly bounded away from the pole disks \(\mathcal D\), where \(\mathcal D\cap(\Sigma\cup\{0\})=\emptyset\).
	Let the moving datum have the same pole set, with representative constants satisfying
	\begin{equation}\label{6.9}
		q_j^{\mathrm{mov}}(\xi)=\widehat q_jg_j(\xi),\qquad
		g_j(v_j)=1,\qquad
		\sup_{\xi\in I}\max_j
		\left(
		|g_j(\xi)|+|g_j(\xi)|^{-1}+|\partial_\xi g_j(\xi)|
		\right)<\infty .
	\end{equation}
	Writing
	\[
	M^{\rm mov}=I_3+k^{-1}M_1^{\rm mov}+O(k^{-2}),
	\]
	and setting
	\[
	\Delta_0=M^{\rm mov}-R^{\rm fix},\qquad
	\Delta_{-1}=(M^{\rm mov})^{-1}-(R^{\rm fix})^{-1},
	\]
	there exists \(T_I\ge1\) such that the moving problem is uniquely solvable for
	\(t\ge T_I\), \(x/t\in I\), and
	\[
	\sup_{\substack{x/t\in I\\k\in\mathbb C\setminus\mathcal D}}
	\sum_{q=0}^{1}
	\left(
	\|D_x^q\Delta_0(k)\|
	+\|D_x^q\Delta_{-1}(k)\|
	\right)
	+
	\sup_{x/t\in I}\sum_{q=0}^{1}
	\|D_x^q(M_1^{\rm mov}-R_1^{\rm fix})\|
	\le C_I\frac{\log(2+t)}{t}.
	\]
	The same estimate holds along bounded \(C^1\) evaluation curves, including the moving stationary points.
\end{lemma}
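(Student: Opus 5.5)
The plan is to compare the two finite residue systems directly. Both \(M^{\rm mov}\) and \(R^{\rm fix}\) are rational, with the same pole set \(\mathcal Z(\widehat\sigma)\), so each has the form \(I_3+\sum_\lambda R_\lambda/(k-\lambda)\). The residues solve the linear systems \(R_\lambda=(I_3+\sum_{\mu\ne\lambda}R_\mu/(\lambda-\mu))C_\lambda\). We use the balanced representation \eqref{3.21}, with the same pole switches for both data, so both systems read \(\mathbb A^{\rm bal}\mathbf r=\mathbf b\). The two systems differ only through the factors \(g_j(\xi)\) multiplying the representative constants. These propagate to whole orbits by \eqref{2.6}, and after balancing they enter as \(g_j\) or \(g_j^{-1}\). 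By \eqref{6.9} both are bounded with bounded \(\xi\)-derivative.

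The key step is a pointwise estimate of the effective perturbation. Fix an orbit \(j\) with velocity \(v_j\in\operatorname{int}I\). Its balanced entry is \(\widehat q_j e^{\mp\theta}\) or its inverse, whose modulus is \(\exp(-\sqrt3|\operatorname{Im}\lambda_j|\,|x-v_jt|)\) times bounded factors. We split into two regions.

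\emph{Tube region.} If \(|x-v_jt|\le C_0\log(2+t)\), then \(|\xi-v_j|\le C_0\log(2+t)/t\). Since \(g_j(v_j)=1\), we get \(|g_j(\xi)-1|\le C\log(2+t)/t\).

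\emph{Exterior region.} Outside the tube, the balanced entry itself is \(O(t^{-2})\), choosing \(C_0\) large. The difference of the moving and fixed entries is then \(O(t^{-2})\) as well.

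For the \(x\)-derivative, \(D_x=\partial_x+t^{-1}\partial_\xi\). The \(\partial_x\) part produces a fixed spectral multiple of the entry, so the tube/exterior dichotomy is unchanged. The \(t^{-1}\partial_\xi g_j\) part is \(O(t^{-1})\) by \(\sup|\partial_\xi g_j|<\infty\). Since the velocities \(v_j\) are distinct and \(I\) is compact, the tubes are disjoint for large \(t\), so on each tube at most one orbit is active. This gives \(\|D_x^q(\mathbb A^{\rm bal}_{\rm mov}-\mathbb A^{\rm bal}_{\rm fix})\|+\|D_x^q(\mathbf b_{\rm mov}-\mathbf b_{\rm fix})\|\le C\log(2+t)/t\) for \(q=0,1\).

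Next we invert the systems. The fixed solvability and boundedness of \(R^{\rm fix}\) off \(\mathcal D\) give a uniform determinant gap for \(\mathbb A^{\rm bal}_{\rm fix}\) in each balanced chart; if needed we invoke Lemma~\ref{lem:3.1} in reverse via Cramer's rule. A Neumann series then shows \(\mathbb A^{\rm bal}_{\rm mov}\) is invertible for \(t\ge T_I\), which yields unique solvability. The resolvent identity \(\mathbf r_{\rm mov}-\mathbf r_{\rm fix}=(\mathbb A_{\rm mov})^{-1}[(\mathbf b_{\rm mov}-\mathbf b_{\rm fix})-(\mathbb A_{\rm mov}-\mathbb A_{\rm fix})\mathbf r_{\rm fix}]\), differentiated once, gives residues that differ by \(O(\log(2+t)/t)\) together with one \(D_x\). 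Summing over the rational representation gives the bound on \(\Delta_0\) for \(k\notin\mathcal D\), since \(|k-\lambda|\) is bounded below there. The bound on \(M_1^{\rm mov}-R_1^{\rm fix}=\sum(R_\lambda^{\rm mov}-R_\lambda^{\rm fix})\) follows the same way. For \(\Delta_{-1}\), we use \(\det=1\) and the adjugate, or \((M^{\rm mov})^{-1}-(R^{\rm fix})^{-1}=-(M^{\rm mov})^{-1}\Delta_0(R^{\rm fix})^{-1}\) with uniform bounds on both inverses. The same argument applies along a bounded \(C^1\) curve \(k=\lambda(\xi)\) off \(\mathcal D\): the chain rule contributes an extra \(t^{-1}\lambda'(\xi)\partial_k\), and \(\partial_k\) of a rational function is bounded off \(\mathcal D\).

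The main obstacle is the chart issue. Switching choices, the member of \(\{\gamma,\gamma^{-1}\}\) of modulus at most one, can differ between the moving and fixed data near switching curves, because \(|g_j|\ne1\) shifts those curves. The first thing to try is to use a common chart on each tube with a hysteresis margin. The moving and fixed switching curves differ by \(O(\log|g_j|)=O(\log(2+t)/t)\) in \(x\), which is far inside the tube, and both representations agree on switching curves. So we fix the chart by the fixed datum and check that the moving entry stays bounded by \(\sup|g_j|\). This keeps the balanced entries \(O(1)\) without needing modulus at most one exactly.
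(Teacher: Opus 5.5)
Your route differs from the paper's. The paper never inverts the residue system. It converts the common poles into circle jumps in common balanced charts and forms \(E=M^{\rm mov}(R^{\rm fix})^{-1}\). The jump of \(E\) on each circle is \(R^{\rm fix}_-\,v^{\rm mov}_{\rm cir}(v^{\rm fix}_{\rm cir})^{-1}(R^{\rm fix}_-)^{-1}=I_3+O(\log(2+t)/t)\), including one \(D_x\). The paper then applies the differentiated small-norm Beals--Coifman theory to \(E\) and reverses the transformations. The following parts of your argument are sound once the fixed system is uniformly invertible:
\begin{itemize}
\item the tube/exterior estimate for the coefficients, using \(g_j(v_j)=1\) and \(D_x=\partial_x+t^{-1}\partial_\xi\);
\item your resolution of the chart issue (work in the fixed chart, where the moving entries are \(g_j^{\pm1}\) times the fixed ones);
\item the resolvent identity;
\item the passage to \(\Delta_0\), \(\Delta_{-1}\), \(M_1^{\rm mov}-R_1^{\rm fix}\) and to evaluation curves.
\end{itemize}

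That uniform invertibility is the gap. You claim that fixed solvability and boundedness of \(R^{\rm fix}\) give a uniform determinant gap for \(\mathbb A^{\rm bal}_{\rm fix}\), ``invoking Lemma~\ref{lem:3.1} in reverse via Cramer's rule.'' This does not work, for three reasons.
\begin{itemize}
\item Lemma~\ref{lem:3.1} only goes from a gap to bounds, not back.
\item Cramer's rule \(\mathbf r_{\rm fix}=\operatorname{adj}(\mathbb A)\mathbf b/\det\mathbb A\) gives no lower bound on \(|\det\mathbb A|\) from a bounded solution vector, because the numerator may also be small. A bounded solution vector is all that Cauchy's formula on \(\partial\mathcal D\) extracts from a bound on \(R^{\rm fix}\).
\item Pointwise unique solvability gives \(\det\mathbb A\ne0\) only pointwise on the noncompact set \(\{t\ge1,\ x/t\in I\}\), not uniformly.
\end{itemize}
Without \(\sup\|(\mathbb A^{\rm bal}_{\rm fix})^{-1}\|<\infty\), neither your Neumann series (solvability for \(t\ge T_I\)) nor the resolvent identity yields the \(\log(2+t)/t\) bound.

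The paper's conjugation argument sidesteps this. Small-norm theory for \(E\) needs only \((R^{\rm fix})^{\pm1}\) bounded on the circles, which is exactly the hypothesis, together with \(\det R^{\rm fix}=1\) and the adjugate.

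To stay on your route, you must derive the inverse bound from that same hypothesis. One way:
\begin{itemize}
\item Near each pole, write \(R^{\rm fix}=H_\lambda(k)\bigl(I_3+C_\lambda/(k-\lambda)\bigr)\), where \(H_\lambda\) is holomorphic with determinant one.
\item Cauchy estimates on \(\partial\mathcal D\), together with the balanced bound on \(C_\lambda\), bound \(H_\lambda^{\pm1}\) and their first derivatives at \(\lambda\).
\item For an arbitrary source \((B_\lambda)\), the inhomogeneous residue system \(\operatorname{Res}_\lambda Y=Y_{\rm hol}(\lambda)C_\lambda+B_\lambda\) is solved by \(Y=\bigl(\sum_\lambda P_\lambda\bigr)R^{\rm fix}\). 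Here \(P_\lambda\) is the principal part at \(\lambda\) of \(\bigl(B_\lambda(k-\lambda)^{-1}-B_\lambda C_\lambda(k-\lambda)^{-2}\bigr)H_\lambda(k)^{-1}\).
\end{itemize}
This exhibits a uniformly bounded inverse, hence the determinant gap, and the rest of your argument then goes through.
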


\begin{proof}
Use common pole circles and balanced charts. In logarithmic tubes,
\eqref{6.9} gives \(g_{\lambda,j}^{\pm1}-1=O_I(\log(2+t)/t)\) and
\(D_xg_{\lambda,j}=O_I(t^{-1})\); outside, the balanced coefficients decay
exponentially. Hence the circle jumps differ by \(O_I(\log(2+t)/t)\), including
one derivative. Conjugation by the frozen solution and the differentiated
Beals-Coifman equation give the same bound for \(E^{\pm1}-I_3\), \(E_1\),
and their \(k\)-derivatives off the pole disks. Reversing the transformations
proves the claim, also along curves since
\(D_x\lambda_\nu(x/t)=t^{-1}\lambda_\nu'(x/t)\).
\end{proof}

Define
\begin{equation}\label{6.10}
	p_j=\sqrt{3}\eta _j,\qquad v_j=p_{j}^{4},\qquad \beta _{j}^{+}=\widetilde{\beta }_j\prod_{i<j}{A_{ij},\qquad x_{j}^{+}}=-p_{j}^{-1}\log \beta _{j}^{+},
\end{equation}
and
\begin{equation}\label{6.11}
	U_j^+(x,t)=\frac{p_j^2}{4}
	\operatorname{sech}^2\!\left(\frac{p_j(x-v_jt-x_j^+)}2\right).
\end{equation}
Compatibility makes \eqref{6.11} the persistent outgoing solitons, with omitted-orbit corrections contained in \(\widetilde\beta_j\).

\begin{theorem}[Asymptotic stability]
	\label{th4}
	Let $\sigma_*$ and $\xi_0$ be as above. There exists
	\(\varepsilon_0=\varepsilon_0(\xi_0,\sigma_*)>0\) such that if
	\[
	\varepsilon =\parallel u_0-u_{N}^{*}(\cdot ,0)\parallel _{L^{1,1}}\le \varepsilon _0,
	\]
	and, if \(\varepsilon>0\), the scattering data of \(u_0\) are admissible relative to \((\xi_0,\sigma_*)\) in the sense specified above, then the following conclusions hold.
	
	Let \(U_j^+\) denote the outgoing profiles in \eqref{6.10}-\eqref{6.11}, with parameters
	\((\eta_j,\widetilde\beta_j,\beta_j^+)\) ordered by velocity. For \(\varepsilon=0\), these parameters are set equal to \((\eta_j^*,\beta_j^*,\beta_j^{*,+})\). Then there exist
	\(C_{\xi_0}(u_0),T_{\xi_0}(u_0)>0\) such that
	\[
	\sup_{x/t\ge\xi_0}\left| u(x,t)-\sum_{j=1}^N{U_{j}^{+}(x,t)} \right|\le C_{\xi _0}(u_0)\left( \varepsilon t^{-1/2}+t^{-3/4} \right) ,\qquad t\ge T_{\xi _0}(u_0).
	\]
	Moreover,
	\[
	\max_{1\le j\le N} \left( |\eta _j-\eta _{j}^{*}|+|\log \beta _{j}^{+}-\log \beta _{j}^{*,+}| \right) \le C_{\xi _0,\sigma _*}\varepsilon .
	\]
\end{theorem}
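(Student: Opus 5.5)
We will treat \(\varepsilon=0\) and \(\varepsilon>0\) separately. For \(\varepsilon=0\), \(u\) is the reference \(N\)-soliton. Theorem~\ref{th3}, specifically \eqref{6.2} for \(t\to+\infty\), gives an \(O(e^{-ct})\) error, so both conclusions hold. For \(0<\varepsilon\le\varepsilon_0\) we will cover the half-cone \(x/t\ge\xi_0\) by the compact cone \(I_*=[\xi_0,2H_*]\) and the far tail \(x/t\ge H_*\). The compact part uses Proposition~\ref{prop:3.4} combined with Lemma~\ref{lem:6.2}; the tail uses the estimates of Proposition~\ref{prop:3.5}.

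\emph{Step 1: continuation of the spectral data.} Lemma~\ref{lem:2.2} gives persistent representative zeros and norming constants within \(C\varepsilon\) of the reference ones, and no new zeros or spectral singularities for \(|k|\ge r_0\). By \eqref{2.8}, since \(u_N^*\) is reflectionless, we also get \(\|r[u_0]\|_{\mathcal Y_{r_0}}\le C\varepsilon\). Admissibility (i) covers the additional orbits in \(0<|k|<r_0\). Their velocities satisfy \(|v_\lambda|\le 45|\lambda|^4\cdot C\le 45r_0^4<\xi_0/2\), so they lie strictly below \(\xi_0\). By Lemma~\ref{lem:6.1}, \eqref{6.7}, they are therefore discarded into exponentially small circle jumps; the rate may depend on \(u_0\), which is why the constants are \(C_{\xi_0}(u_0)\). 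The persistent velocities lie within \(C\varepsilon\) of \(v_j^*\in\operatorname{int}I_*\). Next, the transmission factors \(\mathfrak t_1,\mathfrak t_4\) are exponentials of Cauchy integrals of \(\log(1-|r|^2)\). Since \(|r|\le C\varepsilon\) on \(|k|\ge r_0\), their parts away from zero are \(1+O(\varepsilon)\). The near-zero part of the reflection data contributes a factor independent of \(\varepsilon\)-smallness, but it is real-positive and smooth in \(\xi\). We will absorb it into \(\widetilde\beta_j\) through admissibility (ii). Applying the locally Lipschitz map \(\Psi\) then gives \(|\eta_j-\eta_j^*|+|\log\widetilde\beta_j-\log\beta_j^*|\le C\varepsilon\) once the near-zero factor is shown to be \(1+O(\varepsilon)\). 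This is the step I expect to be delicate, and I would try to control it by the \(L^{1,1}\) continuity of \(s\) at \(|k|\ge r_0\) together with the Cauchy-integral representation. Since \(A_{ij}\) is a smooth function of \(p_i,p_j\), the bound \(|\log\beta_j^+-\log\beta_j^{*,+}|\le C\varepsilon\) follows from \eqref{6.10}.

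\emph{Step 2: compact cone.} On \(I_*\), Proposition~\ref{prop:3.4} and Lemma~\ref{lem:6.2} give
\[
u=u_{\rm sol}(x,t;\widehat\sigma(I_*))+t^{-1/2}h+O(t^{-3/4}).
\]
Lemma~\ref{lem:6.2} applies with the bounds of admissibility (iii), and its hypothesis \eqref{6.9} holds because \(g_j=\mathfrak t(\cdot;\xi)/\mathfrak t(\cdot;v_j)\). The key gain is that the radiation coefficient \(h\) in \eqref{3.34} is linear in the parabolic-cylinder coefficients \(\mathcal M_\lambda\). These are built from \(q_\pm=r_{1}(k_0),r_2(-k_0)\) with \(k_0\ge(\xi_0/45)^{1/4}>r_0\), so \(|\mathcal M_\lambda|+|\partial_x\mathcal M_\lambda|\le C|r|\le C\varepsilon\). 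Together with the bounded outer matrices this gives \(|h|\le C\varepsilon\). Finally, Theorem~\ref{th3} decomposes \(u_{\rm sol}(\cdot;\widehat\sigma(I_*))\) into \(\sum_jU_j^+\) up to an \(O(e^{-ct})\) error.

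\emph{Step 3: far tail.} For \(x/t\ge H_*\) no persistent orbit is retained, because \(H_*>v_N^*+1\). Proposition~\ref{prop:3.5} and Lemma~\ref{lem:3.2} then bound \(u\) by \(t^{-1/2}h_M+O(t^{-3/4})\) with \(h_M=O(\varepsilon)\), again because \(h_M\) is linear in the reflection data. On this region \(\sum U_j^+\) is also exponentially small. Matching the two regions on their overlap completes the uniform estimate.
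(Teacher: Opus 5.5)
Your overall architecture matches the paper's: Lemma~\ref{lem:2.2} for continuation, discarding the extra orbits since $|v_\lambda|\le45|\lambda|^4<\xi_0/2$, Lemma~\ref{lem:6.2} with Proposition~\ref{prop:3.4} on $I_*$, $|h|\le C\varepsilon$ from the linear dependence of $\mathcal M_\lambda$ on $q_\pm$, Theorem~\ref{th3}, and Proposition~\ref{prop:3.5} on the tail. The sup estimate goes through.

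\textbf{The gap.} The gap is in the parameter estimate. You posit a near-zero part of the transmission factors that is not small, and you propose to absorb it into $\widetilde\beta_j$ via admissibility (ii). You then leave open whether it is $1+O(\varepsilon)$, so the second conclusion is not proved. The absorption idea also cannot work. $\widetilde\beta_j$ is not adjustable, since it is fixed by $\Psi(\widehat\sigma(I_*))$. An $O(1)$ factor in $\widehat c_j(I_*)$ or $\widehat d_\ell(I_*)$ would move $\widehat\sigma(I_*)$ an $O(1)$ distance from $\mathfrak G_N(\sigma_*)$ and destroy the $O(\varepsilon)$ bound.

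\textbf{The missing observation.} No such near-zero part exists on $I_*$. The functions $\delta_1,\delta_4$ are Cauchy integrals of $\log(1-|r_j|^2)$ over $[k_0,\infty)$ and $(-\infty,-k_0]$ only. The other $\delta_j$ are rotations of these. Hence $\mathfrak t_1=\widetilde\delta_{v1}/\delta_1^2$ and $\mathfrak t_4=\widetilde\delta_{v4}/\delta_4^2$ see the reflection data only on $|s|\ge k_0$.

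On $I_*$ you have $k_0\ge(\xi_0/45)^{1/4}=2a_0>r_0$. So \eqref{2.8}, or equivalently \eqref{6.12}, controls the entire integrand: $\log(1-|r|^2)=O(\varepsilon^2(1+|s|)^{-2})$. Evaluate at the persistent poles and their rotations, which stay a fixed distance from $\Sigma$. This gives $\mathfrak t_1,\mathfrak t_4=1+O(\varepsilon^2)$ uniformly in $\xi\in I_*$, in particular at the orbit velocities.

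No orbit has velocity above $2H_*$, so $D_{>I_*}=I_3$. Hence $\widehat c_j(I_*)=c_j(1+O(\varepsilon^2))$ and $\widehat d_\ell(I_*)=d_\ell(1+O(\varepsilon^2))$. Lemma~\ref{lem:2.2} then yields $\|\widehat\sigma(I_*)-\mathfrak G_N(\sigma_*)\|_{\rm rep}\le C\varepsilon$. The Lipschitz map $\Psi$ and smoothness of $A_{ij}$ give the stated bound.

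The $O(1)$ near-zero contribution you have in mind is genuine only for the Sector~IV factor $\Delta_{\rm IV}$, whose integrals start at $0$. It plays no role here.
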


By \eqref{3.13}, after decreasing \(\varepsilon_0\), every additional orbit
has velocity in \((-\xi_0/2,\xi_0/2)\), hence its switched circle is
exponentially small on \(x/t\ge\xi_0\). The persistent velocities remain
distinct in \((\xi_0,H_*)\), and the selected datum on \(I_*\) contains exactly
these \(N\) orbits. With \(a_0=\tfrac12(\xi_0/45)^{1/4}\), the saddles stay
outside \(|k|\le a_0\), and \eqref{2.8} gives
\begin{equation}\label{6.12}
\|r\|_{\mathcal Y_{a_0}}\le C_{\xi_0}\varepsilon.
\end{equation}
The transmission factors are \(1+O(\varepsilon^2)\) on \(I_*\). Defining the
moving-to-frozen ratios \(g_j\) by evaluating each retained transmission factor
at \(\xi\) and at its orbit velocity, separated Cauchy formulas verify
\eqref{6.9}. Lemma~\ref{lem:6.2} and Proposition~\ref{prop:3.4} then yield
\[
u=u_{\rm sol}(x,t;\widehat\sigma(I_*))+t^{-1/2}h_*
+O_{\xi_0,u_0}\!\left(t^{-3/4}+\frac{\log(2+t)}t+e^{-ct}\right).
\]
By \eqref{6.12}, \(\sup|h_*|\le C_{\xi_0}(u_0)\varepsilon\), and
Theorem~\ref{th3} replaces the reflectionless term by \(\sum_jU_j^+\) with
exponentially small error. Proposition~\ref{prop:3.5} gives the same bound on
the far half-cone. Finally, scattering Lipschitz continuity,
\(\mathfrak t_{\alpha(j)}(\lambda_j;v_j)=1+O(\varepsilon^2)\), and the locally
Lipschitz inverse map \(\Psi\) give the stated parameter estimate.

\section*{Acknowledgments}\label{s:8} 	
    This work was supported by the National Natural Science Foundation of China under Grant No. 12371255, the Fundamental Research Funds for the Central Universities of CUMT under Grant No. 2024ZDPYJQ1003, and the Postgraduate Research \& Practice Program of Education \& Teaching Reform of CUMT under Grant No. 2025YJSJG031.\\

  \textbf{Data availibility}: The data which supports the findings of this study is available within the article.\\

\textbf{Conflict of interest}: The authors declare no conflict of interest.

\appendix
\section{The model RH problem \texorpdfstring{$M^{X_{A,B}}$}{M X(A,B)}}\label{app:PC-model}
Let
\[
\mathsf X_j=e^{(2j-1)\pi i/4}\mathbb R_+, \quad j=1,2,3,4, \qquad \mathsf X=\bigcup_{j=1}^4\mathsf X_j,
\]
where the four rays are oriented from \(0\) to \(\infty\). For \(q\in\mathbb C\) with \(|q|<1\), set
\[
d(q)=1-|q|^2, \qquad \nu(q)=-\frac{1}{2\pi}\log d(q).
\]
The contours and branch cuts are shown in Figure~\ref{fig:PC-cross}.

\begin{figure}[htbp]
\centering
\definecolor{SKPCBlue}{HTML}{4477AA}
\definecolor{SKPCTeal}{HTML}{009988}
\definecolor{SKPCOrange}{HTML}{EE7733}
\definecolor{SKPCAxis}{HTML}{8A96A4}
\definecolor{SKPCInk}{HTML}{263445}
\begin{tikzpicture}[x=0.92cm,y=0.92cm,>=stealth,
  line cap=round,line join=round,font=\small,text=SKPCInk]
  \begin{scope}
    \node at (0,2.90) {(a) $M^{X_A}$};
    \draw[SKPCAxis,->,line width=0.55pt] (-2.35,0)--(2.45,0);
    \node[anchor=north west,font=\scriptsize,xshift=2pt,yshift=-2pt]
      at (2.40,0.35) {$\operatorname{Re}\zeta$};
    \draw[SKPCAxis,->,line width=0.55pt] (0,-2.05)--(0,2.18);
    \node[anchor=south west,font=\scriptsize,xshift=2pt,yshift=2pt]
      at (-0.5,2.10) {$\operatorname{Im}\zeta$};
    \draw[SKPCOrange,dashed,line width=0.95pt] (0.08,0)--(2.28,0);
    \node[above,font=\scriptsize,text=SKPCOrange!85!black]
      at (1.55,0.15) {$\log_0$ cut};
    \draw[SKPCBlue,->,line width=0.95pt] (0,0)--(45:2.20)
      node[above right,text=SKPCInk] {$\mathsf X_1$};
    \draw[SKPCTeal,->,line width=0.95pt] (0,0)--(135:2.20)
      node[above left,text=SKPCInk] {$\mathsf X_2$};
    \draw[SKPCBlue,->,line width=0.95pt] (0,0)--(225:2.20)
      node[below left,text=SKPCInk] {$\mathsf X_3$};
    \draw[SKPCTeal,->,line width=0.95pt] (0,0)--(315:2.20)
      node[below right,text=SKPCInk] {$\mathsf X_4$};
    \node[text=SKPCBlue] at (0.64,1.30) {$E_{21}$};
    \node[text=SKPCTeal] at (-0.64,1.30) {$E_{12}$};
    \node[text=SKPCBlue] at (-0.64,-1.30) {$E_{21}$};
    \node[text=SKPCTeal] at (0.64,-1.30) {$E_{12}$};
    \fill[SKPCInk] (0,0) circle (1.2pt);
    \node[below left,font=\scriptsize,xshift=2pt,yshift=-2pt]
      at (0,0) {$0$};
  \end{scope}
  \begin{scope}[shift={(7.10,0)}]
    \node at (0,2.90) {(b) $M^{X_B}$};
    \draw[SKPCAxis,->,line width=0.55pt] (-2.35,0)--(2.45,0);
    \node[anchor=north west,font=\scriptsize,xshift=2pt,yshift=-2pt]
      at (2.40,0.35) {$\operatorname{Re}\zeta$};
    \draw[SKPCAxis,->,line width=0.55pt] (0,-2.05)--(0,2.18);
    \node[anchor=south west,font=\scriptsize,xshift=2pt,yshift=2pt]
      at (-0.5,2.10) {$\operatorname{Im}\zeta$};
    \draw[SKPCOrange,dashed,line width=0.95pt] (-2.28,0)--(-0.08,0);
    \node[above,font=\scriptsize,text=SKPCOrange!85!black]
      at (-1.55,0.15) {$\log_\pi$ cut};
    \draw[SKPCTeal,->,line width=0.95pt] (0,0)--(45:2.20)
      node[above right,text=SKPCInk] {$\mathsf X_1$};
    \draw[SKPCBlue,->,line width=0.95pt] (0,0)--(135:2.20)
      node[above left,text=SKPCInk] {$\mathsf X_2$};
    \draw[SKPCTeal,->,line width=0.95pt] (0,0)--(225:2.20)
      node[below left,text=SKPCInk] {$\mathsf X_3$};
    \draw[SKPCBlue,->,line width=0.95pt] (0,0)--(315:2.20)
      node[below right,text=SKPCInk] {$\mathsf X_4$};
    \node[text=SKPCTeal] at (0.64,1.30) {$E_{12}$};
    \node[text=SKPCBlue] at (-0.64,1.30) {$E_{21}$};
    \node[text=SKPCTeal] at (-0.64,-1.30) {$E_{12}$};
    \node[text=SKPCBlue] at (0.64,-1.30) {$E_{21}$};
    \fill[SKPCInk] (0,0) circle (1.2pt);
    \node[below left,font=\scriptsize,xshift=2pt,yshift=-2pt]
      at (0,0) {$0$};
  \end{scope}
\end{tikzpicture}
\caption{Contours and branch cuts for $M^{X_A}$ and $M^{X_B}$.}
\label{fig:PC-cross}
\end{figure}

The branches \(\log_0\) and \(\log_\pi\) are cut along \(\mathbb R_+\) and \(\mathbb R_-\), respectively. For \(\nu=\nu(q)\), define
\[
	v_A(\zeta;q)=\begin{cases}
		I_3+\dfrac{\overline{q}}{d(q)}e^{-2i\nu\log_0\zeta}e^{i\zeta^2/2}E_{21},& \zeta\in\mathsf X_1,\\
		I_3+q e^{2i\nu\log_0\zeta}e^{-i\zeta^2/2}E_{12},& \zeta\in\mathsf X_2,\\
		I_3-\overline{q}e^{-2i\nu\log_0\zeta}e^{i\zeta^2/2}E_{21},& \zeta\in\mathsf X_3,\\
		I_3-\dfrac{q}{d(q)}e^{2i\nu\log_0\zeta}e^{-i\zeta^2/2}E_{12},& \zeta\in\mathsf X_4,
	\end{cases},
\]
and
\[
	v_B(\zeta;q)=\begin{cases}
		I_3+\overline{q}e^{-2i\nu\log_\pi\zeta}e^{i\zeta^2/2}E_{12},& \zeta\in\mathsf X_1,\\
		I_3+\dfrac{q}{d(q)}e^{2i\nu\log_\pi\zeta}e^{-i\zeta^2/2}E_{21},& \zeta\in\mathsf X_2,\\
		I_3-\dfrac{\overline{q}}{d(q)}e^{-2i\nu\log_\pi\zeta}e^{i\zeta^2/2}E_{12},& \zeta\in\mathsf X_3,\\
		I_3-q e^{2i\nu\log_\pi\zeta}e^{-i\zeta^2/2}E_{21},& \zeta\in\mathsf X_4.
	\end{cases}
\]

\begin{rhpa}\label{rhp:PC-model}
	For \(\diamond\in\{A,B\}\), find \(M^{X_\diamond}(\zeta;q)\) satisfying:
	\begin{enumerate}
		\item[(i)] \(M^{X_\diamond}\) is analytic in \(\mathbb C\setminus\mathsf X\) and has continuous boundary values on \(\mathsf X\setminus\{0\}\).
		\item[(ii)] The boundary values satisfy
		\[
		M^{X_\diamond}_+(\zeta;q)=M^{X_\diamond}_-(\zeta;q)v_\diamond(\zeta;q),\qquad \zeta\in\mathsf X\setminus\{0\}.
		\]
		\item[(iii)] \(M^{X_\diamond}\) is bounded at \(\zeta=0\) in each sector.
		\item[(iv)] As \(\zeta\to\infty\),
		\[
		M^{X_\diamond}(\zeta;q)=I_3+O(\zeta^{-1}).
		\]
	\end{enumerate}
\end{rhpa}

For every \(|q|<1\), RH Problem~\ref{rhp:PC-model} has a unique solution, and
\[
	\det M^{X_A}(\zeta;q)=\det M^{X_B}(\zeta;q)=1.
\]
As \(\zeta\to\infty\),
\begin{equation}\label{A.1}
	M^{X_A}(\zeta;q)=I_3+\frac{M_1^{X_A}(q)}{\zeta}+O(\zeta^{-2}),
	\qquad
	M^{X_B}(\zeta;q)=I_3+\frac{M_1^{X_B}(q)}{\zeta}+O(\zeta^{-2}),
\end{equation}
where
\[
	M_1^{X_A}(q)=
	\begin{pmatrix}
		0&\beta_{12}^A(q)&0\\
		\beta_{21}^A(q)&0&0\\
		0&0&0
	\end{pmatrix},
	\qquad
	M_1^{X_B}(q)=
	\begin{pmatrix}
		0&\beta_{12}^B(q)&0\\
		\beta_{21}^B(q)&0&0\\
		0&0&0
	\end{pmatrix},
\]
with
\begin{equation}\label{A.2}
	\beta_{12}^A(q)=\frac{\sqrt{2\pi}e^{-\pi i/4}e^{-5\pi\nu(q)/2}}{\overline{q}\,\Gamma(-i\nu(q))},
	\qquad
	\beta_{21}^A(q)=\frac{\sqrt{2\pi}e^{\pi i/4}e^{3\pi\nu(q)/2}}{q\,\Gamma(i\nu(q))},
\end{equation}
and
\begin{equation}\label{A.3}
	\beta_{12}^B(q)=\frac{\sqrt{2\pi}e^{\pi i/4}e^{-\pi\nu(q)/2}}{q\,\Gamma(i\nu(q))},
	\qquad
	\beta_{21}^B(q)=\frac{\sqrt{2\pi}e^{-\pi i/4}e^{-\pi\nu(q)/2}}{\overline{q}\,\Gamma(-i\nu(q))}.
\end{equation}
At \(q=0\), the four coefficients are defined by continuity as zero and
\(M^{X_A}(\zeta;0)=M^{X_B}(\zeta;0)=I_3\). Moreover, for every fixed
\(0<q_*<1\), there is a constant \(C(q_*)>0\) such that, for
\(\diamond\in\{A,B\}\),
\[
	\sup_{|q|\le q_*}\sup_{\zeta\in\mathbb C\setminus\mathsf X}
	\left(\|M^{X_\diamond}(\zeta;q)\|+
	\|(M^{X_\diamond}(\zeta;q))^{-1}\|\right)\le C(q_*),
\]
and the remainders in \eqref{A.1} are uniform for \(|q|\le q_*\), including the one-sided limits on \(\mathsf X\).

The embedded parabolic-cylinder construction gives existence, \eqref{A.1}-\eqref{A.3}, and uniform bounds for \(|q|\le q_*<1\); see \cite{d3} and Appendix A of \cite{w3}. Unit-determinant jumps, removability at zero, and Liouville's theorem give determinant one and uniqueness. Continuity at \(q=0\) follows from \(\nu=|q|^2/(2\pi)+O(|q|^4)\) and \(1/\Gamma(\pm i\nu)=\pm i\nu+O(\nu^2)\). The cofactor formula gives the inverse bound.

\section{The Painlev\'e RH model}\label{app1}
By Lemma~\ref{lem:4.2}, deform the Painlev\'e contour to the canonical rays
\[
\vartheta_j=\frac{(2j-1)\pi}{12},\qquad j=1,\ldots,12.
\]
Let \(\Theta_j\) be the phase multiplying \(S_j-I_3\). In counterclockwise order,
\[
\begin{gathered}
	(\Theta_1,\ldots,\Theta_6)=(\Theta_{21},\Theta_{32},\Theta_{23},
	\Theta_{31},\Theta_{13},\Theta_{21}),\\
	\Theta_{j+6}=-\Theta_j,\qquad j=1,\ldots,6,
\end{gathered}
\]
and \(e^{\operatorname{ad}H}(S_j-I_3)=e^{\Theta_j}(S_j-I_3)\).

For \(\sigma=\overline{\widetilde r_1(0)}\), \(\tau=\widetilde r_2(0)\), multiplication gives
\begin{equation}\label{B.1}
	S_1S_2\cdots S_{12}=I_3 \quad\Longleftrightarrow\quad\sigma+\tau+\overline\sigma\,\overline\tau=0.
\end{equation}
Equation~\eqref{4.7} verifies the latter identity.

Uniformly for \(s\in K\Subset\mathbb R\) and all rays,
\[
\mathrm{Re}\Theta _j(re^{i\vartheta _j};s)=-\frac{9\sqrt{3}}{5}\cos \frac{\pi}{12}\,r^5+O_K(r)\le -cr^5\quad (r\ge R),\quad V_{j}^{\mathrm{P}}-I_3=O(e^{-c|\lambda |^5}),
\]
uniformly on the model rays.

Write
\[
	M^{\mathrm{P}}(\lambda ;s)=I_3+\sum_{j=1}^4{\frac{M_{j}^{\mathrm{P}}(s)}{\lambda ^j}}+O(\lambda ^{-5}),\qquad \lambda \rightarrow \infty ,
\]
with
\[
	M^{\rm P}(\lambda;s)=\mathcal A M^{\rm P}(\omega\lambda;s)\mathcal A^{-1}=\mathcal B\overline{M^{\rm P}(\bar\lambda;s)}\mathcal B.
\]
Set \(M_j=M_j^{\rm P}\).

Assume \eqref{B.1}, put \(h=\mu_{\rm BC}(V^{\rm P}-I_3)\), \(\Psi=M^{\rm P}e^H\), and \(K_j=(S_1\cdots S_j)^{-1}\). Then \(\Psi K_j\) glue on a punctured disk. Lemma~\ref{lem:4.3} excludes negative Laurent terms in \(M^{\rm P}=I_3+Ch\), removing the vertex. Unipotence and Liouville give
\begin{equation}\label{B.2}
	\det M^{\rm P}(\lambda;s)\equiv1.
\end{equation}
Consequently, \((M^{\rm P})^{-1}\) has the same analytic domain.

Under \eqref{4.14}, Proposition~\ref{prop:4.2} and the differentiated Beals-Coifman equation give \eqref{4.15} and uniform matching-circle bounds. Lemma~\ref{lem:4.3} controls the vertex, localized differentiation controls the other sectors, and \eqref{B.2} controls the inverse.

\noindent\textbf{Symmetries of the expansion.}
The first reduction gives
\[
M_j=\omega ^{-j}\mathcal{A} M_j\mathcal{A} ^{-1},\qquad j\ge 1.
\]
For \(\rho^3=1\), set
\[
\mathcal C_\rho(x_1,x_2,x_3)=
\begin{pmatrix}
	x_1&x_2&x_3\\
	\rho x_3&\rho x_1&\rho x_2\\
	\rho^2x_2&\rho^2x_3&\rho^2x_1
\end{pmatrix}.
\]
Then
\begin{equation}\label{B.3}
	\begin{aligned}
		M_1&=\mathcal C_{\omega^2}(a_1,a_2,a_3),&
		M_2&=\mathcal C_{\omega}(b_1,b_2,b_3),\\
		M_3&=\mathcal C_{1}(c_1,c_2,c_3),&
		M_4&=\mathcal C_{\omega^2}(d_1,d_2,d_3).
	\end{aligned}
\end{equation}
The Schwarz reduction gives
\[
(x_1,x_2,x_3)=\omega ^j(\overline{x_1},\overline{x_3},\overline{x_2}),
\]
for
\[
(j;x_1,x_2,x_3)\in \{(1;a_1,a_2,a_3),(2;b_1,b_2,b_3),(3;c_1,c_2,c_3),(4;d_1,d_2,d_3)\}.
\]
On the real mSK branch \eqref{4.24},
\[
a_2=\omega a_3,\qquad p(s)=3a_3(s)=3(M_1(s))_{13}\in\mathbb R .
\]
We henceforth use this branch.

\noindent\textbf{The Lax pair.}
For $\Psi=M^{\rm P}e^H$, the jumps are the constant matrices $S_j$. The cyclic identity makes $F=\Psi K_j$ and $\partial_sF$ single-valued and holomorphic at zero, with $\det F=1$. Hence
\[
\mathcal U:=\Psi_s\Psi^{-1}=F_sF^{-1},\qquad
\mathcal V:=\Psi_\lambda\Psi^{-1}=F_\lambda F^{-1}
\]
are entire.
Let \(\widehat M_j\) be the coefficients of \((M^{\rm P})^{-1}\) at infinity, determined by \(M^{\rm P}(M^{\rm P})^{-1}=I_3\). Liouville's theorem gives
\[
\mathcal U=\lambda\Lambda+[M_1,\Lambda],\qquad
\mathcal V=\sum_{q=0}^4\lambda^q\mathcal V_q,
\]
where, with \(M_0=\widehat M_0=I_3\),
\[
\mathcal V_q=\delta_{q0}s\Lambda+
9\!\!\sum_{\substack{a,b\ge0\\a+b=4-q}}M_a\Lambda^2\widehat M_b,\qquad 0\le q\le4.
\]
The \(s\)-equation gives
\[
M_j'=[\Lambda,M_{j+1}]+[M_1,\Lambda]M_j,\qquad j=1,2,3.
\]
With \eqref{B.3}, this yields
\begin{equation}\label{B.4}
\begin{aligned}
a_1'&=3\omega^2a_3^2,\\
b_3&=\omega a_1a_3+\frac{a_3'}{\omega-1}
+\frac{\omega a_3^2}{\omega+1},\\
b_2&=a_1a_3-\frac{a_3'}{\omega-1}-a_3^2.
\end{aligned}
\end{equation}
The same recurrence eliminates the remaining coefficients. In the compatibility relation
\[
\mathcal V_s-\mathcal U_\lambda+[\mathcal V,\mathcal U]=0,
\]
the constant coefficient of the \((1,3)\) entry, simplified using \eqref{B.4}, gives
\[
a_3^{(4)}-45a_3^2a_3''-45a_3(a_3')^2-15a_3'a_3''+81a_3^5-sa_3=0.
\]
With \(p=3a_3\), this is \eqref{4.25}.

	\bibliographystyle{plain}

\end{document}